\newif\iffull
	\long\def\fullversion#1\fullversionend{#1}
	\long\def\shortversion#1\shortversionend{}
	\long\def\fullversion#1\fullversionend{}
	\long\def\shortversion#1\shortversionend{#1}
\newcommand\csname putmaybeappendix#1\endcsname{}\BODY
\let\csname putmaybeappendix#1\endcsname\BODY%
\newcommand{\putmaybeappendix}[1]{\csname putmaybeappendix#1\endcsname}
\newcolumntype{M}[1]{>{\centering\arraybackslash}m{#1}}
	\newcommand{\figurescale}{0.24}
	\newcommand{\figurescale}{0.3}
\newcommand\decomp[3]{
	\begin{tabular}{c}
		\hline
		\multicolumn{1}{|c|}{$#2$} \\ \hline
		\multicolumn{1}{|c|}{$#3$} \\ \hline
		\\[-1.4em]
		$(#1)$                  
	\end{tabular}
}
\newcommand\bundle[2]{
	\begin{tabular}{c}
		\hline
		\multicolumn{1}{|c|}{{\multirow{2}{*}{$#2$}}} \\ 
		\multicolumn{1}{|c|}{} \\ \hline
		\\[-1.4em]
		$(#1)$                  
	\end{tabular}
}
\newcommand{\Omit}[1]{}
\newcommand{\valclass}{nice cancelable}
\newcommand{\nice}{nice}
\newcommand{\canc}{cancelable}
\newcommand{\cancbty}{cancelability}
\newcommand{\champ}[1][g]{
	\tikz[baseline=-\the\dimexpr\fontdimen22\textfont2\relax]{
		\matrix [column sep=1.3ex,
				inner sep=0,
				ampersand replacement=\&]{
			\node(a){};	\&
			\node(b)[draw, 
					inner xsep=2pt,
					inner ysep=1pt,
					minimum height=1em,
					rounded corners]{\scriptsize $#1$};\&
			\node(c){$\;\,$};\\
		};
		\draw (a) -- (b);
		\draw[-stealth] (b) -> (c.center)
	}\linebreak[0]
}
\newcommand{\nchamp}[1][g]{
	\tikz[baseline=-\the\dimexpr\fontdimen22\textfont2\relax]{
		\matrix [column sep=1.3ex,
				inner sep=0,
				ampersand replacement=\&]{
			\node(a){};	\&
			\node(b)[draw, 
					inner xsep=2pt,
					inner ysep=1pt,
					minimum height=1em,
					rounded corners]{\scriptsize $#1$};\&
			\node(c){$\;\,$};\\
		};
		\draw (a) -- (b);
		\draw[-stealth] (b) -> (c.center);
		\draw (b.north east) --(b.south west)
	}\linebreak[0]
}
\newcommand{\envies}{%
	\tikz[baseline=-\the\dimexpr\fontdimen22\textfont2\relax]{
		\matrix [column sep=1em,
				inner sep=0,
				ampersand replacement=\&]{
			\node(a){};	\&
			\node(b){$\;\,$};\\
		};
		\draw[-stealth] (a) -> (b.center)
	}\linebreak[0]
}
\newcommand{\nenvies}{\centernot{\envies}}
\newcommand{\circled}[1]{%
 \tikz[baseline=-\the\dimexpr\fontdimen22\textfont2\relax]\node[draw,circle,inner sep=0.4pt, outer sep=0pt, minimum width=1.2em](a){#1};%
 }
\newcommand{\allocs}{\mathbf{X}}
\newcommand{\allocshat}{\mathbf{X}''}
\newcommand{\hX}{X''}
\newcommand{\cupdot}{\mathbin{\mathaccent\cdot\cup}}
\newcommand{\vip}{a_{\textsf{vip}}}
\newcommand{\discard}[3][g]{D^{#1}_{#2,#3}}
\renewcommand{\succ}{\mathsf{succ}}
\newcommand{\pred}{\mathsf{pred}}
\newcommand{\g}{{\color{red}{g}}}
\newcommand{\h}{{\color{blue}{h}}}
\newcommand{\Bgone}{{\color{Green}{B^g_1}}}
\renewcommand{\b}{{\color{Green}{b}}}  
\renewcommand{\c}{{\color{Fuchsia}{c}}} 
\newcommand{\Tgone}{T^g_1}
\newcommand{\Bgtwo}{{\color{Magenta}{B^g_2}}}
\newcommand{\Tgtwo}{T^g_2}
\newcommand{\Bgthree}{{\color{Orange}{B^g_3}}}
\newcommand{\Tgthree}{T^g_3}
\newcommand{\Bgfour}{{\color{Sepia}{B^g_4}}}
\newcommand{\Tgfour}{T^g_4}
\newcommand{\Bhone}{{\color{OliveGreen}{B^h_1}}}
\newcommand{\Bhtwo}{{\color{Fuchsia}{B^h_2}}}
\newcommand{\Thtwo}{T^h_2}
\newcommand{\Bhthree}{{\color{Cyan}{B^h_3}}}
\newcommand{\Ththree}{T^h_3}
\newcommand{\Bhfour}{{\color{CadetBlue}{B^h_4}}}
\theoremstyle{definition}
\newtheorem{theorem}{Theorem}[section]
\newtheorem{lemma}[theorem]{Lemma}
\newtheorem{definition}[theorem]{Definition}
\newtheorem{observation}[theorem]{Observation}
\newtheorem{remark}[theorem]{Remark}
\newtheorem{example}[theorem]{Example}
\newtheorem{claim}[theorem]{Claim}
\newtheorem{proposition}[theorem]{Proposition}
\newtheorem{convention}[theorem]{Convention}
\newtheorem{corollary}[theorem]{Corollary}
\title{(Almost Full) EFX Exists for Four Agents (and Beyond)}
\author{Ben Berger\thanks{Supported by the European Research Council (ERC) under the European Union's Horizon 2020 research and innovation program (grant agreement No. 866132), and by the Israel Science Foundation (grant number 317/17).} \\
	Tel Aviv University\\
	\text{benberger1@tauex.tau.ac.il}
	\and
	Avi Cohen$^{\ast}$ \\
	Tel Aviv University\\
	\text{avicohen2@mail.tau.ac.il}
	\and 
	Michal Feldman$^{\ast}$\\ 
	Tel Aviv University\\ 
	\text{michal.feldman@cs.tau.ac.il}
	\and
	Amos Fiat\thanks{Supported by the Israel Science Foundation (grant number 1841/14) and the Blavatnik fund.}\\
	Tel Aviv University\\
	\text{fiat@tau.ac.il}
	}
\begin{abstract}
The existence of EFX allocations is a major open problem in fair division, even for additive valuations.
The current state of the art is that no setting where EFX allocations are impossible is known, and EFX is known to exist for ($i$) agents with identical valuations, ($ii$) 2 agents, ($iii$) 3 agents with additive valuations, ($iv$) agents with one of two additive valuations and ($v$) agents with two valued instances. 
%
It is also known that EFX exists if one can leave $n-1$ items unallocated, where $n$ is the number of agents.
	
We develop new techniques that allow us to push the boundaries of the enigmatic EFX problem beyond these known results, and, arguably,  to  simplify proofs of earlier results. 
Our main results are ($i$) every setting with 4 additive agents admits an EFX allocation that leaves at most a single item unallocated, ($ii$) every setting with $n$ additive valuations has an EFX allocation with at most $n-2$ unallocated items. 

Moreover, all of our results extend beyond additive valuations to all {\em nice cancelable} valuations (a new class, including additive, unit-demand, budget-additive and multiplicative valuations, among others). Furthermore, using our new techniques, we show that previous results for additive valuations extend to nice cancelable valuations. 
\end{abstract}
\begin{document}

\shortversion
\begin{titlepage}
\maketitle
\end{titlepage}
\shortversionend

\fullversion
\maketitle

\fullversionend


\section{Introduction}
\label{sec:intro}
The question of justness, fairness and division of resources and commitments dates back to Aristotle \cite{Chroust42}. {\sl Distributional justice}, the ``just" allocation of limited resources, is fundamental in the work of Rawls \cite{Rawls99}. Some evidence of the great interest in Rawls' work is
that numerous editions of his book have been cited over 100,000 times.

\shortversion
The mathematical study of fair division is due to Hugo Steinhaus, Bronislaw Knaster and Stefan Banach \cite{Steinhaus49} who considered proportional allocations, in which every one of the $n$ agents gets at least a $1/n$ fraction of her total value for all the goods.
\shortversionend

\fullversion
The mathematical study of fair division is credited to Hugo Steinhaus, Bronislaw Knaster and Stefan Banach.
The tale is that they would meet at the Scottish Caf\'e in Lvov where they wrote a book of open problems --- the ``Scottish book" --- preserved by Steinhaus throughout the war and subsequently translated by Ulam and published in the United States. Subsequent editions of this book were written following the end of the war.   In 1944 Hugo Steinhaus proposed the problem of dividing a cake into $n$ pieces so that every agent gets at least a $1/n$ fraction of her total utility (``proportional division"). Steinhaus was actively working despite then living under German occupation in fear for his life. A solution to Proportional division of a cake was credited to Banach and Knaster by Steinhaus in 1949 \cite{Steinhaus49}.
\fullversionend

A stronger notion of fairness is that of an {\em envy free} (EF) allocation --- introduced by Gamow and Stern \cite{gamow1958puzzle} for cake cutting, and in the context of general resource allocation by Foley \cite{Foley67}.
Unfortunately, if goods are indivisible, envy free allocations need not exist. Consider the trivial case of one indivisible good --- if Alice gets the good, others will be envious. Lipton {\sl et al.} \cite{Lipton04} and Budish \cite{Budish2011} consider a relaxed notion of envy freeness, namely \emph{envy freeness up to some item (EF1)} --- an allocation is EF1 if for every pair of agents Alice and Bob, there is an item that we can remove from Alice's allocation such that Bob will not be interested in swapping his allocation with what remains of Alice's allocation.

EF1 allocations always exist but
their fairness guarantees are questionable. Consider for example a setting where Alice and Bob have identical valuations over 3 items $a,b,c$ with respective values $1,1,2$. Arguably, a fair allocation would assign $a,b$ to one of the players, and $c$ to the other one, giving each a value $2$. However, the allocation that assigns $a,c$ to Alice and $b$ to Bob is also EF1.

The notion of Envy Freeness up to any item (EFX) was introduced by Caragiannis {\sl et al.} \cite{caragiannis2016unreasonable,caragiannis2019unreasonable}.
An allocation is EFX if for every pair of agents, Alice and Bob, Bob does not want to swap with what remains of Alice's allocation when {\sl any item is discarded}. {\sl I.e.}, it suffices to consider removing the item with minimal marginal value (to Bob) from Alices's allocation.
Indeed, in the example above, the only EFX allocations are those that allocate $a,b$ to one player and $c$ to the other player.

A major open problem is ``when do EFX allocations exist?". The current state of our knowledge is somewhat embarrassing. We do not know how to rule out EFX allocations in any setting, and yet, they are known to exist only in several restricted cases.
In particular, Plaut and Roughgarden, \cite{plaut2020almost}, 
prove that EFX valuations exist for 2 agents with arbitrary valuations, and for any number of agents with identical valuations.
Even for the simple case of {\em additive} valuations (where the value of a bundle of items is simply the sum of values of individual goods), EFX is only known to exist in settings with 3 agents (Chaudhury, Garg, and Mehlhorn \cite{chaudhury2020efx}), in settings with only one of two types of additive valuations (Mahara \cite{Mahara2020}),
or when the value of every agent to every item can take one of two permissible values (Amanatidis {\sl et al.} \cite{Amanatidis20}).


\noindent Indeed, Procaccia \cite{ProcacciaCACM} recently wrote:

\begin{quote}
	In my view, it (EFX existence) is the successor of envy-free cake cutting as fair division's biggest problem.
\end{quote}


Given that EFX valuations are known to exist in so few cases, the following question arises: Can one find a good {\em partial} EFX allocation? {\sl i.e.}, an EFX allocation in which only a small amount of items can be unallocated? The idea of partial allocations for EF and EFX allocations has appeared in multiple papers, e.g. \cite{Brams13,Cole2013,caragiannis2019envy}. Caragiannis, Gravin and Huang, \cite{caragiannis2019envy} show that discarding some items gives good EFX allocations for the rest (achieving 1/2 of the maximum Nash Welfare). Chaudhury {\sl et. al} \cite{chaudhury2020little} show that given $n$ agents with arbitrary valuations, there always exists an EFX allocation with at most $n-1$ unallocated items. Moreover, no agent prefers the set of unallocated items to her own allocation.


\subsection{Our Results}
\label{sec:results}

In this paper we develop new techniques, based upon ideas that appear in \cite{chaudhury2020efx,chaudhury2020little}.
\cite{chaudhury2020efx} introduced the notion of {\em champion edges} with respect to a single unallocated good, and used it to make progress with respect to the lexicographic potential function in order to eventually reach an EFX allocation.
We extend the notion of champion edges beyond a single unallocated item, to {\em sets} of items, allocated or not, and derive useful structural properties that allow us to make more aggressive progress within a graph theoretic framework.

Our techniques are powerful enough to allow us to $(i)$ push the boundaries of EFX existence beyond known results, and $(ii)$ present substantially simpler proofs for previously known results. Our results are described below, and are summarized in Table~\ref{tab:results}.

Our main result concerns EFX allocation for four agents.
Extending EFX existence from three to four agents is highly non-trivial.
Indeed, \cite{chaudhury2020example} discovered an instance with 4 additive agents in which there exists an EFX allocation with one unallocated item such that no progress can be made based on the lexicographic potential function.
We show that one unallocated item is the only possible obstacle to EFX existence in any setting with four agents.





\vspace{0.1in}
\noindent {\bf Theorem 1 (Main Result):} Every setting with 4 additive agents admits an EFX allocation with at most a single unallocated item (which is not envied by any agent).
\vspace{0.1in}

Moreover, this result extends beyond additive valuations to a new class of valuations that we term {\em \valclass} valuations (including additive, unit-demand, budget additive and more).

To prove Theorem 1, we show that for any EFX allocation with at least two unallocated items, it is possible to reshuffle bundles and reallocate them in such a way that advances the lexicographic potential function and preserves EFX.
The proof requires solving a complex puzzle, and exemplifies the extensive use of our new techniques.

The immediate open problem is whether one can go the additional mile and allocate the one item that remains.
A natural approach to solving this problem is by using a different potential function.
Notably, our new techniques are orthogonal to the choice of the potential function, and may prove useful in analyzing other potential functions.

Our second result makes an additional progress in settings with an arbitrary number of agents.

\fullversion
\vspace{0.1in}
\fullversionend
\noindent {\bf Theorem 2:}
Every setting with $n$ additive agents admits an EFX allocation with at most $n-2$ unallocated items.
\vspace{0.1in}

Here too, the unallocated items are not envied by anyone, and the result extends to all \valclass\ valuations.

To prove Theorem 2, we show that for any EFX allocation with at least $n-1$ unallocated items, one can reshuffle bundles and reallocate them in a way that results in a Pareto-dominating EFX allocation.
This means that one can find an EFX allocation with at most $n-2$ unallocated items.

In addition to these results, we establish the following extensions and simplifications.


\vspace{0.1in}
\noindent{\bf Beyond additive valuations.}
We extend the results of \cite{chaudhury2020efx} (EFX for 3 additive agents) and  \cite{Mahara2020} (EFX for $n$ agents with one of two additive valuations) beyond additive valuations to a class of valuations that we term {\sl \valclass} valuations, defined as follows: a valuation $v$ is {\em \canc} if
for any two bundles $S,T\subseteq M$ and item $g\in M\setminus \left(S\cup T\right)$ we have
$
v\left(S\cup\left\{ g\right\} \right)>v\left(T\cup\left\{ g\right\} \right)\Rightarrow v\left(S\right)>v\left(T\right).
$
It is \valclass\ if there exists a \canc\ and non-degenerate valuation ({\sl i.e.}, different bundles have different values) that respects all strong inequalities of $v$.

Nice \canc\ valuations include additive valuations but also valuations such as budget additive ($v(S)=\min(\sum_{j\in S} v(j), B)$ for some value $B$), unit demand ($v(S)=\max_{j\in S} v(j)$), multiplicative valuations ($v(S)=\prod_{j\in S} v(j)$), and others.

Moreover, one can have different  nice \canc\ valuations for different agents\footnote{ {\sl E.g.}, an EFX allocation exists for three agents when agent $a$ has a multiplicative valuation, $b$ has  a budget additive valuation, and $c$ has a unit demand valuation. Another example, fix any two nice \canc\ valuations, some agents have the 1st and others have the 2nd, an EFX allocation still exists.}.
We remark that the original proofs, as written, do not directly generalize to this more general class of valuations. It is our new techniques that allow us to generalize the results to this class.

\vspace{0.1in}
\noindent {\bf Theorem 3:}
EFX existence for 3 agents and EFX existence for two types of valuations (any number of agents) extends beyond additive valuations to all \valclass\ valuations.
\vspace{0.1in}

\noindent {\bf Simplification of proofs for known results.}
Our new techniques greatly simplify existing proofs of EFX existence for 3 agents \cite{chaudhury2020efx} and for the case of 2 types of additive valuations \cite{Mahara2020}.
Admittedly, simplicity is a matter of subjective judgment, but at least in terms of character count, the proof for the case of 3 agents with no envy in \cite{chaudhury2020efx} (some 5 pages) drops to half a page using our new techniques. Similarly, the proof for settings with 2 types of additive valuations in \cite{Mahara2020} (some 8 pages) drops to one page using our new techniques.
Moreover, in both settings, the simplified proofs apply beyond additive valuations to all \valclass\ valuations.

We believe that we have only scratched the surface of the power of our new techniques, and hope they will prove useful in making further progress on the EFX problem.

{ \begin{table} \centering
		\begin{adjustbox}{max width=\textwidth}
 	\begin{tabular}{|l|l|l|}
		\hline
		\rule[-1ex]{0pt}{2.5ex} Setting & Prior results & Our results \\
		\hline
		\rule[-1ex]{0pt}{2.5ex} EFX for 3 agents  &  Additive \cite{chaudhury2020efx}  & Beyond additive$^{\ast}$   \\
		\hline
		\rule[-1ex]{0pt}{2.5ex} \hbox{EFX for $n$ agents, one of 2 valuations} & Additive \cite{Mahara2020}  & Beyond additive$^{\ast}$\\
		\hline
		\rule[-1ex]{0pt}{2.5ex} Partial EFX for $n$ agents  & $\leq n-1$ unallocated items \cite{chaudhury2020little} & $\leq n-2$ unallocated items \\
				\hline
		\rule[-1ex]{0pt}{2.5ex} Partial EFX for 4 agents  & $\leq 3$ unallocated items \cite{chaudhury2020little} & $\leq 1$ unallocated item  \\
						\hline
	\end{tabular}
\end{adjustbox}
	\caption{All our results hold for (*){\sl \valclass}\ valuations, generalizing additive valuations}\label{tab:results}
\end{table}

\subsection{Our Techniques}
\label{sec:techniques}

Our proof techniques lie within a graph theoretic framework.
Given an EFX allocation $\allocs$, we describe a graph $M_{\allocs}$ (see Definition \ref{def:championgraph}) where vertices are associated with agents and there are three types of edges: envy edges  $i \envies j $,
champion edges $i \champ j$, where $g$ is an unallocated item, and
generalized champion edges $i \champ[H \mid S] j$, where  $H$ is some subset of items (allocated or not) and $S$ is a subset of $j$'s allocation in $\allocs$.

The use of such graphs, with envy and champion edges (but no generalized championship edges) has previously appeared in the literature and is a key component in the proof of an EFX allocation for 3 additive agents \cite{chaudhury2020efx}.
The new ingredient introduced in this paper is the notion of generalized champion edges.
We show how to find such edges (Section~\ref{sec:edge-discovery}), and use them to reach a new EFX allocation that advances the lexicographic potential function of \cite{chaudhury2020efx}.

The key idea in all our results is to reshuffle the existing allocation to obtain a new allocation with higher potential, while preserving EFX. This follows the same proof template as in Chaudhury {\sl et al.}  --- but we have more options to play with by using the generalized championship edges.

An envy edge $i \envies j$ suggests a possible reshuffling where agent $i$ gets $j$'s current allocation.
A champion edge $i \champ j$ suggests another reshuffling, where agent $i$ gets a subset of agent $j$'s current allocation, along with the currently unallocated item $g$.
A generalized champion edge $i \champ[H \mid S] j$ suggests giving agent $i$ agent $j$'s allocation along with some arbitrary set of items $H$ (that may be arbitrarily allocated among other agents, or be unallocated), while freeing up the set of items $S$.

Our proofs require solving a complex puzzle, where the goal is to find a cycle consisting of envy, champion, and generalized champion edges, such that the union of all sets $S$ freed up along with the currently unallocated items suffice for the suggested allocation along the cycle.

Finding the appropriate generalized championship edges is a major technical component of our techniques (see Section~\ref{sec:edge-discovery}).
We show how to find such edges, based on existing edges.
Then,
these edges 
allow us to reshuffle the current allocation and advance the potential.

In several of our proofs we consider the case where $M_\allocs$ has envy edges separately from the case where it has not. For our applications, it turns out that if there are no envy edges, one can Pareto improve the EFX allocation (See Section~\ref{sec:ef-case}), in particular this advances the potential function. If there are envy edges it may no longer be possible to Pareto improve (as pointed out by \cite{chaudhury2020efx}) but one can nonetheless advance the potential function itself (see Section~\ref{sec:not-ef-case}).

\subsection{Other Related Work}\label{sec:otherrelated}
\fullversion
Varian \cite{Varian74} introduced {\sl competitive equilibria from equal incomes (CEEI)} which assigns equal budgets to all agents and computes prices such that every agent gets a most favored set of (divisible) items, and the result is envy free.

Lipton {\sl et al.} \cite{Lipton04} give a greedy algorithm for producing an EF1 allocation, this adds items to an agent that is not envied, or --- alternatively --- switches bundles around an envy cycle, the so-called envy-cycles procedure.

Caragiannis {\sl et al.} \cite{caragiannis2019unreasonable} show that maximizing Nash welfare (maximizing the product of agent utilities) gives an EF1 allocation that is also Pareto optimal. Approximations for maximizing Nash welfare were presented by Cole and Gkatzelis \cite{ColeGkatzelis15} and subsequent papers.

Recall that envy free allocations need not exist --- but --- at least for two agents, there are envy free allocations that discard few items \cite{BramsTaylor2000,Brams13}. In \cite{Brams13} the resulting allocation is Pareto optimal, envy free, and maximal (no EF allocation allocates more items).

There have been several papers on truthful fair division. Mosel and Tamuz \cite{Mossel2010} and Cole, Gkatzelis and Goel \cite{Cole2013} deal with proportional fairness. The mechanism in the latter uses no money and provides good guarantees --- the mechanism discards a fraction of the of the resources to achieve truthfulness.

In \cite{caragiannis2019envy} Caragiannis, Gravin, and Huang show that for additive valuations, there is always an EFX allocation of a subset of items with a Nash welfare (product of agent utilities) that is at least half of the maximum possible Nash welfare for the original set of items. Moreover, this is the best possible Nash welfare of all EFX allocations.

Other extensions such as group EFX allocations and EF1 for matroid settings were considered respectively by \cite{Kyropoulou_2020} and \cite{Gourvs2014NearFI}.

While we concentrate on envy free allocations and relaxations thereof --- there are many other possible notions of fairness, e.g., {\sl equitable division} \cite{jones02} --- where the subjective value of every agent for her allocation is the same, and {\sl maximin share} \cite{Budish2011} where an agent gets the most preferred share she could guarantee herself if she was to suggest a partition and be allocated her least valuable bundle of this partition.
\fullversionend

\shortversion
Lipton {\sl et al.} \cite{Lipton04} give a greedy algorithm for producing an EF1 allocation, this adds items to an agent that is not envied, or --- alternately --- switches bundles around an envy cycle, the so-called envy-cycles procedure.
Caragiannis {\sl et al.} \cite{caragiannis2019unreasonable} show that maximizing Nash welfare (maximizing the product of agent utilities) gives an EF1 allocation that is also Pareto optimal. Approximations for maximizing Nash welfare were presented by Cole and Gkatzelis \cite{ColeGkatzelis15} and subsequent papers.

Varian \cite{Varian74} introduced the notion of {\sl competitive equilibria from equal incomes (CEEI)}, which guarantees envy freeness.
Envy free allocations that discard few items were considered in \cite{BramsTaylor2000,Brams13}. In \cite{Brams13} the resulting allocation is Pareto optimal, envy free, and maximal (no EF allocation allocates more items).

There have been some papers on truthful mechanisms for proportional fairness (Mosel and Tamuz \cite{Mossel2010} and Cole, Gkatzelis and Goel \cite{Cole2013}). The latter mechanism uses no money and provides good guarantees --- the mechanism discards a fraction of the resources to achieve truthfulness.
\shortversionend

\shortversion
\subsection{Paper Roadmap}

Due to space limitations, some of the proofs appear in the various appendices and not in the body of the paper.
The next section, Section \ref{sec:preliminaries},  presents the model, introduces ``nice cancelable valuations'', gives definitions from \cite{chaudhury2020efx} as well as new definitions, notation, and proofs.
Some proofs are deferred to Appendix \ref{apx:Preliminaries}.

Section \ref{sec:generalized_championship} defines ``generalized championship''. It may be helpful to consider Figure \ref{fig:example1} while reading the section. How to find generalized championship edges is described in Section \ref{sec:edge-discovery}. Some proofs are deferred to Appendix \ref{apx:techniques}.

Section \ref{sec:efxnm2} proves that it suffices to discard at most $n-2$ items and yet guarantee the existence of an EFX allocation.

In Section \ref{sec:4-agents} we show that for 4 agents it suffices to discard one item and yet still guarantee an EFX allocation. A roadmap of this proof itself appears in Figure \ref{fig:proof-structure}. The proof spans Sections \ref{sec:ef-case}, \ref{sec:not-ef-case}, and parts of the relevant proofs are deferred to Appendix \ref{apx:4-agents}.

Section \ref{sec:extensions} simplifies and extends prior results for 3 agents and for one of two additive valuations.  Some proofs are deferred to  Appendix \ref{sec:2-simple}.

A star ($\star$) indicates that a missing proof has been deferred to the appropriate appendix.
\shortversionend

\fullversion
\subsection{Paper Roadmap}

The next section, Section \ref{sec:preliminaries}, presents the model, introduces ``nice cancelable valuations'', gives definitions from \cite{chaudhury2020efx} as well as new definitions, notation, and proofs.
Useful properties of nice cancelable valuations are presented in Appendix \ref{apx:Preliminaries}.

Section \ref{sec:generalized_championship} defines ``generalized championship'' and establishes the technical framework underlying our EFX existence results.
It may be helpful to consider Figure \ref{fig:example1} while reading the section. How to find generalized championship edges is described in Section \ref{sec:edge-discovery}.

Section \ref{sec:efxnm2} proves that it suffices to discard at most $n-2$ items and yet guarantee the existence of an EFX allocation.

In Section \ref{sec:4-agents} we show that for 4 agents it suffices to discard one item and yet still guarantee an EFX allocation. A roadmap of this proof itself appears in Figure \ref{fig:proof-structure}. The proof spans Sections \ref{sec:ef-case}, \ref{sec:not-ef-case}.
For clarity of exposition, some of the more technical and repetitive proofs are deferred to Appendix \ref{apx:4-agents}.

Section \ref{sec:extensions} simplifies and extends prior results for 3 agents and for one of two additive valuations.
\fullversionend


\section{Preliminaries}\label{sec:preliminaries}


We consider a setting with $n$ agents, and a set $M$ of $m$ items.  Each agent has a valuation $v_i: 2^M \rightarrow \mathbb{R}_{\geq 0}$, which is normalized and monotone, i.e. $v(S) \leq v(T)$ whenever $S \subseteq T$ and $v(\emptyset) = 0$.  

For two sets of items $S,T \subseteq M$, we write $S <_i T$ if $v_i(S)< v_i(T)$.  Similarly we define $S>_i T$, $S\leq_i T$, $S\geq_i T$, $S=_i T$
if $v_i(S)>v_i(T)$, $v_i(S)\leq v_i(T)$, $v_i(S)\geq v_i(T)$, $v_i(S)= v_i(T)$, respectively.

We denote a valuation profile by $\mathbf{v} = (v_1,\ldots,v_n)$. An \emph{allocation} is a vector $\mathbf{X} = (X_1,\ldots,X_n)$ of disjoint bundles, where $X_i$ is the bundle allocated to agent $i$.
Given an allocation $\mathbf{X}$, 
We say that agent $i$ \emph{envies} a set of items $S$
if $X_i <_i S$. We say that agent $i$ \emph{envies} agent $j$
, denoted $i\envies j$,
if $i$ envies $X_j$.
We say that agent $i$ \emph{strongly envies} a set of items $S$
if there exists some $h\in S$ such that $i$ envies $S\setminus \{h\}$. Likewise we say that agent $i$ \emph{strongly envies agent $j$} if $i$ strongly envies $X_j$.
$\mathbf{X}$ is called \emph{envy-free} (EF) if no agent envies another.
$\mathbf{X}$ is called \emph{envy-free up to any good} (EFX) if no agent strongly envies another.
	
\fullversion
\subsection{Nice Cancelable Valuations}
\fullversionend
\shortversion
\noindent\textbf{Nice Cancelable Valuations:}
\shortversionend
We consider a class of valuation functions that can be viewed as a generalization of additive valuations.  These include additive, unit-demand and budget-additive valuations, among others.

\begin{definition}
	A valuation $v$ is {\em \canc} if for any two bundles $S,T\subseteq M$ and an item $g\in M\setminus \left(S\cup T\right)$,
\shortversion
	$
	v\left(S\cup \{g\} \right)>v\left(T\cup \{g\} \right)\Rightarrow v\left(S\right)>v\left(T\right).
	$
\shortversionend
\fullversion
	$$
	v\left(S\cup \{g\} \right)>v\left(T\cup \{g\} \right)\Rightarrow v\left(S\right)>v\left(T\right).
	$$
\fullversionend
\end{definition}

A valuation $v$ is \emph{non-degenerate} if $v\left(S\right)\neq v\left(T\right)$ for any two different bundles $S,T$.
A valuation $v'$ is said to \emph{respect} another valuation $v$ if for every two bundles $S,T\subseteq M$ such that $v(S)>v(T)$ it also holds that $v'(S)>v'(T)$.
%
\shortversion
	A \canc\ valuation $v$ is \emph{\nice} if there is a non-degenerate \canc\ valuation $v'$ that respects $v$. In particular, any non-degenerate \canc\ valuation is a \nice\ valuation (by setting $v'\equiv v$).
\shortversionend
\fullversion
\begin{definition}
	A \canc\ valuation $v$ is \emph{\nice} if there is a non-degenerate \canc\ valuation $v'$ that respects $v$. In particular, any non-degenerate \canc\ valuation is a \nice\ valuation (by setting $v'\equiv v$).
\end{definition}
\fullversionend

We can show that in order to prove the existence of an EFX allocation for a given valuation profile $\mathbf{v} = (v_1,\ldots,v_n)$ of \valclass\ valuations, it is without loss of generality to assume that all of the valuations are non-degenerate (Lemma \ref{lem:non-deg} in Appendix \ref{apx:Preliminaries}).  
Thus, for the remainder of this paper we assume that all valuations are \canc\ and non-degenerate.
Under this assumption it is easy to verify that for any valuation $v$ and bundles $S,T,R$ such that $R \subseteq M \setminus (S\cup T)$ we have
$$v\left(S\cup R \right)>v\left(T\cup R \right) \Leftrightarrow v\left(S\right)>v\left(T\right).$$

Other useful claims on \canc\ valuations are deferred to the appendix. 

\fullversion
\subsection {Potential Functions and Progress Measures}
\fullversionend
\shortversion
\noindent\textbf{Potential Functions and Progress Measures:}
\shortversionend
All our EFX existence results follow the same paradigm: 
given an arbitrary EFX allocation 
$\allocs$ with $k$ unallocated goods, construct a new partial EFX allocation that advances 
some fixed
potential function. 
Since there are finitely many allocations, there must exist an EFX allocation with at most $k-1$ unallocated items.

A natural progress measure to consider is Pareto domination. 
Given two allocations $\mathbf{X}$,$\mathbf{Y}$, we say that $\mathbf{Y}$ \emph{Pareto dominates} $\mathbf{X}$ if $Y_i \geq_i X_i$ for every $i \in [n]$, and there exists some $i$ for which the inequality is strict.
Chaudhury \emph{et al.} \cite{chaudhury2020efx} have shown that there need not exist a Pareto-dominating EFX allocation when $n=3$ and $k\geq 1$.  To overcome this obstacle they introduce an alternative ``lexicographic'' progress measure which we shall also use: 

\begin{definition}[\cite{chaudhury2020efx}]\label{def:domination}
	Fix some arbitrary ordering of the agents $a_1,\ldots, a_n$.  
	The allocation $\mathbf{Y}$ \emph{dominates} $\mathbf{X}$ if for some $k \in [n]$, we have that  
	$Y_{a_j}=_{a_j} X_{a_j}$ for all $1\leq j <k$, and $Y_{a_k} >_{a_k} X_{a_k}$.
\end{definition}
Note that Pareto-domination implies domination but not vice versa.  
\shortversion
\vspace{-5pt}
\begin{lemma}[$\star$]
\shortversionend
\fullversion
\begin{lemma}
\fullversionend
	\label{lem:dominate-implies-progress}
	If for every partial EFX allocation $\allocs$ with $k$ unallocated items, there exists a partial EFX allocation $\mathbf{Y}$ that dominates $\allocs$, then there exists a partial EFX allocation with at most $k-1$ unallocated items. Moreover, no agent envies the set of $k-1$ unallocated items.
\end{lemma}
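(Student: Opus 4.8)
The statement is a finite-descent argument, and the natural approach is to iterate the hypothesis. Start with \emph{any} partial EFX allocation $\allocs$ with $k$ unallocated items (one exists, e.g.\ the empty allocation leaving all $m$ items unallocated, provided $k\le m$; if no EFX allocation with exactly $k$ unallocated items exists the statement is vacuous). Apply the hypothesis repeatedly to obtain a sequence of partial EFX allocations $\allocs = \mathbf{Y}^{(0)}, \mathbf{Y}^{(1)}, \mathbf{Y}^{(2)}, \ldots$, each with $k$ unallocated items, where $\mathbf{Y}^{(t+1)}$ dominates $\mathbf{Y}^{(t)}$. The key observation is that ``dominates'' (Definition \ref{def:domination}) is a strict partial order that is \emph{irreflexive and transitive}: transitivity follows by combining the lexicographic comparisons (if $\mathbf{Z}$ dominates $\mathbf{Y}$ at index $k_1$ and $\mathbf{Y}$ dominates $\mathbf{X}$ at index $k_2$, then $\mathbf{Z}$ dominates $\mathbf{X}$ at index $\min(k_1,k_2)$, using that $=_{a}$ is an equivalence and that $>_a$ composed with $=_a$ or $\ge_a$ stays $>_a$), and irreflexivity is immediate since one cannot have $Y_{a_k} >_{a_k} Y_{a_k}$. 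Hence the sequence $\mathbf{Y}^{(0)}, \mathbf{Y}^{(1)}, \ldots$ consists of \emph{distinct} allocations. Since there are only finitely many allocations of $M$ among $n$ agents, this sequence must terminate — but the hypothesis guarantees a dominating successor for \emph{every} partial EFX allocation with $k$ unallocated items. The only way to escape the contradiction is that at some point we reach a partial EFX allocation with a number of unallocated items different from $k$; since each application of the hypothesis is only invoked on allocations with exactly $k$ unallocated items and any EFX allocation has at least...

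\textbf{Refinement.} Actually the cleaner formulation avoids the termination subtlety: define the potential of a partial EFX allocation $\mathbf{Y}$ with $\le k$ unallocated items to be $\mathbf{Y}$ itself, ordered by domination among the finite set $\mathcal{A}$ of all allocations with \emph{exactly} $k$ unallocated items. Pick a \emph{maximal} element $\mathbf{Y}^{\ast}$ of $\mathcal{A} \cap \{\text{EFX allocations}\}$ with respect to domination (this set is finite and nonempty by assumption, and domination is a strict partial order, so a maximal element exists). By the hypothesis applied to $\mathbf{Y}^{\ast}$, there is a partial EFX allocation $\mathbf{Y}'$ that dominates $\mathbf{Y}^{\ast}$. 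If $\mathbf{Y}'$ had exactly $k$ unallocated items it would contradict maximality of $\mathbf{Y}^{\ast}$; therefore $\mathbf{Y}'$ has a different number of unallocated items. It remains to argue this number is at most $k-1$ and not $\ge k+1$: this is where I would use that the domination/Pareto framework only ever moves items \emph{from} the unallocated pool into agents' bundles (the reshuffles described in Section \ref{sec:techniques} never unallocate a previously allocated good, and one can also observe that if $\mathbf{Y}'$ weakly Pareto-improves while strictly helping someone it cannot have strictly \emph{more} unallocated items — monotonicity would then contradict domination). So $\mathbf{Y}'$ has at most $k-1$ unallocated items.

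\textbf{The ``not envied'' clause.} For the final sentence, let $P$ denote the (at most $k-1$) unallocated items of $\mathbf{Y}'$. I would argue this as part of the same induction: we may additionally require, as an invariant maintained at every step of the descent, that no agent envies the pool of unallocated items — equivalently, $P \le_i Y'_i$ for all $i$. At the base case (empty allocation, all items unallocated) this fails, so instead I would first note a standard reduction: from any EFX allocation with some unallocated pool $P$, if some agent $i$ envies $P$, one can move all of $P$ into agent $i$'s bundle; the result is still EFX (no one strongly envies $i$ after the addition? — this needs the cancelable/EFX-preservation argument, which is exactly the kind of lemma proved elsewhere in the paper, or alternatively one repeatedly gives a single envied item to an envied-free agent and invokes the envy-cycle/Lipton-style argument) — and this strictly decreases the pool size, so after finitely many such moves we reach an EFX allocation whose unallocated pool is envied by no one, with no more unallocated items than before. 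Applying the hypothesis-driven descent to \emph{that} allocation and then applying the pool-absorption cleanup once more to the output yields the desired $\mathbf{Y}'$ with $\le k-1$ unallocated items, none of which is envied by any agent.

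\textbf{Main obstacle.} The genuinely delicate point is not the finite-descent logic (which is routine once transitivity of domination is checked) but verifying that a dominating EFX allocation produced by the hypothesis cannot have \emph{more} unallocated items than $k$, and the ``not envied'' bookkeeping — i.e.\ showing that absorbing an envied unallocated pool into an envying agent preserves EFX. I expect the latter to rely on the non-degenerate cancelable structure already set up (and possibly on a helper lemma among the ``other useful claims on \canc\ valuations'' deferred to the appendix); I would isolate it as a small separate claim before running the descent.
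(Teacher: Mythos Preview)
Your high-level plan for the first part (finiteness plus a strict-order/maximal-element argument) matches the paper; the paper is even terser, noting that domination is a \emph{total} order on value profiles, so a maximum exists among all partial EFX allocations. Your worry that the dominating $\mathbf{Y}$ might have \emph{more} than $k$ unallocated items is legitimate under a literal ``exactly $k$'' reading, but your proposed fix (``domination cannot increase the pool by monotonicity'' or ``reshuffles never unallocate'') is not correct in general --- domination only controls the lexicographically earlier agents' values and says nothing about the pool size. In the paper's applications the hypothesis is always established for allocations with \emph{at least} $k$ unallocated items, and with that reading the maximality argument is immediate: the domination-maximum among EFX allocations with $\geq k$ unallocated items would be dominated by some $\mathbf{Y}$, which then must have $<k$ unallocated items.

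The real gap is in the ``not envied'' clause. Your plan --- hand the entire unallocated pool $P$ to an envying agent $i$ --- does \emph{not} preserve EFX: once $i$ holds $X_i \cup P$, other agents may strongly envy her (removing one item from a large bundle can still leave it very valuable). The Lipton envy-cycle alternative you mention yields EF1, not EFX. The paper's fix is the missing idea: take a \emph{minimal} subset $T \subseteq U$ that some agent $i$ envies, and set $Y_i = T$ (discarding $X_i$). Then $i$ strictly improves so she strongly envies no one, and for any $h \in T$ the minimality of $T$ guarantees that \emph{no} agent envies $T\setminus\{h\}$, hence no one strongly envies $i$. This gives a Pareto-dominating EFX allocation, so the domination-maximum EFX allocation must have an unallocated set envied by no one --- combined with the first part, this yields the full statement.
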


\begin{maybeappendix}{Lemma_dominate_implies_progress_proof}
\begin{proof}

The first part of the lemma follows directly from the fact that the number of possible allocations is finite and domination is a total order relation.

	We now show that if in a given partial EFX allocation $\allocs$ some agent envies the set of unallocated items then there exists a partial EFX allocation $\mathbf{Y}$ that Pareto dominates $\allocs$. Analogous to the above, this proves the second part of the lemma.
	
If some agent envies $U$,
		then there exists a subset $T$ of $U$ that some agent $i$ envies and $T$ is a smallest subset of $U$ that some agent envies.  We obtain $\mathbf{Y}$ by replacing $X_i$ with $T$.  $\mathbf{Y}$ is EFX, since $i$ did not strongly envy anyone before and now she is better off, and no one strongly envies $i$ by minimality of $T$.
\end{proof}

\end{maybeappendix}

In fact, in our results we almost always progress via Pareto-domination.  In the few cases we do not, we find an allocation in which $a_1$ (the most important agent in the ordering) is strictly better off.  Hereinafter we denote this agent $\vip$.

\fullversion
\subsection{Most Envious Agents}
\fullversionend
\shortversion
\noindent\textbf{Most Envious Agents:}
\shortversionend
Hereinafter, we fix some partial EFX allocation $\mathbf{X}$ and some unallocated good $g$.  We denote by $U$ the set of goods that are unallocated in $\mathbf{X}$ (thus $g \in U$).  The following 
are variants of definitions from \cite{chaudhury2020efx}, \cite{chaudhury2020little}.

	We say that $i$ is {\em most envious} of a set of items $S$, if there exists a subset $T\subseteq S$, such that $i$ envies $T$ and no agent strongly envies $T$. When more than one such $T$ exists, we choose one of them arbitrarily unless stated otherwise. The set $S\setminus T$ is referred to as the corresponding {\em discard set}.
%

\fullversion
\begin{example}\label{ex:most_envious}
	Consider the valuation profile depicted in Table~\ref{tab:example}, describing a setting with 7 items $M=\{a,b,c,d,e,f,g\}$ and 3 agents with additive valuations.
	Let $\allocs$ be the allocation where $X_1=\{a,b,c\}, X_2=\{d\}, X_3=\{e,f\}$. Consider the set $S=\{c,d,e\}$. Agent 1 is most envious of $S$ since $X_1 <_1 \{d\}$ and no agent may strongly envy a singleton. The corresponding discard set is $S\setminus \{d\}= \{c,e\}$. Agent 2 is also most envious of $S$ since $X_2 <_2 \{c,e\}$, and no agent strongly envies $\{c,e\}$. The corresoponding discard set is $S\setminus \{c,e\}=\{d\}$.
\end{example}
\fullversionend

\begin{definition}[\cite{chaudhury2020efx}]\label{def:champ_g}
	We say that \emph{$i$ champions $j$ with respect to $g$}, denoted $i \champ j$,
	if $i$ is most envious of $X_{j}\cup \{g\}$.
	The corresponding discard set is denoted $\discard{i}{j}$.
	Note that $i$ envies the set $(X_{j}\cup \{g\}) \setminus \discard{i}{j}$, but no agent strongly envies it.
\end{definition}

\shortversion
An important case 
considered frequently in the paper 
is where $g \notin \discard{i}{j}$.  In this case $X_j = (X_j\setminus \discard{i}{j}) \cupdot \discard{i}{j}$.
	Following \cite{chaudhury2020efx},
	if $i\champ j$ and $g \notin \discard{i}{j}$, then we say that $i$ \emph{$g$-decomposes} $j$ into \emph{top} and \emph{bottom} \emph{half-bundles} $(X_j\setminus \discard{i}{j})$ and $\discard{i}{j}$, respectively (in short, $i$ $g$-decomposes $j$).	If there is no concern of ambiguity, then we denote the top and bottom half-bundles by $T_j$ and $B_j$, respectively (note that different $g$-decomposers of $j$ may induce different top and bottom half-bundles).
	Under this notation, we have $(X_j \cup \{g\}) \setminus \discard{i}{j} = T_j \cup \{g\}$.
\shortversionend

\fullversion
An important case 
considered frequently in the paper 
is where $g \notin \discard{i}{j}$.  In this case $X_j = (X_j\setminus \discard{i}{j}) \cupdot \discard{i}{j}$.
\begin{definition}[\cite{chaudhury2020efx}]
	If $i\champ j$ and $g \notin \discard{i}{j}$, then we say that $i$ \emph{$g$-decomposes} $j$ into \emph{top} and \emph{bottom} \emph{half-bundles} $(X_j\setminus \discard{i}{j})$ and $\discard{i}{j}$, respectively (in short, $i$ $g$-decomposes $j$).	If there is no concern of ambiguity, then we denote the top and bottom half-bundles by $T_j$ and $B_j$, respectively (note that different $g$-decomposers of $j$ may induce different top and bottom half-bundles).
	Under this notation, we have $(X_j \cup \{g\}) \setminus \discard{i}{j} = T_j \cup \{g\}$.
\end{definition}
\fullversionend



%
In the following observations from \cite{chaudhury2020efx}, $i \nchamp j$ and $i \nenvies j$ are the respective negations of $i \champ j$, $i \envies j$.
\shortversion
\begin{observation}[$\star$]\label{obs:exists-champion}
\shortversionend
\fullversion
\begin{observation}\label{obs:exists-champion}
\fullversionend
	Every agent $i$ has a champion with respect to $g$. 
\end{observation}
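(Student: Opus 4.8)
The plan is to prove the (seemingly stronger) statement that for every agent $i$, the bundle $X_i \cup \{g\}$ admits a \emph{most envious} agent; by Definition~\ref{def:champ_g} such an agent is exactly a champion of $i$ with respect to $g$, so this gives the observation.

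The first step is to note that $X_i \cup \{g\}$ is envied by at least one agent --- indeed, by $i$ herself. Since $g$ is unallocated we have $g \notin X_i$, hence $X_i \cup \{g\} \neq X_i$; monotonicity gives $v_i(X_i \cup \{g\}) \ge v_i(X_i)$, and non-degeneracy (assumed throughout the paper) upgrades this to $v_i(X_i \cup \{g\}) > v_i(X_i)$, i.e.\ $i$ envies $X_i \cup \{g\}$. Consequently the family $\mathcal{F}$ of subsets of $X_i \cup \{g\}$ that are envied by \emph{some} agent is non-empty.

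The second step is to pick, using finiteness of $X_i \cup \{g\}$, a set $T \in \mathcal{F}$ that is inclusion-minimal, and let $j$ be an agent with $X_j <_j T$. I would then show that no agent strongly envies $T$: if some agent $k$ strongly envied $T$, then by definition $k$ would envy $T \setminus \{h\}$ for some $h \in T$, so $T \setminus \{h\} \in \mathcal{F}$ would be a strictly smaller envied subset, contradicting minimality of $T$. Hence $j$ envies $T \subseteq X_i \cup \{g\}$ while no agent strongly envies $T$, which means $j$ is most envious of $X_i \cup \{g\}$, i.e.\ $j \champ i$. Since $i$ was arbitrary, every agent has a champion with respect to $g$.

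I do not expect a genuine obstacle here; the argument is short. The only point requiring a sanity check is that the empty set is never envied (normalization gives $v_k(\emptyset) = 0 \le v_k(X_k)$), so in particular no agent can strongly envy a singleton and the ``strong envy'' step above is never triggered with $T\setminus\{h\}=\emptyset$ --- although the minimality argument already rules out $T=\emptyset$ on its own. This same template, applied to $X_j \cup \{g\}$ (or, more generally, to an arbitrary target set), is exactly what will underpin the later existence claims about most envious agents.
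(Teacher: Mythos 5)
Your proposal is correct and follows essentially the same route as the paper: the paper's proof likewise notes that $i$ envies $X_i \cup \{g\}$ by monotonicity and non-degeneracy and then invokes the existence of a most envious agent of that set, which your inclusion-minimal envied-subset argument simply makes explicit.
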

\begin{maybeappendix}{obs_exists-champion}
\begin{proof}
	Since valuations are non-degenerate and monotone, $X_i <_i X_i\cup \{g\} $. Since $i$ envies $X_i\cup \{g\} $, this set has a most envious agent, and by definition that agent is a champion of $i$ with respect to $g$.
\end{proof}
\end{maybeappendix}
%
%
\shortversion
\begin{observation}[$\star$]\label{obs:g_notin_bottom}
\shortversionend
\fullversion
\begin{observation}\label{obs:g_notin_bottom}
\fullversionend
	If $i\champ j$ and $i\nenvies j$, then $g\notin \discard{i}{j}$, i.e. $i$ $g$-decomposes $j$.
\end{observation}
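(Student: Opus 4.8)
The plan is to prove the contrapositive-flavored statement directly: assume $i \champ j$ (so $i$ is most envious of $X_j \cup \{g\}$) and that $i \nenvies j$ (so $X_i \geq_i X_j$, equivalently $i$ does not envy $X_j$), and derive that $g \notin \discard{i}{j}$. Recall that by definition of championship, there is a set $T \subseteq X_j \cup \{g\}$ with $i$ envying $T$ and no agent strongly envying $T$, where $\discard{i}{j} = (X_j \cup \{g\}) \setminus T$. So $g \in \discard{i}{j}$ is exactly the statement $g \notin T$, i.e.\ $T \subseteq X_j$.

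First I would suppose for contradiction that $g \in \discard{i}{j}$, i.e.\ $T \subseteq X_j$. Then $i$ envies $T$, so $X_i <_i T$. But $T \subseteq X_j$ and valuations are monotone, so $T \leq_i X_j$, giving $X_i <_i X_j$. This says $i$ envies $X_j$, i.e.\ $i \envies j$, contradicting the hypothesis $i \nenvies j$. Hence $g \notin \discard{i}{j}$, which is precisely the statement that $i$ $g$-decomposes $j$ (as recorded in the definition of $g$-decomposition, since $g \notin \discard{i}{j}$ means $\discard{i}{j} \subseteq X_j$ and $X_j = (X_j \setminus \discard{i}{j}) \cupdot \discard{i}{j}$).

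There is essentially no technical obstacle here — the argument is a one-line monotonicity observation — so I expect the only thing to be careful about is bookkeeping the definitions: unwinding ``most envious of $X_j \cup \{g\}$'' into the existence of $T$, translating ``$g \in \discard{i}{j}$'' into ``$T \subseteq X_j$'', and making sure the strict/non-strict inequalities line up (we use $X_i <_i T$ from envy and $T \leq_i X_j$ from monotonicity to conclude $X_i <_i X_j$, which is strict envy of $X_j$ in the weak sense, contradicting $i \nenvies j$). Non-degeneracy is not even needed for this observation; monotonicity of $v_i$ suffices.
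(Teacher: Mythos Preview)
Your proof is correct and follows essentially the same approach as the paper: assume $g \in \discard{i}{j}$, observe that the envied set $(X_j \cup \{g\}) \setminus \discard{i}{j}$ is then a subset of $X_j$, and use monotonicity to conclude that $i$ envies $X_j$, contradicting $i \nenvies j$. The only difference is cosmetic---you name the envied subset $T$ explicitly---but the argument is the same one-line monotonicity contradiction.
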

\begin{maybeappendix}{obs_g_notin_bottom}
\begin{proof}
	By definition of $\discard{i}{j}$, agent $i$ envies $(X_j\cup \{g\})\setminus \discard{i}{j}$. If $g\in \discard{i}{j}$ then $(X_j\cup \{g\})\setminus \discard{i}{j}$ is a subset of $X_j$, implying that agent $i$ envies a subset of $X_j$ and thus the set $X_j$ as well. Hence, $i$ envies $j$, in contradiction to our assumption.
\end{proof}
\end{maybeappendix}
\shortversion
\begin{observation}[$\star$]\label{obs:non-champion-doesnt-envy-top-half}
\shortversionend
\fullversion
\begin{observation}\label{obs:non-champion-doesnt-envy-top-half}
\fullversionend
		If $i\nchamp j$ and $j$ is $g$-decomposed into $X_j = T_j\cupdot B_j$, then $X_i >_i T_j \cup \{g\}$.
\end{observation}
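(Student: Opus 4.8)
The plan is to argue by contradiction. Suppose $X_i \not>_i T_j \cup \{g\}$; I will show this forces $i \champ j$, contradicting the hypothesis $i \nchamp j$.

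First I would eliminate the ``equality'' case. Since $g$ is unallocated we have $g \notin X_i$, so $X_i$ and $T_j \cup \{g\}$ are syntactically different bundles; by the standing non-degeneracy assumption on valuations they then have different values, so $X_i \not>_i T_j \cup \{g\}$ actually yields the strict inequality $X_i <_i T_j \cup \{g\}$. In other words, agent $i$ envies the set $T_j \cup \{g\}$.

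Next I would unpack the hypothesis that $j$ is $g$-decomposed into $X_j = T_j \cupdot B_j$: by definition there is an agent $k$ with $k \champ j$ and $g \notin \discard{k}{j}$, and $T_j = X_j \setminus \discard{k}{j}$, $B_j = \discard{k}{j}$. Hence $T_j \cup \{g\} = (X_j \cup \{g\}) \setminus \discard{k}{j}$, and by the defining property of the discard set in Definition~\ref{def:champ_g}, no agent strongly envies $T_j \cup \{g\}$.

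Finally I would put the pieces together: the set $T := T_j \cup \{g\}$ is a subset of $X_j \cup \{g\}$ which $i$ envies and which no agent strongly envies, so by the definition of ``most envious'', $i$ is most envious of $X_j \cup \{g\}$, i.e. $i \champ j$ --- the desired contradiction. I expect the only genuinely delicate point to be the first step, namely explicitly invoking non-degeneracy to rule out $X_i =_i T_j \cup \{g\}$; the rest is a direct matter of re-reading the definitions of ``most envious'' and ``champion'' and noticing that the decomposition witness $k$ already certifies that $T_j \cup \{g\}$ is strongly envied by nobody.
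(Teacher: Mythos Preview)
Your proof is correct and matches the paper's argument essentially verbatim: both use that the decomposer $k$ certifies no agent strongly envies $T_j \cup \{g\}$, so any agent that envies it is a $g$-champion of $j$, forcing the contrapositive. You are simply more explicit than the paper about invoking non-degeneracy to upgrade ``does not envy'' to the strict inequality $X_i >_i T_j \cup \{g\}$.
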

\begin{maybeappendix}{obs_non-champion-doesnt-envy-top-half}
\begin{proof}
	Assume that $j$ is $g$-decomposed by some agent $k$ into $X_j = T_j\cupdot B_j$.  By definition of $\discard{k}{j}$, no agent strongly envies $(X_j\cup \{g\})\setminus\discard{k}{j} = T_j \cup \{g\}$. Therefore, any agent that envies this set is a most envious agent of $X_j\cup \{g\}$, and thus a $g$-champion of $j$. Hence, agent $i$  does not envy that set.
\end{proof}
\end{maybeappendix}
\shortversion
\begin{observation}[$\star$]\label{obs:T_k<T_j}
\shortversionend
\fullversion
\begin{observation}\label{obs:T_k<T_j}
\fullversionend
	If $i$ $g$-decomposes $j$, $i\nchamp k$ and $k$ is $g$-decomposed, then $T_k <_i T_j$. 
\end{observation}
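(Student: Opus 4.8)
The statement chains together the two previous observations together with cancelability, and the plan is to sandwich $v_i(X_i)$ between $v_i(T_k \cup \{g\})$ and $v_i(T_j \cup \{g\})$.

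First I would unpack the hypothesis that $i$ $g$-decomposes $j$. By definition this means $i \champ j$ with $g \notin \discard{i}{j}$, so that $X_j = T_j \cupdot B_j$ with $B_j = \discard{i}{j}$, and moreover $i$ envies $(X_j \cup \{g\}) \setminus \discard{i}{j} = T_j \cup \{g\}$. Hence $X_i <_i T_j \cup \{g\}$. Next I would apply Observation~\ref{obs:non-champion-doesnt-envy-top-half}: since $i \nchamp k$ and $k$ is $g$-decomposed into $X_k = T_k \cupdot B_k$, we get $X_i >_i T_k \cup \{g\}$. Combining the two inequalities yields
$$
T_k \cup \{g\} <_i X_i <_i T_j \cup \{g\},
$$
and in particular $T_k \cup \{g\} <_i T_j \cup \{g\}$, i.e.\ $v_i(T_j \cup \{g\}) > v_i(T_k \cup \{g\})$.

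Finally I would cancel the common item $g$. Since $g$ is unallocated it lies in neither $X_j \supseteq T_j$ nor $X_k \supseteq T_k$, so $g \in M \setminus (T_j \cup T_k)$, and cancelability (applied with $S = T_j$, $T = T_k$) gives $v_i(T_j) > v_i(T_k)$, that is, $T_k <_i T_j$, as desired.

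The argument is short and I do not anticipate a genuine obstacle; the only points requiring a moment of care are (a) correctly reading off $X_i <_i T_j \cup \{g\}$ from the definition of $g$-decomposition (rather than merely $X_i <_i X_j \cup \{g\}$), and (b) verifying $g \notin T_j \cup T_k$ so that cancelability is applicable.
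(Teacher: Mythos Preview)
Your proof is correct and follows exactly the same approach as the paper: sandwich $X_i$ between $T_k \cup \{g\}$ and $T_j \cup \{g\}$ using Observation~\ref{obs:non-champion-doesnt-envy-top-half} and the definition of $g$-decomposition, then cancel $g$. The paper's version is terser (it leaves the cancelability step implicit), but your explicit verification that $g \notin T_j \cup T_k$ is a welcome addition.
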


\begin{maybeappendix}{obs_T_k<T_j}
\begin{proof}
	We have,
	$%
		T_k\cup\{g\} <_i X_i <_i T_j\cup\{g\},%
	$
	where the first inequality is by Observation~\ref{obs:non-champion-doesnt-envy-top-half} since $i\nchamp k$ and the second inequality is by definition of basic championship (since $i$ $g$-decomposes $j$).
\end{proof}
\end{maybeappendix}
%
%

\section{Generalized Championship}\label{sec:generalized_championship}
A crucial component in our techniques is the extension of Definition \ref{def:champ_g} to 
an arbitrary set of items $H$.
%
%
%
It will be useful to have a notation that contains some information regarding the discarded items.
%

\begin{definition}\label{def:champ_H|S}
	\emph{$i$ champions $j$ with respect to ($H\mid S$)},
	denoted $i \champ[H \mid S]j$,
	where $H \subseteq M\setminus X_j$ and $S \subseteq X_j$,
	if $i$ is most envious of 
	$\left(X_{j} \setminus S \right) \cup H$.
	The corresponding discard set is denoted $\discard[H \mid S]{i}{j}$.
\end{definition}

As opposed to basic championship, not every agent $j$ has an ($H\mid S$)-champion (consider an extreme example where $H = \emptyset, S = X_j$).
%
%
If $i\champ[H \mid S]j$, then giving $i$ the desired bundle implied by the championship releases $S$ to be reallocated to other agents.  For example, if we also know that $k \champ[S \mid S'] \ell$, then these two champion relations can be ``used'' simultaneously in a transition to a new EFX allocation.

We say that a set of items $T$ is \emph{released} by $i\champ[H \mid S]j$ if $T\subseteq S\cup \discard[H\mid S]{i}{j}$.
We denote the negation of $i\champ[H \mid S]j$ by $i\nchamp[H \mid S]j$.




\begin{definition} \label{def:championgraph}
The \emph{champion graph} $M_{\mathbf{X}} = ([n],E)$ with respect to $\allocs$ is a labeled directed multi-graph.  The vertices correspond to the agents, and $E$ consists of the following 3 types of edges:
\begin{enumerate}
	\item Envy edges:  $i \envies j $ iff $i$ envies $j$.
	\item Champion edges:  $i \champ j$ iff $i$ champions $j$ with respect to $g$, where $g$ is an unallocated good. 
	\item Generalized champion edges:  $i \champ[H \mid S] j$ iff $i$ champions $j$ with respect to $H \mid S$.
\end{enumerate}
\end{definition}

We refer to envy and champion edges as \emph{basic edges}.
Hereinafter, the edge notations above will sometimes refer to the edges of the champion graph and sometimes refer to the statements they convey.  For example, we will sometimes refer to ``$i\champ j$'' as an edge in $M_\mathbf{X}$ and sometimes as shorthand that $i$ is a $g$-champion of $j$, and the meaning will be clear from the context.
Futhermore, it is not hard to verify that $i \champ j$ iff $i \champ[\{g\} \mid  \emptyset] j$ and that $i \envies j$ iff $i \champ[\emptyset \mid  \emptyset] j$. Thus,
we can treat every basic edge in $M_\mathbf{X}$ as a generalized champion edge.


\begin{example}
	Consider the instance given in Table \ref{tab:example}, and let $\allocs$ be the partial EFX allocation where $X_1=\{a,b,c\},\, X_2=\{d\},\, X_3=\{e,f\}$. 
	Figure~\ref{fig:example1} depicts the graph $M_{\allocs}$. 
	We haven't drawn all edges; rather, we chose a subset of the edges that illustrate the different types of edges.
	Item $g$ is unallocated in $\allocs$, thus $U=\{g\}$.  
	Since $\{a,b,c\} <_1 \{d\}$, $1 \envies 2$. Moreover, combined with the fact that no one strongly envies $\{d\}$, it also means that $1 \champ 2$. 
	Since $\{d\} <_2 \{b,g\}$ and no one strongly envies $\{b,g\}$, $2 \champ 1$. 
	Similarly, since $\{d\} <_2 \{f,g\}$ and no one strongly envies $\{f,g\}$, $2 \champ 3$, and since $\{e,f\} <_3 \{c,g\}$ and no one strongly envies $\{c,g\}$, $3 \champ 1$. 
	Finally, it holds that $2 \champ[\{a,b\} \mid \{e\}] 3$ since $\{d\} <_2 \{a,b,f\}$ and no one strongly envies $\{a,b,f\}$. 
\end{example}

\begin{table}[]
	\begin{minipage}{0.33\linewidth}
	\label{tab:example}
	\vspace{1em}
	\begin{center}
	\begin{tabular}{|l||c|c|c|c|c|c|c|}
		\hline
		& \multicolumn{1}{l|}{$\boldsymbol{a}$} & \multicolumn{1}{l|}{$\boldsymbol{b}$} & \multicolumn{1}{l|}{$\boldsymbol{c}$} & \multicolumn{1}{l|}{$\boldsymbol{d}$} & \multicolumn{1}{l|}{$\boldsymbol{e}$} & \multicolumn{1}{l|}{$\boldsymbol{f}$} & \multicolumn{1}{l|}{$\boldsymbol{g}$} \\ \hline\hline
		\textbf{1} & 1                        & 2                        & 3                        & 7                        &                          &                          &                          \\ \hline
		\textbf{2} & 1                        & 3                        &                          & 6                        & 1                        & 3                        & 4                        \\ \hline
		\textbf{3} &                          &                          & 5                        &                          & 3                        & 3                        & 2                        \\ \hline
	\end{tabular}
	\caption{A profile with 7 items and 3 additive agents. Unspecified values represent zero values.\label{tab:example}}
	\end{center}
	\end{minipage}\hfill
	\begin{minipage}{0.4\linewidth}
	\begin{center}
		\includegraphics[width=\textwidth]{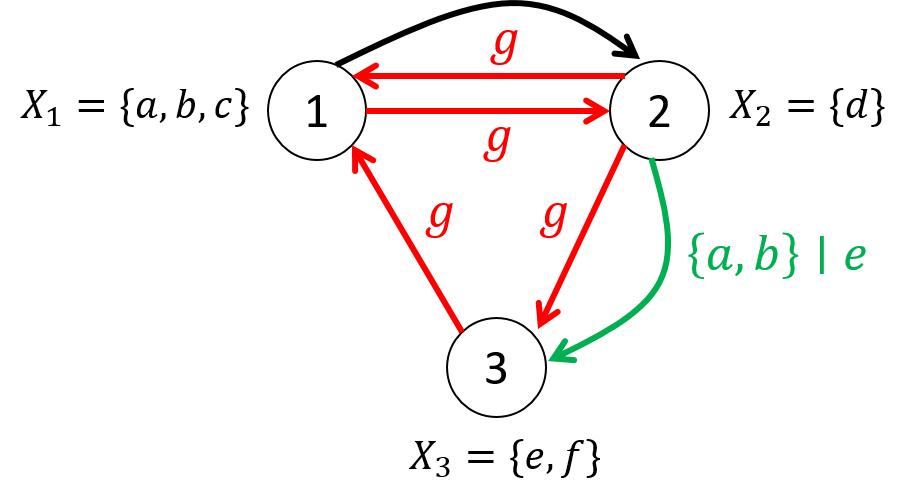}
	\end{center}	
	\end{minipage}\hfill
	\begin{minipage}{0.23\linewidth}
	\vspace{1em}
		\captionof{figure}{The graph $M_{\allocs}$ corresponding to the setting described in Table~\ref{tab:example} and the allocation $X_1=\{a,b,c\},\, X_2=\{d\},\, X_3=\{e,f\}$. Only a subset of the edges is shown.
		\label{fig:example1}}	
	\end{minipage}
\end{table}

Given a cycle $C = a_1 \rightarrow a_2 \rightarrow \cdots \rightarrow a_k \rightarrow a_1$ of edges and an agent $a_i$ in the cycle, $\succ(a_i)$ and $\pred(a_i)$ denote, respectively, the successor and predecessor of $a_i$ along the cycle.

\begin{definition}\label{def:pareto_improvable}
	A cycle $C= a_1 \champ[H_1 \mid  S_1] a_2 \champ[H_2 \mid  S_2] \cdots \champ[H_{k-1} \mid  S_{k-1}] a_k \champ[H_k \mid  S_k] a_1$ in $M_\mathbf{X}$ is called \emph{Pareto improvable} (PI) if
		for every $i,j \in [k]$ we have $H_i \cap H_j = \emptyset$, and
		either $H_i\subseteq U$ or $H_i$ is released by some edge $a_\ell\champ[H_\ell\mid S_\ell] a_{\succ(\ell)}$, for $\ell\in[k]$.
		\footnote{We could have defined a PI cycle more generally, e.g. to to allow the set $H_i$ to be a combination of unallocated goods and items released from several edges.  The proposed definition is hopefully easier to digest and suffices for our purposes.} 
		
		A PI cycle which is composed entirely of basic edges is called a \emph{basic PI cycle}.
		
\end{definition}

By definition, every agent $a_i$ along a PI cycle envies some subset  $A_i \subseteq X_{\succ(i)}$ that no agent strongly envies.  The following simple but \textbf{very (!)} useful lemma 
asserts that reallocating $A_i$ to agent $a_i$ for every $i$ produces a Pareto-dominating EFX allocation.

\shortversion
\begin{lemma}[$\star$]\label{lem:pareto-improvable}
\shortversionend
\fullversion
\begin{lemma}\label{lem:pareto-improvable}
\fullversionend
	If $M_\mathbf{X}$ contains a Pareto improvable cycle, then there exists a (partial) EFX allocation $\mathbf{Y}$ that Pareto dominates $\mathbf{X}$.
	Furthermore, every agent $i$ along the cycle satisfies $Y_i >_i X_i$.
\end{lemma}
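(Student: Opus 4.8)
The plan is to construct $\mathbf{Y}$ explicitly from the PI cycle and then verify the two properties — EFX and Pareto domination — in turn. Write the cycle as $C = a_1 \champ[H_1 \mid S_1] a_2 \champ[H_2 \mid S_2] \cdots \champ[H_k \mid S_k] a_1$. For each $i \in [k]$, the championship edge $a_i \champ[H_i \mid S_i] a_{\succ(i)}$ means $a_i$ is most envious of $(X_{\succ(i)} \setminus S_i) \cup H_i$, so there is a subset $A_i \subseteq (X_{\succ(i)} \setminus S_i) \cup H_i$ such that $a_i$ envies $A_i$ (i.e. $A_i >_{a_i} X_{a_i}$) and no agent strongly envies $A_i$. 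Define $\mathbf{Y}$ by giving $A_i$ to agent $a_i$ for every $i \in [k]$, and leaving every agent not on the cycle with her old bundle. The first thing to check is that this is a well-defined allocation, i.e. the $A_i$ are pairwise disjoint and disjoint from the bundles of off-cycle agents. This is exactly where the PI conditions are used: each $A_i$ is contained in $(X_{\succ(i)} \setminus S_i) \cup H_i$, the portion inside $X_{\succ(i)} \setminus S_i$ is vacated by $a_{\succ(i)}$ (who now holds $A_{\succ(i)}$, disjoint from $S_{\succ(i)}$... wait, we need $A_{\succ(i)}$ to avoid $X_{\succ(i)}\setminus S_i$ too) — so I need to argue carefully that the ``old material'' each $A_i$ draws from $X_{\succ(i)}$ is freed up, and the ``new material'' $H_i$ is either unallocated ($H_i \subseteq U$) or released by some edge $a_\ell \champ[H_\ell \mid S_\ell] a_{\succ(\ell)}$, meaning $H_i \subseteq S_\ell \cup \discard[H_\ell \mid S_\ell]{a_\ell}{a_{\succ(\ell)}}$. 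Since $a_\ell$ takes $A_\ell$ which misses $\discard[H_\ell\mid S_\ell]{a_\ell}{a_{\succ(\ell)}}$ and $S_\ell$ is removed from $a_{\succ(\ell)}$, that material is indeed available; the $H_i \cap H_j = \emptyset$ condition and the fact that distinct $H_\ell$ release disjoint pools handle collisions among the $H_i$. I would package this disjointness argument as the technical heart of the proof.

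Once $\mathbf{Y}$ is a valid allocation, Pareto domination is immediate: every on-cycle agent $a_i$ satisfies $Y_{a_i} = A_i >_{a_i} X_{a_i}$ by construction, and every off-cycle agent keeps her bundle, so $Y_i =_i X_i$; since $k \geq 1$ the inequality is strict somewhere. This also gives the ``Furthermore'' clause for free.

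For EFX, I need to show no agent strongly envies another in $\mathbf{Y}$. Take any agent $p$ and any agent $q$. If $q$ is on the cycle, $Y_q = A_q$, and no agent strongly envies $A_q$ by the choice of $A_q$ — so $p$ does not strongly envy $q$. If $q$ is off the cycle, then $Y_q = X_q$ (unchanged). I claim $p$ does not strongly envy $X_q$: if $p$ is also off the cycle then $Y_p = X_p$ and this follows from $\mathbf{X}$ being EFX; if $p$ is on the cycle then $Y_p = A_p >_{p} X_p$, so $p$ is strictly better off in $\mathbf{Y}$ than in $\mathbf{X}$, and since $p$ did not strongly envy $X_q$ in the EFX allocation $\mathbf{X}$, a fortiori $p$ does not strongly envy $X_q = Y_q$ now (formally: for any $h \in X_q$, $Y_p >_p X_p \geq_p X_q \setminus \{h\}$). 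That exhausts all cases, so $\mathbf{Y}$ is EFX.

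The step I expect to be the main obstacle is the disjointness bookkeeping: carefully tracking, for each $i$, which items of $A_i$ come from $X_{\succ(i)}$ versus from $H_i$, and verifying that after everyone on the cycle simultaneously switches to their $A_i$, no item is claimed twice — in particular confirming that $S_i$ being ``removed'' from $a_{\succ(i)}$ is consistent with $a_{\succ(i)}$'s new bundle $A_{\succ(i)} \subseteq (X_{\succ(\succ(i))} \setminus S_{\succ(i)}) \cup H_{\succ(i)}$, and that the released-item pools $S_\ell \cup \discard[H_\ell \mid S_\ell]{a_\ell}{a_{\succ(\ell)}}$ for distinct $\ell$ together with $U$ supply the $H_i$'s without overlap. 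The definition of PI cycle (the $H_i$ pairwise disjoint, each $H_i$ either in $U$ or released by a single edge) is tailored precisely to make this go through, so the proof amounts to unwinding those conditions; I would state it as a short case analysis on where each item of $Y_{a_i}$ originates.
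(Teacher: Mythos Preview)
Your proposal is correct and follows essentially the same approach as the paper: define $Y_{a_i}$ to be the minimally-envied subset $A_i = \big((X_{\succ(i)}\setminus S_i)\cup H_i\big)\setminus \discard[H_i\mid S_i]{a_i}{a_{\succ(i)}}$ for on-cycle agents and $Y_j = X_j$ off cycle, then verify Pareto domination from the championship relation and EFX from the fact that no one became worse off while each $A_i$ was already not strongly envied in $\allocs$. The paper's proof is nearly identical; it simply asserts the disjointness step (``the assumptions on the sets $H_i,S_i$ ensure that the sets $Y_i$ are disjoint'') rather than unpacking the bookkeeping you outline, but your plan for that step is exactly what is needed.
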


\begin{maybeappendix}{lem_pareto-improvable}
\begin{proof}
	Let $C$ be a Pareto-improvable cycle in $M_\mathbf{x}$, and assume w.l.o.g. that $C = 1 \champ[H_1 \mid  S_1] \linebreak[1] 2 \champ[H_2 \mid  S_2] \linebreak[1]\cdots \linebreak[1] \champ[H_{k-1} \mid  S_{k-1}] \linebreak[1] k \champ[H_k \mid  S_k] 1$.
	Define the allocation $\mathbf{Y}$ as follows:  for every agent $i$,
	\[
	Y_i = \begin{cases}
	\left(\left(X_{\succ(i)}\setminus S_i \right)\cup H_i  \right) \setminus \discard[H_i \mid S_i]{i}{\succ(i)}	& i \in [k] \mbox{ (\emph{i.e.}, $i$ on cycle $C$)} \\
	X_i	& \text{otherwise}
	\end{cases}
	\] 
	
	First, note that the assumptions on the sets $H_i,S_i$ ensure that the sets $Y_i$ are disjoint.  That is, $\mathbf{Y}$ is indeed an allocation.
	For every $i \in [k]$, $i$ is the $(H_i \mid S_i)$-champion of $\succ(i)$.
	Thus $X_i <_i Y_i$, by definition of generalized championship.
Since the bundles did not change for agents outside the cycle, we conclude that $\mathbf{Y}$ Pareto-dominates $\mathbf{X}$.
	
	It remains to show that $\mathbf{Y}$ is EFX. 
Since no agent becomes worse off in the transition from $\allocs$ to $\mathbf{Y}$, it suffices to show that in allocation $\allocs$ no agent strongly envies the set $Y_i$ for all $i\in [n]$.
	Indeed, if $i$ is an agent outside the cycle, by the fact that $\allocs$ is EFX, no agent strongly envies $Y_i=X_i$ in $\allocs$.
	If $i$ is an agent in the cycle, no agent strongly envies $Y_i = \left(\left(X_{\succ(i)}\setminus S_i \right)\cup H_i  \right) \setminus \discard[H_i \mid S_i]{i}{\succ(i)} $ in $\mathbf{X}$, by definition of generalized championship.
%
\end{proof}
\end{maybeappendix}


\begin{corollary}[Following \cite{chaudhury2020little}]\label{cor:trivial_cycles}
	If $M_\allocs$ contains an envy-cycle, A self-$g$-champion (an agent $i$ satisfying $i \champ i$) or a cycle composed of envy edges and at most one $h$-champion edge for every $h \in U$, then
	there exists a (partial) EFX allocation $\mathbf{Y}$ that Pareto dominates $\mathbf{X}$.
	Note that these are exactly the basic PI cycles\footnote{Envy cycles, the simplest form of basic PI cycles, were considered in \cite{Lipton04}. Basic PI-cycles were considered in \cite{chaudhury2020little} using different terminology --- championship was only defined in \cite{chaudhury2020efx}.  Our definition of a PI-cycle captures and generalizes these notions.}.
\end{corollary}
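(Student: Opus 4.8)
The plan is to reduce all three cases listed in the corollary to Lemma~\ref{lem:pareto-improvable} by exhibiting, in each case, a Pareto improvable cycle in $M_\allocs$; the conclusion then follows verbatim from that lemma. The only real work is checking Definition~\ref{def:pareto_improvable} in each case, and since every basic edge $i\envies j$ equals $i\champ[\emptyset\mid\emptyset]j$ and every $h$-champion edge $i\champ[h]j$ equals $i\champ[\{h\}\mid\emptyset]j$, the disjointness and release conditions become almost immediate bookkeeping.

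First I would handle the self-$g$-champion: the single edge $i\champ i$ is already a cycle $C$ of length one, with $H_1=\{g\}\subseteq U$ and $S_1=\emptyset$. The disjointness condition $H_i\cap H_j=\emptyset$ is vacuous for a single edge, and $H_1\subseteq U$ holds, so $C$ is PI. Next, an envy-cycle $a_1\envies a_2\envies\cdots\envies a_k\envies a_1$: here every $H_i=\emptyset$ and every $S_i=\emptyset$, so $H_i\cap H_j=\emptyset$ trivially and each $H_i=\emptyset\subseteq U$ trivially — again PI. Finally, the mixed case: a cycle composed of envy edges and, for each $h\in U$, at most one $h$-champion edge. Writing each envy edge as $\champ[\emptyset\mid\emptyset]$ and each $h$-champion edge as $\champ[\{h\}\mid\emptyset]$, the sets $H_i$ appearing along the cycle are either empty or a singleton $\{h\}$ with $h\in U$; since no $h$ is used twice, these singletons are pairwise disjoint and each is $\subseteq U$, so the conditions of Definition~\ref{def:pareto_improvable} hold and the cycle is PI.

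Having verified in all three cases that $M_\allocs$ contains a Pareto improvable cycle, Lemma~\ref{lem:pareto-improvable} immediately yields a partial EFX allocation $\mathbf{Y}$ that Pareto dominates $\mathbf{X}$, with every agent on the cycle strictly improving. The parenthetical claim that these three cycle types are \emph{exactly} the basic PI cycles is a matter of unwinding the definitions: a basic PI cycle uses only envy and champion edges, so each $H_i$ is either $\emptyset$ or $\{h\}$ for some unallocated $h$, and each $S_i=\emptyset$ (champion edges carry empty $S$), so the release condition forces any nonempty $H_i=\{h\}$ to satisfy $h\in U$, while disjointness of the $H_i$ forbids reusing any $h$; conversely a self-champion edge $i\champ i$ is the degenerate case where the cycle revisits $i$. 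I do not expect a genuine obstacle here — the lemma does all the heavy lifting — so the only care needed is to state the three reductions cleanly and to make the "exactly" remark precise enough to be convincing without belaboring it.
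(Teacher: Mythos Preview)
Your proposal is correct and matches the paper's approach. The paper states this corollary without proof as an immediate consequence of Lemma~\ref{lem:pareto-improvable}, and your verification that each listed cycle type satisfies Definition~\ref{def:pareto_improvable} (together with the observation that basic edges have $S_i=\emptyset$ and $H_i\in\{\emptyset\}\cup\{\{h\}:h\in U\}$) is precisely the intended omitted reasoning.
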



\begin{remark}\label{rem:PI_edge_set}
Lemma \ref{lem:pareto-improvable} can be generalized to handle disjoint cycles. The fact that $C$ is a cycle is used in the proof of the lemma only to ensure that every agent
whose bundle is reallocated, is also given an alternative bundle in the new allocation. 
The same is true if $C$ is a set of vertex-disjoint cycles rather than a single cycle. We may then define $C$ as an edge set $\{a_i\champ[H_i\mid S_i] \succ(a_i)\}_{i \in [k]}$, and if the conditions in the definition of a Pareto-improvable cycle are satisfied, then Lemma~\ref{lem:pareto-improvable} still applies.
In this case we refer to $C$ as a \emph{Pareto-improvable edge set}.
\end{remark}




\subsection{New Edge Discovery}
\label{sec:edge-discovery}

In this section we describe a way to discover new generalized champion edges in $M_\allocs$.
These will almost always be of the form $k\champ[S \mid B_j]j$ where $B_j \subseteq X_j $ is some bottom half-bundle
induced by a $g$-decomposer of $j$ (see discussion below Definition \ref{def:champ_g}).
Therefore, to facilitate readability we use the following convention:

\begin{convention}
	For any two agents $k,j$ and any set $S$ disjoint from $X_j$, we write $k \champ[S \mid\circ] j$ (resp. $(S\mid \circ)$-champion) as shorthand for $k \champ[S \mid B_j]j$ (resp. $(S \mid B_j)$-champion), where the half-bundle $B_j$ will be clear from the context. 
\end{convention}

The following structure within the champion-graph is especially convenient for edge discovery.

\begin{definition}
	A cycle $C= a_1 \champ a_2 \champ \cdots \champ a_k \champ a_1$ with at least two $g$-champion edges in $M_\mathbf{X}$ is called a \emph{good $g$-cycle} if:
	\begin{enumerate}
		\item All agents along the cycle are different.
		\item There are no parallel envy edges, i.e., $a_i \nenvies \succ(a_i)$ for all $i$.
		\item There are no internal $g$-champion edges, {\sl i.e.}, for every $i,j \in [k]$, $a_i\champ a_j$ iff $a_j = \succ(a_i)$.
	\end{enumerate}
\end{definition}


\shortversion
\begin{observation}[$\star$]\label{obs:good_cycle_g_decomposition}
\shortversionend
\fullversion
\begin{observation}\label{obs:good_cycle_g_decomposition}
\fullversionend
	Agents $j$ on a good $g$-cycle
	 are $g$-decomposed by $\pred(j)$ into $X_j = T_j \cupdot B_j$.  
\end{observation}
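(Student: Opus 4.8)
\textbf{Proof plan for Observation \ref{obs:good_cycle_g_decomposition}.}

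The plan is to fix an arbitrary agent $j$ on the good $g$-cycle $C$ and show that its predecessor $\pred(j)$ $g$-decomposes it, i.e., that $\pred(j) \champ j$ with $g \notin \discard{\pred(j)}{j}$. The first ingredient is immediate from the definition of a good $g$-cycle: since $C$ is a cycle of $g$-champion edges, we have $\pred(j) \champ j$, so it only remains to verify that $g$ is not in the corresponding discard set, which by Observation \ref{obs:g_notin_bottom} reduces to showing $\pred(j) \nenvies j$. But condition (2) in the definition of a good $g$-cycle states precisely that there are no parallel envy edges along the cycle, i.e., $a_i \nenvies \succ(a_i)$ for all $i$; taking $a_i = \pred(j)$ and $\succ(a_i) = j$ gives $\pred(j) \nenvies j$.

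So the argument is a two-line chain: from $\pred(j) \champ j$ (cycle structure) and $\pred(j) \nenvies j$ (condition (2)), Observation \ref{obs:g_notin_bottom} yields $g \notin \discard{\pred(j)}{j}$, which is by definition the statement that $\pred(j)$ $g$-decomposes $j$ into $X_j = T_j \cupdot B_j$ with $T_j = X_j \setminus \discard{\pred(j)}{j}$ and $B_j = \discard{\pred(j)}{j}$. Since $j$ was arbitrary on the cycle, the claim holds for all such $j$.

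There is essentially no obstacle here — the observation is really just a bookkeeping consequence of unpacking the definition of a good $g$-cycle and invoking Observation \ref{obs:g_notin_bottom}. The only thing to be slightly careful about is that condition (3) (no internal $g$-champion edges) is not needed for this particular statement; it is condition (2) that does the work. One might also note in passing that the requirement of at least two $g$-champion edges guarantees that $\pred(j)$ is well-defined and that the edge $\pred(j) \champ j$ is genuinely present, but this is automatic once $C$ is a cycle.
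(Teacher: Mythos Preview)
Your proof is correct and essentially identical to the paper's own argument: both note that $\pred(j) \champ j$ and $\pred(j) \nenvies j$ hold by definition of a good $g$-cycle, and then invoke Observation \ref{obs:g_notin_bottom} to conclude $g \notin \discard{\pred(j)}{j}$.
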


\begin{maybeappendix}{obs_good_cycle_g_decomposition}
\begin{proof}
	This holds by Observation \ref{obs:g_notin_bottom} since $\pred(j) \champ j$ and $\pred(j) \nenvies j$ by definition of a good $g$-cycle.
\end{proof}
\end{maybeappendix}





We next show how to discover new generalized champion edges in the presence of a good $g$-cycle.  The following two observations are useful:

\shortversion
\begin{observation}[$\star$]\label{obs:no_B_j_champ_of_j}
\shortversionend
\fullversion
\begin{observation}\label{obs:no_B_j_champ_of_j}
\fullversionend
	If $i\nenvies j$ then $i\nchamp[B_j \mid \circ ] j$.
\end{observation}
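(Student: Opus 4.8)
The statement to prove is: if $i \nenvies j$, then $i \nchamp[B_j \mid \circ] j$, where $B_j$ is a bottom half-bundle of $j$ induced by some $g$-decomposer. Recall the convention: $i \champ[B_j \mid \circ] j$ is shorthand for $i \champ[B_j \mid B_j] j$, which by Definition~\ref{def:champ_H|S} means $i$ is most envious of $(X_j \setminus B_j) \cup B_j = X_j$. So the statement reduces to: if $i$ is most envious of $X_j$, then $i$ envies $j$ (i.e. envies $X_j$). But being most envious of $X_j$ means, by definition, that there is a subset $T \subseteq X_j$ that $i$ envies and that no agent strongly envies.

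The plan is as follows. First I would unwind the convention and Definition~\ref{def:champ_H|S} to see that $i \champ[B_j \mid \circ] j$ is exactly the assertion that $i$ is most envious of $X_j$. Then I would argue by contradiction: suppose $i$ is most envious of $X_j$. By the definition of ``most envious of a set,'' there exists $T \subseteq X_j$ with $X_i <_i T$ and $T$ not strongly envied by anyone. Now observe that $X_i <_i T \leq_i X_j$ by monotonicity (since $T \subseteq X_j$), hence $X_i <_i X_j$, which says precisely $i \envies j$ — contradicting the hypothesis $i \nenvies j$. Therefore $i$ cannot be most envious of $X_j$, i.e. $i \nchamp[B_j \mid \circ] j$.

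I expect the only subtlety — and it is minor — to be making sure the bookkeeping of the convention is spelled out correctly: that $(X_j \setminus B_j) \cup B_j = X_j$ uses $B_j \subseteq X_j$, which holds since $B_j$ is a bottom half-bundle of $j$, i.e. $X_j = T_j \cupdot B_j$. Once that identity is in place, the argument is the one-line monotonicity observation above. There is no real obstacle here; this is a ``sanity check'' observation that isolates the trivial-but-needed fact that you cannot be most envious of someone's whole bundle without envying them, so it can be reused freely in the edge-discovery arguments that follow.

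One formatting note: I would state it compactly, e.g. ``Unwinding the notation, $i \champ[B_j\mid\circ] j$ means $i$ is most envious of $(X_j\setminus B_j)\cup B_j = X_j$, so there is $T\subseteq X_j$ with $X_i <_i T \le_i X_j$; thus $i$ envies $j$, a contradiction.'' That is the entire proof, and it needs no displayed equations, so there is no risk of the blank-line-in-display pitfall.
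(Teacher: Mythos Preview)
Your argument reaches the right conclusion, but it takes a different route from the paper's and glosses over one technical point.

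The paper's proof does not compute the target set at all. Instead it invokes the domain constraint built into Definition~\ref{def:champ_H|S}: the relation $i \champ[H \mid S] j$ is only defined when $H \subseteq M \setminus X_j$ (and hence $H \cap S = \emptyset$). With $H = S = B_j \subseteq X_j$, this constraint fails whenever $B_j \neq \emptyset$, so the relation is vacuously false. When $B_j = \emptyset$, the relation becomes $i \champ[\emptyset \mid \emptyset] j$, which is exactly $i \envies j$, contradicting the hypothesis.

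Your route instead evaluates the formula $(X_j \setminus B_j) \cup B_j = X_j$ and argues via monotonicity that being most envious of $X_j$ forces $i \envies j$. That semantic argument is fine, but note that when $B_j \neq \emptyset$ you are applying the definition's formula outside its stated domain --- Definition~\ref{def:champ_H|S} does not tell you what $i \champ[B_j \mid B_j] j$ means in that case, so strictly speaking you cannot ``unwind'' it. The paper avoids this by using the domain violation itself as the reason the relation fails. Your approach would be the natural one if the disjointness requirement on $H$ were ever relaxed; the paper's is shorter and tracks the definition as written.
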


\begin{maybeappendix}{obs_no_B_j_champ_of_j}
\begin{proof}
	Recall that a generalized championship relation $i \champ[H \mid S] j$ is required to satisfy $H \cap S = \emptyset$.
	Here $H = S = B_j$.  Thus, if $B_j \neq \emptyset$ then the requirement clearly does not hold.  Otherwise $B_j = \emptyset$ and $i \champ[\emptyset \mid \emptyset] j$ implies that $i \envies j$, which we assume does not hold.  In any case, the relation $i \champ[B_j \mid \circ] j$ cannot hold.
\end{proof}
\end{maybeappendix}

\shortversion
\begin{observation}[$\star$]\label{obs:pred(i)_nchamp_B(i)}
\shortversionend
\fullversion
\begin{observation}\label{obs:pred(i)_nchamp_B(i)}
\fullversionend
	For any two agents $i,j$ along a good $g$-cycle, $\pred(i)\nchamp[B_i\mid \circ] j$.	
\end{observation}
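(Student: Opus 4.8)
The plan is to prove the contrapositive-flavored claim directly: I want to show that if $i,j$ are agents along a good $g$-cycle, then $\pred(i)$ cannot be an $(B_i \mid \circ)$-champion of $j$. Here the convention $\champ[B_i \mid \circ]$ means $\champ[B_i \mid B_j]$, so the target bundle in question is $(X_j \setminus B_j) \cup B_i = T_j \cup B_i$. I need to show $\pred(i)$ is \emph{not} most envious of $T_j \cup B_i$, i.e., that there is no subset $W \subseteq T_j \cup B_i$ with $\pred(i)$ envying $W$ and no agent strongly envying $W$.

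First I would split on whether $i = j$ or $i \neq j$. If $i = j$, then $T_j \cup B_i = T_i \cup B_i = X_i$ (using that $i$ is $g$-decomposed by $\pred(i)$, Observation~\ref{obs:good_cycle_g_decomposition}), and $\pred(i) \champ[B_i \mid \circ] i$ would assert $\pred(i)$ is most envious of $X_i$, so in particular $\pred(i)$ envies $X_i$, contradicting that $\mathbf{X}$ is EFX (no agent envies another's bundle strongly, and here it would even be plain envy which violates... wait, envy alone is allowed under EFX). Let me be more careful: actually I should use Observation~\ref{obs:no_B_j_champ_of_j} — since $\pred(i) \nenvies i$ along a good $g$-cycle (condition 2 of good $g$-cycle), that observation gives $\pred(i) \nchamp[B_i \mid \circ] i$ immediately. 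So the $i=j$ case is handled by Observation~\ref{obs:no_B_j_champ_of_j} together with the no-parallel-envy-edges property.

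For $i \neq j$: suppose toward contradiction $\pred(i) \champ[B_i \mid B_j] j$, so $\pred(i)$ envies some $W \subseteq T_j \cup B_i$ that no one strongly envies. The key point is that $B_i \subseteq X_{?}$ is the bottom half-bundle freed up at $i$, and I expect to derive a contradiction by comparing $W$ with the bundle $T_i \cup \{g\}$ that witnesses $\pred(i) \champ i$. Specifically, $\pred(i)$ is the $g$-decomposer of $i$, so $\pred(i)$ envies $T_i \cup \{g\}$ and (being the relevant most-envious agent) $X_{\pred(i)} <_{\pred(i)} T_i \cup \{g\}$. Meanwhile $W \subseteq T_j \cup B_i$; since $W$ is not strongly envied, intersecting with the structure of a good $g$-cycle — in particular using Observation~\ref{obs:non-champion-doesnt-envy-top-half} applied to the fact that $\pred(i)$ does not champion $j$'s predecessor's decomposition except along the cycle (condition 3, no internal $g$-champion edges), and using Observation~\ref{obs:T_k<T_j} to order the top halves — should force $W$ to be too small for $\pred(i)$ to envy, or force someone to strongly envy $W$. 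The main obstacle I anticipate is bookkeeping the case where $W$ contains all of $B_i$ versus only part of it, and showing that in the "all of $B_i$" case the set $W \supseteq B_i$ combined with the cancelable-valuation comparison $T_j \cup B_i$ vs.\ $X_{\pred(i)}$ leads to a contradiction with $\pred(i)$'s known position in the good $g$-cycle (namely that $X_{\pred(i)} >_{\pred(i)} T_j \cup \{g\}$ when $\pred(i) \nchamp j$, via Observation~\ref{obs:non-champion-doesnt-envy-top-half}, unless $j = \succ(\pred(i)) = i$, which is exactly the excluded case). I would then invoke cancelability/non-degeneracy to conclude $B_i$ cannot "beat" $\{g\}$ in the relevant comparison, closing the contradiction.
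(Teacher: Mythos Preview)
Your handling of the $i=j$ case is correct (after you course-correct to Observation~\ref{obs:no_B_j_champ_of_j} together with the no-parallel-envy-edges property of a good $g$-cycle).

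For $i \neq j$, you have all the right ingredients but you are making the argument harder than it needs to be. The detour through an arbitrary minimally-envied subset $W \subseteq T_j \cup B_i$ and the anticipated case split on whether $W \supseteq B_i$ is unnecessary: to rule out $\pred(i)$ as a most-envious agent of $T_j \cup B_i$, it suffices to show that $\pred(i)$ does not envy $T_j \cup B_i$ \emph{at all}, since by monotonicity $\pred(i)$ then envies no subset either. The paper does exactly this in one line:
\[
X_{\pred(i)} \;>_{\pred(i)}\; X_i \;=\; T_i \cup B_i \;>_{\pred(i)}\; T_j \cup B_i,
\]
where the first inequality is $\pred(i)\nenvies i$ (good-cycle property~2), and the second follows from Observation~\ref{obs:T_k<T_j} (which gives $T_j <_{\pred(i)} T_i$ since $\pred(i)\champ i$, $\pred(i)\nchamp j$ by good-cycle property~3, and $j$ is $g$-decomposed) plus cancelability.

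Your alternative route in the last sentence --- comparing $B_i$ to $\{g\}$ --- also works and is a genuinely different, equally short argument: from $\pred(i)\nenvies i$ and $\pred(i)\champ i$ one gets $T_i \cup B_i <_{\pred(i)} X_{\pred(i)} <_{\pred(i)} T_i \cup \{g\}$, hence $B_i <_{\pred(i)} \{g\}$ by cancelability; then $T_j \cup B_i <_{\pred(i)} T_j \cup \{g\} <_{\pred(i)} X_{\pred(i)}$ by cancelability and Observation~\ref{obs:non-champion-doesnt-envy-top-half}. Either route is fine; the point is that both terminate at ``$\pred(i)$ does not envy $T_j \cup B_i$,'' so the $W$-bookkeeping you anticipate as the main obstacle simply never arises.
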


\begin{maybeappendix}{obs_pred(i)_nchamp_B(i)}
\begin{proof}
	If $i = j$ the statement holds by Observation \ref{obs:no_B_j_champ_of_j}.  Assume otherwise.  By definition of a good $g$-cycle, $\pred(i)\nenvies i$ and $\pred(i)\champ i$. Therefore,
	$
		X_{\pred(i)} >_{\pred(i)} T_i\cup B_i>_{\pred(i)} T_j \cup B_i,
	$ 
where the second inequality is by Observation~\ref{obs:T_k<T_j} and \cancbty. Thus, $\pred(i)$ does not envy $T_j \cup B_i$ and the claim follows.
\end{proof}
\end{maybeappendix}

\begin{lemma}\label{lem:alg_start}
	Let $C$ be a good $g$-cycle.  For any agent $i$ along the cycle, there exists an agent $a$ such that $a \champ[B_i \mid\circ] \succ(i)$.
\end{lemma}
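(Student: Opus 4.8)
The statement unravels as follows. By the convention stated just above the lemma, $a \champ[B_i \mid \circ] \succ(i)$ means that $a$ is most envious of $\left(X_{\succ(i)} \setminus B_{\succ(i)}\right) \cup B_i = T_{\succ(i)} \cup B_i$, where $X_{\succ(i)} = T_{\succ(i)} \cupdot B_{\succ(i)}$ is the $g$-decomposition of $\succ(i)$ by $\pred(\succ(i)) = i$, which exists by Observation~\ref{obs:good_cycle_g_decomposition}. So the task is to exhibit an agent that is most envious of the set $R := T_{\succ(i)} \cup B_i$. The plan is to first record the elementary fact that \emph{any set that some agent envies has a most envious agent}: among all subsets $T$ of that set that are envied by some agent, one of minimum cardinality cannot be strongly envied by anyone (otherwise a strict subset of it would be envied, contradicting minimality), so whoever envies such a minimal $T$ is most envious of the set. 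Hence it suffices to show that \emph{some} agent envies $R$.

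I claim that agent $i$ herself envies $R$. The heart of the argument is to show $T_i <_i T_{\succ(i)}$, where $X_i = T_i \cupdot B_i$ is the $g$-decomposition of $i$ by $\pred(i)$ (again from Observation~\ref{obs:good_cycle_g_decomposition}). For this I would apply the existing observations with the ``degenerate'' choice of letting $i$ play the role of both the decomposer and the non-champion: since all agents on a good $g$-cycle are distinct and the cycle has length at least $2$, condition~3 in the definition of a good $g$-cycle forces $i \nchamp i$; combined with the fact that $i$ is $g$-decomposed, Observation~\ref{obs:non-champion-doesnt-envy-top-half} yields $X_i >_i T_i \cup \{g\}$. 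On the other hand, since $i \champ \succ(i)$ and $i \nenvies \succ(i)$, the set $\discard{i}{\succ(i)}$ equals $B_{\succ(i)}$ and, by definition of championship, $i$ envies $\left(X_{\succ(i)}\cup\{g\}\right)\setminus B_{\succ(i)} = T_{\succ(i)} \cup \{g\}$, i.e. $X_i <_i T_{\succ(i)} \cup \{g\}$. Chaining these, $T_i \cup \{g\} <_i X_i <_i T_{\succ(i)} \cup \{g\}$, and cancelability (both extreme sets contain $g$, while neither $T_i$ nor $T_{\succ(i)}$ does) gives $T_i <_i T_{\succ(i)}$. Since $B_i \subseteq X_i$ is disjoint from $T_i$ and from $T_{\succ(i)} \subseteq X_{\succ(i)}$, adding $B_i$ via cancelability gives $X_i = T_i \cupdot B_i <_i T_{\succ(i)} \cup B_i = R$; that is, $i$ envies $R$.

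Combining the two steps: $R$ is envied (by $i$), hence it has a most envious agent $a$, and $a$ being most envious of $T_{\succ(i)} \cup B_i = \left(X_{\succ(i)}\setminus B_{\succ(i)}\right)\cup B_i$ is exactly the assertion $a \champ[B_i \mid \circ] \succ(i)$; one also checks the typing constraints of Definition~\ref{def:champ_H|S}, namely $B_i \subseteq X_i \subseteq M \setminus X_{\succ(i)}$ and $B_{\succ(i)} \subseteq X_{\succ(i)}$, which hold because $i \neq \succ(i)$. The step I would be most careful about is the legitimacy of instantiating Observation~\ref{obs:non-champion-doesnt-envy-top-half} (and the good-cycle conditions) with $j = i$ itself; once that is accepted, the rest is routine bookkeeping with cancelability. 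As an alternative to the chained inequalities, one can obtain $T_i <_i T_{\succ(i)}$ in one shot from Observation~\ref{obs:T_k<T_j}, taking its ``$j$'' to be $\succ(i)$ and both its ``$i$'' and ``$k$'' to be $i$.
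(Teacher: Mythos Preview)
Your proof is correct and follows essentially the same approach as the paper: show that agent $i$ envies $T_{\succ(i)}\cup B_i$ via $T_i <_i T_{\succ(i)}$ and cancelability, then invoke the existence of a most envious agent. The paper obtains $T_i <_i T_{\succ(i)}$ in one line by directly applying Observation~\ref{obs:T_k<T_j} with $j=\succ(i)$ and $k=i$ (exactly the alternative you mention at the end), whereas you first spell out that argument via Observation~\ref{obs:non-champion-doesnt-envy-top-half}; your worry about instantiating it with $j=i$ is unfounded, since condition~3 of a good $g$-cycle indeed gives $i\nchamp i$.
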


\begin{proof}
	$C$ is a good cycle,
	hence $i$ $g$-decomposes ${\succ(i)}$ into $X_{\succ(i)} = T_{\succ(i)} \cup B_{\succ(i)}$.  Furthermore, $i \nchamp i$ and $i$ is $g$-decomposed into $X_i = T_i \cup B_i$.  Thus by Observation \ref{obs:T_k<T_j}
	(with $j = \succ(i)$ and $k = i$ in the Observation statement)
	we have $T_{i} <_i T_{\succ(i)}$.  Hence, by \cancbty,
	$ X_i = T_i \cup B_i <_i T_{\succ(i)} \cup B_i.$
	Since the set $T_{\succ(i)} \cup B_i$ is envied by some agent, it must have a most envious agent and the claim follows. 
\end{proof}

\begin{lemma}\label{lem:alg_step}
	Let $C$ be a good $g$-cycle.
	Let $i,j,k$ be agents along the cycle.
	If $k \champ[B_i \mid\circ]j$, then there exists an agent $a$ (not necessarily in the cycle) such that $a \champ[B_i \mid\circ]\succ(k)$.
\end{lemma}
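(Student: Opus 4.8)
The plan is to mimic the structure of Lemma~\ref{lem:alg_start}: we want to exhibit a set that is envied by somebody, and then take its most envious agent. The target edge $a \champ[B_i \mid \circ] \succ(k)$ means $a$ is most envious of $(X_{\succ(k)} \setminus B_{\succ(k)}) \cup B_i = T_{\succ(k)} \cup B_i$. So it suffices to show that \emph{some} agent envies $T_{\succ(k)} \cup B_i$; the existence of a most envious agent is then automatic. The natural candidate for this envious agent is $k$ itself, so the goal reduces to proving $X_k <_k T_{\succ(k)} \cup B_i$.

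**The key chain of inequalities.** First I would use the hypothesis $k \champ[B_i \mid \circ] j$, i.e. $k$ is most envious of $T_j \cup B_i$, so in particular $k$ envies it: $X_k <_k T_j \cup B_i$. By \cancbty\ (cancelling $B_i$, which is disjoint from both $X_k$'s relevant parts — here one must be a little careful that $B_i$ is disjoint from $X_k$, but $B_i \subseteq X_i$ and $i \neq k$ unless... this needs the case analysis below), this gives $X_k \setminus B_i <_k T_j$ or more usefully we keep it as $X_k <_k T_j \cup B_i$. Next, since $C$ is a good $g$-cycle and $k \nchamp \succ(k)$ is false — rather $k \champ \succ(k)$, so $k$ $g$-decomposes $\succ(k)$, meaning $X_k <_k T_{\succ(k)} \cup \{g\}$. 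Now I want to compare $T_j$ with $T_{\succ(k)}$. The relevant tool is Observation~\ref{obs:non-champion-doesnt-envy-top-half} or Observation~\ref{obs:T_k<T_j}: since along a good cycle the only $g$-champion edge out of $k$ goes to $\succ(k)$, we have $k \nchamp j$ (assuming $j \neq \succ(k)$), and $j$ is $g$-decomposed, so $T_j <_k T_{\succ(k)}$ by Observation~\ref{obs:non-champion-doesnt-envy-top-half} applied with the fact that $k$ $g$-decomposes $\succ(k)$ — combining $T_j \cup \{g\} <_k X_k <_k T_{\succ(k)} \cup \{g\}$ and cancelling $g$. Then $X_k <_k T_j \cup B_i <_k T_{\succ(k)} \cup B_i$ by \cancbty, which is exactly what we need: $k$ envies $T_{\succ(k)} \cup B_i$, so it has a most envious agent $a$, and $a \champ[B_i \mid \circ] \succ(k)$.

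**The main obstacle: degenerate cases.** The argument above has two implicit assumptions that must be discharged. First, the inequality $X_k <_k T_j \cup B_i \Rightarrow X_k <_k T_{\succ(k)} \cup B_i$ via $T_j <_k T_{\succ(k)}$ requires that $T_j <_k T_{\succ(k)}$ survives adding $B_i$, which is fine by cancelability provided $B_i$ is disjoint from $T_j$ and $T_{\succ(k)}$; but if $B_i$ overlaps $X_j$ or $X_{\succ(k)}$ there is an issue. Actually $B_i \subseteq X_i$, and the $X$'s are disjoint, so $B_i \cap X_j = \emptyset$ unless $i = j$, and $B_i \cap X_{\succ(k)} = \emptyset$ unless $i = \succ(k)$. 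So the cases $i = j$, $i = \succ(k)$, and $j = \succ(k)$ (where $k \nchamp j$ fails) each need separate handling. I expect the case $j = \succ(k)$ is the subtle one: there $k \champ j$ holds, so we cannot invoke Observation~\ref{obs:non-champion-doesnt-envy-top-half}; instead one argues directly that $X_k <_k T_j \cup B_i = T_{\succ(k)} \cup B_i$ is already the hypothesis, so $a$ exists trivially. The case $i = j$: then $B_i = B_j$, and $k \champ[B_j \mid \circ] j$ would by Observation~\ref{obs:no_B_j_champ_of_j} force $k \envies j$, contradicting the good-cycle property — so this case is vacuous. The case $i = \succ(k)$ needs $B_i \cap X_{\succ(k)} = B_i$, so $T_{\succ(k)} \cup B_i$ is just a re-partition of $X_{\succ(k)}$ plus possibly nothing new; here I would argue that $k \champ \succ(k)$ already gives $X_k <_k T_{\succ(k)} \cup \{g\}$ and relate $B_i$ to $\{g\}$ — this is the fiddliest sub-case and where I'd expect to spend the most care, possibly using that $B_{\succ(k)} = B_i$ and comparing $T_j \cup B_i$ with $X_{\succ(k)} = T_{\succ(k)} \cup B_{\succ(k)}$. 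Modulo these case distinctions, the core is the three-term inequality chain $X_k <_k T_j \cup B_i <_k T_{\succ(k)} \cup B_i$ driven by $T_j <_k T_{\succ(k)}$ and cancelability.
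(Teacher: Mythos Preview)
Your core argument is correct and is exactly the paper's: handle $j=\succ(k)$ first (trivial, take $a=k$), then use $k\nchamp j$ together with Observation~\ref{obs:T_k<T_j} to get $T_j <_k T_{\succ(k)}$, chain this with the hypothesis via cancelability to obtain $X_k <_k T_j\cup B_i <_k T_{\succ(k)}\cup B_i$, and take a most envious agent.

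Your degenerate-case analysis is over-cautious and in one place slightly off. The cancelability step needs $B_i$ disjoint from $T_j$ and from $T_{\succ(k)}$, and this \emph{always} holds: either the sets live in different agents' (disjoint) bundles, or they are the top and bottom halves of the same decomposition, disjoint by definition. So no case split on $i$ is needed for the inequality chain itself. As for vacuity: the sub-case $i=\succ(k)$ (your ``fiddliest'') is dispatched in one line by Observation~\ref{obs:pred(i)_nchamp_B(i)}, since then $k=\pred(i)$ and the hypothesis $k\champ[B_i\mid\circ]j$ cannot hold. For $i=j$, your claim that $k\envies j$ ``contradicts the good-cycle property'' is not right --- a good cycle only forbids envy \emph{parallel} to a cycle edge, not arbitrary envy; the actual reason this case is vacuous is that $k\champ[B_j\mid B_j]j$ requires $B_j\subseteq M\setminus X_j$, forcing $B_j=\emptyset$, whereupon the hypothesis gives $X_k<_k T_j$ while Observation~\ref{obs:non-champion-doesnt-envy-top-half} gives $X_k>_k T_j\cup\{g\}$, a contradiction.
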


\begin{proof}
	If $k = \pred(j)$, then $j = \succ(k)$ and we are done (take $a = k$). Assume otherwise.
	$C$ is a good cycle,
	hence $k$ $g$-decomposes ${\succ(k)}$ into $X_{\succ(k)} = T_{\succ(k)} \cup B_{\succ(k)}$.  Furthermore, $k \nchamp j$ since $k \neq \pred(j)$, and $j$ is $g$-decomposed into $X_j = T_j \cup B_j$.  Thus by Observation \ref{obs:T_k<T_j} we have $T_{j} <_i T_{\succ(k)}$.  Hence,
	$ X_k <_k T_j \cup B_i <_k T_{\succ(k)} \cup B_i, $ 
	where the first inequality holds by $k \champ[B_i \mid\circ]j$ and the second by \cancbty.  Since the set $T_{\succ(k)} \cup B_i$ is envied by some agent, it must have a most envious agent and the claim follows.  
\end{proof}

%



For every bottom half-bundle $B_i$ along a good $g$-cycle $C$, applying Lemma \ref{lem:alg_start} provides an initial $B_i$-champion edge.  If this edge is internal to the cycle, i.e., the source of the edge is in the cycle, then we can apply Lemma \ref{lem:alg_step} to discover a new $B_i$-champion edge. Once again, if the new edge is internal to the cycle, then we can reapply Lemma \ref{lem:alg_step}.  We can repeat this process to discover more and more $B_i$-champion edges, until either the new edge has already been previously discovered, or it is external (i.e., its source is outside the cycle).  

There are two particular types of internal $B_i$-champions edges that are useful to us.

\begin{definition}\label{def:good_edge} 
	Let $C$ be a good $g$-cycle. 
Let $i,j,k$ be three agents along $C$. If $i\champ[B_k\mid\circ]j$
and $k$ is on the path from $j$ to $i$ in $C$, then we say that the edge $i\champ[B_k \mid\circ]j$ is a \emph{good edge} (or good $B_k$-edge).
If $\ell\champ[B_k \mid\circ]j$ for some agent $\ell$ outside the cycle $C$, then we say that the edge $\ell\champ[B_k \mid\circ]j$ is an \emph{external edge} (or external $B_k$-edge).
\end{definition}


\shortversion
\begin{wrapfigure}[6]{r}{0.3\textwidth}
	\vspace{-22pt}
	\begin{center}
		\includegraphics[width=0.3\textwidth]{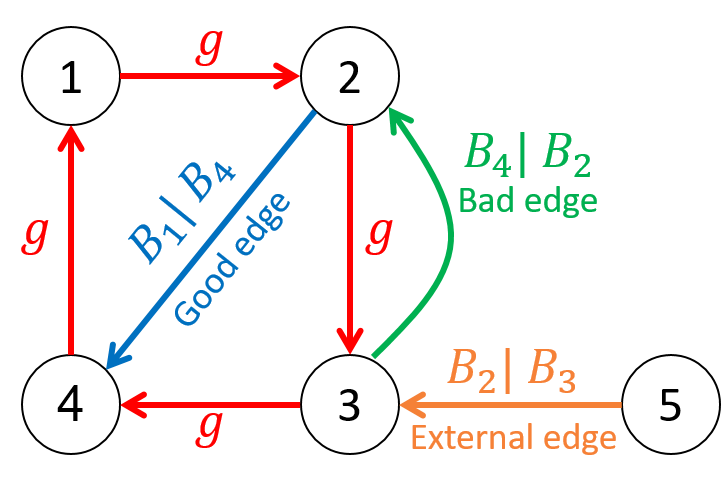}
	\end{center}
\end{wrapfigure}
\shortversionend

\fullversion
\begin{wrapfigure}[6]{r}{0.3\textwidth}
	\vspace{-30pt}
	\begin{center}
		\includegraphics[width=0.3\textwidth]{figures/good-edge.png}
	\end{center}
\end{wrapfigure}
\fullversionend


The figure on the right
illustrates Definition \ref{def:good_edge}. 
The red edges form a good $g$-cycle $C$ among 4 agents, $C = 1\champ 2 \champ 3 \champ 4 \champ 1$.
The edge $2\champ[B_1 \mid B_4]4$ is a \emph{good edge}, since 1 is on the path from $4$ to $2$ in $C$.
On the other hand, $3\champ[B_4 \mid B_2]2$ is not a good edge (we call it a bad edge in the figure), since 4 is not on the path from $2$ to $3$ along $C$. $5 \champ[B_2 \mid B_3] 3$ is an external edge.

\begin{theorem}
	\label{thm:good-edge-or-external}
	Let $C$ be a good $g$-cycle.  For every agent $j$ along the cycle, there exists either a good $B_j$-edge in $C$, or an external $B_j$-edge in $C$.
\end{theorem}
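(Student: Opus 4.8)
The plan is to fix an agent $j$ along the good $g$-cycle $C$ and to grow a chain of $(B_j\mid\circ)$-champion edges, starting from the edge guaranteed by Lemma~\ref{lem:alg_start} and repeatedly extending it through Lemma~\ref{lem:alg_step}, until the current edge is either good or external. The content of the proof is a ``positional'' potential argument showing that this must happen after finitely many extensions. To set things up: by Observation~\ref{obs:good_cycle_g_decomposition}, $\pred(j)$ $g$-decomposes $j$ into $X_j = T_j \cupdot B_j$, and this is the half-bundle that the ``$\circ$'' notation refers to below. Relabel the agents along the cycle so that $C = v_0 \champ v_1 \champ \cdots \champ v_{k-1}\champ v_0$ with $v_0 = j$, and set $\mathrm{pos}(v_p) = p$; note that $k \ge 2$.

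The first ingredient I would establish is an elementary fact: for cycle vertices $u,w$, the edge $u \champ[B_j\mid\circ]w$ is a good $B_j$-edge if and only if $\mathrm{pos}(w) = 0$ or $\mathrm{pos}(w) > \mathrm{pos}(u)$; equivalently, it fails to be good exactly when $1 \le \mathrm{pos}(w) \le \mathrm{pos}(u)$. This is immediate from Definition~\ref{def:good_edge}: the edge is good iff $j = v_0$ lies on the directed path from $w$ to $u$ along $C$, and that path --- which runs $v_{\mathrm{pos}(w)} \to v_{\mathrm{pos}(w)+1} \to \cdots \to v_{\mathrm{pos}(u)}$ with indices mod $k$ --- passes through $v_0$ precisely when it starts at $v_0$ (i.e.\ $\mathrm{pos}(w)=0$) or when it wraps around through $v_0$ (i.e.\ $\mathrm{pos}(w) > \mathrm{pos}(u)$).

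Then I would construct the chain. Lemma~\ref{lem:alg_start}, applied to $j$, yields an edge $e_0 = s_0 \champ[B_j\mid\circ]v_1$, whose target is at position $1$. Inductively, suppose $e_n = s_n \champ[B_j\mid\circ]w_n$ has been built with $w_n$ on the cycle. If $s_n \notin C$, then $e_n$ is an external $B_j$-edge and we are done; if $s_n \in C$ and $e_n$ is good, we are done; otherwise $s_n \in C$ and $e_n$ is not good, so the fact above gives $1 \le \mathrm{pos}(w_n) \le \mathrm{pos}(s_n)$, and Lemma~\ref{lem:alg_step} applied to the edge $s_n \champ[B_j\mid\circ]w_n$ produces an edge $e_{n+1} = s_{n+1}\champ[B_j\mid\circ]\succ(s_n)$ whose target $w_{n+1} = \succ(s_n)$ again lies on $C$. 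Termination now follows by tracking $\mathrm{pos}(w_n)$: as long as the process continues we are in the ``otherwise'' case, and either $\mathrm{pos}(s_n) \le k-2$, in which case $\mathrm{pos}(w_{n+1}) = \mathrm{pos}(s_n)+1 \ge \mathrm{pos}(w_n)+1 > \mathrm{pos}(w_n)$ and $\mathrm{pos}(w_{n+1})$ still lies in $\{1,\dots,k-1\}$; or $\mathrm{pos}(s_n) = k-1$, i.e.\ $s_n = \pred(j)$, in which case $w_{n+1} = \succ(\pred(j)) = j$, so $e_{n+1}$ has its target at position $0$ and hence is good (if $s_{n+1}\in C$) or external (if $s_{n+1}\notin C$), so the process stops at $e_{n+1}$. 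In the first case $\mathrm{pos}(w_n)$ strictly increases within the finite set $\{1,\dots,k-1\}$, so that case cannot recur more than $k-2$ times in a row; therefore the chain is finite and must end with a good or external $B_j$-edge, which proves the theorem.

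The part I expect to need the most care is the cyclic bookkeeping, which quietly subsumes the degenerate configurations: the extension producing the same edge again, the source coinciding with the target (a self-$(B_j\mid\circ)$-champion), and the target landing back on $j$. The positional characterization is arranged precisely so that all of these fall into a ``stop'' case --- any edge whose target is $j$ is good as soon as its source lies on $C$, and any self-loop $u\champ[B_j\mid\circ]u$ is good iff $u=j$ --- so no ad hoc case split is needed. The only invariant one must verify along the way is that $w_n$ stays on the cycle while the process runs, which is clear since $w_{n+1} = \succ(s_n)$ and $s_n \in C$ whenever we extend.
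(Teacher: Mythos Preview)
Your proposal is correct and follows essentially the same approach as the paper's proof: both start from Lemma~\ref{lem:alg_start}, repeatedly extend via Lemma~\ref{lem:alg_step}, and terminate by a monotonicity argument along the cycle's positions. The only cosmetic difference is that the paper tracks the sources $\ell_i$ (arguing that $\ell_i < \ell_{i+1}$ cannot hold forever), whereas you track the targets $w_n$ (arguing that $\mathrm{pos}(w_n)$ strictly increases); since $w_{n+1}=\succ(s_n)$, these are dual formulations of the same potential argument, and your extra case $\mathrm{pos}(s_n)=k-1$ is in fact vacuous by Observation~\ref{obs:pred(i)_nchamp_B(i)}.
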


\begin{proof}
	Assume without loss of generality that $C=  1\champ 2 \champ \cdots \champ k \champ 1$ and $j=1$, i.e., we try to find $B_1$-champion edges.	
	By Lemma~\ref{lem:alg_start} there exists an edge $\ell_1 \champ[B_1\mid\circ] 2$ for some agent $\ell_1$. If this is an external $B_1$-edge we are done. Otherwise, $\ell_1$ is an agent along $C$, and thus by Lemma~\ref{lem:alg_step} there exists an edge $\ell_2 \champ[B_1\mid\circ] \succ(\ell_1)$. 
	As long as the result of Lemma~\ref{lem:alg_step} is not an external edge we may apply the lemma repeatedly. Hence, if no such iteration results in an external edge, we  obtain a  sequence of agents $(\ell_i)_{i=1}^\infty$ such that $\ell_{i+1} \champ[B_1\mid\circ] \succ(\ell_i)$ for every $i\geq 1$.
	
	If for some $i\geq1$, we have $\ell_{i+1} \leq \ell_i$ then the edge $\ell_{i+1} \champ[B_1\mid\circ] \succ(\ell_i)$ is a good edge (since the path from $\succ(\ell_i)$ to $\ell_{i+1} $ includes 1). Hence, if none of these edges are good, then $\ell_i < \ell_{i+1}$ for every $i\geq1$, in contradiction to $C$ being of finite length. Thus, one of these edges must be good, hence we are done.
\end{proof}

\shortversion
\begin{wrapfigure}[3]{r}{0.2\textwidth}
	\vspace{-24pt}
	\begin{center}
		\includegraphics[scale=\figurescale]{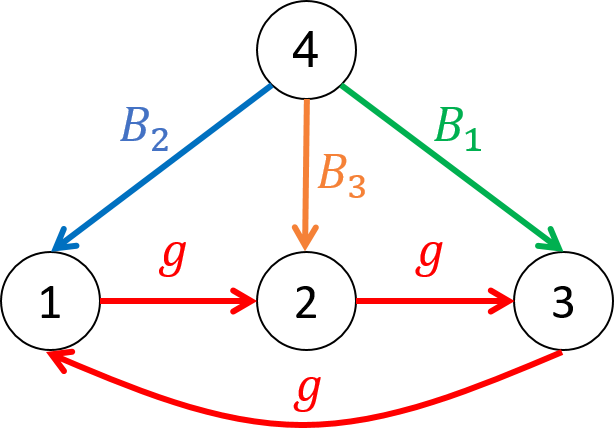}
	\end{center}
\end{wrapfigure}
\shortversionend
%
The following observation and its corollary allow us to narrow down the possible configurations of $B_j$-edges obtained from Theorem
\ref{thm:good-edge-or-external}.
\shortversion
\begin{observation}[$\star$]\label{obs:bottom_bundle_ineq}
\shortversionend
\fullversion
\begin{observation}\label{obs:bottom_bundle_ineq}
\fullversionend
	If $i\champ[B_j\mid\circ] k$ and $i\nenvies k$ then $B_k <_i B_j$.
\end{observation}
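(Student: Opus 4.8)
The plan is to unwind the shorthand $i\champ[B_j\mid\circ]k$, translate both hypotheses into inequalities for $v_i$, and then cancel a common sub-bundle using \cancbty.

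First I would recall (the Convention following Lemma~\ref{lem:alg_start}, together with Definition~\ref{def:champ_H|S}) that $i\champ[B_j\mid\circ]k$ abbreviates $i\champ[B_j\mid B_k]k$, which means that $i$ is most envious of $(X_k\setminus B_k)\cup B_j = T_k\cup B_j$. Since a most-envious agent of a set envies some subset of it, monotonicity gives that $i$ envies the whole set, i.e. $X_i <_i T_k\cup B_j$. On the other hand, the hypothesis $i\nenvies k$ gives $X_i \geq_i X_k = T_k\cupdot B_k$. Chaining the two inequalities yields $T_k\cup B_k \leq_i X_i <_i T_k\cup B_j$, and hence $T_k\cup B_k <_i T_k\cup B_j$.

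The last step is to cancel the common part $T_k$. This is legitimate because $B_k = X_k\setminus T_k$ is disjoint from $T_k$, and $B_j\subseteq X_j$ is disjoint from $X_k\supseteq T_k$ since $\allocs$ is an allocation; thus $T_k$ is disjoint from both $B_k$ and $B_j$. Applying the (non-degenerate) cancelability equivalence $v_i(S\cup R)>v_i(T\cup R)\Leftrightarrow v_i(S)>v_i(T)$ with $R=T_k$, $S=B_j$ and $T=B_k$, we conclude $B_k<_i B_j$, as required.

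The argument is short and amounts to a substitution of definitions; the only points requiring care — and the closest thing to an obstacle — are verifying that the disjointness hypotheses of \cancbty are actually satisfied, and observing that the bottom half-bundle $B_k$ concealed in the shorthand $[B_j\mid\circ]$ is the very same $B_k$ that appears in the conclusion (both being the bottom half-bundle of $k$ fixed by the ambient $g$-decomposition), so that no ambiguity arises.
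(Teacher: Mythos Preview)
Your proof is correct and follows exactly the same route as the paper's: chain $X_k \leq_i X_i <_i T_k\cup B_j$ from the two hypotheses, then cancel $T_k$. The paper's version is terser (it does not spell out the monotonicity step or the disjointness verification for \cancbty), but the argument is identical.
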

\begin{maybeappendix}{obs_bottom_bundle_ineq}
\begin{proof}
	Since $i\nenvies k$ and $i\champ[B_j\mid\circ] k$ we have
	$
		T_k\cup B_k = X_k \leq_i X_i <_i (X_k\setminus B_k)\cup B_j = T_k \cup B_j,
	$
	and by \cancbty\ this implies $B_k <_i B_j$.
\end{proof}
\end{maybeappendix}

\shortversion
\begin{corollary}[$\star$]\label{cor:no_single_external_source}	
\shortversionend
\fullversion
\begin{corollary}\label{cor:no_single_external_source}	
\fullversionend
	Let $C$ be a good $g$-cycle. 
	Consider the set of $B_j$-edges guaranteed by Theorem \ref{thm:good-edge-or-external} for every agent $j$ along the cycle.
	If all these edges are external, then they cannot all share the same source agent, unless that agent envies some agent along the cycle 
\fullversion
	(the figure below demonstrates an impossible configuration).
\fullversionend
\shortversion
	(the figure on the right demonstrates an impossible configuration).
\shortversionend
\end{corollary}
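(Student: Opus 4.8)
The plan is a contradiction argument built on a monovariant over the bottom half-bundles of the cycle. Suppose that for every agent $j$ along $C$, the $B_j$-edge furnished by Theorem~\ref{thm:good-edge-or-external} is external, that all of these edges share a single source $\ell$ (so in particular $\ell\notin C$), and --- towards a contradiction --- that $\ell$ envies no agent along $C$. By Observation~\ref{obs:good_cycle_g_decomposition} each $j$ along $C$ is $g$-decomposed by $\pred(j)$ into $X_j=T_j\cupdot B_j$, so every $B_j$ is a well-defined bundle. Write the external $B_j$-edge as $\ell\champ[B_j\mid\circ] m_j$; by Definition~\ref{def:good_edge} the target $m_j$ is again an agent along $C$, so $j\mapsto m_j$ is a self-map of the (finite) agent set of $C$.

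First I would record that this self-map strictly decreases the $\ell$-value of the associated bottom bundle. Fix $j$ along $C$. Since $\ell$ envies no agent of $C$, we have $\ell\nenvies m_j$; combined with $\ell\champ[B_j\mid\circ] m_j$, Observation~\ref{obs:bottom_bundle_ineq} (with its $i$ taken to be $\ell$ and its $k$ taken to be $m_j$) gives $B_{m_j}<_\ell B_j$.

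Then I would iterate: starting from an arbitrary agent $j_0$ along $C$, set $j_{i+1}=m_{j_i}$, so all $j_i$ lie along $C$ and $B_{j_0}>_\ell B_{j_1}>_\ell B_{j_2}>_\ell\cdots$ is strictly $\ell$-decreasing. Since $C$ has only finitely many agents, some agent repeats, say $j_s=j_t$ with $s<t$, whence $v_\ell(B_{j_s})>v_\ell(B_{j_{s+1}})>\cdots>v_\ell(B_{j_t})=v_\ell(B_{j_s})$ --- a contradiction. Therefore the edges cannot all have the common source $\ell$ unless $\ell$ envies some agent along $C$. I do not expect a genuine obstacle here: the only point requiring care is the applicability of Observation~\ref{obs:bottom_bundle_ineq}, which holds precisely because the ``unless'' hypothesis we are negating says $\ell$ envies nobody on the cycle, so $\ell\nenvies m_j$ for every $j$; the rest is the pigeonhole finiteness argument on $C$.
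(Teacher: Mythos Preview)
Your argument is correct and rests on exactly the same observation as the paper's proof: from $\ell\champ[B_j\mid\circ]m_j$ and $\ell\nenvies m_j$, Observation~\ref{obs:bottom_bundle_ineq} gives $B_{m_j}<_\ell B_j$, and finiteness of $C$ then yields a contradiction. The only difference is packaging: you iterate the map $j\mapsto m_j$ and invoke pigeonhole, whereas the paper takes $j=\arg\min_i v_\ell(B_i)$ over agents $i$ along $C$ and derives the contradiction in a single step from $B_{m_j}<_\ell B_j$.
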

%
\fullversion
	\begin{center}
		\includegraphics[scale=\figurescale]{figures/external}
	\end{center}
\fullversionend

\begin{maybeappendix}{cor_no_single_external_source}
\begin{proof}
	Assume towards contradiction that there is some agent $a$ which is the source of all $B_i$-edges given by Theorem \ref{thm:good-edge-or-external}, and $a$ does not envy any agent along the cycle. Let $j=\arg\min_i\{v_a(B_i)\mid i \text{ is an agent along } C\}$. By assumption, there exists some $j'$ along $C$ such that $a \champ[B_j \mid\circ] j'$, and $a \nenvies j'$.  By Observation \ref{obs:bottom_bundle_ineq}, $B_j >_a B_{j'}$. Hence, $v_a(B_j)>v_a(B_{j'})$, in contradiction to the definition of $j$.
\end{proof}
\end{maybeappendix}

\section{EFX with at most $n-2$ unallocated goods}
\label{sec:efxnm2}

In this section we prove the following: 

\begin{theorem}
	\label{thm:n-2-charity}
	For every profile of $n$ additive valuations (and more generally, \valclass\ valuations), there exists an EFX allocation $\allocs$ with at most $n-2$ unallocated items.  Moreover, in $\allocs$ no agent envies the set of unallocated items.
\end{theorem}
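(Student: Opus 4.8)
By Lemma~\ref{lem:dominate-implies-progress} (with $k=n-1$) it suffices to show that every partial EFX allocation $\allocs$ with at least $n-1$ unallocated goods is dominated by some partial EFX allocation; the lemma then produces a partial EFX allocation with at most $n-2$ unallocated goods in which, moreover, no agent envies the pool. We shall in fact always produce a \emph{Pareto}-dominating allocation. Fix an EFX allocation $\allocs$ with unallocated set $U$, $|U|\ge n-1$. If some agent envies $U$ we are done by the argument inside Lemma~\ref{lem:dominate-implies-progress}, so assume no agent envies any subset of $U$. If $M_\allocs$ contains a basic Pareto-improvable (PI) cycle we are done by Corollary~\ref{cor:trivial_cycles}; so assume it does not. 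In particular the envy graph is acyclic and there is no self-$g$-champion for any $g\in U$.

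\emph{Step 1 (a good $g$-cycle).} Fix any $g\in U$. By Observation~\ref{obs:exists-champion} every agent is $g$-championed, and since there are no self-$g$-champions, picking one $g$-champion per agent gives a digraph in which every vertex has an incoming $g$-champion edge; it therefore contains a simple $g$-champion cycle, and we take a shortest one. It has distinct vertices and length at least $2$, and by minimality no internal $g$-champion edges. If one of its edges is also an envy edge, replace it by that envy edge; if at most one $g$-champion edge then remains we have a basic PI cycle and are done, so we may assume the cycle carries no parallel envy edges, i.e.\ it is a good $g$-cycle $C = a_1\champ a_2\champ\cdots\champ a_\ell\champ a_1$ (acyclicity of the envy graph excludes the all-envy alternative).

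\emph{Step 2 (a Pareto-improvable edge set).} By Observation~\ref{obs:g_notin_bottom}, each cycle edge $a_i\champ a_{i+1}$ is the generalized edge $a_i\champ[\{g\}\mid B_{a_{i+1}}]a_{i+1}$, which consumes the pool good $g$ and releases the bottom half-bundle $B_{a_{i+1}}$. For every agent $j$ on $C$, Theorem~\ref{thm:good-edge-or-external} supplies a champion edge with $H = B_j$ whose target lies on $C$ (so that it itself releases a bottom half-bundle of $C$), either an internal good edge or an external edge; Corollary~\ref{cor:no_single_external_source} excludes the configuration in which all of these are external and share one source that envies no agent on $C$. Combining a single $g$-champion edge of $C$ with a suitable sub-collection of the $B_j$-edges --- chained so that the half-bundle released by each edge is exactly the $H$ of the next --- one obtains a set of edges in which every set $H$ is either $\{g\}\subseteq U$ or released within the set, and which forms a vertex-disjoint union of cycles. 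By Lemma~\ref{lem:pareto-improvable} and Remark~\ref{rem:PI_edge_set} this yields a Pareto-dominating EFX allocation, contradicting our standing assumption and completing the proof.

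The step I expect to be the real obstacle is Step~2: the edges produced by Theorem~\ref{thm:good-edge-or-external} have controlled \emph{targets} but uncontrolled \emph{sources}, so threading them --- together with one $g$-champion edge of $C$ --- into an honest vertex-disjoint union of cycles whose $H$-demands are met exactly by released bottom half-bundles and the single pool good $g$ requires further edge discovery (repeated application of Lemma~\ref{lem:alg_step}) together with a pigeonhole argument over the $\ell\le n$ cycle agents and the $n-1$ pool goods, and demands care in the subcase where $C$ omits some agents so that external edges appear. Extracting a genuine good $g$-cycle past possible parallel envy edges is a smaller technical nuisance.
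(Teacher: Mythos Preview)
Your Step~2 is where the proof breaks down, and the gap is not merely technical. You try to build a PI cycle out of \emph{several} chained $B_j$-edges together with one $g$-champion edge, using only a single unallocated good $g$. But each $B_j$-edge consumes one bottom half-bundle and releases another, and there is no general mechanism guaranteeing that the released bundles match the demanded ones so as to close a vertex-disjoint union of cycles; Theorem~\ref{thm:good-edge-or-external} controls only the target of each edge, not the source, and repeated applications of Lemma~\ref{lem:alg_step} do not obviously terminate in a usable configuration. You flag this yourself as the real obstacle, and indeed the paper does \emph{not} resolve it this way.

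The paper's actual argument avoids chaining $B_j$-edges altogether. The key missing ingredient is Lemma~\ref{lem:parallel-rings}: if there is no basic PI cycle and at least $n-1$ unallocated goods, then there are \emph{exactly} $n-1$ unallocated goods, there are \emph{no} envy edges, and for each unallocated good $g_i$ the $g_i$-champion edges form a Hamiltonian cycle, all $n-1$ of them parallel to one another. This rigid structure does two things. First, it makes Step~1 trivial (the $g$-cycle is Hamiltonian and good, with no external edges possible). Second, and crucially, it means every champion edge $a\champ[g] a'$ along the $g$-cycle has a parallel edge $a\champ[g_i] a'$ for every other unallocated good $g_i$. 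The paper then uses just \emph{one} good $B_k$-edge $j\champ[Z\mid\circ] j'$ (where $Z$ is the discard set of some cycle edge $k{-}1\champ[g] k$), and completes the path from $j'$ back to $j$ through $k$ by using a \emph{different} unallocated good for each edge along that path, keeping the edge into $k$ as the sole $g$-edge so that $Z$ is released. Since the path has at most $n-1$ edges and there are $n-2$ other unallocated goods plus $g$, the count works out exactly.

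So the essential idea you are missing is to exploit the abundance of unallocated goods not through a pigeonhole over $B_j$-edges, but by first pinning down the parallel-Hamiltonian structure and then spending one distinct unallocated good per edge of the closing path.
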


The following lemma shows that the basic edges of the graph $M_{\allocs}$ follow a very particular structure. This lemma is used in the proof of Theorem~\ref{thm:n-2-charity}.

\begin{figure}
	\begin{center}
	\shortversion
		\includegraphics[width=0.65\textwidth]{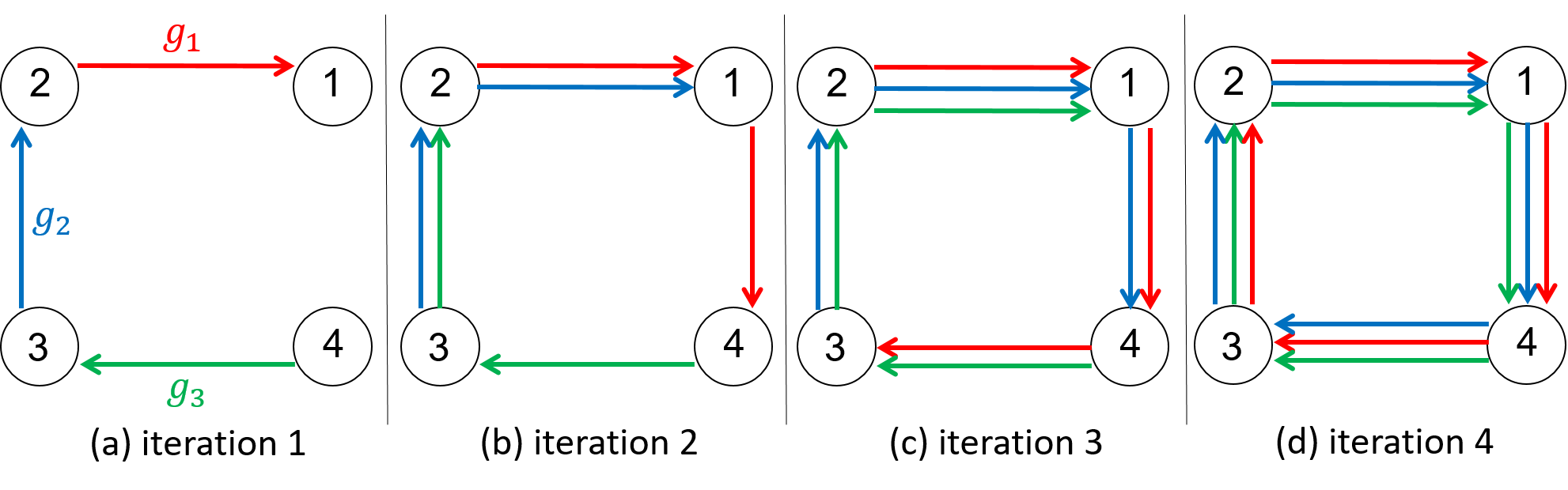}
	\shortversionend
	\fullversion
		\includegraphics[width=0.75\textwidth]{figures/parallel-ring.png}
	\fullversionend
		\caption{An illustration of the proof of Lemma~\ref{lem:parallel-rings} for $n=4$, $k=3$. Figures $(a)-(d)$ represent the 4 iterations in the proof.
		Red, blue and green edges are champion edges w.r.t. unallocated goods $g_1, g_2, g_3$, respectively. 
		\label{fig:parallel-ring}}
	\end{center}
\end{figure}

\begin{lemma}
	\label{lem:parallel-rings}
	Let $\allocs$ be a partial EFX allocation with at least $n-1$ unallocated items, and let $G$ be $M_{\allocs}$ restricted to basic edges, {\sl i.e.}, envy edges and basic champion edges as per Definition \ref{def:champ_g}. 
	
	If $G$ does not admit a basic PI cycle, then the number of unallocated goods is exactly $n-1$, and $G$ is a union of $n-1$ parallel
	Hamiltonian cycles. 
	Every such cycle consists of $g$-champion edges for some unallocated item $g$ (see Figure~\ref{fig:parallel-ring} (d) for the case of $n=4$). 
\end{lemma}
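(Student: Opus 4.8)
The plan is to exploit the absence of basic PI cycles together with the structural observations established in Section~\ref{sec:generalized_championship} (in particular Observation~\ref{obs:exists-champion}, Observation~\ref{obs:g_notin_bottom} and Corollary~\ref{cor:trivial_cycles}). Write $U = \{g_1,\ldots,g_k\}$ for the set of unallocated goods, where $k \geq n-1$ by hypothesis. The first step is to record what the non-existence of a basic PI cycle rules out: by Corollary~\ref{cor:trivial_cycles}, $G$ has no envy cycle, no self-$g$-champion, and no cycle using only envy edges plus at most one $h$-champion edge per $h \in U$. In particular, chasing envy edges from any agent must terminate, so there is an agent who is not envied. I would then argue that in fact \emph{no} agent is envied: if $i \envies j$, one could follow Observation~\ref{obs:exists-champion} to build a champion edge out of $j$, and combined with the envy edge this quickly yields a short basic PI cycle (an envy edge plus one champion edge closing a cycle is basic PI, since the single unallocated good released suffices). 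Hence $\allocs$ is envy-free, and every basic edge is a $g$-champion edge for some $g \in U$.

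\textbf{Building the Hamiltonian cycles.} Next I would fix one unallocated good $g \in U$ and look at the subgraph $G_g$ of $g$-champion edges. By Observation~\ref{obs:exists-champion}, every agent has out-degree at least one in $G_g$. Since there is no self-$g$-champion (Corollary~\ref{cor:trivial_cycles}) and, by the envy-freeness just established, each such edge $i \champ j$ $g$-decomposes $j$ (Observation~\ref{obs:g_notin_bottom}), the out-edges behave like a functional graph on $[n]$ with no fixed points, hence $G_g$ contains a cycle. The key point is that this cycle must be Hamiltonian: any $g$-cycle of length less than $n$ would be a basic PI cycle using only $g$-champion edges (indeed only one copy of the $g$-edge is "needed" in the PI sense — releasing $g$ once suffices to shuffle around the cycle), contradicting the hypothesis. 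Actually the cleaner route is: a $g$-cycle is itself a basic PI cycle (it is a cycle of basic edges all of whose $H_i$ are $\{g\}$ or $\emptyset$; only one $g$ is released but along a cycle one released copy propagates), so its mere existence contradicts no-basic-PI \emph{unless} — here I need to be careful about exactly which short cycles Corollary~\ref{cor:trivial_cycles} forbids. The correct statement to lean on is that a cycle of champion edges w.r.t.\ a \emph{single} $g$ with more than one $g$-edge is PI, so a $g$-cycle forces it to have exactly one... and since the functional structure gives a cycle through every vertex reachable, I conclude $G_g$ is a single $n$-cycle covering all agents, and by out-degree-$1$-ness (no self loops, each agent has a unique $g$-champion target in a forbidden-PI world) $G_g$ is exactly this Hamiltonian cycle.

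\textbf{Counting.} Having shown that for each $g \in U$ the $g$-champion edges form a Hamiltonian cycle $C_g$ on $[n]$, and that these are the only basic edges (envy-freeness), I would finish by a counting / disjointness argument. Distinct goods $g \neq h$ give edge-disjoint cycles $C_g, C_h$ — if an edge $i \champ[\{g\}]j$ and $i \champ[\{h\}]j$ coincided as the same directed pair, one could form a $2$-cycle... no, rather: suppose $C_g$ and $C_h$ shared a directed edge $i \to j$; then together with the rest of $C_g$ run backwards one gets a short cycle mixing two distinct unallocated goods' edges — again a basic PI cycle by Corollary~\ref{cor:trivial_cycles} (envy edges, here vacuously none, plus at most one $h$-champion edge for each $h$). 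So the $k$ cycles are pairwise edge-disjoint Hamiltonian cycles on $n$ vertices. A Hamiltonian cycle has $n$ edges, and an $n$-vertex graph has at most $n(n-1)$ directed edges, but more sharply each vertex has out-degree exactly $1$ in each $C_g$, hence out-degree $\geq k$ in $G$; combined with the fact that (by forbidden PI / functionality) each agent's $g$-champion is unique and the out-neighborhoods across different $g$ are distinct, we get $k \leq n-1$ (an agent cannot have $n$ distinct out-neighbors among the other $n-1$ agents). Together with $k \geq n-1$ this forces $k = n-1$, and $G$ is exactly the union of these $n-1$ edge-disjoint ("parallel") Hamiltonian cycles, one per unallocated good.

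\textbf{Main obstacle.} The step I expect to be most delicate is pinning down precisely which short monochromatic and bichromatic cycles are basic PI cycles under Corollary~\ref{cor:trivial_cycles} — in particular whether a length-$<n$ cycle of $g$-champion edges (all sharing the \emph{same} $g$) is PI. The subtlety is that such a cycle "releases" only one copy of $g$, yet the PI definition permits each $H_i$ along the cycle to be $\{g\}$ only if disjoint; so one must instead argue via the functional/out-degree structure that a proper sub-cycle coexists with an agent outside it whose out-edge creates a longer PI configuration, or appeal to the exact phrasing "a cycle composed of envy edges and at most one $h$-champion edge for every $h\in U$" — a $g$-cycle of length $\ell \geq 2$ has $\ell \geq 2$ copies of $g$-edges, so it is \emph{not} of that restricted form, meaning Corollary~\ref{cor:trivial_cycles} does not immediately kill it. Resolving this cleanly is the crux, and I would handle it by the functional-graph argument: follow $g$-champion out-edges, necessarily reach a cycle, and if that cycle omits some agent $a$, splice $a$'s $g$-out-edge-path into it to either enlarge the cycle or produce an envy/champion PI cycle of the allowed restricted type — iterating until the cycle is Hamiltonian.
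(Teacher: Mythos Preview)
Your approach has two genuine gaps, one of which you yourself flag as the ``main obstacle'' but do not actually resolve.

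\textbf{The monochromatic-cycle obstacle is fatal as stated.} You correctly observe that a cycle consisting of $\ell \geq 2$ edges all of the form $\cdot \champ[g] \cdot$ for the \emph{same} $g$ is \emph{not} a basic PI cycle: Definition~\ref{def:pareto_improvable} requires the sets $H_i$ to be pairwise disjoint, and here every $H_i = \{g\}$. Corollary~\ref{cor:trivial_cycles} only forbids cycles with at most one $h$-edge per $h \in U$. So from ``every agent has an incoming $g$-edge'' you get \emph{some} $g$-cycle, but you have no tool to force it to be Hamiltonian. Your proposed fix --- splice in the $g$-out-path of an omitted agent $a$ --- does not produce a cycle of the restricted type either: it still uses multiple $g$-edges. (Also, Observation~\ref{obs:exists-champion} gives every agent in-degree $\geq 1$, not out-degree $\geq 1$; this is minor but worth noting.)

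\textbf{You have the meaning of ``parallel'' backwards.} The lemma asserts that the $n-1$ Hamiltonian cycles are \emph{parallel}, i.e., they all traverse the agents in the \emph{same} cyclic order (see the paper's proof and Figure~\ref{fig:parallel-ring}(d)). You instead argue they are pairwise edge-disjoint, and your counting argument for $k \leq n-1$ (distinct out-neighbours across distinct $g$'s) rests on this. Since in fact every agent has the \emph{same} out-neighbour under every $g$, that counting collapses.

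The paper's proof avoids both problems with one idea: rather than fixing a single $g$ and analysing $G_g$, it builds a path by following, at each step, a champion edge with respect to a \emph{fresh} unallocated good. Concretely, take an incoming $g_1$-edge into agent $1$ from some agent (rename it $2$), then an incoming $g_2$-edge into $2$ from some agent (rename it $3$), and so on. Because each edge uses a distinct good, the moment this walk revisits any agent it closes a basic PI cycle (at most one $h$-edge per $h$), contradiction. This immediately forces the walk to visit all $n$ agents before running out of goods, hence $k \leq n-1$, hence $k = n-1$; and continuing the same walk (now reusing $g_1, g_2, \ldots$ in turn) forces every subsequent source to be uniquely determined, yielding the parallel Hamiltonian structure. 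Any extra envy or champion edge would then close a basic PI cycle, so none exist. The polychromatic path-building is the missing idea.
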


\begin{proof}
	Recall that $U$ denotes the set of unallocated items, and let $U=\{g_1, \ldots, g_k\}$ for some $k \geq n-1$. 
	Let $e_{1}^{1}$ be an arbitrary incoming $g_{1}$-champion edge of agent 1 (such an edge exists by Observation~\ref{obs:exists-champion}). 
	
	If the source of this edge is agent 1 we are done (we have a self champion). 
	Assume w.l.o.g. that the source of $e_{1}^{1}$ is agent 2, and consider its incoming $g_{2}$-champion edge, denoted $e_{2}^{2}$.
	If the source of $e_{2}^{2}$ is agent 2 or agent 1, we are done
	(in the first case we have a self $g_{2}$-champion, and in the
	second case we have a basic PI cycle of size 2).
	Assume w.l.o.g. that the
	source of $e_{2}^{2}$ is agent 3. We can continue this way to conclude
	that w.l.o.g we have a directed path $n\champ[g_{n-1}] n-1\champ[g_{n-2}]\cdots\champ[g_1] 1$ (see Figure~\ref{fig:parallel-ring} (a)). 
	If $k \geq n$, then consider the incoming $g_n$-champion edge of agent $n$.
	No matter what the source of this edge is, it must close a basic PI cycle.
	
	Ergo, $k=n-1$. Consider the incoming $g_{1}$-champion edge of agent $n$, denoted $e_{n}^{1}$.
	The source of this edge must be agent 1 (every other option closes a basic PI cycle). Now consider the incoming  $g_{2}$-champion edge of agent 1, $e_{1}^{2}$. Similarly, the source of
	this edge must be agent 2. We can again continue this way until we
	get to the incoming $g_{n-1}$-champion edge of agent $n-2$, denoted
	$e_{n-2}^{n-1}$, and conclude that its source is agent $n-1$ (see Figure~\ref{fig:parallel-ring} (b)). Now
	we consider the incoming $g_{1}$-champion edge of agent $n-1$, and
	continue with the same reasoning, to finally conclude that the source
	of the incoming $g_{n-1}$-champion edge of agent $n$ is agent 1 (see Figure~\ref{fig:parallel-ring} (d)).
	
	At this point there is a Hamiltonian cycle
	$n\champ[g_{i}] n-1\champ[g_{i}]\cdots\champ[g_{i}] 1\champ[g_{i}] n$
	for every $i$. Any other basic champion or envy edge must close a basic PI cycle
	and thus does not exist by assumption.
	This concludes the second part of the lemma.
\end{proof}

We are now ready to prove Theorem~\ref{thm:n-2-charity}.

\begin{proof}
By Lemma~\ref{lem:dominate-implies-progress}, it suffices to prove that if $\allocs$ has at least $n-1$ unallocated items, then there exists a Pareto dominating EFX allocation $\mathbf{Y}$.
By Lemma~\ref{lem:pareto-improvable}, it suffices to find a PI cycle in $M_{\allocs}$.

\shortversion
By Lemma~\ref{lem:parallel-rings}, there are exactly $n-1$ unallocated goods, and $G$ is a union of $n-1$ parallel Hamiltonian cycles $1 \champ[g_i] 2 \champ[g_i] \cdots n   \champ[g_i] 1$, one for each unallocated good $g_1, \ldots ,g_{n-1}$.
\shortversionend

\fullversion
By Lemma~\ref{lem:parallel-rings}, there are exactly $n-1$ unallocated goods, and $G$ is a union of $n-1$ parallel Hamiltonian cycles, one for each  unallocated good $g_1, \ldots, g_{n-1}$:
\begin{align*}
    1 &\champ[g_1] 2 \champ[g_1] \cdots &\cdots {n}   \champ[g_1]& 1 \\
    1 &\champ[g_2] 2 \champ[g_2]\cdots&\cdots {n} \champ[g_2]  & 1 \\
    \vdots &&      &\vdots \\
    1 &\champ[g_{n-1}] 2 \champ[g_{n-1}]\cdots& \cdots {n} \champ[g_{n-1}]  & 1
    \end{align*}
\fullversionend

\fullversion
\begin{figure}[t]
	\begin{center}
		\includegraphics[width=0.2\textwidth]{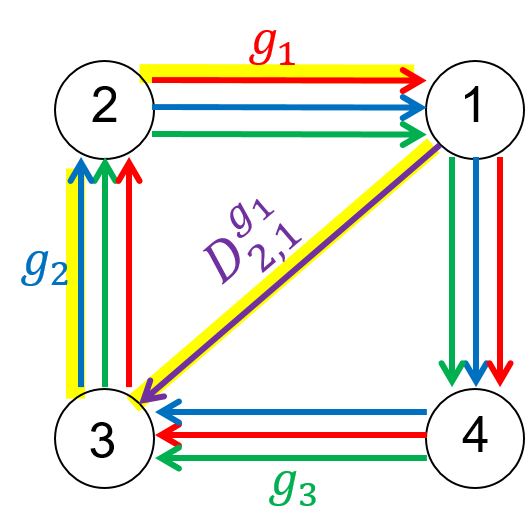}
		\caption{An illustration of the proof of Theorem~\ref{thm:n-2-charity} for the case of $n=4$. The Pareto-improvable cycle is highlighted in yellow. 
			\label{fig:p-improvable-cycle}}
	\end{center}
\end{figure}
\fullversionend

Furthermore, it follows from  Lemma~\ref{lem:parallel-rings} that we can assume that we have no other champion edges and no envy edges. Thus, all these cycles are good.

By Theorem~\ref{thm:good-edge-or-external} for all agents $k$ and every unallocated good $g$ there exists a good $\discard[g]{k-1}{k}$-champion edge, from some agent $j$ to some agent $j'$. (Indeed, since $G$ is a union of parallel Hamiltonian cycles, no external edges exists.) 
Choose some arbitrary such agent $k$ and unallocated good $g$ and corresponding agents $j, j'$, and let $Z=\discard[g]{k-1}{k}$.
We show that $j \champ[Z \mid\circ] j'$ closes a Pareto improvable cycle in $M_{\allocs}$.

By definition of a good edge, there is a unique path $P$ consisting of $g$-champion edges from agent $j'$ to $j$, passing through agent $k$. 
By Lemma~\ref{lem:parallel-rings}, for every $q\in U$, every edge in $P$ has a parallel $q$-champion edge. 
Note that $P$ has at most $n-1$ edges.

Rename the agents so that agent $j$ is agent $|P|$, agent $j'$ is agent $1$, let $k'$ be the new index for agent $k$, and let $P=1,2,\ldots, {k'-1},{k'}, \ldots {|P|}$. 
Note that $Z$ is now the discard set of the champion edge ${k'-1}\champ {k'}$.
Let $U'=U\setminus \{g\}$ ($|U'|=n-2$), and rename the items in $U'$ where $U'=\{r_1, r_2, \ldots, r_{k'-2}, r_{k'},\ldots,r_{n-1}\}$.
We now describe a Pareto-improvable cycle:
\shortversion
\begin{align*}
    1 \champ[r_1] 2 \champ[r_2] 2 \champ[r_3] \cdots  \champ[r_{k'-2}] {k'-1}\champ {k'} \champ[r_{k'}]  {k'+1}\champ[r_{k'+1}] \cdots \champ[r_{|P|-1}] {|P|}  \champ[Z \mid\circ] 1
\end{align*}
\shortversionend
\fullversion
\begin{align*}
    1 \champ[r_1] 2 \champ[r_2] 2 \champ[r_3] \cdots  \champ[r_{k'-2}] {k'-1}\champ {k'} \champ[r_{k'}]  {k'+1}\champ[r_{k'+1}] \cdots&\\ \cdots \champ[r_{|P|-1}] {|P|}  \champ[Z \mid\circ] 1
\end{align*}
\fullversionend
Since $Z$ is the discard set of the champion edge ${k'-1}\champ {k'}$, it is discarded along the cycle. 
All other edges in the cycle are with respect to distinct unallocated goods. 
Therefore, this is a Pareto-improvable cycle%
\fullversion
 (see Figure~\ref{fig:p-improvable-cycle} for an illustration)
\fullversionend%
.
\end{proof}

%
\section {EFX for 4 additive agents with 1 unallocated good}
\label{sec:4-agents}

\shortversion
\begin{figure}
	\label{fig:proof-structure}
	\begin{center}
		\includegraphics[width=0.8\linewidth]{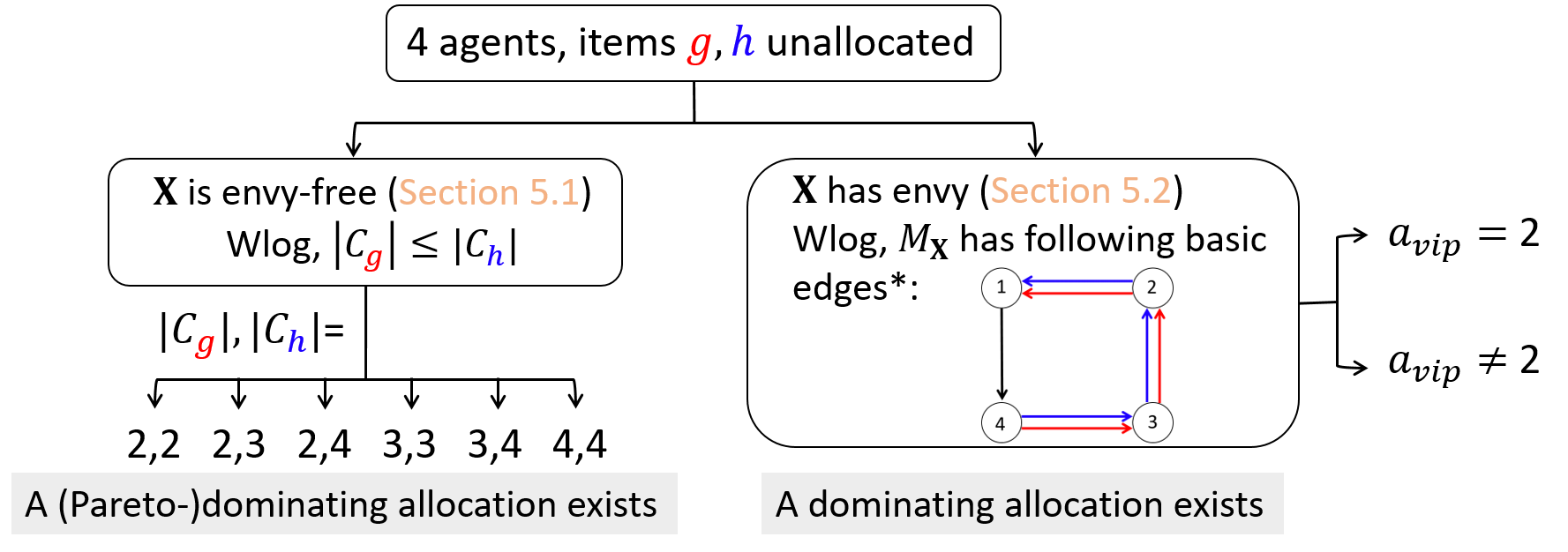}
	\end{center}
	\caption{High-level roadmap of the proof of Theorem \ref{thm:4_agent_main_result}. $|C_{\g}|$ and $|C_{\h}|$ are the lengths of some good cycles of items $\g$ and $\h$, respectively. ($*$)  The $\g$ and $\h$ champions of agent 4 can be either agent 1 or 2.\label{fig:proof-structure}}
\end{figure}
\shortversionend

In this section we prove our main result, namely that every setting with 4 additive agents admits an EFX allocation with at most one unallocated good.  We prove this theorem for the class of nice cancelable valuations which contains additive.  By Lemma \ref{lem:dominate-implies-progress} it suffices to prove:
\begin{theorem}\label{thm:4_agent_main_result}
		Let $\allocs$ be an EFX allocation on 4 agents with \valclass\ valuations, with at least two unallocated items.
		Then, there exists an EFX allocation $\mathbf{Y}$ that dominates $\allocs$.
\end{theorem}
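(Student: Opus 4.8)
The plan is to start from an arbitrary EFX allocation $\allocs$ with $|U| \geq 2$ and produce a dominating EFX allocation, splitting into cases according to the structure of the champion graph $M_\allocs$. By Corollary~\ref{cor:trivial_cycles}, if $M_\allocs$ contains any basic PI cycle (an envy cycle, a self-$g$-champion, or a cycle of envy edges plus at most one $h$-champion edge for each $h\in U$) we are immediately done, since such a cycle yields a Pareto-dominating EFX allocation, which in particular dominates. So we may assume no basic PI cycle exists. With four agents and at least two unallocated goods $g,h\in U$, Observation~\ref{obs:exists-champion} gives every agent an incoming $g$-champion edge and an incoming $h$-champion edge; chasing these edges as in the proof of Lemma~\ref{lem:parallel-rings} forces a rich structure — in the no-envy-edge subcase one essentially gets good $g$-cycles and good $h$-cycles covering the agents, and the roadmap figure indicates the case analysis is organized by the lengths $|C_\g|, |C_\h|$ of these good cycles (lengths $2$, $3$, or $4$) together with which agent (1 or 2) is the $\g$- and $\h$-champion of agent~4.

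The core engine is the new-edge-discovery machinery of Section~\ref{sec:edge-discovery}. Given a good $g$-cycle $C$, Theorem~\ref{thm:good-edge-or-external} guarantees, for each agent $j$ on $C$, either a good $B_j$-edge inside $C$ or an external $B_j$-edge, and Corollary~\ref{cor:no_single_external_source} rules out all of them being external from a single non-envious source. The strategy in each case is to collect enough generalized champion edges of the form $a\champ[B_j\mid\circ] \succ(j)$ — these ``release'' the bottom bundle $B_j$ — and combine them with the champion edges of the good cycles (each w.r.t.\ a distinct item in $U$, using $|U|\geq 2$) to assemble a Pareto improvable cycle or Pareto improvable edge set in the sense of Definition~\ref{def:pareto_improvable} and Remark~\ref{rem:PI_edge_set}: the items each agent needs must be either unallocated or released by another edge on the cycle, and the $H_i$'s must be pairwise disjoint. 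Lemma~\ref{lem:pareto-improvable} then delivers a Pareto-dominating (hence dominating) EFX allocation. When the envy graph is nonempty, a Pareto improvement may be impossible; there we instead aim to build a cycle through $\vip = a_1$ along which $a_1$ strictly gains while the others stay at least as well off lexicographically — handled in Section~\ref{sec:not-ef-case} — still using the released bottom bundles and distinct unallocated goods to satisfy the PI-type constraints relative to the lexicographic order.

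I would carry this out as follows. First, dispose of the easy case (basic PI cycle exists). Second, handle the envy-free case of $M_\allocs$ (Section~\ref{sec:ef-case}): establish that the $g$- and $h$-champion edges decompose into good cycles, enumerate the possible cycle-length configurations, and in each one use Lemmas~\ref{lem:alg_start}, \ref{lem:alg_step}, Theorem~\ref{thm:good-edge-or-external} and Corollary~\ref{cor:no_single_external_source} to find released bottom bundles that let me close a PI cycle on distinct items of $U$. Third, handle the case with envy edges (Section~\ref{sec:not-ef-case}): since an envy edge $i\envies j$ is itself a $(\emptyset\mid\emptyset)$-champion edge releasing nothing, I would route a cycle through $\vip$, trading the failure of Pareto-improvability for mere lexicographic advancement, again feeding the constraints of Definition~\ref{def:pareto_improvable} with the released sets $B_j$ and the unallocated goods.

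The main obstacle I expect is the sheer combinatorial explosion of the envy-free case: once the good cycles have length $3$ or $4$, the $B_j$-edges supplied by Theorem~\ref{thm:good-edge-or-external} can land in several places (good edge vs.\ external edge, different sources), and one must verify in each configuration that some choice of these edges, together with the parallel champion edges on distinct unallocated goods, satisfies the disjointness and release conditions needed for a PI cycle — in particular that the bottom bundles released are exactly the ones the cycle consumes, with no item double-counted. Getting a uniform argument rather than a large exhaustive check — and, in the envy case, correctly tracking which agents may only be kept lexicographically (not Pareto) non-worse — is where the real work lies; this is precisely why the paper dedicates Sections~\ref{sec:ef-case} and \ref{sec:not-ef-case} (and an appendix) to it.
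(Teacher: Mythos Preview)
Your plan for the envy-free case is essentially what the paper does: a case analysis on $(|C_\g|,|C_\h|)$, with Theorem~\ref{thm:good-edge-or-external} and Corollary~\ref{cor:no_single_external_source} supplying generalized champion edges that close a PI cycle. That part is on track.

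The envy case, however, is where your plan has a genuine gap. You describe it as ``route a cycle through $\vip$ \ldots\ satisfying PI-type constraints relative to the lexicographic order,'' but that is not how the paper proceeds, and there is no such thing as a lexicographic variant of a PI cycle. The paper's key technique, which your plan does not mention, is to first pin down the \emph{unique} basic-edge structure (Lemma~\ref{lem:envy_unique_structure}), and then \emph{construct explicit intermediate EFX allocations} $\allocs'$ and $\allocs''$ by reassigning concrete bundles (e.g.\ giving agent~1 the bundle $X_4$, agent~2 the bundle $\Tgone\cup\Bgtwo$, etc.). One then analyzes the champion graph $M_{\allocs''}$ of the \emph{new} allocation and finds an honest PI cycle there passing through the relevant agent; since that agent was already weakly better off in $\allocs''$ than in $\allocs$, Pareto-dominating $\allocs''$ yields lexicographic domination of $\allocs$. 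The proof further splits on the identity of $\vip$ (the case $\vip=2$ is handled separately) and on the value of $\hX_2$ (Cases \framebox{A}, \framebox{B}, \framebox{C}), and uses devices such as the carefully chosen minimal set $Z\subseteq \Tgone\cup\Bgtwo$, Claim~\ref{claim:B_g_subseteq_B_h} (forcing $\Bgthree\subseteq\Bhthree$), and the trick of taking the most envious agent \emph{ignoring agent~1} when agent~1 is guaranteed to receive $X_4$. None of this machinery is visible in your sketch, and it is precisely what makes the envy case go through; trying to find a dominating allocation directly in $M_\allocs$ by ``routing through $\vip$'' will not work in general.
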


The proof involves a rigorous case analysis, which exemplifies the extensive use of our new techniques.
We have attempted to make the proofs as accessible as possible through the use of extensive aids such as figures and colors.

By assumption, there exist two unallocated goods which we denote $\g$, $\h$.
The proof distinguishes between two main cases, namely whether $\allocs$ is envy-free (Section~\ref{sec:ef-case}) or not (Section~\ref{sec:not-ef-case}).
When $\allocs$ is envy-free, we show that a Pareto improvable (PI) cycle always exists.
This is shown via a case analysis that depends on the lengths of the good $\g$- and $\h$-cycles which must exist in the champion graph $M_\allocs$.

When $\allocs$ has envy, we argue that the only interesting case is where the basic edges follow some specific structure,
modulo permuting the agent identities (otherwise, a PI cycle exists).
Then, we show that there is an EFX allocation in which agent $\vip$ (per the lexicographic potential) is better off relative to $\allocs$.
Since $\vip$ could be any one of the agents (due to the identity permutation),
the proof splits to cases accordingly, where the case $\vip = 2$ is treated separately from the case where $\vip$ is one of the other three agents.
Our approach here is heavily inspired by and follows a similar high-level structure to that of~\cite{chaudhury2020efx} in their analysis of the envy case in their 3 agent result.
Our proof structure is depicted in Figure~\ref{fig:proof-structure}.


\fullversion
\begin{figure}
	\label{fig:proof-structure}
	\begin{center}
			\includegraphics[width=0.9\linewidth]{figures/proof-structure}
	\end{center}
	\caption{High-level roadmap of the proof of Theorem \ref{thm:4_agent_main_result}. $|C_{\g}|$ and $|C_{\h}|$ are the lengths of some good cycles of items $\g$ and $\h$, respectively. ($*$)  The $\g$ and $\h$ champions of agent 4 can be either agent 1 or 2.\label{fig:proof-structure}}
\end{figure}
\fullversionend

\subsection{$\allocs$ is Envy-Free}
\label{sec:ef-case}

In this section we show that if $\allocs$ is envy-free, then we can always find a PI cycle or edge set in $M_\allocs$ (see Remark \ref{rem:PI_edge_set}), implying (by Lemma \ref{lem:pareto-improvable}) the existence of a Pareto-dominating EFX allocation $\mathbf{Y}$.

Recall that every agent $i$ has an incoming $\g$-champion edge and an incoming $\h$-champion edge (Observation \ref{obs:exists-champion}), and thus $M_\allocs$ contains a  $\g$-cycle and an $\h$-cycle.  If there is a self $\g$ or $\h$ champion we are done by Corollary \ref{cor:trivial_cycles}. Thus, these cycles are of size at least 2, and contain no envy edges, and are therefore good cycles. Denote them by $C_\g, C_\h$, respectively.

\fullversion
Note that if one of these has size 1 (that is, there is a self $\g$ or $\h$-champion) then we are done by Corollary \ref{cor:trivial_cycles}, thus we assume that $\left|C_\g\right| , \left|C_\h\right| \in \{2,3,4\}$.
\fullversionend

By Observation \ref{obs:good_cycle_g_decomposition}, $C_\g$ (resp. $C_\h$) induces a $\g$ (resp. $\h$)-decomposition of $X_j$ for any agent $j$ in the cycle.  In the following we denote the $\g$ (resp. $\h$) -decomposition by $X_j = T^{\g}_j\cupdot B^{\g}_j$ (resp. $X_j = T^{\h}_j\cupdot B^{\h}_j$).
We shall make repeated use of Observations \ref{obs:no_B_j_champ_of_j} and \ref{obs:pred(i)_nchamp_B(i)}. For concise presentation, we write here the implications that will be repeatedly used in this section: For every two agents $i,j$ we have:
\shortversion
\textsl{(a)} $i \nchamp[B_j^{\g} \mid\circ] j$ ; \textsl{(b)} if $i,j$ reside on the same good $\g$-cycle, then $\pred(i) \nchamp[B_i^{\g} \mid\circ] j$.
\shortversionend
\fullversion
\begin{itemize}
	\item	$i \nchamp[B_j^{\g} \mid\circ] j$
	\item	If $i$ and $j$ reside on the same good $\g$-cycle, then $\pred(i) \nchamp[B_i^{\g} \mid\circ] j$
\end{itemize}
\fullversionend
Analogous claims hold for $\h$. We remind the reader that $i \champ[B_k^{\g} \mid\circ] j$, $i \champ[B_k^{\h} \mid\circ] j$ are shorthand for $i \champ[B_k^{\g} \mid B_j^{\g} ] j$, $i \champ[B_k^{\h} \mid B_j^{\h}] j$, respectively.


Recall that each of $C_\g$, $C_\h$ can be of size 2, 3, or 4. Assume w.l.o.g. that $|C_\g|\leq |C_\h|$. Thus, there are six cases to consider.
\shortversion
Some cases have been deferred to Appendix \ref{apx:4-agents-EF} due to space limitations.
\shortversionend

\fullversion
\vspace{1ex}
\fullversionend
\noindent\textbf{Case 1: $\boldsymbol{\left|C_\g\right| = \left|C_\h\right| = 2}$.}
Assume w.l.o.g. that $C_\g = 1 \champ[\g] 2 \champ[\g] 1$.  By Theorem \ref{thm:good-edge-or-external}, there exists either a good or external $\Bgone$-edge going into agent 2, and there exists either a good or external $\Bgtwo$-edge going into agent 1.  If one of these is good we are done:  for example, the only possible good $\Bgone$-edge is $1 \champ[\Bgone \mid\circ] 2$ which closes the PI cycle $1 \champ[\Bgone\mid\circ] 2 \champ[\g] 1$ (recall that the edge $2 \champ[\g] 1$ releases $\Bgone$).

Thus both edges have to be external, {\sl i.e.}, their sources are agents 3 or 4.  Assume w.l.o.g. that $3 \champ[\Bgone \mid\circ] 2$.  We cannot also have $3 \champ[\Bgtwo \mid\circ] 1$
by Corollary \ref{cor:no_single_external_source}.
We conclude that $4 \champ[\Bgtwo \mid\circ] 1$, and we have the following structure:\vspace{-1em}
\begin{center}
	\includegraphics[scale=\figurescale]{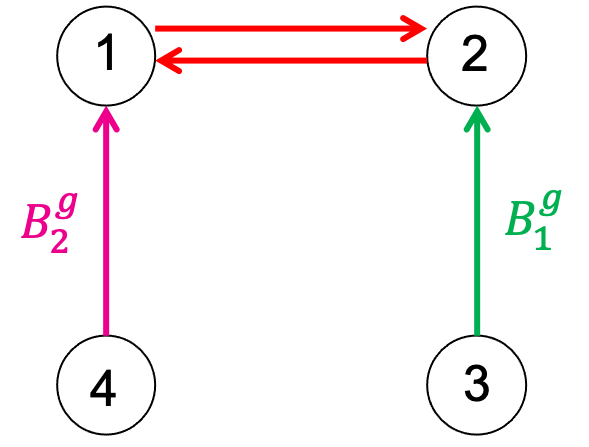}
\end{center}

Consider $C_\h$.  If $C_\h = 1 \champ[\h] 2 \champ[\h] 1$ then we are done since we get the PI cycle $1 \champ[\g] 2 \champ[\h] 1$ (see Corollary \ref{cor:trivial_cycles}).  If $C_\h = 3 \champ[\h] 4 \champ[\h] 3$, then following the analogous reasoning for $C_\g$ we can assume that we have external $\Bhthree$ and $\Bhfour$ edges, in which case we have one of the following two structures (highlighted edges are part of PI cycles or PI edge sets):
\begin{center}
	\includegraphics[scale=\figurescale]{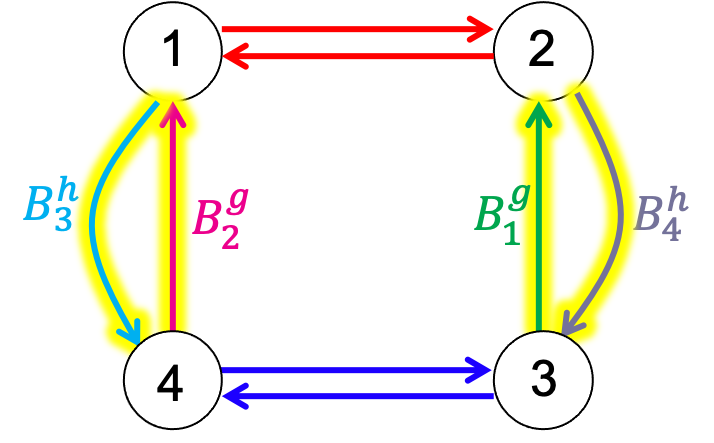} \qquad\qquad
	\includegraphics[scale=\figurescale]{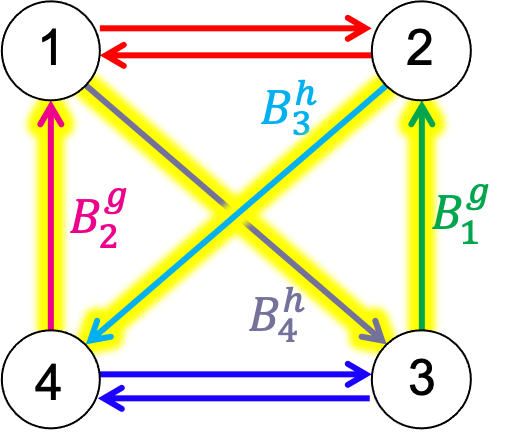}
\end{center}

In the left graph, the two cycles $1 \champ[\Bhthree \mid\circ] 4 \champ[\Bgtwo \mid\circ] 1$, $2 \champ[\Bhfour \mid\circ] 3 \champ[\Bgone \mid\circ] 2$ form a PI edge set%
, and in the right graph we have the PI-cycle $1 \champ[\Bhfour \mid\circ]\linebreak[1] 3 \champ[\Bgone \mid\circ]\linebreak[1] 2 \champ[\Bhthree \mid\circ]\linebreak[1] 4 \champ[\Bgtwo \mid\circ]\linebreak[1] 1$ (note that in both cases every one of the four used bottom half-bundles is released by a corresponding incoming edge to each one of the agents), and thus we are done.  We remark that if $M_\allocs$ contains the good $\g$-cycle $3 \champ[\g] 4 \champ[\g] 3$, then similar reasoning also shows that we have a PI cycle (or PI edge set).  In other words,
if $M_\allocs$ contains two disjoint $\g$-cycles of size 2 or two disjoint $\h$-cycles of size 2, then we are done.


It remains to consider the case where $C_\h$ intersects $C_\g$ at exactly one agent.
If $C_\h = 1 \champ[\h] 3 \champ[\h] 1$, then we get the PI cycle $1 \champ[\h] 3 \champ[\Bgone \mid\circ] 2 \champ[\g] 1$.  The case $C_\h = 2 \champ[\h] 4 \champ[\h] 2$ is symmetric.  We are left with the cases $C_\h = 2 \champ[\h] 3 \champ[\h] 2$ or $C_\h = 1 \champ[\h] 4 \champ[\h] 1$, which are also symmetric. Thus, w.l.o.g. we assume $C_\h = 2 \champ[\h] 3 \champ[\h] 2$ and we get the structure:
\begin{center}
	\includegraphics[scale=\figurescale]{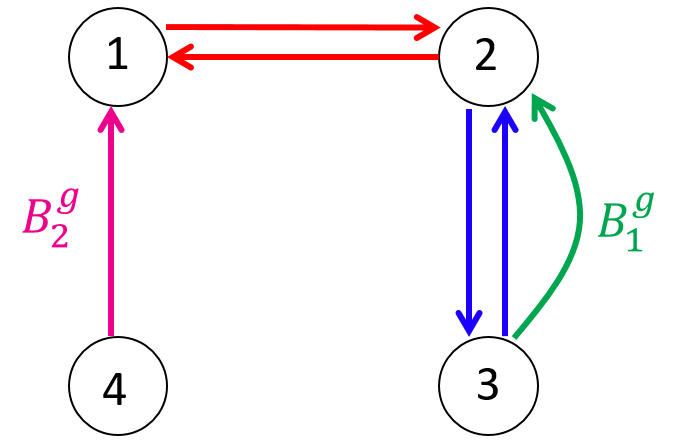}
\end{center}

As before, there must be two external $\Bhtwo$ and $\Bhthree$ edges (coming out of agents 1 and 4).  If $1 \champ[\Bhtwo \mid\circ] 3$ and $4 \champ[\Bhthree \mid\circ] 2$,
then we get the PI cycle $1 \champ[\Bhtwo \mid\circ] 3 \champ[\h] 2 \champ[\g] 1$.  Thus $1 \champ[\Bhthree \mid\circ] 2$ and $4 \champ[\Bhtwo \mid\circ] 3$, and we get the structure:
\shortversion
\vspace{-1em}
\shortversionend
\begin{center}
	\includegraphics[scale=\figurescale]{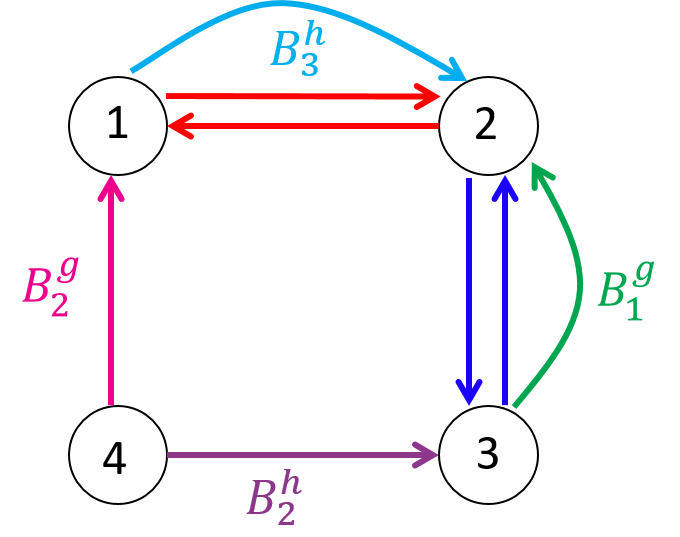}
\end{center}


We now ask who is an $\h$-champion of agent 4 (such exists by Observation \ref{obs:exists-champion}).  If $3 \champ[\h] 4$, we are done via the PI edge set $1 \champ[\Bhthree \mid\circ] 2 \champ[\g] 1$, $3 \champ[\h] 4 \champ[\Bhtwo \mid\circ] 3$.  If $2 \champ[\h] 4$, we are done: $1 \champ[\g] 2 \champ[\h] 4 \champ[\Bgtwo \mid\circ] 1$.  Thus, assume $1 \champ[\h] 4$ (recall that there are no self-champions, {\sl i.e.}, $4 \nchamp[\h] 4$).

Now we ask who is an $\h$-champion of agent 1.  If $4 \champ[\h] 1$ we are done since we have two disjoint size 2 $\h$-cycles, a situation we have already dealt with at the start of Case 1.  If $2 \champ[\h] 1$ we are done: $1 \champ[\g] 2 \champ[\h] 1$.  Therefore, $3 \champ[\h] 1$ and we have the structure:
\begin{center}
	\includegraphics[scale=\figurescale]{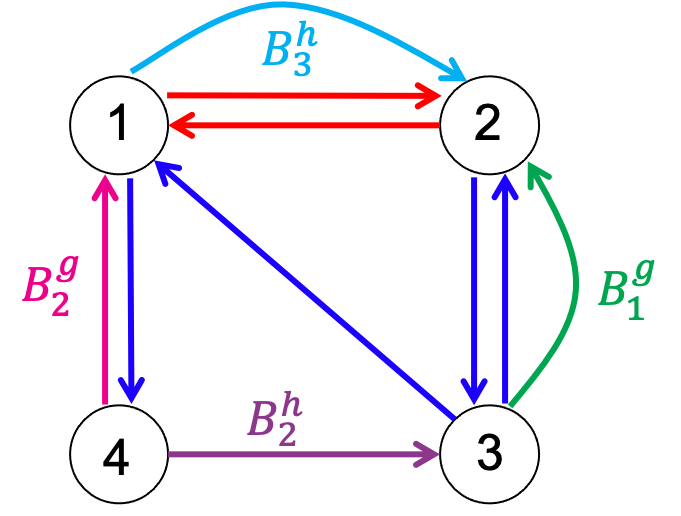}
\end{center}

Now we ask who is a $\g$-champion of 3.  If $1 \champ[\g] 3$ or $2 \champ[\g] 3$, we have a size 2 cycle with $\g$ and $\h$ edges and we are done.  Thus, $4 \champ[\g] 3$.

Finally, we ask who is a $\g$-champion of 4.  If $3 \champ[\g] 4$ then we have two disjoint $\g$-cycles of size 2 and we are done.  If $2 \champ[\g] 4$, then we are done: $2\champ[\g] 4 \champ[\Bhtwo \mid\circ] 3 \champ[\h] 2$.  Thus assume $1 \champ[\g] 4$, and we have the structure:
\shortversion
\vspace{-1em}
\shortversionend
\begin{center}
	\includegraphics[scale=\figurescale]{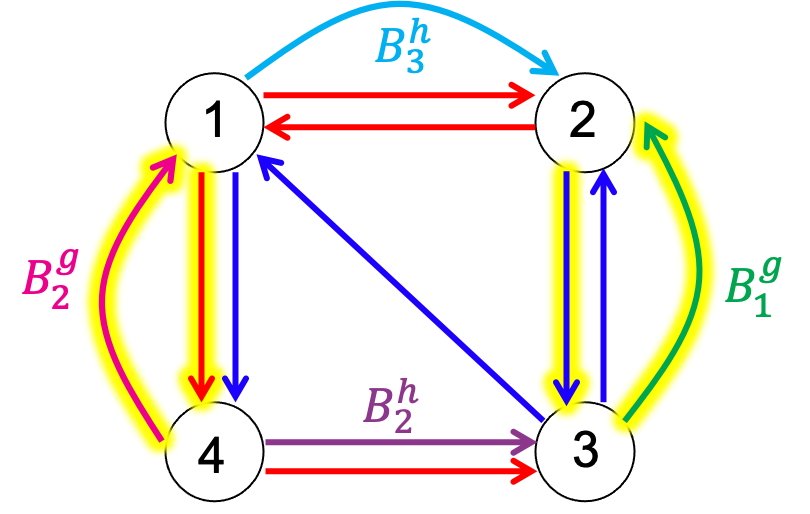}
\end{center}
In this case we are done via the PI edge set $1 \champ[\g] 4 \champ[\Bgtwo \mid\circ] 1$, $2 \champ[\h] 3 \champ[\Bgone \mid\circ] 2$.

\fullversion
\vspace{1ex}
\fullversionend
\noindent\textbf{Case 2: $\boldsymbol{\left|C_\g\right| = 2, \left|C_\h\right| = 3}$.}
Assume w.l.o.g. that $C_\g = 1 \champ[\g] 2 \champ[\g] 1$.  If $C_\h$ passes through both agents 1 and 2 then we are done since we are guaranteed to have a size 2 cycle with $\g$ and $\h$ edges.  Thus $C_\h$ passes through exactly one of them, and we can assume w.l.o.g. that $C_\h = 1 \champ[\h] 4 \champ[\h] 3 \champ[\h] 1$ (note that the reverse direction of the cycle is symmetric by switching the roles of agents 3 and 4).  We get the following structure:
\shortversion
\vspace{-1em}
\shortversionend
\begin{center}
	\includegraphics[scale=\figurescale]{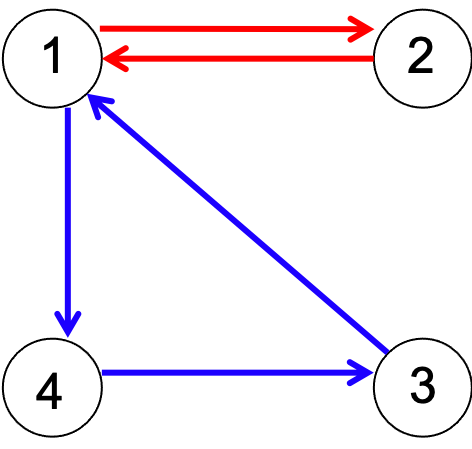}
\end{center}
As in the previous case we may assume that the $\Bgone$ and $\Bgtwo$ edges guaranteed by Theorem \ref{thm:good-edge-or-external} are external.
If we have $3\champ[\Bgtwo \mid\circ] 1$ and $4 \champ[\Bgone \mid\circ] 2$, we are done:  $1 \champ[\h] 4 \champ[\Bgone \mid\circ] 2 \champ[\g] 1$.  Thus assume we have the edges $3\champ[\Bgone \mid\circ] 2$ and $4 \champ[\Bgtwo \mid\circ] 1$.

We now ask who is a $\g$-champion of 3.  If $1 \champ[\g] 3$, we are done: $1 \champ[\g] 3 \champ[\h] 1$.  If $2 \champ[\g] 3$, we are done: $1 \champ[\h] 4 \champ[\Bgtwo] 1$, $2 \champ[\g] 3 \champ[\Bgone] 2$.  Thus $4 \champ[\g] 3$, and we have the structure:
\begin{center}
	\includegraphics[scale=\figurescale]{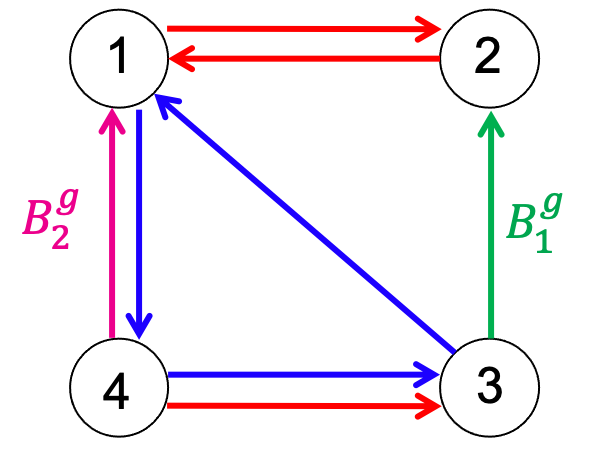}
\end{center}

Consider $C_\h$.  We ask which agent $i$ satisfies $i \champ[\Bhfour \mid\circ] 3$ (such exists by Lemma \ref{lem:alg_start}, since $3 = \succ(4)$ in $C_\h$).  We cannot have $1 \champ[\Bhfour \mid\circ] 3$ since $1 = \pred(4)$ in $C_\h$. If $4 \champ[\Bhfour \mid\circ] 3$, we are done: $1 \champ[\h] \linebreak[1] 4 \champ[\Bhfour \mid\circ] \linebreak[1] 3 \champ[\Bgone \mid\circ] \linebreak[1] 2 \champ[\g] \linebreak[1] 1$.  If $2 \champ[\Bhfour \mid\circ]3$ we are done: $1 \champ[\h] 4 \champ[\Bgtwo \mid\circ] 1$, $2 \champ[\Bhfour \mid\circ] 3 \champ[\Bgone \mid\circ] 2$.  Thus, we must have $3 \champ[\Bhfour \mid\circ] 3$.\footnote{Note that as opposed to a self $\g$-loop, $3 \champ[\Bhfour \mid\circ] 3$ is not a PI-cycle since $\Bhfour$ is not released within the cycle.}

We now ask which agent $i$ satisfies $i \champ[\Bhfour \mid\circ] 1$ (such exists by Lemma \ref{lem:alg_step} since $3 \champ[\Bhfour \mid\circ] 3$ and $1 = \succ(3)$ in $C_\h$).  We cannot have $1 \champ[\Bhfour \mid\circ] 1$, since $1 = \pred(4)$ in $C_\h$.  If $3 \champ[\Bhfour \mid\circ] 1$, we are done: $1 \champ[\h] 4 \champ[\g] 3 \champ[\Bhfour \mid\circ] 1$.  If $4 \champ[\Bhfour \mid\circ] 1$, we are done: $1 \champ[\h] 4 \champ[\Bhfour \mid\circ] 1$. Thus $2 \champ[\Bhfour \mid\circ] 1$, and we have the structure:
\shortversion
\vspace{-1em}
\shortversionend
\begin{center}
	\includegraphics[scale=\figurescale]{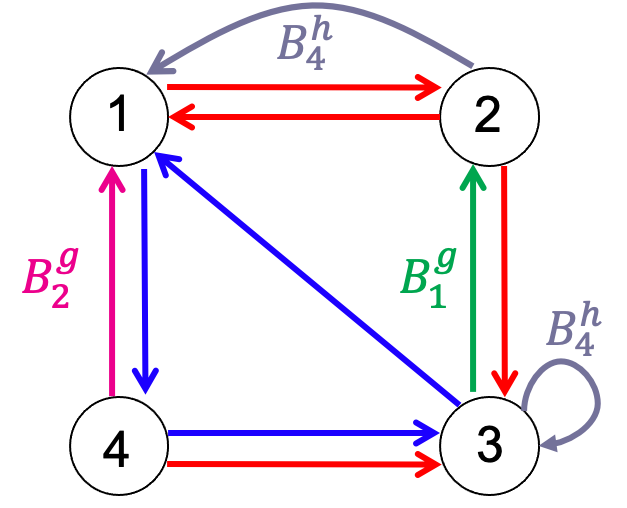}
\end{center}

Finally, we ask which agent $i$ satisfies $i \champ[\Bhone \mid\circ] 4$ (such exists by Lemma \ref{lem:alg_start}).  We cannot have $2 \champ[\Bhone \mid\circ] 4$, as otherwise, together with $2 \champ[\Bhfour \mid\circ] 1$ we have by Observation \ref{obs:bottom_bundle_ineq} that $\Bhone <_2 \Bhfour <_2 \Bhone$, contradiction.  We cannot have $3 \champ[\Bhone \mid\circ] 4$,
as $3 = \pred(1)$ in $C_\h$.  Thus, we must have $4 \champ[\Bhone \mid\circ] 4$, and we are done: $1 \champ[\g] 2 \champ[\Bhfour \mid\circ] 1$, $4 \champ[\Bhone \mid\circ] 4$.

\begin{maybeappendix}{4p-NoEnvy-Cases-3+4}
%
\fullversion
\vspace{1ex}
\fullversionend
\noindent\textbf{Case 3: $\boldsymbol{\left|C_\g\right| = 2, \left|C_\h\right| = 4}$.}
Assume (again) w.l.o.g. that $C_\g = 1 \champ[\g] 2 \champ[\g] 1$. As in Case 1 we can assume w.l.o.g. that we have the edges $3\champ[\Bgone \mid\circ] 2$, $4 \champ[\Bgtwo \mid\circ] 1$.

Consider $C_\h$.  If it contains either the edge $1 \champ[\h] 2$ or $2 \champ[\h] 1$ we are done since we are guaranteed to have a 2-cycle with $\g$ and $\h$ edges.  Thus, we have one of the following two structures:
\begin{center}
	\includegraphics[scale=\figurescale]{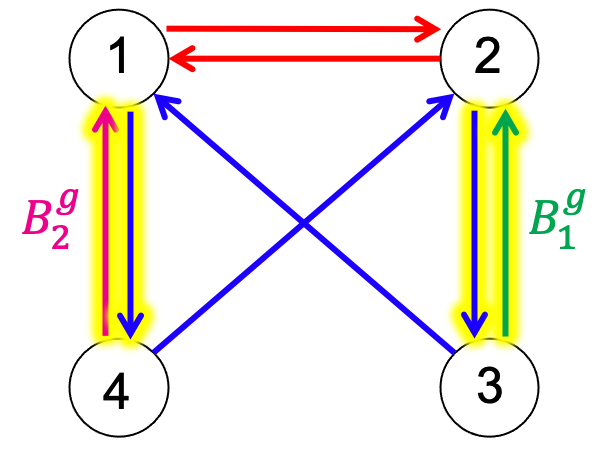} \qquad
	\includegraphics[scale=\figurescale]{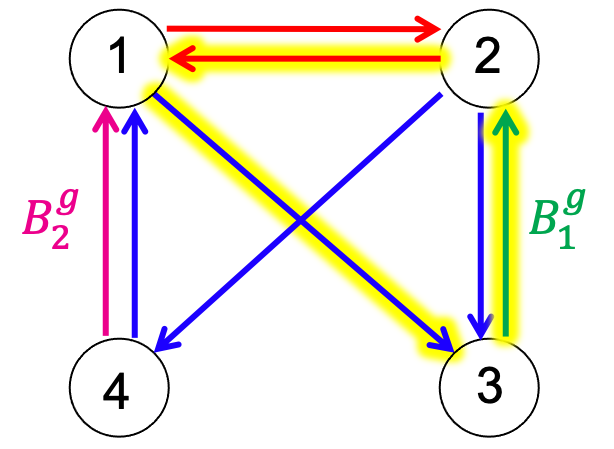}
\end{center}
In the left graph we have the PI edge set $1 \champ[\h] 4 \champ[\Bgtwo \mid\circ] 1$, $2 \champ[\h] 3 \champ[\Bgone \mid\circ] 2$, and in the right graph we have the PI cycle $1 \champ[\h] 3 \champ[\Bgone \mid\circ] 2 \champ[\g] 1$. Thus we are done.

\fullversion
\vspace{1ex}
\fullversionend
\noindent\textbf{Case 4: $\boldsymbol{\left|C_\g\right| = 3, \left|C_\h\right| = 3}$.}
Assume w.l.o.g. that $C_\g = 1 \champ[\g] 2 \champ[\g] 4 \champ[\g] 1$.  Assume first that $C_\h$ is parallel to $C_\g$, i.e., $C_\h = 1 \champ[\h] 2 \champ[\h] 4 \champ[\h] 1$.
Consider $C_\g$.  By Theorem \ref{thm:good-edge-or-external} there exists a good or external $B_j^{\g}$-edge for every $j \in \{1,2,4\}$.  By Corollary \ref{cor:no_single_external_source}, not all of these can come out of agent 3 (i.e., be external).  Thus, w.l.o.g., there is a good $\Bgone$-edge, which can be either one of
$1 \champ[\Bgone \mid\circ] 2, 1 \champ[\Bgone \mid\circ] 4, 2\champ[\Bgone \mid\circ] 4$,
and it is easy to verify that in each of the cases we obtain a PI cycle by carefully choosing complementary $\g$ and $\h$ edges from $C_\g$ and $C_\h$, respectively.
Thus we assume w.l.o.g. that $C_\h = 1 \champ[\h] 2 \champ[\h] 3 \champ[\h] 1$ (note that if this cycle is in the reverse direction we obtain the PI cycle $1 \champ[\g] 2 \champ[\h] 1$), and we have the structure
\begin{center}
	\includegraphics[scale=\figurescale]{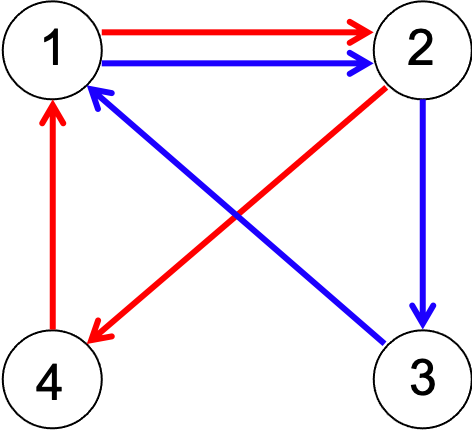}
\end{center}

We are still missing a $\g$-champion of 3 and an $\h$-champion of 4.
If agent 1 is one of them, then we immediately get a 2-cycle with $\g$ and $\h$ edges.
The same is true if both $3 \champ[\h] 4$ and $4 \champ[\g] 3$.
Thus either $2 \champ[\g] 3$ or $2 \champ[\h] 4$.
Since both cases are symmetric we may assume w.l.o.g. that $2 \champ[\h] 4$, and we have the structure
\begin{center}
	\includegraphics[scale=\figurescale]{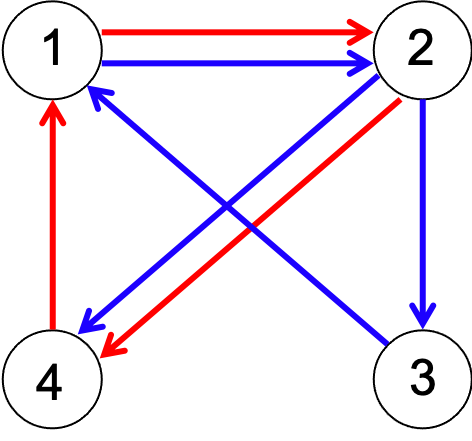}
\end{center}
Consider $C_\h$.  By Theorem \ref{thm:good-edge-or-external}, $M_\allocs$ contains a good or external $\Bhthree$-edge.  Assume first that we have a good $\Bhthree$-edge.  The only possible good $\Bhthree$-edges are $3 \champ[\Bhthree \mid\circ] 1$, $3 \champ[\Bhthree \mid\circ] 2$ and $1 \champ[\Bhthree \mid\circ] 2$.  In the first two cases we are done:  $1 \champ[\g] 2 \champ[\h] 3 \champ[\Bhthree \mid\circ] 1$ and $2 \champ[\h] 3 \champ[\Bhthree \mid\circ] 2$, respectively.  Thus we assume that we have $1 \champ[\Bhthree \mid\circ] 2$, and we get the structure:
\begin{center}
	\includegraphics[scale=\figurescale]{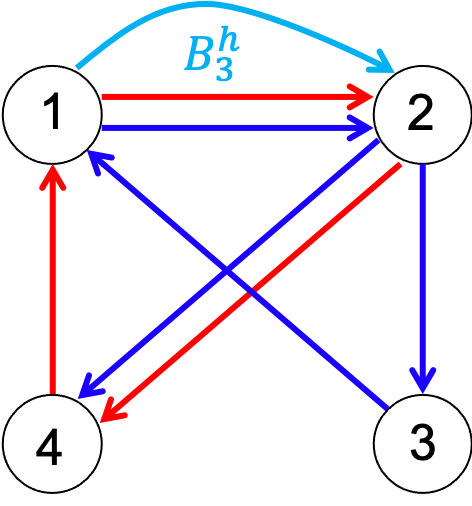}
\end{center}

By Theorem \ref{thm:good-edge-or-external}, $M_\allocs$ contains a good or external $\Bhtwo$-edge.  If it is a good $\Bhtwo$-edge, then it is one of $2 \champ[\Bhtwo \mid\circ] 3$, $2 \champ[\Bhtwo \mid\circ] 1$ and $3 \champ[\Bhtwo \mid\circ] 1$,
and in all cases we get a PI cycle:
$$1 \champ[\Bhthree \mid\circ] 2 \champ[\Bhtwo \mid\circ] 3 \champ[\h] 1,\; 1 \champ[\h] 2 \champ[\Bhtwo \mid\circ] 1\;\; \text{and}\;\; 1 \champ[\Bhthree \mid\circ] 2 \champ[\h] 3 \champ[\Bhtwo \mid\circ] 1,$$
respectively.  If it is an external $\Bhtwo$-edge, then it is one of $4 \champ[\Bhtwo \mid\circ] 1$ and $4 \champ[\Bhtwo \mid\circ] 3$ (recall that there cannot be a $\Bhtwo$-edge going into 2), and these cases admit the respective PI cycles:
$$1 \champ[\h] 2 \champ[\g] 4 \champ[\Bhtwo \mid\circ] 1 \;\; \text{and}\;\; 1 \champ[\Bhthree \mid\circ] 2 \champ[\g] 4 \champ[\Bhtwo \mid\circ] 3 \champ[\h] 1.$$
Thus, we are done with the case that $M_\allocs$ contains a good $\Bhthree$-edge.

Assume now that $M_\allocs$ contains an external $\Bhthree$-edge.
If $M_\allocs$ contains an external $\Bhtwo$-edge,
then it must be $4 \champ[\Bhtwo \mid\circ] 3$, as the other possibility $4 \champ[\Bhtwo \mid\circ] 1$ again closes the PI cycle $1 \champ[\h] 2 \champ[\g] 4 \champ[\Bhtwo \mid\circ] 1$, and we get the following structure (the assumed external $\Bhthree$-edge is not drawn):
\begin{center}
	\includegraphics[scale=\figurescale]{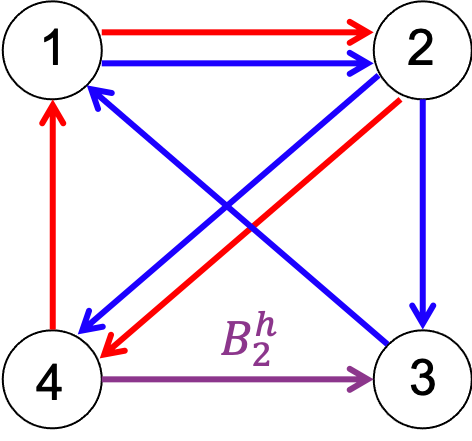}
\end{center}

By Corollary \ref{cor:no_single_external_source}, since there are external $\Bhthree$ and $\Bhtwo$ edges, there cannot be an external $\Bhone$-edge.
Thus, the $\Bhone$-edge guaranteed by Theorem \ref{thm:good-edge-or-external}
must be a good edge, which can be one of $1 \champ[\Bhone \mid\circ] 3$, $1 \champ[\Bhone \mid\circ] 2$ and $2 \champ[\Bhone \mid\circ] 3$. These cases admit the respective PI cycles:
$$1 \champ[\Bhone \mid\circ] 3 \champ[\h] 1,\; 1 \champ[\Bhone \mid\circ] 2 \champ[\g] 4 \champ[\Bhtwo \mid\circ] 3 \champ[\h] 1 \;\; \text{and}\;\; 1 \champ[\g] 2 \champ[\Bhone \mid\circ] 3 \champ[\h] 1.$$
Thus, we may assume that $M_\allocs$ does not contain an external $\Bhtwo$-edge.

We ask which agent $i$ satisfies $i \champ[\Bhtwo \mid\circ] 3$ (such exists by Lemma \ref{lem:alg_start}).
We cannot have $1 \champ[\Bhtwo \mid\circ] 3$ since $1 =\pred(2)$ in $C_\h$.  We cannot have $4 \champ[\Bhtwo \mid\circ] 3$ since we assume there are no external $\Bhtwo$-edges.
Thus $2 \champ[\Bhtwo \mid\circ] 3$  or $3 \champ[\Bhtwo \mid\circ] 3$.  If $3 \champ[\Bhtwo \mid\circ] 3$, then we get the following structure (again, the assumed external $\Bhthree$-edge is not drawn):
\begin{center}
	\includegraphics[scale=\figurescale]{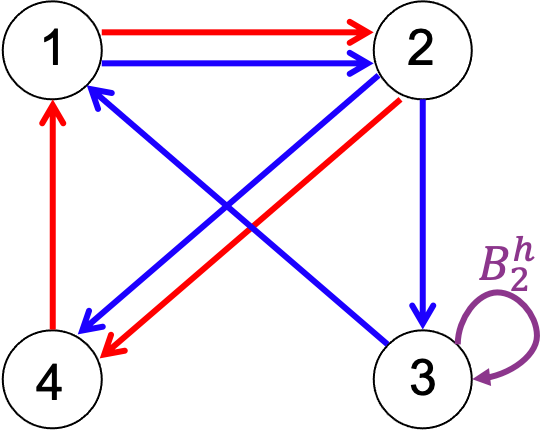}
\end{center}

The assumed external $\Bhthree$-edge can be either $4 \champ[\Bhthree] 1$ or $4 \champ[\Bhthree] 2$, and in both cases we are done via PI edge sets:
$$1 \champ[\h] 2 \champ[\g] 4 \champ[\Bhthree \mid\circ] 1,\, 3 \champ[\Bhtwo \mid\circ] 3 \quad\;\text{or}\quad\; 2 \champ[\g] 4 \champ[\Bhthree \mid\circ] 2,\, 3 \champ[\Bhtwo \mid\circ] 3,$$
respectively.  Thus $2 \champ[\Bhtwo \mid\circ] 3$, and we have the following structure (again, the assumed external $\Bhthree$-edge is not drawn):
\begin{center}
	\includegraphics[scale=\figurescale]{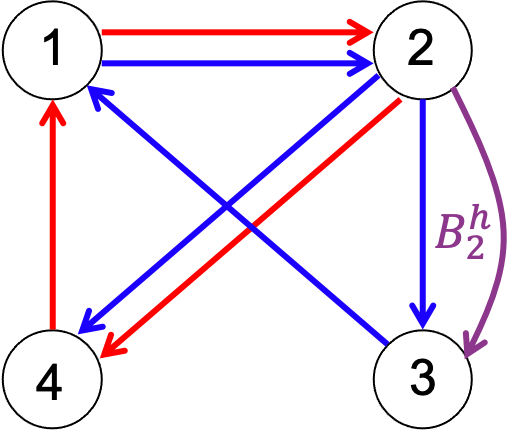}
\end{center}
It is easy to verify that if agent $2$ has other $\g$ or $\h$ champions except 1, then either we get an immediate PI cycle or we get a size 2 good  $\h$-cycle, which along with the size 3 $\g$-cycle that we already have is a case we already solved (Case 2).  Thus 1 is the unique champion of 2 (w.r.t to both $\g$,$\h$), and recall that $1 \nenvies 2$.

For the next step in the proof we shall need the following claim which holds in general.
\fullversion
Its proof is deferred to Appendix \ref{apx:4-agents}.
\fullversionend
\begin{claim}\label{claim:B_g_subseteq_B_h}
	Let $i,j$ be agents such that $i \nenvies j$, and $i$ is the unique $\g$ and $\h$ champion of $j$.  If $\h <_i \g$, then there exists a choice of $\discard[\h]{i}{j}$ and $\discard[\g]{i}{j}$ such that $\discard[\h]{i}{j} \subseteq \discard[\g]{i}{j}$.
\end{claim}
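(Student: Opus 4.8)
The plan is to fix an arbitrary discard set $B^{\h}:=\discard[\h]{i}{j}$ witnessing $i\champ[\h] j$ and then manufacture a compatible $\g$-discard set $B^{\g}\supseteq B^{\h}$. First I would record what $B^{\h}$ gives us: by Observation~\ref{obs:g_notin_bottom} (applied with $\h$ in the role of the unallocated good, using $i\nenvies j$) we have $\h\notin B^{\h}$, so $B^{\h}\subseteq X_j$ and the kept bundle $K_{\h}:=(X_j\setminus B^{\h})\cup\{\h\}$ is envied by $i$ while no agent strongly envies it. Replacing $\h$ by the larger item $\g$ can only help agent $i$: since $\h<_i\g$ and $\g,\h\notin X_j$, \cancbty\ gives $v_i\big((X_j\setminus B^{\h})\cup\{\g\}\big)>v_i(K_{\h})>v_i(X_i)$, so $i$ also envies $(X_j\setminus B^{\h})\cup\{\g\}$.

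Next I would enlarge the discard set greedily. Consider the finite, nonempty family $\mathcal{F}$ of all $B$ with $B^{\h}\subseteq B\subseteq X_j$ such that \emph{some} agent envies $(X_j\setminus B)\cup\{\g\}$, and pick a maximal $B^{\g}\in\mathcal{F}$; write $K:=(X_j\setminus B^{\g})\cup\{\g\}$. The key step is to show that no agent strongly envies $K$. Suppose agent $a$ strongly envies $K$ by deleting some $z\in K$. If $z=\g$ then $v_a(X_a)<v_a(X_j\setminus B^{\g})\le v_a(X_j\setminus B^{\h})$ by monotonicity (as $B^{\h}\subseteq B^{\g}$), so $a$ strongly envies $K_{\h}$ (deleting $\h$), contradicting the choice of $B^{\h}$. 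If $z\in X_j\setminus B^{\g}$ then $B^{\g}\cup\{z\}$ lies in $\mathcal{F}$ (witnessed by $a$, since $(X_j\setminus(B^{\g}\cup\{z\}))\cup\{\g\}=K\setminus\{z\}$) and strictly contains $B^{\g}$, contradicting maximality. Hence no agent strongly envies $K$; but some agent envies $K\subseteq X_j\cup\{\g\}$ by membership of $B^{\g}$ in $\mathcal{F}$, so that agent is most envious of $X_j\cup\{\g\}$, i.e.\ a $\g$-champion of $j$, and by the uniqueness hypothesis this agent is $i$. Therefore $B^{\g}$ is a legitimate choice for $\discard[\g]{i}{j}$, and with these choices $\discard[\h]{i}{j}=B^{\h}\subseteq B^{\g}=\discard[\g]{i}{j}$.

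The main obstacle, which this argument is designed to circumvent, is that one cannot simply reuse $B^{\h}$ as a $\g$-discard set: the bundle $(X_j\setminus B^{\h})\cup\{\g\}$ may be strongly envied by some other agent $a$ who happens to prefer $\g$ over $\h$, since the hypothesis $\h<_i\g$ constrains only agent $i$. The fix is to let the $\g$-discard set \emph{grow} (never shrink) relative to $B^{\h}$, and maximality of $B^{\g}$ together with uniqueness of the $\g$-champion is exactly what forces the kept set to be envied by $i$ and strongly envied by nobody. Note that uniqueness of the $\h$-champion is not needed, and the argument is valid for any number of agents.
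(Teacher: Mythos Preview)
Your proof is correct and follows essentially the same approach as the paper. The paper fixes $B^{\h}$, observes via \cancbty\ that $i$ envies $(X_j\setminus B^{\h})\cup\{\g\}$, and then invokes the existence of a most envious agent of that set (which must be $i$ by uniqueness of the $\g$-champion) to obtain a minimally envied subset $S$, setting $\discard[\g]{i}{j}=(X_j\cup\{\g\})\setminus S\supseteq B^{\h}$; you unfold this last step explicitly by picking a maximal $B^{\g}\supseteq B^{\h}$ for which $(X_j\setminus B^{\g})\cup\{\g\}$ is still envied and verifying directly that it is not strongly envied, which is exactly the content of the ``most envious agent'' construction.
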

%
\shortversion
\begin{proof}
	Choose some arbitrary discard set $\discard[\h]{i}{j}$.  By Observation \ref{obs:g_notin_bottom}, $\h \notin \discard[\h]{i}{j}$ and thus
	$$X_i <_i (X_j \cup \{\h\}) \setminus \discard[\h]{i}{j} = (X_j \setminus \discard[\h]{i}{j}) \cup \{\h\} <_i (X_j \setminus \discard[\h]{i}{j}) \cup \{\g\},$$
	where the first inequality holds since $1 \champ[\h] 2$, the equality holds since $\h \notin \discard[\h]{i}{j}$ and the second inequality is by \cancbty.  Therefore, since $i$ is the unique $\g$-champion of $j$, it follows that $i$ is the most envious agent of $(X_j \setminus \discard[\h]{i}{j}) \cup \{\g\}$ (otherwise that subset has another most envious agent and consequently $X_j \cup \{\g\}$ has other most envious agents except $i$).
	Denote the corresponding minimally envied subset by $S$.  Thus, we can choose $\discard[\g]{i}{j}$ to be $X_j\setminus S$. Clearly $\discard[\h]{i}{j} \subseteq \discard[\g]{i}{j}$.
\end{proof}
\shortversionend

Assume that $\h <_1 \g$. By Claim \ref{claim:B_g_subseteq_B_h} there exists a choice of $\discard[\h]{1}{2}$ and $\discard[\g]{1}{2}$ such that $\discard[\h]{1}{2} \subseteq \discard[\g]{1}{2}$.  Note that $\Bhtwo$ and $\Bgtwo$ are arbitrary bottom half-bundles (\textsl{i.e.}, discard sets that do not contain $\h$ and $\g$, respectively) whose specific identity had no impact on the proof thus far.  Thus we can redefine $\Bhtwo$ and $\Bgtwo$ to be this choice of $\discard[\h]{1}{2}$ and $\discard[\g]{1}{2}$, respectively (note that these do not contain $\h$ and $\g$ by Observation \ref{obs:g_notin_bottom}), and have $\Bhtwo \subseteq \Bgtwo$.
Thus, the cycle $1 \champ[\g] 2 \champ[\Bhtwo \mid\circ] 3 \champ[\h] 1$ is a PI cycle since the $\g$-edge releases $\Bgtwo$ which contains $\Bhtwo$.

We can therefore assume that $\g <_1 \h$, which leads by symmetric reasoning to the assumption $\Bgtwo \subseteq \Bhtwo$.  Consider $C_\g$. By Theorem \ref{thm:good-edge-or-external}, $M_\allocs$ contains a good or external $\Bgone$-edge.  If it is a good $\Bgone$-edge, then it is one of $1 \champ[\Bgone \mid\circ] 2$, $1 \champ[\Bgone \mid\circ] 4$, $2 \champ[\Bgone \mid\circ] 4$,
and in all cases we get a PI cycle:
$$1 \champ[\Bgone \mid\circ] 2 \champ[\h] 4 \champ[\g] 1, 1 \champ[\Bgone \mid\circ] 4 \champ[\g] 1, 1 \champ[\h] 2 \champ[\Bgone \mid\circ] 4 \champ[\g] 1$$
respectively.  Thus we can assume that there is an external $\Bgone$-edge.

By Theorem \ref{thm:good-edge-or-external}, $M_\allocs$ contains a good or external $\Bgtwo$-edge.  If it is a good $\Bgtwo$-edge, then it is one of $2 \champ[\Bgtwo \mid\circ] 4$, $2 \champ[\Bgtwo \mid\circ] 1$ and $4 \champ[\Bgtwo \mid\circ] 1$,
and in all cases we get a PI cycle:
$$1 \champ[\h] 2 \champ[\Bgtwo \mid\circ] 4 \champ[\g] 1,\; 1 \champ[\g] 2 \champ[\Bgtwo \mid\circ] 1,\;\;\text{and}\;\; 1 \champ[\g] 2 \champ[\h] 4 \champ[\Bgtwo \mid\circ] 1,$$
respectively, where the first cycle is indeed PI since the $\h$-edge releases $\Bhtwo$ which contains $\Bgtwo$.  Thus, we can assume that there is an external $\Bgtwo$-edge.

By Corollary \ref{cor:no_single_external_source}, since there are external $\Bgone$ and $\Bgtwo$ edges, there cannot be an external $\Bgfour$-edge.  Thus, the $\Bgfour$-edge guaranteed by Theorem \ref{thm:good-edge-or-external} must be good, i.e. it is one of $4 \champ[\Bgfour \mid\circ] 1$, $4 \champ[\Bgfour \mid\circ] 2$, $1 \champ[\Bgfour \mid\circ] 2$.  In the first two cases we immediately get a PI cycle:
$$1 \champ[\h] 2 \champ[\g] 4 \champ[\Bgfour \mid\circ] 1, 2 \champ[\g] 4 \champ[\Bgfour \mid\circ] 2$$
and thus we assume $1 \champ[\Bgfour \mid\circ] 2$, and we have the following structure (the assumed $\Bgone$, $\Bgtwo$ and $\Bhthree$ external edges are not drawn):
\begin{center}
	\includegraphics[scale=\figurescale]{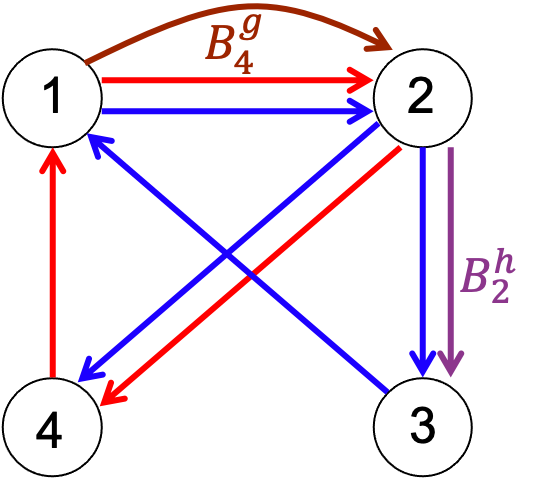}
\end{center}

Consider the external $\Bgone$ edge, which can be one of $3 \champ[\Bgone \mid\circ] 2$ and $3 \champ[\Bgone \mid\circ] 4$.  If it is $3 \champ[\Bgone \mid\circ] 4$, then we are done:  $1 \champ[\Bgfour \mid\circ] 2 \champ[\h] 3 \champ[\Bgone \mid\circ] 4 \champ[\g] 1$.  Thus we have the edge $3 \champ[\Bgone \mid\circ] 2$, and the structure is
\begin{center}
	\includegraphics[scale=\figurescale]{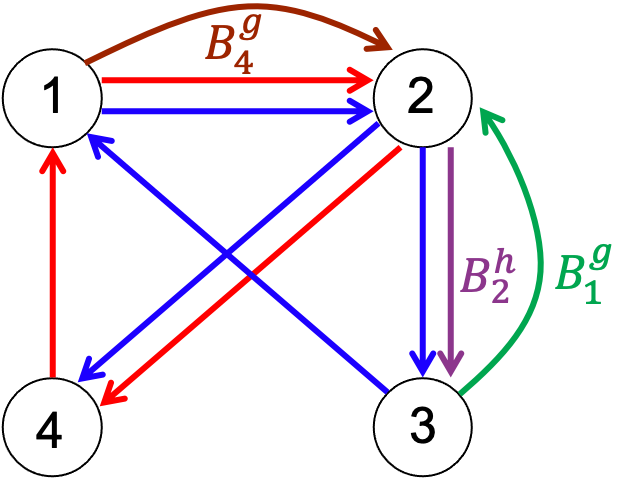}
\end{center}

Finally, we consider the external $\Bgtwo$-edge.  It cannot be $3 \champ[\Bgtwo \mid\circ] 1$, as otherwise, together with $3 \champ[\Bgone \mid\circ] 2$ we get by Observation \ref{obs:bottom_bundle_ineq} that $\Bgone <_3 \Bgtwo <_3 \Bgone$, contradiction.  Hence it must be $3 \champ[\Bgtwo \mid\circ] 4$, and we are done: $1 \champ[\Bgfour \mid\circ] 2 \champ[\h] 3 \champ[\Bgtwo \mid\circ] 4 \champ[\g] 1$.
\end{maybeappendix}

\fullversion
\vspace{1ex}
\fullversionend
\noindent\textbf{Case 5: $\boldsymbol{\left|C_\g\right| = 3, \left|C_\h\right| = 4}$.}
Assume w.l.o.g. that $C_\g = 1 \champ[\g] 3 \champ[\g] 4 \champ[\g] 1$ and that $C_\h = 1 \champ[\h]\linebreak[1] 2 \champ[\h]\linebreak[1] 3 \champ[\h]\linebreak[1] 4 \champ[\h]\linebreak[1] 1$ (note that if $C_\h$ is in the opposite direction we immediately get a PI cycle).  We have the structure:
\shortversion
\vspace{-1em}
\shortversionend
\begin{center}
	\includegraphics[scale=\figurescale]{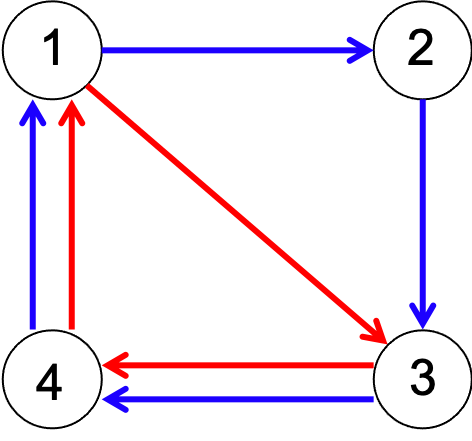}
\end{center}

Consider $C_\g$.
In what follows we reason about possible $\Bgone, \Bgthree$ and $\Bgfour$ edges,
starting with $\Bgthree$.  By Theorem \ref{thm:good-edge-or-external}, $M_\allocs$ contains a good or external $\Bgthree$-edge.  If it is a good $\Bgthree$-edge, then it is one of $3 \champ[\Bgthree \mid\circ] 4$, $3 \champ[\Bgthree \mid\circ] 1$, $4 \champ[\Bgthree \mid\circ] 1$,
and in all cases we get a PI cycle:
$$1 \champ[\g] 3 \champ[\Bgthree \mid\circ] 4 \champ[\h] 1, 1 \champ[\g] 3 \champ[\Bgthree \mid\circ] 1, 1 \champ[\g] 3 \champ[\h] 4 \champ[\Bgthree \mid\circ] 1$$
respectively.  Thus, we can assume that there is an external $\Bgthree$-edge, which can be either $2 \champ[\Bgthree \mid\circ] 1$ or $2 \champ[\Bgthree \mid\circ] 4$, and thus the structure is one of the following:
\begin{center}
	\includegraphics[scale=\figurescale]{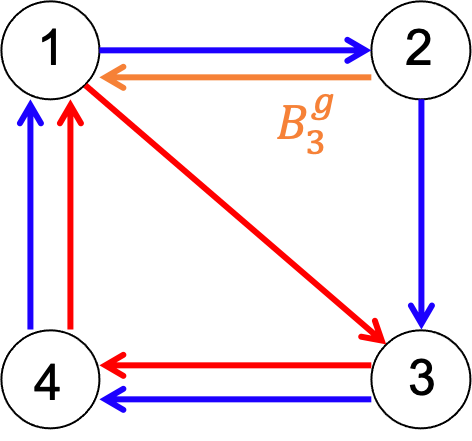} \qquad\qquad
	\includegraphics[scale=\figurescale]{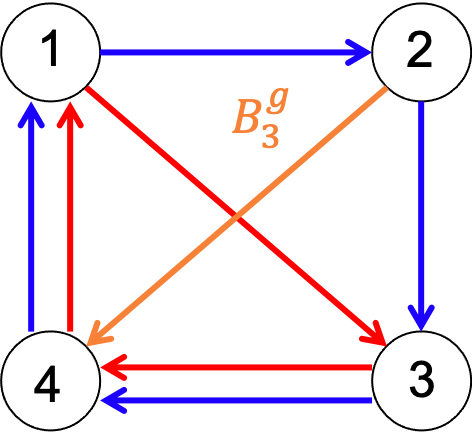}
\end{center}

We now ask which agent $i$ satisfies $i \champ[\Bgone \mid\circ] 3$ (such exists by Lemma \ref{lem:alg_start}, since $3=\succ(1)$ in $C_\g$). We cannot have $4 \champ[\Bgone \mid\circ] 3$, since $4 = \pred(1)$ in $C_\g$. If $1 \champ[\Bgone \mid\circ] 3$, we are done: $1 \champ[\Bgone \mid\circ] 3 \champ[\h] 4 \champ[\g] 1$.  If $3 \champ[\Bgone \mid\circ]3$, we are done regardless of whether $2 \champ[\Bgthree \mid\circ] 1$ or $2 \champ[\Bgthree \mid\circ] 4$; in the first case we have the PI edge set $1 \champ[\h] 2 \champ[\Bgthree \mid\circ] 1$, $3 \champ[\Bgone \mid\circ] 3$, and in the second case we have the PI edge set $1 \champ[\h] 2 \champ[\Bgthree \mid\circ] 4 \champ[\g] 1$, $3 \champ[\Bgone \mid\circ] 3$.
Thus, we must have $2 \champ[\Bgone \mid\circ] 3$.

Therefore, the external $\Bgthree$-edge cannot be $2 \champ[\Bgthree \mid\circ] 1$, as otherwise by Observation \ref{obs:bottom_bundle_ineq} we get $\Bgone <_2 \Bgthree <_2 \Bgone$, contradiction.  Hence we have $2 \champ[\Bgthree \mid\circ] 4$, and we obtain the following structure:
\begin{center}
	\includegraphics[scale=\figurescale]{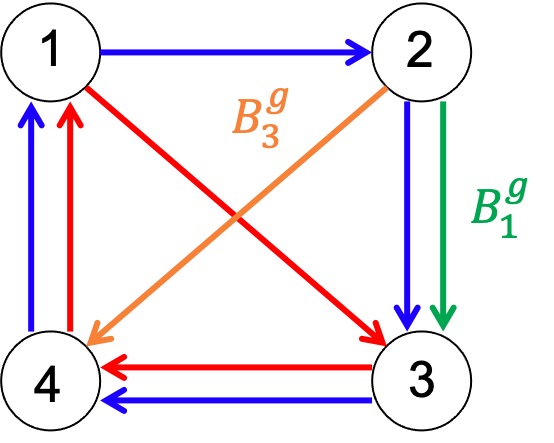}
\end{center}

We now ask which agent $i$ satisfies $i \champ[\Bgfour \mid\circ] 1$ (such exists by Lemma \ref{lem:alg_start}, since $1=\succ(4)$ in $C_\g$). We cannot have $3 \champ[\Bgfour \mid\circ] 1$, since $3 = \pred(4)$ in $C_\g$. We cannot have $2 \champ[\Bgfour \mid\circ] 1$, as otherwise we get a contradiction to Corollary \ref{cor:no_single_external_source}, since there are already external $\Bgone$ and $\Bgthree$ edges.  If $4 \champ[\Bgfour \mid\circ] 1$, we are done:  $1 \champ[\h] 2 \champ[\Bgone \mid\circ] 3 \champ[\g] 4 \champ[\Bgfour \mid\circ] 1$.  Thus we must have $1 \champ[\Bgfour \mid\circ] 1$ and we have the structure:
\shortversion
\vspace{-1em}
\shortversionend
\begin{center}
	\includegraphics[scale=\figurescale]{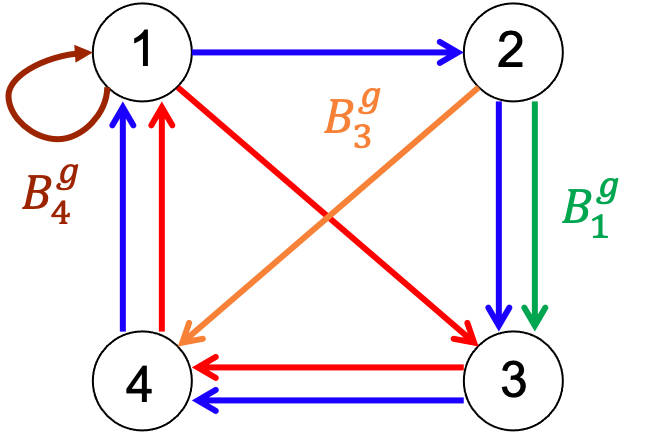}
\end{center}

By Lemma \ref{lem:alg_step}, $1\champ[\Bgfour\mid\circ]1$ implies that there is an agent $i$ that satisfies $i \champ[\Bgfour \mid\circ] 3$ ($3 = \succ(1)$ in $C_\g$).  We cannot have  $2 \champ[\Bgfour \mid\circ] 3$, as again we would get a contradiction to Corollary \ref{cor:no_single_external_source}.  We cannot have  $3 \champ[\Bgfour \mid\circ] 3$ since $3 = \pred(4)$ in $C_\g$.  If  $1 \champ[\Bgfour \mid\circ] 3$ we are done:  $1 \champ[\Bgfour \mid\circ] 3 \champ[\g] 4 \champ[\h] 1$.  Thus we must have $4 \champ[\Bgfour \mid\circ] 3$, and we are done: $3 \champ[\g] 4 \champ[\Bgfour \mid\circ] 3$.

\begin{maybeappendix}{4p-NoEnvy-Case-6}

\fullversion
\vspace{1ex}
\fullversionend
\noindent\textbf{Case 6: $\boldsymbol{\left|C_\g\right| = 4, \left|C_\h\right| = 4}$.}
It is easy to verify that if $C_\g$ and $C_\h$ are not parallel to each other then there must exist a 2-cycle with $\g$ and $\h$ edges.  Thus we may assume w.l.o.g. that $C_\g = 1 \champ[\g] 2 \champ[\g] 3 \champ[\g] 4 \champ[\g] 1$ and $C_\h = 1 \champ[\h] 2 \champ[\h] 3 \champ[\h] 4 \champ[\h] 1$, i.e., we have the following structure:
\begin{center}
	\includegraphics[scale=\figurescale]{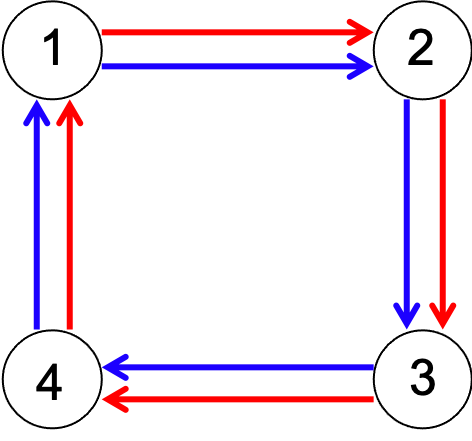}
\end{center}

Consider $C_\g$.  By Theorem \ref{thm:good-edge-or-external}, $M_\allocs$ contains a good $\Bgone$-edge (note that there are no external edges as both cycles pass through all agents), which can either be parallel to one of the edges in $C_\g$ ($1 \champ[\Bgone \mid\circ] 2$, $2 \champ[\Bgone \mid\circ] 3$, or $3 \champ[\Bgone \mid\circ] 4$), diagonal ($1 \champ[\Bgone \mid\circ] 3$ or $2 \champ[\Bgone \mid\circ] 4$) or going back ($1 \champ[\Bgone \mid\circ] 4$).  It is easy to verify that in the last two cases we immediately get a PI cycle by choosing a proper $\g$-edge and maybe also an $\h$-edge.

Since all half-bundles are symmetric, we may thus assume that all good edges with respect to any bottom half-bundle are parallel to $C_\g$ and $C_\h$.  Assume w.l.o.g. that one of these is $1 \champ[B_i^{\g} \mid\circ] 2$ for some $i$, and note that $i \neq 2$.  Considering the 3 possibilities for the good $\Bgtwo$-edge gives us one of the following 3 structures:
\begin{center}
\begin{tabular}{p{3cm}p{3cm}p{3cm}}
	\vspace{0pt}
	\includegraphics[scale=\figurescale]{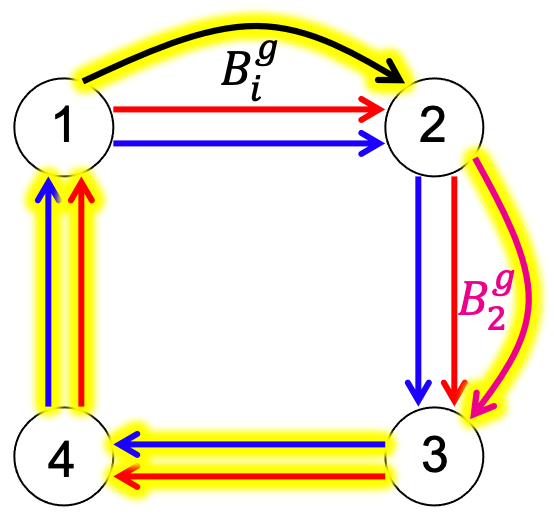} &
	\vspace{0pt}
	\includegraphics[scale=\figurescale]{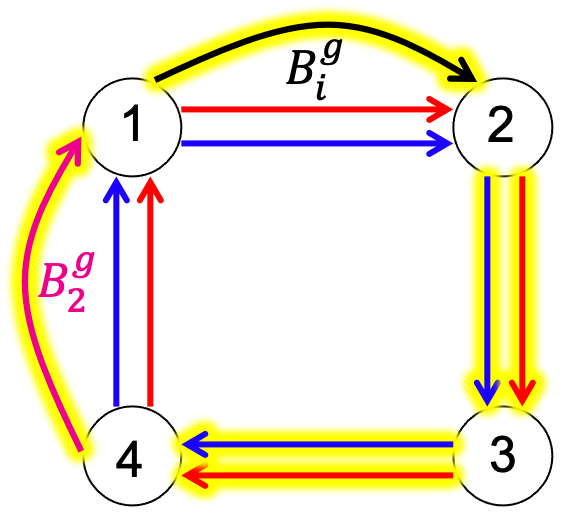} &
	\vspace{0pt}
	\includegraphics[scale=\figurescale]{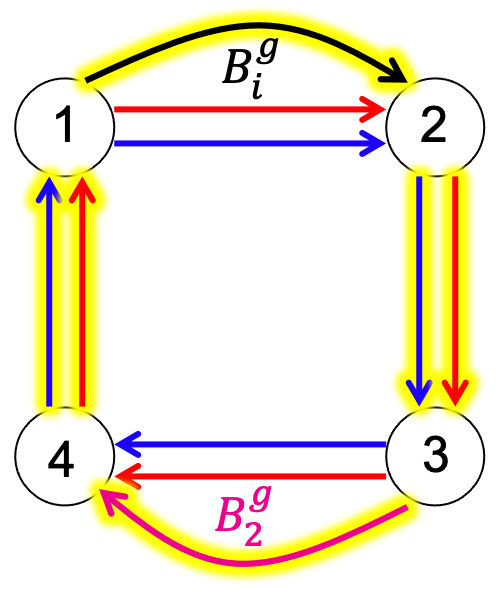}
\end{tabular}
\end{center}
and in any case we can form a PI cycle by taking the $B_i^{\g}$ and $\Bgtwo$ edges, and among the two options for completing with one $\g$ and one $\h$ edge, choose the option so that the $\h$ edge does not go into $i$.
\end{maybeappendix}


\fullversion
\fullversion
\subsection{$\allocs$ is Not Envy-Free}
\label{sec:not-ef-case}
We first consider the structure of the envy edges and champion edges w.r.t. $\g$ and $\h$. We show that there is only one non-trivial such structure in the following lemma, whose proof is deferred to Appendix~\ref{apx:4-agents}.

\begin{lemma}\label{lem:envy_unique_structure}
	 Assuming $M_\allocs$ contains no PI cycle, and up to renaming of the agents, the possible $\g$ and $\h$ champions of 4 are only agents 1 and 2, and all other basic edges in $M_\allocs$ are exactly as depicted in the following figure:
	 \begin{center}
		\includegraphics[scale=\figurescale]{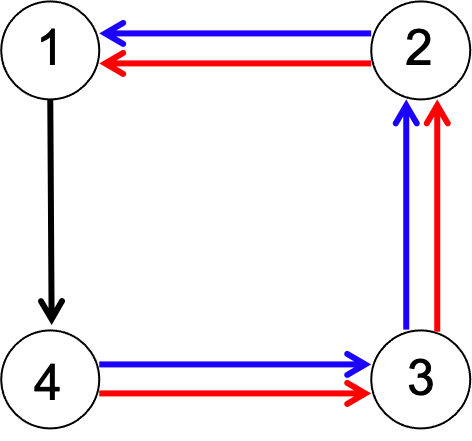}
	\end{center}
\end{lemma}

The above lemma and Observation~\ref{obs:g_notin_bottom} imply:
\begin{remark}
agents 1,2 and 3 are $\g$ and $\h$ decomposed by their unique champion, 2,3 and 4, respectively.
\end{remark}

In the following we denote the $\g$ (resp., $\h$) -decomposition by $X_j = T^g_j\cupdot B^g_j$ (resp. $X_j = T^h_j\cupdot B^h_j$), for $j\in\{1,2,3\}$.
Without loss of generality, we assume that $\g <_4 \h$ (otherwise, switch the names of $\h$ and $\g$ and the champion graph described above remains the same).
By Claim \ref{claim:B_g_subseteq_B_h}, we may redefine $\Bgthree$ and $\Bhthree$ such that they also satisfy
\begin{equation}\label{eq:B_3subsetB'_3}
	\Bgthree \subseteq \Bhthree.
\end{equation}

From this point on we don't always find a PI cycle.
Instead, we show that there is a dominating EFX allocation in which the agent $\vip$ strictly improves (while other agents may become worse-off), unless we find a PI cycle in $\allocs$.
Since we renamed the agents depending on the graph structure, agent $\vip$ can be any one of the agents. Hence we split into cases according to the identity of $\vip$.
\fullversionend
\subsubsection{$\boldsymbol{\vip \neq 2}$}
\shortversion
To start, we require the following assumption. Without loss of generality, we assume that $\g <_4 \h$ (otherwise, switch the names of $\h$ and $\g$ and the champion graph described above remains the same).
By Claim \ref{claim:B_g_subseteq_B_h}, we may redefine $\Bgthree$ and $\Bhthree$ such that they also satisfy
%
\begin{equation}\label{eq:B_3subsetB'_3}
	\Bgthree \subseteq \Bhthree.
\end{equation}
\shortversionend
As a first attempt consider the following allocation, denoted $\allocs'$:

\begin{center}
\renewcommand{\arraystretch}{1.2}
	$\allocs'$=\bundle{1}{X_4}\quad\decomp{2}{\Tgone}{\Bgtwo}\quad\decomp{3}{\Tgtwo}{\g}\quad\decomp{4}{\Ththree}{\h}
\end{center}


\noindent\textbf{All agents are better off in $\boldsymbol{\allocs'}$:}
Agent 1 is better off, since $X_1<_1 X_4$.
Agent 2 is better off, since $X_2 = \Tgtwo \cup \Bgtwo <_2 \Tgone \cup \Bgtwo$, by Observation~\ref{obs:T_k<T_j} (since $2\champ[\g]1$ and $2\nchamp[\g]2$)
and \cancbty.
Agent 3 is better off, since $X_3 <_3 \Tgtwo \cup \{\g\}$ (since $3\champ[\g]2$).
Agent 4 is better off, since $X_4 <_4 \Ththree \cup \{\h\}$ (since $4\champ[\h]3$).

\noindent\textbf{The only strong envy in $\mathbf{X'}$ is from 1 to 2:}
No agent strongly envies the sets $X'_1 = X_4, X'_3 = \Tgtwo \cup \{\g\}, X'_4 = \Ththree \cup \{\h\}$ as all agents are better off and no agent strongly envied any of them in $\allocs$ (by definition of basic championship).

It remains to show that agents 3 and 4 do not strongly envy $X'_2$. Agent 3 does not envy $X'_2$, since
$
X'_3 >_3 X_3 >_3 X_2 =\Tgtwo\cup\Bgtwo >_3 \Tgone \cup \Bgtwo = X'_2%
$
, where the first inequality is due to $3\nenvies2$  and the second inequality is due to Observation~\ref{obs:T_k<T_j} and \cancbty.

If agent 4 envies $X'_2$, then
$
X_4 <_4 X'_4 <_4 \Tgone \cup \Bgtwo <_4 \Tgthree \cup \Bgtwo
$
, where the third inequality is by Observation~\ref{obs:T_k<T_j} and \cancbty.
The following claim shows that there must now exist a PI-cycle.
\fullversion
The proof of the claim is deferred to Appendix \ref{apx:4-agents}.
\fullversionend

\begin{claim}\label{clm:X_4<_4T_3+B_2}
	If $X_4 <_4 \Tgthree \cup \Bgtwo$, then there exists a PI cycle in $M_{\allocs}$.
\end{claim}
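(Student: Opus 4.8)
The plan is to convert the hypothesis into a generalized champion edge and splice it into a short cycle through agents $3$ and $2$. Write $\Tgthree\cup\Bgtwo=(X_3\setminus\Bgthree)\cup\Bgtwo$, where $\Bgthree=\discard[\g]{4}{3}$ is the bottom $\g$-half of agent $3$ induced by $4\champ[\g]3$. The hypothesis says agent $4$ envies $(X_3\setminus\Bgthree)\cup\Bgtwo$, so this set has a most envious agent $a$, and by Definition~\ref{def:champ_H|S} (and the $\circ$-convention) this yields an edge $a\champ[\Bgtwo\mid\circ]3$ in $M_\allocs$, i.e. $a\champ[\Bgtwo\mid\Bgthree]3$. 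I would also record the single fact that the basic edge $3\champ[\g]2$ releases $\Bgtwo$: its discard set is exactly $\Bgtwo$ (agent $3$ is the unique $\g$-champion of agent $2$, who is $\g$-decomposed by agent $3$), so by Definition~\ref{def:pareto_improvable} any $T\subseteq\Bgtwo$ is released by this edge. The proof then proceeds by cases on $a\in\{1,2,3,4\}$.

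For $a=1$: the cycle $1\champ[\Bgtwo\mid\circ]3\champ[\g]2\champ[\h]1$ is Pareto improvable --- its side sets $\Bgtwo,\{\g\},\{\h\}$ are pairwise disjoint, $\Bgtwo$ is released by $3\champ[\g]2$, and $\g,\h\in U$ --- and it uses only basic edges present in the structure of Lemma~\ref{lem:envy_unique_structure} (the edge $2\champ[\h]1$ because agent $2$ is the unique $\h$-champion of agent $1$). For $a=2$: already the length-two cycle $2\champ[\Bgtwo\mid\circ]3\champ[\g]2$ is Pareto improvable by the same bookkeeping. For $a=3$: the self-loop $3\champ[\Bgtwo\mid\circ]3$ forces $\Bgthree<_3\Bgtwo$ by \cancbty\ (an agent cannot envy a sub-bundle of her own bundle, so the envied subset meets $\Bgtwo$, and cancelling the common part $\Tgthree$ gives the inequality); on the other hand Lemma~\ref{lem:alg_start} applied to the good $\g$-cycle at agent $3$ produces a $\Bgthree$-champion edge into $\succ(3)=2$, whose source cannot be $4$ by Observation~\ref{obs:pred(i)_nchamp_B(i)} and, if it is $3$, contradicts $\Bgthree<_3\Bgtwo$ via Observation~\ref{obs:bottom_bundle_ineq}; iterating Lemma~\ref{lem:alg_step} and Theorem~\ref{thm:good-edge-or-external} pins down the remaining sources, and pairing such an edge (which releases $\Bgthree$ through its $S$-part) with the self-loop $3\champ[\Bgtwo\mid\circ]3$ yields a Pareto improvable edge set, hence a Pareto dominating EFX allocation by Lemma~\ref{lem:pareto-improvable} and Remark~\ref{rem:PI_edge_set} (alternatively $a=3$ is ruled out outright by the contradicting inequalities).

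The main case --- and the main obstacle --- is $a=4$, giving the edge $4\champ[\Bgtwo\mid\circ]3$. Since $3\champ[\g]2$ releases $\Bgtwo$, the partial cycle $4\champ[\Bgtwo\mid\circ]3\champ[\g]2$ is fixed, and it remains to route back from agent $2$ to agent $4$ along basic edges --- the edges $2\champ[\g]1$, $2\champ[\h]1$, the $\g$- and $\h$-champions of $4$ (which lie in $\{1,2\}$ by Lemma~\ref{lem:envy_unique_structure}), and the envy edges of that fixed structure --- using each unallocated good at most once and keeping all side sets pairwise disjoint. If the $\h$-champion of $4$ is agent $2$, then $4\champ[\Bgtwo\mid\circ]3\champ[\g]2\champ[\h]4$ closes a Pareto improvable cycle immediately. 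Otherwise the $\h$-champion of $4$ is agent $1$ (so $1\champ[\h]4$) and the return path has length $3$ or $4$ through agent $1$; here, in order not to reuse $\g$, I would first apply Claim~\ref{claim:B_g_subseteq_B_h} to the pair $(i,j)=(3,2)$ to arrange $\Bgtwo\subseteq\Bhtwo$ or $\Bhtwo\subseteq\Bgtwo$ (according to whether $\g<_3\h$ or $\h<_3\g$), so that the edge $3\champ[\g]2$ may be traded for $3\champ[\h]2$ while still releasing $\Bgtwo$, and then choose complementary $\g$/$\h$ labels on the edges out of agents $2$ and $1$, falling back on the envy edges of the fixed structure in the residual configuration where both champions of $4$ are agent $1$. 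Verifying, across this short finite enumeration, that each completed cycle uses $\g$ and $\h$ at most once and that every side set is pairwise disjoint from the others and released by exactly one edge of the cycle is where the bookkeeping is most delicate.
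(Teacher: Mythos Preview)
Your cases $a=1$ and $a=2$ are correct and match the paper exactly. The gaps are in the other two cases, both of which the paper handles in one line.

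For $a=3$, the paper does not build any machinery at all: it simply observes that agent $3$ cannot envy $\Tgthree\cup\Bgtwo$. Indeed, $3\nenvies 2$ (the only envy edge in the fixed structure of Lemma~\ref{lem:envy_unique_structure} is $1\envies 4$), so $X_3>_3 X_2=\Tgtwo\cup\Bgtwo>_3\Tgthree\cup\Bgtwo$, the last step by Observation~\ref{obs:T_k<T_j} (with $i=3$, $j=2$, $k=3$) and \cancbty. Hence $a\neq 3$ outright. Your primary argument instead invokes Lemma~\ref{lem:alg_start} ``applied to the good $\g$-cycle at agent $3$'', but at this point in the proof no good $\g$-cycle is known to exist: the claim is used before the split into Subcases~\circled{II} and~\circled{III}, and when $1\champ[\g]4$ the unique $\g$-cycle $1\champ[\g]4\champ[\g]3\champ[\g]2\champ[\g]1$ has the parallel envy edge $1\envies 4$ and is therefore not good. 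So your appeal to Lemmata~\ref{lem:alg_start} and~\ref{lem:alg_step} is unjustified, and the vague parenthetical about ``contradicting inequalities'' does not recover the argument.

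For $a=4$, you are missing the one-step return: the fixed structure already contains the envy edge $1\envies 4$, so the cycle
\[
4\champ[\Bgtwo\mid\circ]3\champ[\g]2\champ[\h]1\envies 4
\]
is Pareto improvable (the sets $\Bgtwo,\{\g\},\{\h\},\emptyset$ are pairwise disjoint; $\Bgtwo$ is released by $3\champ[\g]2$; $\g,\h\in U$). There is no need to case-split on the $\h$-champion of $4$, to invoke Claim~\ref{claim:B_g_subseteq_B_h}, or to swap $\g$/$\h$ labels. Your detour through champion edges into $4$ is not wrong in spirit, but it is both unnecessary and left unfinished in the residual configurations.
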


\shortversion
\begin{proof}
	Since 4 envies $\Tgthree \cup \Bgtwo$, there exists a most envious agent of this bundle. That is, there exists an agent $i$ such that $i\champ[\Bgtwo\mid\circ]3$.
	$i$ cannot be 3,
	since $X_3 >_3 X_2 = \Tgtwo \cup \Bgtwo >_3 \Tgthree \cup \Bgtwo$, where the first inequality is by $3\nenvies2$ and the second inequality is by Observation~\ref{obs:T_k<T_j}
	and \cancbty.

	In the remaining cases, $i = 1$, $i=2$, $i=4$, we obtain the respective PI cycles:
	$$1\champ[\Bgtwo\mid\circ]3\champ[\g]2\champ[\h]1,~2\champ[\Bgtwo\mid\circ]3\champ[\g]2,~4\champ[\Bgtwo\mid\circ]3\champ[\g]2\champ[\h]1\envies4.$$
\end{proof}
\shortversionend
By Claim~\ref{clm:X_4<_4T_3+B_2}, we may assume that agent 4 does not strongly envy $X'_2$.  It remains to consider agent 1. If she does not strongly envy $X'_2$ then $\allocs'$ is an EFX allocation that Pareto-dominates $\allocs$.  Thus we assume that the only strong envy in $\mathbf{X'}$ is from 1 to 2 as we wanted to show.

\vspace{0.1in}
Now, notice that $\Tgone \cup \Bgtwo >_2 \max_{2} \left\{X_4,\, \Tgtwo \cup \g,\, \Ththree \cup \h,
\, X_3\right\}$ in $\mathbf{X'}$, since agent 2 doesn't envy agents 3 and 4 in both $\allocs$ and $\allocs'$, and agent 2 is better off in $\allocs'$. Let $Z\subseteq \Tgone \cup \Bgtwo$ be a set of minimal size satisfying
$
	Z>_2 \max_{2} \left\{X_4,\, \Tgtwo \cup \g,\, \Ththree \cup \h,
\, X_3\right\},
$ {\sl i.e.}, the inequality no longer holds after removing any item from $Z$.
If $X_4>_1 Z$ then it is easy to find a dominating allocation as shown in the following lemma.
\fullversion
The proof of the lemma is deferred to Appendix \ref{apx:4-agents}.
\fullversionend
\begin{lemma}\label{lem:X_4>_1_Z}
	If $X_4>_1 Z$ then there is an EFX allocation $\mathbf{Y}$ that dominates $\allocs$.
\end{lemma}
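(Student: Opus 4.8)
The plan is to obtain $\mathbf{Y}$ from $\allocs'$ by shrinking agent $2$'s bundle from $\Tgone\cup\Bgtwo$ to the minimal set $Z$: set $Y_1=X_4$, $Y_2=Z$, $Y_3=\Tgtwo\cup\{\g\}$, $Y_4=\Ththree\cup\{\h\}$, and leave all remaining items unallocated. Since $Z\subseteq\Tgone\cup\Bgtwo$ and $\allocs'$ is already a legal allocation, the $Y_i$ are pairwise disjoint, so $\mathbf{Y}$ is an allocation. For domination, agents $1$, $3$, $4$ each strictly improve relative to $\allocs$: $Y_1=X_4>_1 X_1$ since $1 \envies 4$; $Y_3=\Tgtwo\cup\{\g\}>_3 X_3$ since $3 \champ[\g] 2$; and $Y_4=\Ththree\cup\{\h\}>_4 X_4$ since $4 \champ[\h] 3$. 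Since we are in the case $\vip\neq 2$, i.e.\ $\vip\in\{1,3,4\}$, in every case $Y_{\vip}>_{\vip}X_{\vip}$, so $\mathbf{Y}$ dominates $\allocs$ with $k=1$ in Definition~\ref{def:domination}. (Agent $2$ may be worse off in $\mathbf{Y}$; this is exactly why the hypothesis $\vip\neq 2$ is needed.)

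The substantive step is to check that $\mathbf{Y}$ is EFX, i.e.\ that no agent strongly envies any $Y_i$. Two facts do most of the work. By the minimality property of $Z$, $Z>_2 X_4$, $Z>_2\Tgtwo\cup\{\g\}$ and $Z>_2\Ththree\cup\{\h\}$, so agent $2$ envies none of $Y_1,Y_3,Y_4$. And agents $1$, $3$, $4$ do not even envy $Y_2=Z$: agent $1$ because $Y_1=X_4>_1 Z$ is the lemma's hypothesis; agent $3$ via $Y_3=X'_3>_3 X_3>_3 X_2>_3\Tgone\cup\Bgtwo\geq_3 Z$ (using $3 \nenvies 2$, Observation~\ref{obs:T_k<T_j} with \cancbty, and monotonicity); and agent $4$ via $Y_4=X'_4\geq_4\Tgone\cup\Bgtwo\geq_4 Z$, where $X'_4\geq_4\Tgone\cup\Bgtwo$ holds because, assuming $M_\allocs$ has no PI cycle, Claim~\ref{clm:X_4<_4T_3+B_2} forces agent $4$ not to envy $X'_2=\Tgone\cup\Bgtwo$. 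Given these: $Y_2=Z$ is envied by no one, hence strongly envied by no one; and for each of $Y_1=X_4$, $Y_3=\Tgtwo\cup\{\g\}$, $Y_4=\Ththree\cup\{\h\}$, no agent strongly envied that set already in $\allocs$ (for $Y_3,Y_4$ by the definitions of $3 \champ[\g] 2$ and $4 \champ[\h] 3$; for $Y_1$ since $\allocs$ is EFX), the agents of $\{1,3,4\}$ other than the one holding that set have weakly better bundles in $\mathbf{Y}$ than in $\allocs$ and so still do not strongly envy it, and agent $2$ does not envy it at all. This covers all four agents, so $\mathbf{Y}$ is EFX.

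I do not expect a genuine obstacle here; the lemma is essentially a bookkeeping wrap-up of the preceding case analysis. The two points needing care are (a) invoking Claim~\ref{clm:X_4<_4T_3+B_2} (through the standing no-PI-cycle assumption) to conclude that agent $4$ does not envy $\Tgone\cup\Bgtwo$, and (b) the routine fact that ``no agent strongly envies $S$ in $\allocs$'' persists in $\mathbf{Y}$ for every agent whose bundle has only improved, which is what makes the EFX checks against agents $1$, $3$, and $4$ go through.
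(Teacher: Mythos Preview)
Your proof is correct and essentially matches the paper's argument: both define $\mathbf{Y}$ by replacing $X'_2$ with $Z$ in $\allocs'$, verify that agents $1,3,4$ strictly improve (so $\vip\neq 2$ gives domination), and check EFX by showing no one envies $Y_2=Z$ and agent $2$ envies no one. The paper's write-up leans slightly more on the already-established fact that the only strong envy in $\allocs'$ is from $1$ to $2$ (so shrinking $X'_2$ to $Z$ can only help), whereas you re-verify the no-envy-toward-$Z$ claims for agents $3$ and $4$ directly; but the underlying reasoning is the same.
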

\shortversion
\begin{proof}

	Consider the allocation $\mathbf{Y}$ obtained from $\allocs'$ by replacing $X'_2$ with $Z$:


	\begin{center}
		\renewcommand{\arraystretch}{1.2}
		$\mathbf{Y}$=\bundle{1}{X_4}\quad\bundle{2}{Z}\quad\decomp{3}{\Tgtwo}{\g}\quad\decomp{4}{\Ththree}{\h}
	\end{center}
	%
	%
	We claim this allocation is EFX.

	\noindent\textbf{No agent envies agent 2:}
	Agents 3 and 4 do not envy agent 2 since they did not envy her in $\allocs'$ and we only removed items from $X'_2$ in the transition to $\mathbf{Y}$. Agent 1 does not envy agent 2, since we assume $X_4>_1 Z$.

	\noindent\textbf{Agent 2 envies no other agent:}
	This follows from the definition of $Z$.

	Since only the bundle of agent 2 has been changed in the transition from $\allocs'$ to $\mathbf{Y}$ we conclude that there is no strong envy that does not include agent 2 (as there wasn't any in $\allocs'$), and consequently $\mathbf{Y}$ is EFX.
	Moreover, agents 1,3, and 4 are better off in allocation $\mathbf{Y}$ relative to $\allocs$. Since we assumed that agent $\vip$ is not agent 2, it follows that $\mathbf{Y}$ dominates $\allocs$ and we are done.
\end{proof}
\shortversionend

%
By Lemma \ref{lem:X_4>_1_Z} we may assume for the rest of the proof that $X_4 <_1 Z$.  Consider the allocation $\allocs''$ defined as follows:

\begin{center}
\begin{minipage}{\textwidth}
\renewcommand{\arraystretch}{1.2}
\begin{center}
$\allocs''$=
	\bundle{1}{Z}\quad
	\bundle{2}{\max_2\{X_4,\, \Tgtwo \cup \g,\, \Ththree\cup \h,\, X_3 \}}\quad
	\bundle{3}{\begin{array}{c}
		\Tgtwo \cup \g \\
		\mbox{or}^\ast\; X_3
		\end{array} }\quad
	\bundle{4}{\begin{array}{c}
		X_4\mbox{ or}^\dagger\\
		\Ththree \cup \h
		\end{array} }\quad
\end{center}
\noindent{\scriptsize $\ast$ $X''_3=\Tgtwo \cup \{\g\}$ unless $X''_2=\Tgtwo \cup \{\g\}$, in which case $X''_3=X_3$;}\hfill
\noindent{\scriptsize $\dagger$ $X''_4=X_4$ unless $X''_2=X_4$, in which case $X''_4=\Ththree \cup \{\h\}$.}
\end{minipage}
\end{center}


\begin{claim}
	$\allocshat$ is EFX.
\end{claim}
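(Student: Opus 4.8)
The plan is to verify the two requirements of a partial EFX allocation directly. First I would check that $\allocshat$ is a legal allocation: since $Z\subseteq\Tgone\cup\Bgtwo$, the bundle $Z$ is disjoint from $\Tgtwo$ (the complement of $\Bgtwo$ inside $X_2$), from $X_3$, from $X_4$, and from the unallocated goods $\g,\h$; and in every one of the four cases defining $\allocshat$ the bundles $X''_2,X''_3,X''_4$ are three distinct members of the pairwise‑disjoint family $\{X_4,\ \Tgtwo\cup\{\g\},\ \Ththree\cup\{\h\},\ X_3\}$ (call these the \emph{candidates}). Hence the bundles of $\allocshat$ are pairwise disjoint, and only the absence of strong envy remains.

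The next step is to rule out strong envy towards a candidate. None of the candidates is strongly envied by anyone in $\allocs$: $X_4$ and $X_3$ because $\allocs$ is EFX; $\Tgtwo\cup\{\g\}$ because $3\champ[\g]2$ and (as $3\nenvies2$, by Observation~\ref{obs:g_notin_bottom}) $\discard[\g]{3}{2}=\Bgtwo$, so no agent strongly envies $(X_2\cup\{\g\})\setminus\Bgtwo=\Tgtwo\cup\{\g\}$; and symmetrically $\Ththree\cup\{\h\}$ from $4\champ[\h]3$ with $\discard[\h]{4}{3}=\Bhthree$. I would then observe that in $\allocshat$ agents $1,3,4$ each hold a bundle at least as good to them as their bundle in $\allocs$ — agent $1$ holds $Z>_1X_4>_1X_1$; agent $3$ holds $X_3$ or $\Tgtwo\cup\{\g\}>_3X_3$; agent $4$ holds $X_4$ or $\Ththree\cup\{\h\}>_4X_4$ — so they still do not strongly envy any candidate, while agent $2$ holds $X''_2=\max_2\{X_4,\Tgtwo\cup\{\g\},\Ththree\cup\{\h\},X_3\}$ and so does not even envy any candidate. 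Thus nobody strongly envies $X''_2,X''_3$ or $X''_4$.

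The remaining task is strong envy towards $X''_1=Z$. For agent $2$ this is exactly the minimality of $Z$: for every $z\in Z$, $Z\setminus\{z\}\leq_2\max_2\{X_4,\Tgtwo\cup\{\g\},\Ththree\cup\{\h\},X_3\}=X''_2$. For agent $3$, since agent $3$ does not envy $X'_2=\Tgone\cup\Bgtwo$ in $\allocs'$ (shown earlier) and $Z\subseteq\Tgone\cup\Bgtwo$, I get $X''_3>_3 Z$ in both cases: if $X''_3=\Tgtwo\cup\{\g\}=X'_3$ directly, and if $X''_3=X_3$ via $X_3>_3X_2$ (from $3\nenvies2$) and $X_2=\Tgtwo\cup\Bgtwo>_3\Tgone\cup\Bgtwo$ (Observation~\ref{obs:T_k<T_j} and \cancbty). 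Likewise, if $X''_4=\Ththree\cup\{\h\}$ then, since agent $4$ does not envy $X'_2=\Tgone\cup\Bgtwo$ in $\allocs'$ (otherwise Claim~\ref{clm:X_4<_4T_3+B_2} would yield a PI cycle, contrary to the standing assumption that $M_\allocs$ has no PI cycle), $X''_4=\Ththree\cup\{\h\}>_4\Tgone\cup\Bgtwo\geq_4 Z$.

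The one genuinely delicate case — which I expect to be the main obstacle — is agent $4$ versus $Z$ when $X''_4=X_4$ (equivalently $X''_2\neq X_4$): here $Z$ is valuable for agent $2$ but need not be for agent $4$, so no direct valuation comparison is available. I would argue by contradiction with the standing assumption that $M_\allocs$ has no PI cycle. If agent $4$ strongly envied $Z$, then for some $z\in Z$ we would have $X_4<_4 Z\setminus\{z\}\leq_4\Tgone\cup\Bgtwo=(X_1\setminus\Bgone)\cup\Bgtwo$, so agent $4$ envies $(X_1\setminus\Bgone)\cup\Bgtwo$; this set therefore has a most‑envious agent $a$, giving a new generalized champion edge $a\champ[\Bgtwo\mid\circ]1$ (with $B_1^{\g}=\Bgone$). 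Since $\Bgtwo=\discard[\g]{3}{2}$ is released by the edge $3\champ[\g]2$, I would splice this new edge together with $3\champ[\g]2$, the envy edge $1\envies4$, and agent $4$'s champion edge into $1$ or $2$, and — using the rigid description of the basic edges of $M_\allocs$ supplied by Lemma~\ref{lem:envy_unique_structure} — extract a Pareto‑improvable cycle (Lemma~\ref{lem:pareto-improvable}), the desired contradiction. With this last case excluded, every possible strong envy is ruled out, so $\allocshat$ is EFX.
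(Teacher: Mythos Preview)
Your argument is essentially the paper's argument, and it is correct up to (but not including) the ``genuinely delicate case''. There, you have overcomplicated matters. Once you derive $X_4 <_4 Z\setminus\{z\}\leq_4 \Tgone\cup\Bgtwo$, you are one line away from finishing exactly as you did in the previous subcase: since $4\champ[\g]3$ and $4\nchamp[\g]1$, Observation~\ref{obs:T_k<T_j} gives $\Tgone <_4 \Tgthree$, hence by \cancbty\ $X_4 <_4 \Tgthree\cup\Bgtwo$, and Claim~\ref{clm:X_4<_4T_3+B_2} yields a PI cycle in $M_\allocs$ --- the same contradiction you already invoked for $X''_4=\Ththree\cup\{\h\}$. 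In other words, the fact ``$X_4 >_4 \Tgone\cup\Bgtwo$ (or else a PI cycle exists)'' was effectively established in the analysis of $\allocs'$, and the paper simply quotes it for both values of $X''_4$.

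Your alternative plan --- pass to a most-envious agent $a$ of $\Tgone\cup\Bgtwo$, obtain $a\champ[\Bgtwo\mid\circ]1$, and splice with $3\champ[\g]2$, $1\envies4$, and a champion edge into $4$ --- is not only unnecessary but also incompletely specified and does not obviously close a PI cycle in every subcase. For instance, if $a=1$ (a self-edge $1\champ[\Bgtwo\mid B^{\g}_1]1$) or $a=4$ (the two-cycle $1\envies4\champ[\Bgtwo\mid B^{\g}_1]1$), the required set $\Bgtwo$ must be released by $3\champ[\g]2$, which forces agents $2$ and $3$ onto the cycle as well; but Lemma~\ref{lem:envy_unique_structure} gives no basic edge from $2$ to $3$, and the available $\g/\h$ labels get used up, so it is not clear how to complete a PI cycle or PI edge set. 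The clean route is the one you already used: go through $\Tgthree\cup\Bgtwo$ and Claim~\ref{clm:X_4<_4T_3+B_2}.
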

\begin{proof}

Agent 1 envies no one since $Z >_1 X_4$ and she did not envy $X_3, \Tgtwo \cup \{\g\}$ and $\Ththree \cup \{\h\}$ in previously considered allocations when she was worse off.

Agent $i$, for $i\in\{3,4\}$, does not strongly envy another agent: Notice that $\hX_i \geq_i X_i$. We have shown $X_i>_i \Tgone \cup \Bgtwo \geq_i Z$ in our analysis of allocation $\allocs'$. Thus, $\hX_i>_i\hX_1$. Moreover, in allocation $\allocs$ no agent strongly envies $\Tgtwo \cup \g$ and $\Ththree \cup \h$, by definition of champion.

Agent 2 does not strongly envy 1 by definition of $Z$, and does not strongly envy agents 3 or 4 since $\hX_2=\max_{2} \left\{X_4,\, \Tgtwo \cup \g,\, \Ththree \cup \h,
		\, X_3\right\}\geq_2 \max_{2}\big\{\hX_3,\, \hX_4\big\}$.
\end{proof}
Agent 1 is strictly better off in allocation $\allocshat$. Thus, if agent $\vip$ is agent 1, then $\allocshat$ dominates $\allocs$, and we are done.  Hence, it remains to show that agents 3 and 4 can be made better off.  If $ \hX_2 = X_4$, then both 3 and 4 are better off and we are done.
We split to cases according to the rest of the possibilities for the identity of $\hX_2$:
\begin{center}
	Case \framebox{A}: $\hX_2= \Tgtwo \cup \{\g\}$ , Case \framebox{B}: $\hX_2= \Ththree \cup \{\h\}$, Case \framebox{C}: $\hX_2= X_3$
\end{center}

\shortversion
	We shall need the following claim that holds in all three cases:
\shortversionend
\fullversion
	We shall need the following claim that holds in all three cases (the proof is deferred to Appendix~\ref{apx:4-agents}):%
\fullversionend

\begin{claim}\label{clm:4_MEA_T'_3_cup_h}
	Agent 4 is most envious of $\Ththree \cup \{\h\}$ in $\allocshat$ (in cases \framebox{A}, \framebox{B}, \framebox{C}).
\end{claim}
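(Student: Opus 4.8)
\textbf{Proof plan for Claim~\ref{clm:4_MEA_T'_3_cup_h}.}
The plan is to show that in $\allocshat$, agent~4 envies $\Ththree \cup \{\h\}$ and that no agent strongly envies this set; since we have chosen $\Ththree, \Bhthree$ to be a fixed $\h$-decomposition of $X_3$, the ``no strong envy'' part is already guaranteed by the definition of $4 \champ[\h] 3$ (which still holds in the original allocation $\allocs$, and the set $\Ththree \cup \{\h\}$ is minimally envied there). So the real content is to verify that agent~4 \emph{envies} $\Ththree \cup \{\h\}$ in $\allocshat$, i.e. that $\hX_4 <_4 \Ththree \cup \{\h\}$, in each of the three cases \framebox{A}, \framebox{B}, \framebox{C}. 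Since $\hX_4$ is either $X_4$ or $\Ththree \cup \{\h\}$, and the latter would make ``envies'' vacuously false, the first step is to observe that whenever $\hX_4 = \Ththree\cup\{\h\}$ we must actually be in a situation handled elsewhere; concretely, by the definition of $\allocshat$, $\hX_4 = \Ththree\cup\{\h\}$ only when $\hX_2 = X_4$, which is exactly the case already disposed of before the case split. Hence in cases \framebox{A}, \framebox{B}, \framebox{C} we always have $\hX_4 = X_4$.

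Given $\hX_4 = X_4$, it remains to show $X_4 <_4 \Ththree \cup \{\h\}$, which is precisely the relation $4 \champ[\h] 3$ from Lemma~\ref{lem:envy_unique_structure}: agent~4 is the $\h$-champion of agent~3, so by Observation~\ref{obs:g_notin_bottom} (using $4 \nenvies 3$, which holds in the fixed structure) agent~4 $\h$-decomposes $X_3 = \Ththree \cupdot \Bhthree$ and by definition of championship $X_4 <_4 \Ththree \cup \{\h\}$. This single inequality handles all three cases uniformly, since the only thing that differs between the cases is the identity of $\hX_2$ and $\hX_3$, not of $\hX_4$.

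Finally, to conclude ``most envious'', I would invoke the definition directly: agent~4 envies the subset $\Ththree \cup \{\h\} \subseteq \Ththree \cup \{\h\}$ (taking the whole set, with empty discard), and no agent strongly envies $\Ththree \cup \{\h\}$ in $\allocshat$. The latter follows because $\Ththree \cup \{\h\} = (X_3 \cup \{\h\}) \setminus \discard[\h]{4}{3}$ is a minimally-envied set in $\allocs$ by definition of the champion edge $4\champ[\h]3$, hence not strongly envied in $\allocs$; and no agent becomes worse off in the relevant bundles under $\allocshat$ in the sense needed — more precisely, each agent's bundle in $\allocshat$ is one of $Z, X_4, \Tgtwo\cup\{\g\}, \Ththree\cup\{\h\}, X_3, \max_2\{\cdots\}$, and for each of these we have already argued (in the preceding EFX claim for $\allocshat$) that it is not strongly envied, so in particular an agent strongly envying $\Ththree\cup\{\h\}$ in $\allocshat$ would have to strongly envy it in $\allocs$ too, a contradiction.

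\textbf{Main obstacle.} The delicate point is the bookkeeping in the first step: one must be careful that the case analysis on $\hX_2$ genuinely forces $\hX_4 = X_4$ in cases \framebox{A}, \framebox{B}, \framebox{C}, and that the set $\Ththree$ referred to in the claim is the \emph{same} fixed $\h$-decomposition used throughout Section~\ref{sec:not-ef-case} (including the redefinition making $\Bgthree \subseteq \Bhthree$), so that the champion relation $4\champ[\h]3$ with respect to this particular decomposition is exactly the one we are allowed to use. Once that is pinned down, the argument is a direct unwinding of definitions with no real computation.
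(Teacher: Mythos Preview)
Your overall plan is sound, and the part showing $\hX_4 = X_4$ and $X_4 <_4 \Ththree \cup \{\h\}$ is correct. However, the argument for ``no agent strongly envies $\Ththree \cup \{\h\}$ in $\allocshat$'' has a genuine gap. You write that ``no agent becomes worse off in the relevant bundles under $\allocshat$'', but this is false for agent~2: in $\allocs$ we have $2 \nenvies 3$, $2 \nenvies 4$, $2 \nchamp[\g] 2$ and $2 \nchamp[\h] 3$, so $X_2 >_2 \max_2\{X_4,\Tgtwo\cup\{\g\},\Ththree\cup\{\h\},X_3\} = \hX_2$, i.e.\ agent~2 is strictly \emph{worse} off in $\allocshat$. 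Thus the transfer ``did not strongly envy in $\allocs$ $\Rightarrow$ does not strongly envy in $\allocshat$'' fails for agent~2, and your sentence linking the EFX property of $\allocshat$ to the non-strong-envy of the specific set $\Ththree\cup\{\h\}$ (which is not anyone's bundle in cases \framebox{A} and \framebox{C}) is a non sequitur.

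The paper's proof sidesteps this entirely by showing the stronger fact that agent~4 is the \emph{only} agent who envies $\Ththree\cup\{\h\}$ in $\allocshat$. For agents~1 and~3 this uses $\hX_i \geq_i X_i$ together with $i \nchamp[\h] 3$ and Observation~\ref{obs:non-champion-doesnt-envy-top-half}; for agent~2 it uses directly that $\hX_2 = \max_2\{\ldots,\Ththree\cup\{\h\},\ldots\} \geq_2 \Ththree\cup\{\h\}$. Your proof can be repaired by adding exactly this observation for agent~2 (she does not even envy the set, since it appears in the max defining $\hX_2$), after which the rest of your argument goes through.
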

\shortversion
\begin{proof}
	It suffices to show that only agent 4 envies $\Ththree \cup \h$ in $\allocshat$.
	Agent 4 envies $\Ththree \cup \h$, since $\hX_4=X_4$ in all three cases and $4\champ[\h]3$ in the allocation $\allocs$. 
	Agent $i$, for $i\in\{1,3\}$, does not envy $\Ththree \cup \h$, since $\hX_i \geq_i X_i$ and $i\nchamp[\h] 3$ in allocation $\allocs$ (Observation~\ref{obs:non-champion-doesnt-envy-top-half}).
	Agent $2$ does not envy $\Ththree \cup \h$ in all three cases since $\hX_2 = \max_{2} \left\{ \Tgtwo \cup \g,\, \Ththree \cup \h,\, X_3\right\} \geq_2 \Ththree \cup \h$.
\end{proof}
\shortversionend

\vspace{1ex}
\noindent\textbf{Case \framebox{A}}
\vspace{1ex}

In this case $\allocs''$ is

\begin{center}
\renewcommand{\arraystretch}{1.2}
	$\allocs''$=\bundle{1}{Z}\quad
			\decomp{2}{\Tgtwo}{\g}\quad
			\bundle{3}{X_3}\quad
			\bundle{4}{X_4}
\end{center}


Notice that $\h$ and the set $\Bgone$ remain unallocated. Let $\b\in\Bgone$ be an arbitrary good in $\Bgone$
\fullversion
(note that $\Bgone \neq \emptyset$ since otherwise, as valuations are non-degenerate, we have $X_1 <_1 X_1 \cup \{\g\} = \Tgone \cup \{\g\}$, in contradiction to Observation \ref{obs:non-champion-doesnt-envy-top-half} since $1 \nchamp[\g] 1$ in $\allocs$).
\fullversionend
\shortversion
. Recall that $\Bgone$ is non empty, as otherwise 1 would have been a self $\g$-champion in the original allocation $\allocs$.
\shortversionend

Consider the champion graph $M_{\allocshat}$ restricted to envy edges and champion edges with respect to $\h$ and $\b$. Agent 2 envies agent 1 by definition of $Z$, and agent 3 envies agent 2 since $3\champ[\g]2$ in allocation $\allocs$.
Thus we have the structure:
\begin{center}
	\includegraphics[scale=\figurescale]{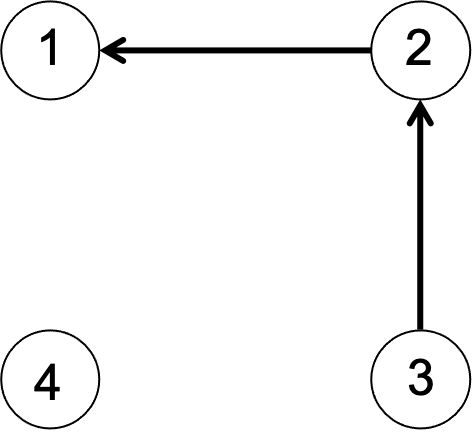}
\end{center}
\vspace{1ex}
\noindent\textbf{Subcase: $\boldsymbol{\vip=3}$.} By Observation~\ref{obs:exists-champion}, agent 3 has a $\b$-champion in $M_{\allocshat}$. If this $\b$-champion is agent 1,2 or 3, we obtain a PI-cycle: $1\champ[\b]3\envies2\envies1$, $2\champ[\b]3\envies 2$ or $3\champ[\b]3$, respectively. By Lemma~\ref{lem:pareto-improvable}, since this PI-cycle always includes agent 3, it follows that there exists a partial EFX allocation $\allocshat'$ that Pareto dominates $\allocshat$, in which agent 3 is better off. Hence, $\allocshat'$ dominates $\allocs$ in case agent $\vip$ is agent 3, as desired. Thus, assume agents 1,2 and 3 are not $\b$-champions of agent 3. This leaves agent 4 as the unique $\b$-champion of agent 3 in allocation $\allocshat$, and we have the structure
\begin{center}
	\includegraphics[scale=\figurescale]{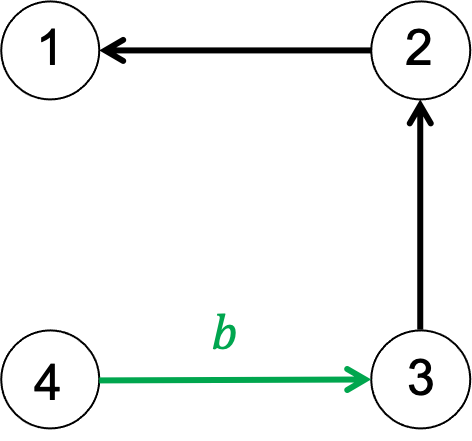}
\end{center}

We next show that in the original allocation $\allocs$, either $4\champ[\b]3$ or $1\champ[\b]3$ must hold. In allocation $\allocshat$,
4 is the most envious agent of $X_3\cup \{\b\}$ while agents 2 and 3 are not. In the allocation $\allocs$ the bundles of 3 and 4 were the same, and agent 2 was better off relative to $\allocshat$. Therefore, neither agent 2 nor 3 can be most envious agents of $X_3\cup \{\b\}$ in allocation $\allocs$. It follows that either 1 or 4 were most envious agents of $X_3\cup \{\b\}$, {\sl i.e.}, $4\champ[\b]3$ or $1\champ[\b]3$  in $\allocs$.

We conclude that there exists a PI-cycle in $\allocs$ in both cases, which we could have applied in hindsight:
$4\champ[\b]3\champ[\h]2\champ[\g]1\envies4$ or
$1\champ[\b]3\champ[\h]2\champ[\g]1$, respectively. Both are indeed PI-cycles, since the edge $2\champ[\g]1$ releases $\Bgone$ of which the item $\b$ is a member. Hence, we are done by Lemma~\ref{lem:pareto-improvable}.

\vspace{1ex}
\noindent\textbf{Subcase: $\boldsymbol{\vip=4}$.}
Agent 4 is the most envious agent of $\Ththree \cup \{\h\}$ by Claim \ref{clm:4_MEA_T'_3_cup_h}, implying that $4\champ[\h]3$ in $\allocshat$ (since $\hX_3 =X_3)$. Thus we have the structure:
\begin{center}
	\includegraphics[scale=\figurescale]{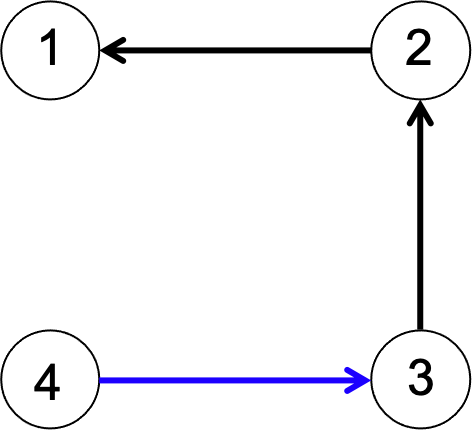}
\end{center}

By Observation~\ref{obs:exists-champion}, there exists a $\b$-champion of agent 4. Regardless of who that champion is
we obtain a PI cycle that includes agent 4:
$$1\champ[\b]4\champ[\h]3\envies 2\envies1,~~
2\champ[\b]4\champ[\h]3\envies 2,~~
3\champ[\b]4\champ[\h]3~~ \text{or}~~
4\champ[\b]4.$$
By Lemma~\ref{lem:pareto-improvable}, since the resulting PI-cycle includes agent 4, it follows that there exists a partial EFX allocation $\allocshat'$ that Pareto dominates $\allocshat$, in which agent 4 is better off. Hence, $\allocshat'$ dominates $\allocs$ in case agent $\vip$ is agent 4, as desired.

\vspace{1ex}
\noindent\textbf{Case \framebox{B} or \framebox{C}}
\vspace{1ex}

In these cases the allocation $\allocs''$ is
\begin{center}
\renewcommand{\arraystretch}{1.2}
\begin{tabular}{cc}
	Case \framebox{B}	& Case \framebox{C}\\[1.3ex]
	$\allocs''$=\bundle{1}{Z}\quad
					\decomp{2}{\Ththree}{\h}\quad
					\decomp{3}{\Tgtwo}{\g}\quad
					\bundle{4}{X_4}
			\qquad&	\qquad
	$\allocs''$=\bundle{1}{Z}\quad
					\bundle{2}{X_3}\quad
					\decomp{3}{\Tgtwo}{\g}\quad
					\bundle{4}{X_4}
\end{tabular}
\end{center}

\vspace{0.1cm}

In both cases agent $3$ is better off ($X_3 <_3 \Tgtwo \cup \{\g\}$ since $3 \champ[\g] 2$ in $\allocs$).  Thus we may assume $\vip = 4$.
We have that $2\envies1$ in allocation $\allocshat$, by definition of $Z$.
In Case \framebox{B} we have $4\envies2$ by Claim \ref{clm:4_MEA_T'_3_cup_h}, since $\hX_2 = \Ththree \cup \{\h\}$.
In Case \framebox{C} we have $4\champ[\h]2$ by Claim \ref{clm:4_MEA_T'_3_cup_h}, since $\hX_2 = X_3$.
Therefore we have the following structure in the allocation $\allocshat$:
\begin{center}
	\includegraphics[scale=\figurescale]{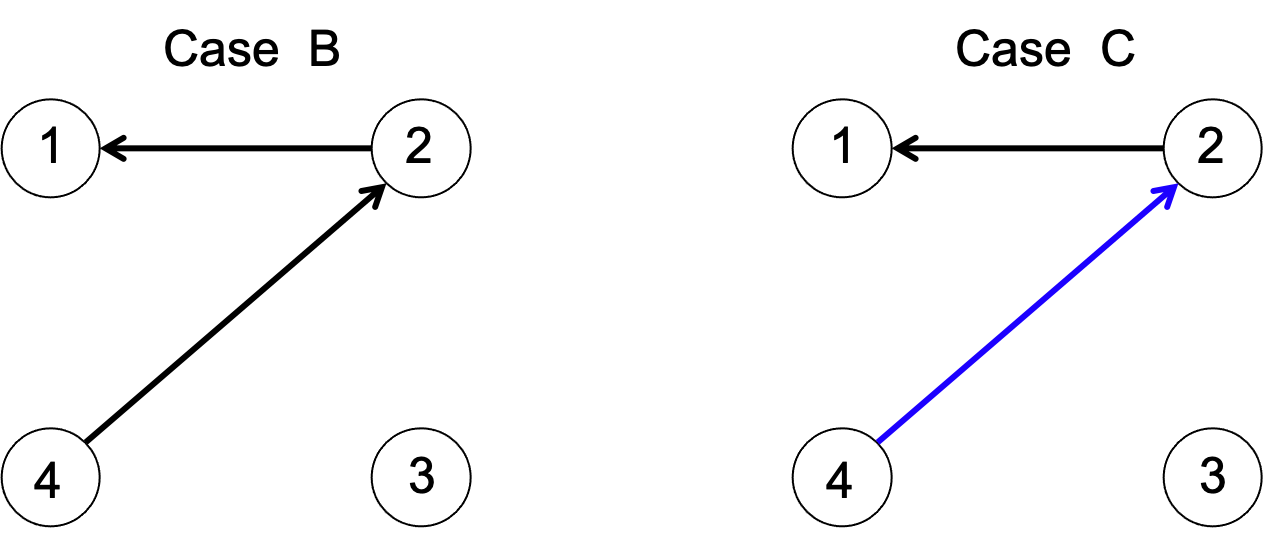}
\end{center}

In both cases, $\Bgone$ remains unallocated. Let $\b\in\Bgone$ be an arbitrary good in $\Bgone$. Recall that $\Bgone$ is non empty, since $1\nchamp[\g]1$ in the original allocation $\allocs$. 
By Observation \ref{obs:exists-champion}, there is some agent $i\in[4]$ such that $i\champ[\b]4$ in allocation $\allocshat$.

If $i$ is 1,2 or 4, then we obtain a PI-cycle in both Cases \framebox{B} and \framebox{C}. In Case \framebox{B} these cycles are
$$
1\champ[\b]4\envies2\envies1,\quad
2\champ[\b]4\envies2,\quad
4\champ[\b]4.
$$
In Case \framebox{C} these cycles are
$$
1\champ[\b]4\champ[\h]2\envies1,\quad
2\champ[\b]4\champ[\h]2,\quad
4\champ[\b]4.
$$
Agent 4 is along each of the PI-cycles described above, hence by Lemma~\ref{lem:pareto-improvable} there exists an EFX allocation in which agent 4 is better off, and we are done. Thus, we assume agent 3 is the unique $\b$-champion of agent 4 in allocation $\allocshat$. We obtain the following structure:
\begin{center}
	\includegraphics[scale=\figurescale]{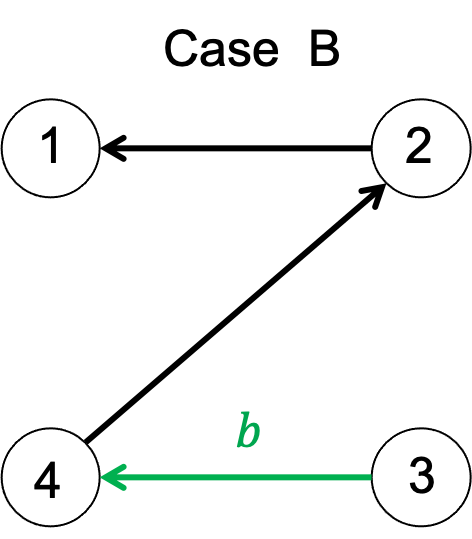}\qquad\qquad\qquad
	\includegraphics[scale=\figurescale]{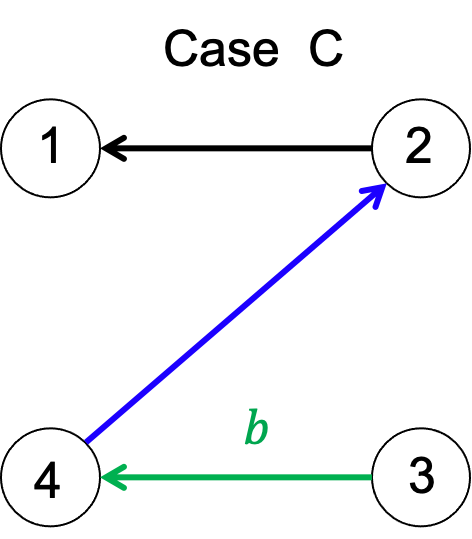}
\end{center}

We split the rest of our analysis of Cases \framebox{B} and \framebox{C} into three subcases:
\begin{center}
\circled{I} $X_4 >_1 \Tgtwo\cup\Bgthree;\quad$ \circled{II} $2\champ[\g]4$ in $\allocs;\quad$ \circled{III} $2\nchamp[\g]4$ in $\allocs$.
\end{center}

\vspace{1ex}
\noindent\textbf{Subcase} \circled{I} : $\boldsymbol{X_4 >_1 \Tgtwo\cup\Bgthree.}$
%
In this case, in any allocation where agent 1 gets $X_4$ or $Z$ (which is worth more than $X_4$ to agent 1), we can give $\Tgtwo\cup\Bgthree$ to another agent knowing that agent 1 will not envy it. (This is somewhat reminiscent of our approach in constructing allocation $X'$. There, the allocation had been EFX if $X_4 >_1 \Tgone\cup\Bgtwo$.)
Note that agent 3 envies $\Tgtwo\cup\Bgthree$ in $\allocs$ since $X_3= \Tgthree \cup \Bgthree <_3 \Tgtwo \cup \Bgthree$, by Observation~\ref{obs:T_k<T_j}
and \cancbty. Thus there exists an agent $i$ which is most envious agent of that set, {\sl i.e.}, $i \champ[\Bgthree \mid\circ]2$.  We ask who $i$ is, \emph{ignoring agent 1} (that is, the most envious agent of $\Tgtwo \cup \Bgthree$ restricted to agents 2,3,4).
Ignoring agent 1 is fine as long as agent 1 receives $X_4$ or $Z$ in the allocations we construct.

$i$ cannot be agent 4.  We show this by essentially using Observation \ref{obs:pred(i)_nchamp_B(i)} since $4 =$ ``$\pred(3)$'' in $M_{\allocs}$. While we do not have a good cycle in $M_{\allocs}$ and cannot really use that observation, its proof still applies:  we have $X_4>_4 X_3 = \Tgthree \cup \Bgthree >_4 \Tgtwo \cup \Bgthree$
where the first inequality is by $4\nenvies 3$ and the second inequality follows from Observation~\ref{obs:T_k<T_j} and \cancbty.  Hence, $4 \nchamp[\Bgthree \mid\circ]2$.  By the next claim, if $i=3$ we are done.

\begin{claim}\label{clm:3_2nd_MEA_implies_PI_cycle}
	If $i = 3$ then $M_\allocs$ contains a PI cycle.
\end{claim}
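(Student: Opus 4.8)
The plan is to exhibit an explicit cycle in $M_\allocs$ that combines the newly discovered edge $3\champ[\Bgthree\mid\circ]2$ with the basic edges supplied by the structure of Lemma~\ref{lem:envy_unique_structure}, and to check that reallocating along it yields a Pareto-dominating EFX allocation (equivalently, that it is a PI cycle). Recall from that structure, and the remark following it, that agent $1$ is $\g$- and $\h$-decomposed by its unique champion, agent $2$, so in particular $2\champ[\h]1$ holds; that agent $3$ is $\g$-decomposed by agent $4$, so $4\champ[\g]3$ holds and this edge releases $\discard[\g]{4}{3}=\Bgthree$; and that there is an envy edge $1\envies 4$. The hypothesis ``$i=3$'' means precisely that agent $3$ is a most-envious agent, \emph{among agents $2,3,4$}, of $(X_2\setminus\Bgtwo)\cup\Bgthree=\Tgtwo\cup\Bgthree$, i.e.\ $3\champ[\Bgthree\mid\circ]2$ with a discard set yielding a bundle $Y_3\subseteq\Tgtwo\cup\Bgthree$ that agent $3$ envies and that no agent in $\{2,3,4\}$ strongly envies. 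I would then consider the directed $4$-cycle
$$C:\qquad 2\champ[\h]1\;\longrightarrow\;1\envies 4\;\longrightarrow\;4\champ[\g]3\;\longrightarrow\;3\champ[\Bgthree\mid\circ]2\;\longrightarrow\;2 .$$

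The combinatorial PI conditions are immediate: the sets demanded along $C$ are $\{\h\}$, $\emptyset$, $\{\g\}$, $\Bgthree$; they are pairwise disjoint (both $\g$ and $\h$ are unallocated and $\Bgthree\subseteq X_3$ contains neither), the first three lie in $U$, and $\Bgthree$ is released by the edge $4\champ[\g]3$. So reallocating along $C$ produces the allocation $\mathbf Y$ in which agent $2$ receives a subset of $X_1\cup\{\h\}$, agent $1$ receives $X_4$, agent $4$ receives $\Tgthree\cup\{\g\}$, and agent $3$ receives $Y_3$; moreover every one of the four agents strictly improves in $\mathbf Y$ ($X_1<_1 X_4$ by $1\envies 4$; $X_2<_2 Y_2$ by $2\champ[\h]1$; $X_4<_4 Y_4$ by $4\champ[\g]3$; $X_3<_3 Y_3$ by the championship), so no agent is worse off.

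The one point that needs care — and the only place the argument departs from a verbatim application of Lemma~\ref{lem:pareto-improvable} — is that $3\champ[\Bgthree\mid\circ]2$ was obtained only after \emph{ignoring agent $1$}, so a priori agent $1$ could strongly envy $Y_3$ in $\allocs$. But in $\mathbf Y$ agent $1$ holds $X_4$, and we are in Subcase~\circled{I}, so $Y_1=X_4>_1\Tgtwo\cup\Bgthree\geq_1 Y_3$, whence agent $1$ does not even envy $Y_3$ in $\mathbf Y$. All other non-(strong-)envy facts transfer from $\allocs$ to $\mathbf Y$ because no agent is worse off in $\mathbf Y$: $\allocs$ being EFX kills strong envy of $X_4$, the definition of basic championship kills strong envy of the bundles of agents $2$ and $4$, and the relaxed championship kills strong envy of $Y_3$ by any of $2,3,4$. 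Hence $\mathbf Y$ is EFX and Pareto-dominates $\allocs$; equivalently, $C$ functions as a PI cycle in $M_\allocs$, which is what the claim asserts. I expect the main obstacle to be exactly this bookkeeping around the ``ignore agent $1$'' relaxation — making it explicit why the Subcase~\circled{I} hypothesis $X_4>_1\Tgtwo\cup\Bgthree$ is indispensable — rather than any delicate inequality chasing.
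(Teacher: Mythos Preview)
Your proof is correct and follows essentially the same approach as the paper: you exhibit the same cycle $1\envies 4\champ[\g]3\champ[\Bgthree\mid\circ]2\champ[\h]1$ (written from a different starting vertex), verify the PI conditions, and handle the ``ignore agent~1'' subtlety identically by appealing to the Subcase~\circled{I} hypothesis $X_4>_1\Tgtwo\cup\Bgthree$ to rule out agent~1 strongly envying $Y_3$. The paper likewise acknowledges that the edge $3\champ[\Bgthree\mid\circ]2$ may not literally lie in $M_\allocs$ but that the induced reallocation is nonetheless EFX and Pareto-dominating, which is exactly your argument.
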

\shortversion
\begin{proof}
Consider the cycle $1\envies4\champ[\g]3\champ[\Bgthree\mid\circ]2\champ[\h]1$ in $M_\allocs$:
\begin{center}
	\includegraphics[scale=\figurescale]{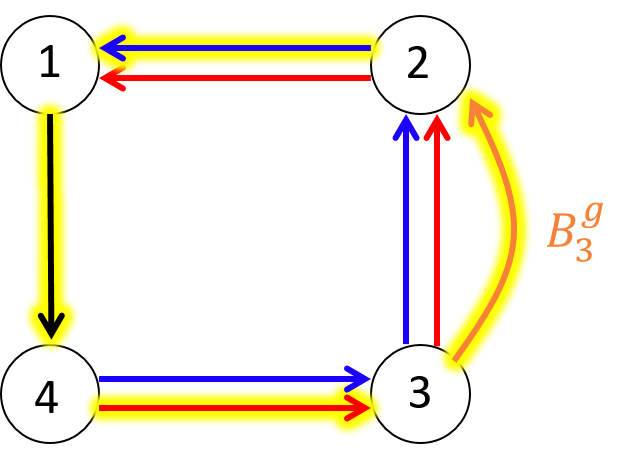}
\end{center}
This cycle might not really exist, since it could be the case that 1 is the real ($\Bgthree \mid \circ$)-champion of agent 2.  Assume for now that this is not the case, {\sl i.e.}, $3\champ[\Bgthree\mid\circ]2$ does exist in $M_{\allocs}$.  Then this is PI cycle, and by Lemma \ref{lem:pareto-improvable}, there is an allocation $\mathbf{Y}$ that Pareto-dominates $\allocs$.  Note that in $\mathbf{Y}$ agent 1 receives $X_4$.  We therefore claim that $\mathbf{Y}$ is EFX even if 1 was the most envious agent of $\Tgtwo \cup \Bgthree$ in $\allocs$.  Since 3 was the most envious agent ignoring agent 1, the only thing that could prevent $\mathbf{Y}$ from being EFX is if 1 strongly envies 3 in $\mathbf{Y}$.  But this is not the case since
$$ Y_1 = X_4 >_1 \Tgtwo \cup \Bgthree \geq_1 Y_3$$ where the last inequality holds since $Y_3 \subseteq \Tgtwo \cup \Bgthree$.  The claim follows.
\end{proof}
\shortversionend
\fullversion
\noindent The proof of Claim~\ref{clm:3_2nd_MEA_implies_PI_cycle} has been deferred to Appendix~\ref{apx:4-agents}.
\fullversionend

We are left with the case $i = 2$.  In the transition to $\allocshat$ from $\allocs$, agent 2 became worse off while the rest of the agents did not.  Thus, ignoring agent 1, agent 2 is also most envious of $\Tgtwo \cup \Bgthree$ in $\allocshat$.  Since $\hX_1 = Z >_1 X_4 >_1 \Tgtwo \cup \Bgthree$, agent 1 cannot be the real most envious agent in $\allocshat$ and thus agent 2 is the real one.  Since $\hX_3 = \Tgtwo \cup \{\g\}$, this means that $2 \champ[\Bgthree \mid \g] 3$ in $\allocshat$ and we obtain a
PI cycle in both cases $\framebox{B}$, $\framebox{C}$:
\begin{center}
\includegraphics[scale=\figurescale]{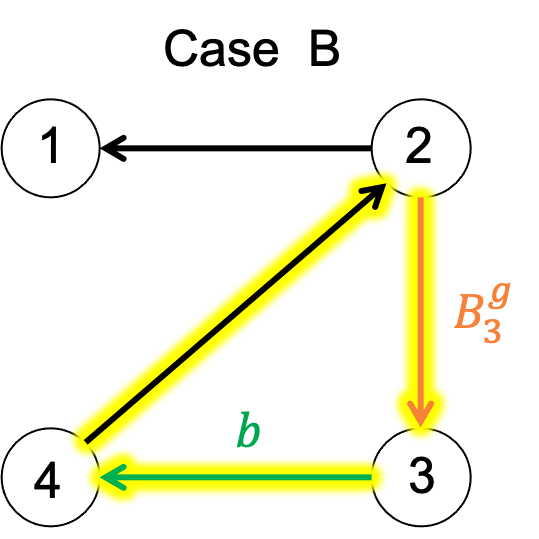}\qquad\qquad\qquad
\includegraphics[scale=\figurescale]{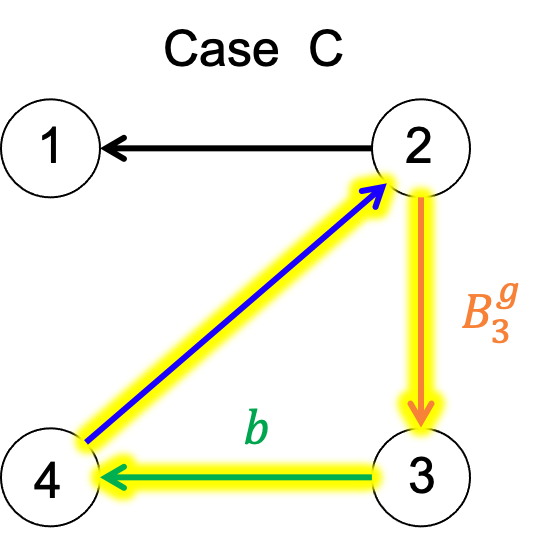}
\end{center}
Note that $\Bgthree$ can indeed be allocated to agent 2, as $\Bgthree\subseteq \Bhthree$ (by Equation \eqref{eq:B_3subsetB'_3}) and $\Bhthree$ is available in both cases (unallocated in Case \framebox{B} and released by the edge $4 \champ[\h] 2$ in Case \framebox{C}).
By Lemma~\ref{lem:pareto-improvable}, since both PI-cycles pass through agent 4, it follows that there exists a partial EFX allocation $\allocshat'$ that Pareto dominates $\allocshat$, in which agent 4 is better off. Hence, $\allocshat'$ dominates $\allocs$ in case agent $\vip$ is agent 4, as desired.

\vspace{1ex}
\noindent\textbf{Subcase} \circled{II} : $\boldsymbol{2\champ[\g]4}$ \textbf{in the allocation} $\boldsymbol{\allocs.}$
In this case, $M_\allocs$ restricted to basic edges has the structure:
\begin{center}
	\includegraphics[scale=\figurescale]{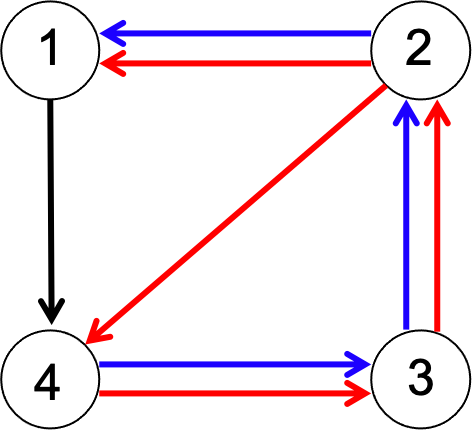}
\end{center}

We start with a proof sketch:  Notice that $2\champ[\g]4\champ[\g]3\champ[\g]2$ is a good $\g$-cycle in $M_{\allocs}$, hence $2$ $\g$-decomposes $4$ into $X_4=\Tgfour\cupdot\Bgfour$.
We first show that $X_4>_1\Tgtwo\cup \Bgfour$ (see Lemma \ref{lem:T_2<_1_T_4}). This allows us to apply a similar approach to the previous case: If agent 1 gets $Z$ (which is better for her than $X_4$), then we can safely give $\Tgtwo\cup \Bgfour$ to another agent knowing that agent 1 will not envy.

As in the previous case, we need to show that there is an agent other than 1 that envies the set $\Tgtwo\cup\Bgfour$. (In the previous case the fact that 3 envies $\Tgtwo\cup\Bgthree$ was immediate.)  Here, the assumption that we are not in Subcase \circled{I} allows us to reach this conclusion (see Lemma \ref{lem:X_3<_3_T_2_cup_B_4}).

Having shown that, we then find a PI-cycle in the sub-allocation $\allocs_{-1}=\langle X_2,X_3,X_4 \rangle$, the application of which leaves $Z$ unallocated (see Lemma \ref{lem:Y_-1_PD_X_-1}).  Finally, we can then allocate $Z$ to agent 1 to obtain an EFX allocation that Pareto-dominates $\allocs$ (see Corollary \ref{cor:Y_PD_X}).

%
\begin{lemma}\label{lem:T_2<_1_T_4}
	$X_4 >_1 \Tgtwo \cup \Bgfour$.
\end{lemma}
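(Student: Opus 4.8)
The plan is to reduce the statement to a comparison of the two ``top halves'' $\Tgfour$ and $\Tgtwo$, and then to establish that comparison from the good $\g$-cycle together with the standing assumption that $M_\allocs$ contains no Pareto-improvable cycle.

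First I would observe that, since the good $\g$-cycle $2 \champ[\g] 4 \champ[\g] 3 \champ[\g] 2$ gives $X_4 = \Tgfour \cupdot \Bgfour$, and since the three bundles $\Tgtwo,\Tgfour,\Bgfour$ are pairwise disjoint ($\Tgtwo \subseteq X_2$ while $\Tgfour,\Bgfour \subseteq X_4$), cancelability yields
\[
X_4 >_1 \Tgtwo \cup \Bgfour \iff \Tgfour \cup \Bgfour >_1 \Tgtwo \cup \Bgfour \iff \Tgfour >_1 \Tgtwo .
\]
So it suffices to prove $\Tgfour >_1 \Tgtwo$.

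Next I would collect the ``easy'' bounds on agent $1$'s valuation. From the envy edge $1\envies 4$ in the structure of Lemma~\ref{lem:envy_unique_structure} we get $X_4 >_1 X_1$. Since agent $2$'s unique $\g$-champion is agent $3$ (so $1 \nchamp[\g] 2$) and agent $2$ is $\g$-decomposed into $\Tgtwo \cupdot \Bgtwo$, Observation~\ref{obs:non-champion-doesnt-envy-top-half} gives $X_1 >_1 \Tgtwo \cup \{\g\}$, hence $X_1 >_1 \Tgtwo$ by monotonicity. Chaining, $\Tgtwo <_1 X_1 <_1 X_4 = \Tgfour \cupdot \Bgfour$, so at least $\Tgfour \cup \Bgfour >_1 \Tgtwo$ — the only thing left is to ``peel off'' $\Bgfour$.

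The crux, and the step I expect to be the main obstacle, is upgrading $\Tgfour \cup \Bgfour >_1 \Tgtwo$ to $\Tgfour >_1 \Tgtwo$: the bounds against $X_1$ alone do not suffice, and one must use the good $\g$-cycle $C = 2 \champ[\g] 4 \champ[\g] 3 \champ[\g] 2$ and the no-PI-cycle assumption. I would argue by contradiction: if $\Tgfour \leq_1 \Tgtwo$, then (by non-degeneracy, together with the previous paragraph) agent $1$ envies $\Tgtwo \cup \Bgfour = (X_2 \setminus \Bgtwo) \cup \Bgfour$, so this set has a most-envious agent $a$, yielding a generalized champion edge $a \champ[\Bgfour \mid \Bgtwo] 2$. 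I would then case-analyze on $a$: Observation~\ref{obs:pred(i)_nchamp_B(i)} rules out $a=2$, and Observation~\ref{obs:bottom_bundle_ineq} together with $3\nenvies 2$ and $4\nenvies 2$ should rule out $a=3$ (or directly force a PI cycle), while the cases $a=1$ and $a=4$ should close a Pareto-improvable cycle built from the $\g$-edges of $C$, the edge $2\champ[\h] 1$, and the envy edge $1\envies 4$. The genuinely delicate part is the ``$H_i$ is released'' condition of Definition~\ref{def:pareto_improvable}: because $\Bgfour \subseteq X_4$, one needs agent $4$ to vacate (part of) its bundle along the cycle, and one must check case by case that the unallocated good $\g$, the released bottom half-bundles along $C$, and the inclusion $\Bgthree \subseteq \Bhthree$ from \eqref{eq:B_3subsetB'_3} together cover every set inserted along the cycle; it is plausible that the hypothesis that we are not in Subcase~\circled{I} (i.e.\ $X_4 <_1 \Tgtwo \cup \Bgthree$) is also invoked here to pin down these releases. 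Carrying out this verification is where the real work lies.
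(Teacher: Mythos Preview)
Your reduction to proving $\Tgfour >_1 \Tgtwo$ via \cancbty\ is exactly right, and matches the paper's endgame. The gap is in the case analysis you propose for the contradiction argument.

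You look at the most envious agent $a$ of $\Tgtwo \cup \Bgfour$, i.e., $a \champ[\Bgfour \mid\circ] 2$. The cases $a\in\{2,3,4\}$ can indeed be dispatched (e.g., $a=4$ gives the PI cycle $4 \champ[\Bgfour \mid\circ] 2 \champ[\g] 4$; $a=3$ gives $3 \champ[\Bgfour \mid\circ] 2 \champ[\g] 4 \champ[\h] 3$; $a=2$ is ruled out by Observation~\ref{obs:pred(i)_nchamp_B(i)}). But $a=1$ is a dead end: the edge $1 \champ[\Bgfour \mid\circ] 2$ requires $\Bgfour$ to be released, and the only basic edge releasing $\Bgfour$ is $2 \champ[\g] 4$. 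Any cycle through both edges must continue $4 \to 3 \to 2$, but $2$ is already visited. No PI edge set works either, since every edge releasing $\Bgfour$ has source $2$. Nor does $a=1$ yield a direct contradiction: from Observation~\ref{obs:bottom_bundle_ineq} you only get $\Bgtwo <_1 \Bgfour$, which together with the hypothesis $\Tgfour <_1 \Tgtwo$ does not contradict anything available (including the negation of Subcase~\circled{I}).

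The paper sidesteps this by looking at the \emph{other} swapped bundle: it asks who champions $\Tgfour \cup \Bgtwo$, i.e., $i \champ[\Bgtwo \mid\circ] 4$ (which exists unconditionally by Lemma~\ref{lem:alg_start}, so no contradiction hypothesis is needed). The cases $i=2,3,4$ close PI cycles or invoke Claim~\ref{clm:X_4<_4T_3+B_2}; and crucially $i=1$ is the \emph{good} case: it gives $X_1 <_1 \Tgfour \cup \Bgtwo$, which combined with $\Tgtwo \cup \Bgtwo = X_2 <_1 X_1$ (from $1 \nenvies 2$) and \cancbty\ yields $\Tgtwo <_1 \Tgfour$ directly. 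The choice of which half-bundles to swap is what makes the argument go through.
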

\shortversion
\begin{proof}
Consider the good $\g$-cycle $2 \champ[\g] \linebreak[1] 4 \champ[\g] \linebreak[1] 3 \champ[\g] \linebreak[1] 2$ in $\allocs$.
By Lemma~\ref{lem:alg_start}, there exists an agent $i$ such that $i\champ[\Bgtwo\mid\circ]4$.
$i$ cannot be agent 3, because $3=\pred(2)$ in this cycle (Observation~\ref{obs:pred(i)_nchamp_B(i)}). If $i=2$, we obtain the PI-cycle $2\champ[\Bgtwo\mid\circ]4\champ[\h]3\champ[\g]2$ in the original allocation $\allocs$, so we are done. If $i=4$, then
$%
	 X_4<_4 \Tgfour \cup \Bgtwo <_4 \Tgthree \cup \Bgtwo,%
$
where the second inequality is due to Observation~\ref{obs:T_k<T_j} and \cancbty. Therefore, we are done by Claim~\ref{clm:X_4<_4T_3+B_2}.

Thus, we may assume that $i=1$, {\sl i.e.}, $1\champ[\Bgtwo\mid\circ]4$. It follows that $X_1 <_1 \Tgfour \cup \Bgtwo$. Therefore, since $1\nenvies2$,
$$
	\Tgtwo\cup\Bgtwo=X_2<_1 X_1 <_1 \Tgfour \cup \Bgtwo,
$$
and we conclude that $\Tgtwo <_1 \Tgfour$ by \cancbty.  Finally we use \cancbty\ again to conclude $X_4 = \Tgfour \cup \Bgfour >_1 \Tgtwo \cup \Bgfour$, as desired.
\end{proof}
\shortversionend


\begin{lemma}\label{lem:X_3<_3_T_2_cup_B_4}
	Agent 3 envies $\Tgtwo \cup \Bgfour$ in $\allocs$.
\end{lemma}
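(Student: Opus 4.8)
The plan is to reduce the statement to a single scalar inequality for agent $3$ and then derive that inequality with a short argument that leans on the hypothesis that we are \emph{not} in Subcase~\circled{I}.

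First I would reduce. Since $3\champ[\g]2$ holds in $\allocs$, agent $3$ is most envious of $X_2\cup\{\g\}$; write $\Bgtwo=\discard[\g]{3}{2}$ and $\Tgtwo=X_2\setminus\Bgtwo$, and note $\g\notin\Bgtwo$ by Observation~\ref{obs:g_notin_bottom} (as $3\nenvies2$), so $X_2=\Tgtwo\cupdot\Bgtwo$. Then agent $3$ envies the minimally-envied bundle $(X_2\cup\{\g\})\setminus\Bgtwo=\Tgtwo\cup\{\g\}$, i.e. $X_3<_3\Tgtwo\cup\{\g\}$. By \cancbty, if $v_3(\g)\le v_3(\Bgfour)$ then $\Tgtwo\cup\{\g\}\le_3\Tgtwo\cup\Bgfour$, whence $X_3<_3\Tgtwo\cup\Bgfour$ and the lemma follows. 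So it suffices to prove $v_3(\g)\le v_3(\Bgfour)$.

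Next I would prove $v_3(\g)\le v_3(\Bgfour)$ by contradiction: assume $v_3(\g)>v_3(\Bgfour)$ and deduce a PI cycle in $M_\allocs$, contradicting the standing assumption. Some preparatory consequences: since $3\nchamp[\g]4$ and $4$ is $\g$-decomposed as $X_4=\Tgfour\cupdot\Bgfour$ along the good $\g$-cycle $2\champ[\g]4\champ[\g]3\champ[\g]2$, Observation~\ref{obs:non-champion-doesnt-envy-top-half} gives $X_3>_3\Tgfour\cup\{\g\}$, which together with $v_3(\g)>v_3(\Bgfour)$ and \cancbty\ yields $X_3>_3\Tgfour\cup\{\g\}>_3\Tgfour\cup\Bgfour=X_4$, so $3\nenvies4$. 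Also, being outside Subcase~\circled{I} means $X_4<_1\Tgtwo\cup\Bgthree$; since $1\envies4$ this gives $X_1<_1X_4<_1\Tgtwo\cup\Bgthree$, so agent $1$ envies $\Tgtwo\cup\Bgthree$, and agent $3$ also envies it because $X_3=\Tgthree\cup\Bgthree<_3\Tgtwo\cup\Bgthree$ (by Observation~\ref{obs:T_k<T_j} with $i=k=3$, $j=2$, and \cancbty). Hence $\Tgtwo\cup\Bgthree$ has a most envious agent $a$, i.e. $a\champ[\Bgthree\mid\circ]2$, and by Observation~\ref{obs:pred(i)_nchamp_B(i)} on the good $\g$-cycle ($\pred(3)=4$) we get $a\ne4$, so $a\in\{1,2,3\}$.

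Finally I would close the cases $a\in\{1,2,3\}$ by exhibiting, in each one, a Pareto improvable cycle: combine the edge $a\champ[\Bgthree\mid\circ]2$ with edges of the good $\g$-cycle and with the envy edge $1\envies4$, using that $\Bgthree=\discard[\g]{4}{3}$ is released by the basic edge $4\champ[\g]3$ (while $\g$ and $\h$ are unallocated) to meet the disjointness-and-release conditions of Definition~\ref{def:pareto_improvable}. Since $a$ is really only identified as a most envious agent among agents $\{2,3,4\}$ (agent $1$ could be the true most envious agent of $\Tgtwo\cup\Bgthree$), I would invoke the phantom-edge device of Claim~\ref{clm:3_2nd_MEA_implies_PI_cycle}: even if the candidate cycle does not literally exist because $1$ is the true champion, the corresponding reallocation is still EFX, because agent $1$ then receives along the cycle a bundle she values at least as much as $X_4>_1\Tgtwo\cup\Bgthree$, hence at least as much as any subset of $\Tgtwo\cup\Bgthree$. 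Each case contradicts the no-PI-cycle hypothesis, so $v_3(\g)\le v_3(\Bgfour)$, completing the proof. The hard part will be precisely this last step: assembling, case by case, a cycle whose successive sets $H_i$ are each either unallocated or released by an earlier edge — with only two free goods one cannot simply chain $\g$-champion edges, so the generalized edges coming out of the good $\g$-cycle must be inserted exactly right, and the agent-$1$ bookkeeping tracked carefully throughout.
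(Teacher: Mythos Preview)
Your reduction to the scalar inequality $\{\g\}\le_3\Bgfour$ is valid, but the contradiction argument you sketch does not go through.

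The core problem is that your step~4 never actually uses the hypothesis $v_3(\g)>v_3(\Bgfour)$: the existence of some $a\in\{1,2,3\}$ with $a\champ[\Bgthree\mid\circ]2$ follows already from ``not Subcase~\circled{I}'' (agent~1 envies $\Tgtwo\cup\Bgthree$) and from $\Tgthree<_3\Tgtwo$, with no input from the hypothesis. So if you could exhibit a PI cycle for every $a\in\{1,2,3\}$ you would be proving an unconditional contradiction, which is too strong (it would make Lemma~\ref{lem:Y_-1_PD_X_-1} and Corollary~\ref{cor:Y_PD_X} unnecessary). And indeed the cases $a=1$ and $a=2$ do not close: to use $a\champ[\Bgthree\mid\circ]2$ you must release $\Bgthree$, which forces an edge into agent~3, and the only such basic edges are $4\champ[\g]3$ and $4\champ[\h]3$; but from agent~3 there is no basic edge back to~1 or~4, so you cannot complete a cycle (or a disjoint edge set when $a=2$) without manufacturing a new generalized edge that you have not justified. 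Your appeal to the phantom-edge device of Claim~\ref{clm:3_2nd_MEA_implies_PI_cycle} is also self-defeating: that device works because agent~1, receiving $X_4$, satisfies $X_4>_1\Tgtwo\cup\Bgthree$ and hence cannot envy the phantom bundle --- but $X_4>_1\Tgtwo\cup\Bgthree$ is precisely Subcase~\circled{I}, which you (correctly) assumed is false two lines earlier.

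The paper avoids $\Tgtwo\cup\Bgthree$ altogether and instead asks which agent $i$ satisfies $i\champ[\Bgfour\mid\circ]3$ (this exists by Lemma~\ref{lem:alg_start} since $3=\succ(4)$ on the good $\g$-cycle $2\champ[\g]4\champ[\g]3\champ[\g]2$). One rules out $i=2$ by Observation~\ref{obs:pred(i)_nchamp_B(i)}; $i=4$ closes the PI cycle $2\champ[\g]4\champ[\Bgfour\mid\circ]3\champ[\h]2$; and $i=1$ forces $\Bgthree<_1\Bgfour$, which together with Lemma~\ref{lem:T_2<_1_T_4} lands us back in Subcase~\circled{I}. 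Hence $i=3$, i.e.\ $X_3<_3\Tgthree\cup\Bgfour<_3\Tgtwo\cup\Bgfour$. Note that if you insist on your reduction, this is exactly where the hypothesis $v_3(\g)>v_3(\Bgfour)$ \emph{would} bite: combined with $\Bgthree>_3\{\g\}$ (which always holds since $3\nchamp[\g]3$ gives $X_3>_3\Tgthree\cup\{\g\}$) it yields $\Bgthree>_3\Bgfour$, directly excluding $i=3$ and forcing $i\in\{1,4\}$ --- but that is the paper's argument run in contrapositive, not the route you proposed.
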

\shortversion
\begin{proof}
Consider the good $\g$-cycle $2 \champ[\g] \linebreak[1] 4 \champ[\g] \linebreak[1] 3 \champ[\g] \linebreak[1] 2$ in $\allocs$.  We first ask which agent $i$ satisfies $i\champ[\Bgfour\mid\circ]3$ (there exists such an agent by Lemma~\ref{lem:alg_start}).
$i$ cannot be agent 2, because $2=\pred(4)$ in the good $\g$-cycle (Observation~\ref{obs:pred(i)_nchamp_B(i)}). If $i$ is agent 4, we obtain the PI-cycle $2\champ[\g]4\champ[\Bgfour\mid\circ]3\champ[\h]2$ in $M_\allocs$, hence we are done.

If $i = 1$, then in particular we have
$$
	\Tgthree \cup \Bgthree = X_3  <_1 X_1 <_1 \Tgthree\cup\Bgfour,
$$
where the first inequality is due to $1\nenvies3$ and the second holds since $1\champ[\Bgfour\mid\circ]3$.  By \cancbty\ we get $\Bgthree <_1 \Bgfour$, and together with Lemma \ref{lem:T_2<_1_T_4} we conclude that
$
X_4 = \Tgfour\cup\Bgfour>_1 \Tgtwo\cup \Bgfour>_1\Tgtwo\cup\Bgthree,
$
{\sl i.e.}, we are in Case \circled{I} which we already solved.

Thus, we may assume that $i$ is agent 3, i.e., $3\champ[\Bgfour\mid\circ]3$. It follows that
$$
	X_3 <_3 \Tgthree\cup\Bgfour <_3 \Tgtwo\cup\Bgfour,
$$
where the second inequality is by Observation~\ref{obs:T_k<T_j} and \cancbty.  The lemma follows.
\end{proof}
\shortversionend
\fullversion
\noindent The proofs of Lemmata \ref{lem:T_2<_1_T_4} and \ref{lem:X_3<_3_T_2_cup_B_4} have been deferred to Appendix \ref{apx:4-agents}.
\fullversionend

Consider the 3-agent suballocation $\allocs_{-1}=\langle X_2,X_3,X_4 \rangle$.

\begin{lemma}\label{lem:Y_-1_PD_X_-1}
	There is an EFX allocation $\mathbf{Y_{-1}}$ over agents 2,3,4 that Pareto-dominates $\allocs_{-1}$, in which $Z$ remains unallocated.
\end{lemma}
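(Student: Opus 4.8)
The plan is to exhibit a Pareto-improvable cycle in the champion graph of the three-agent instance $\allocs_{-1}$ whose application leaves $Z$ unallocated, and then to invoke Lemma~\ref{lem:pareto-improvable} for that instance. First I would record what carries over from $\allocs$: the champion edges $2\champ[\g]4$ (from Subcase~\circled{II}) and $4\champ[\h]3$ (the unique $\h$-champion of $3$, by Lemma~\ref{lem:envy_unique_structure}) persist in $\allocs_{-1}$, since deleting an agent only relaxes the ``no agent strongly envies'' requirement in the definition of championship; $\g$ and $\h$ remain unallocated; and agent~$2$ still $\g$-decomposes agent~$4$ into $X_4=\Tgfour\cupdot\Bgfour$ with $\discard[\g]{2}{4}=\Bgfour$.

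Next, by Lemma~\ref{lem:X_3<_3_T_2_cup_B_4} agent~$3$ envies $\Tgtwo\cup\Bgfour=(X_2\setminus\Bgtwo)\cup\Bgfour$, so in $\allocs_{-1}$ this set has a most envious agent $a\in\{2,3,4\}$, i.e.\ $a\champ[\Bgfour\mid\Bgtwo]2$. Applying Observation~\ref{obs:pred(i)_nchamp_B(i)} to the good $\g$-cycle $2\champ[\g]4\champ[\g]3\champ[\g]2$ of $\allocs$ with $i=4$, $j=2$ (here $\pred(4)=2$) shows $X_2>_2\Tgtwo\cup\Bgfour$; as this is a fact about agent~$2$'s valuation only, it also holds in $\allocs_{-1}$, hence $a\ne 2$. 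Ruling out $a=2$ is the step I expect to matter most: it is exactly what keeps the remaining argument to two short cycles.

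If $a=3$, I would take $C=2\champ[\g]4\champ[\h]3\champ[\Bgfour\mid\Bgtwo]2$: the attached sets $\{\g\},\{\h\},\Bgfour$ are pairwise disjoint, $\{\g\},\{\h\}\subseteq U$, and $\Bgfour$ is released by the edge $2\champ[\g]4$ (whose discard set is $\Bgfour$), so $C$ is Pareto-improvable; Lemma~\ref{lem:pareto-improvable}, applied to the three-agent instance, then yields an EFX allocation $\mathbf{Y_{-1}}$ Pareto-dominating $\allocs_{-1}$, with the bundles of agents $2,3,4$ contained in $X_4\cup\{\g\}$, $\Tgtwo\cup\Bgfour$, and $X_3\cup\{\h\}$. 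If $a=4$, I would instead take the $2$-cycle $C=4\champ[\Bgfour\mid\Bgtwo]2\champ[\g]4$ (agent~$3$ keeping $X_3$); again $\{\g\}\subseteq U$ and $\Bgfour$ is released by $2\champ[\g]4$, so $C$ is Pareto-improvable and produces $\mathbf{Y_{-1}}$ with bundles contained in $X_4\cup\{\g\}$, $\Tgtwo\cup\Bgfour$, and $X_3$.

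It remains to check that $Z$ is unallocated in $\mathbf{Y_{-1}}$. Since $Z\subseteq\Tgone\cup\Bgtwo$ with $\Tgone\subseteq X_1$, it suffices that no new bundle meets $X_1$ or $\Bgtwo$. None of the listed bundles meets $X_1$ (they lie in $X_2\cup X_3\cup X_4\cup\{\g,\h\}$); and none meets $\Bgtwo\subseteq X_2$, because the bundle $\Tgfour\cup\{\g\}$ handed to agent~$2$ equals $(X_4\cup\{\g\})\setminus\Bgfour$, while the remaining bundles lie in $\Tgtwo\cup\Bgfour$ or $X_3\cup\{\h\}$, all disjoint from $\Bgtwo$. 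Hence $Z$ stays unallocated, completing the proof. Beyond ruling out $a=2$, the only things to verify are the persistence of the two champion edges in $\allocs_{-1}$ and that $\Bgfour$ is genuinely released inside each cycle — both immediate once the cycles above are written down.
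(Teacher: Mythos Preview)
Your proof is correct and follows essentially the same approach as the paper: both exhibit a most envious agent $a\in\{2,3,4\}$ of $\Tgtwo\cup\Bgfour$, rule out $a=2$ via Observation~\ref{obs:pred(i)_nchamp_B(i)}, and close the PI-cycle $3\champ[\Bgfour\mid\circ]2\champ[\g]4\champ[\h]3$ or $4\champ[\Bgfour\mid\circ]2\champ[\g]4$ in $M_{\allocs_{-1}}$. Your version is more detailed in that you explicitly justify the persistence of the champion edges when passing to the three-agent sub-instance and spell out why each new bundle avoids $\Tgone\cup\Bgtwo\supseteq Z$, both of which the paper leaves implicit.
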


\begin{proof}

By Lemma \ref{lem:X_3<_3_T_2_cup_B_4}, agent 3 envies $\Tgtwo \cup \Bgfour$.
In particular,
there exists an agent $i\in\{2,3,4\}$ such that $i\champ[\Bgfour\mid\circ]2$ in $\allocs_{-1}$. $i$ cannot be agent 2, because $2=\pred(4)$ in the good $\g$-cycle (Observation~\ref{obs:pred(i)_nchamp_B(i)}). If $i$ is agent 3, we obtain the PI-cycle $3\champ[\Bgfour\mid\circ]2\champ[\g]4\champ[\h]3$ in $M_{\allocs_{-1}}$. If $i$ is agent 4, we obtain the PI-cycle $4\champ[\Bgfour\mid\circ]2\champ[\g]4$ in $M_{\allocs_{-1}}$.

\begin{center}
	\includegraphics[scale=\figurescale]{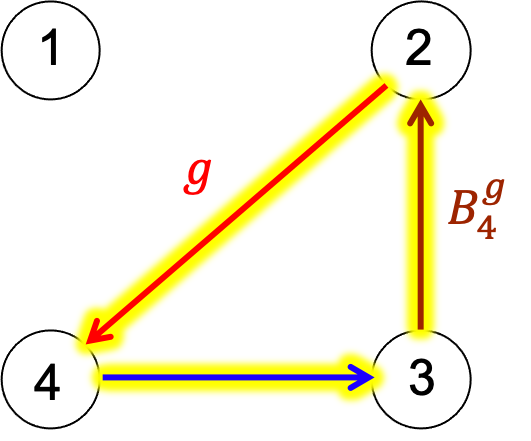}
	\qquad\qquad\qquad
	\includegraphics[scale=\figurescale]{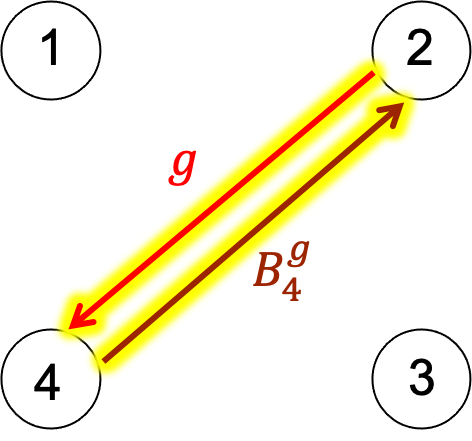}
\end{center}

In both cases, by Lemma~\ref{lem:pareto-improvable} there exists an EFX allocation $\mathbf{Y_{-1}}$ over agents 2,3 and 4 that Pareto dominates the allocation $\allocs_{-1}$.
Moreover, in both cases the set $Z (\subseteq \Tgone\cup\Bgtwo)$ is unallocated in $\mathbf{Y_{-1}}$ and we are done.
\end{proof}

\begin{corollary}\label{cor:Y_PD_X}
	The allocation $\mathbf{Y}$ obtained from $\mathbf{Y_{-1}}$ by allocating $Z$ to agent 1 is EFX and Pareto-dominates $\allocs$.
\end{corollary}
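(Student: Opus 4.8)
The plan is to verify, in this order, that $\mathbf{Y}$ is a legal allocation, that it Pareto-dominates $\allocs$, and that it is EFX. Legality is immediate: by Lemma~\ref{lem:Y_-1_PD_X_-1} the set $Z$ is left unallocated by $\mathbf{Y_{-1}}$, hence is disjoint from $Y_2,Y_3,Y_4$, while the items of $Z$ originally held by agent~$1$ (those in $\Tgone$) are freed the moment agent~$1$ relinquishes $X_1$; so the bundles of $\mathbf{Y}$ are pairwise disjoint. For Pareto-domination, agents $2,3,4$ are weakly better off since $\mathbf{Y_{-1}}$ Pareto-dominates $\allocs_{-1}$, and agent~$1$ is strictly better off because $Y_1=Z>_1 X_4>_1 X_1$, using the standing assumption $X_4<_1 Z$ (recorded right after Lemma~\ref{lem:X_4>_1_Z}) together with the envy relation $X_1<_1 X_4$ present in the structure (recall moreover that we have already reduced to $\vip=4$, and that agent~$4$ lies on the PI cycle of Lemma~\ref{lem:Y_-1_PD_X_-1}, hence is strictly improved as well). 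Since no agent becomes worse off, it then suffices, for EFX, to show that in $\allocs$ no agent strongly envies any bundle of $\mathbf{Y}$.

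For the bundles $Y_2,Y_3,Y_4$ I would inspect the two PI cycles that Lemma~\ref{lem:Y_-1_PD_X_-1} uses to produce $\mathbf{Y_{-1}}$. In each of them every one of $Y_2,Y_3,Y_4$ is of one of three kinds: it is the set $\Tgfour\cup\{\g\}$ coming from the champion edge $2\champ[\g]4$ of $\allocs$, which by the very definition of that edge is strongly envied by nobody in $\allocs$; or it is the unchanged bundle $X_3$, strongly envied by nobody since $\allocs$ is EFX; or it is a subset of $\Tgtwo\cup\Bgfour$. For the last kind, agents $2,3,4$ do not strongly envy it because it is exactly the most-envious set selected by the $(\Bgfour\mid\Bgtwo)$-champion edge in $M_{\allocs_{-1}}$ (and introducing agent~$1$ with bundle $Z$ changes nothing for them), while agent~$1$ does not even envy $\Tgtwo\cup\Bgfour$ by Lemma~\ref{lem:T_2<_1_T_4}.

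It remains to show no agent strongly envies $Z=Y_1$. For agent~$2$ this follows from minimality of $Z$: it was chosen minimal with $Z>_2\max_2\{X_4,\,\Tgtwo\cup\{\g\},\,\Ththree\cup\{\h\},\,X_3\}$, so one only needs $Y_2=\Tgfour\cup\{\g\}$ to be at least as valuable to agent~$2$ as each of those four candidates, which I would obtain by comparing via $2\champ[\g]4$ and $2\nenvies4$, via Observation~\ref{obs:T_k<T_j} (for $\Tgtwo\cup\{\g\}$), and via Observation~\ref{obs:non-champion-doesnt-envy-top-half} (for $\Ththree\cup\{\h\}$); then by minimality, every proper subset of $Z$ is weakly worse than $Y_2$ to agent~$2$. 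For agents~$3$ and~$4$ I would lean on the fact that in the allocation $\allocs'$ the only strong envy was from $1$ to $2$, so agents~$3$ and~$4$ did not strongly envy $X'_2=\Tgone\cup\Bgtwo\supseteq Z$; combining this with the relation between their $\mathbf{Y}$-bundles and their $\allocs$- and $\allocs'$-bundles (in particular that no agent is worse off) yields that agents~$3,4$ do not strongly envy $Z$ in $\mathbf{Y}$ either.

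The main obstacle is exactly this last point: $Z$ is defined purely through agent~$2$'s valuation, so ruling out strong envy of $Z$ by agents~$3$ and~$4$, whose $\mathbf{Y}$-bundles are only weakly larger than in $\allocs$, requires care. I expect to settle it by establishing the cleaner statement that $Z$ — indeed the larger set $\Tgone\cup\Bgtwo$ containing it — is not strongly envied by any agent in $\allocs$, by tracing $Z\subseteq\Tgone\cup\Bgtwo$ against the ``top'' sets $\Tgone\cup\{\g\}$ and $\Tgtwo\cup\{\g\}$, which by definition of the champion edges $2\champ[\g]1$ and $3\champ[\g]2$ are strongly envied by nobody. Should such a direct comparison prove awkward, the fallback is to argue that a strong envy of $Z$ by agent~$3$ (or $4$) would yield a generalized champion edge into agent~$1$'s bundle that, combined with the surrounding basic edges, closes a PI cycle in $M_\allocs$ or gives a direct Pareto improvement, contradicting the running assumption that $M_\allocs$ admits no PI cycle.
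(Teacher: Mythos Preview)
Your overall approach matches the paper's: verify Pareto-domination, then rule out strong envy toward each new bundle by tracing back to facts established for $\allocs$, $\allocs'$ and $\allocshat$. Two concrete issues remain.

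First, when you enumerate the possible bundles in $\mathbf{Y_{-1}}$, you list only three kinds: $\Tgfour\cup\{\g\}$, the unchanged $X_3$, and a subset of $\Tgtwo\cup\Bgfour$. You have omitted $\Ththree\cup\{\h\}$. In the first PI cycle of Lemma~\ref{lem:Y_-1_PD_X_-1}, namely $3\champ[\Bgfour\mid\circ]2\champ[\g]4\champ[\h]3$, agent~$4$ receives $\Ththree\cup\{\h\}$ via the edge $4\champ[\h]3$. This bundle is handled exactly like $\Tgfour\cup\{\g\}$ (nobody strongly envies it in $\allocs$ by definition of the champion edge), so the fix is easy, but it is missing from your list.

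Second, your argument that agents~$3$ and~$4$ do not strongly envy $Z$ does not transfer cleanly. You invoke ``no strong envy of $X'_2$ in $\allocs'$'', but in $\allocs'$ agents~$3$ and~$4$ held $X'_3=\Tgtwo\cup\{\g\}$ and $X'_4=\Ththree\cup\{\h\}$, whereas in $\mathbf{Y}$ agent~$3$ may only hold $Y_3=X_3<_3 X'_3$ (this happens in the second PI cycle). So ``no agent is worse off'' relative to $\allocs'$ is false, and the implication fails. What actually works (and what the paper uses) is the \emph{stronger} fact proved during the $\allocs'$ analysis: the chains $X_3>_3\Tgtwo\cup\Bgtwo>_3\Tgone\cup\Bgtwo$ and, via Claim~\ref{clm:X_4<_4T_3+B_2}, $X_4\geq_4\Tgthree\cup\Bgtwo>_4\Tgone\cup\Bgtwo$. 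These show that agents~$3,4$ do not even envy $\Tgone\cup\Bgtwo\supseteq Z$ already in $\allocs$; since $Y_i\geq_i X_i$ for $i\in\{3,4\}$, they do not envy $Z$ in $\mathbf{Y}$ either. Your fallback paragraph gestures at this but does not pin it down.

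As a minor remark, your treatment of agent~$2$ is correct but more laborious than necessary: the paper simply observes that $\allocshat$ is EFX with $\hX_1=Z$, and that $Y_2=\Tgfour\cup\{\g\}\geq_2\hX_2$, so agent~$2$ cannot strongly envy $Z$ in $\mathbf{Y}$.
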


\begin{proof}
%
First, $\mathbf{Y}$ Pareto-dominates $\allocs$ since $\mathbf{Y_{-1}}$ Pareto-dominates $\allocs_{-1}$ and $Y_1 = Z >_1 X_4 >_1 X_1$.
We now show that $\mathbf{Y}$ is EFX.
The allocation $\mathbf{Y_{-1}}$ is EFX by construction, thus it suffices to show that agent 1 does not strongly envy any other agent, and that no agent strongly envies agent 1.

\noindent\textbf{Agent 1 does not strongly envy anyone:} Recall that $Z>_1 X_4 >_1 X_1$.
The possible bundles in $\mathbf{Y_{-1}}$ are $\Tgfour\cup \{\g\}$, $\Ththree\cup\{\h\}$, $X_3$ and a subset of $\Tgtwo\cup\Bgfour$. Agent 1 did not strongly envy the first two bundles in the original allocation $\allocs$, by definition of champion, and did not envy $X_3$. Since she is now better off she does not strongly envy those bundles in $\mathbf{Y}$.
Finally, agent 1 does not envy a subset of $\Tgtwo\cup\Bgfour$, since
$$
	Z>_1 X_4 >_1 \Tgtwo\cup \Bgfour,
$$
where the last inequality is due to
Lemma \ref{lem:T_2<_1_T_4}.
Therefore, agent 1 does not strongly envy any other agent in $\mathbf{Y}$. 

\noindent\textbf{No agent strongly envies agent 1:}
Agent 2 does not strongly envy 1 since she did not strongly envy the bundle $Z$ in allocation $\allocshat$, where she was worse off.
%
We have already shown that $X_3 >_3 \Tgone\cup\Bgtwo (\supseteq Z)$ and $X_4 >_4 \Tgone\cup\Bgtwo (\supseteq Z)$ when we constructed the allocation $\allocs'$. Thus, since neither agent 3 nor 4 became worse off in the transition from allocation $\allocs$ to $\mathbf{Y}$, we conclude that agents 3 and 4 do no envy agent 1.
\end{proof}

\vspace{1ex}
\noindent\textbf{Subcase} \circled{III} : $\boldsymbol{2\nchamp[\g]4}$ \textbf{in the allocation} $\boldsymbol{\allocs.}$
In this case we must have $1 \champ[\g] 4$, and $M_\allocs$ restricted to basic edges has the structure:
\begin{center}
	\includegraphics[scale=\figurescale]{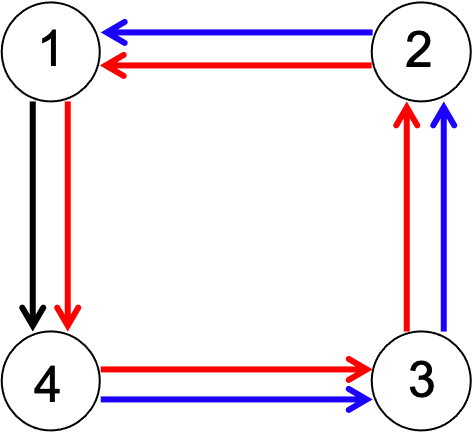}
\end{center}

We now briefly sketch our strategy to find a PI cycle in $M_{\allocshat}$ that includes agent 4.  Since agent 1 is not only a $\g$-champion of agent 4 in $\allocs$, but also envies her, it might be the case that this is a trivial championship, {\sl i.e.}, $\discard[\g]{1}{4} = \{\g\}$.  Such a championship relationship is not very useful, since no items are released from this edge that can used in other generalized championship edges.  Our first step is to show that this is not necessarily the case:  we show that agent 1 envies a strict subset of $X_4$ together with $\g$, that is,
we show that agent 1 $\g$-decomposes $4$ into top and bottom half-bundles $\Tgfour$ and $\Bgfour$ (unless there exists a PI cycle in $\allocs$). This is established in Lemma \ref{lem:1_decomposes_4}.

Given this assumption, our next step is Lemma \ref{lem:i-champB_3|->4} in which we show that in the allocation $\allocshat$ there is an agent $i\neq 4$ such that $i \champ[\Bgthree \mid \Bgfour] 4$.  When $i \in \{1,2\}$, we immediately a PI cycle that includes agent 4 in both cases $\framebox{B}$ and $\framebox{C}$, as depicted below:
\begin{center}
	\includegraphics[scale=\figurescale]{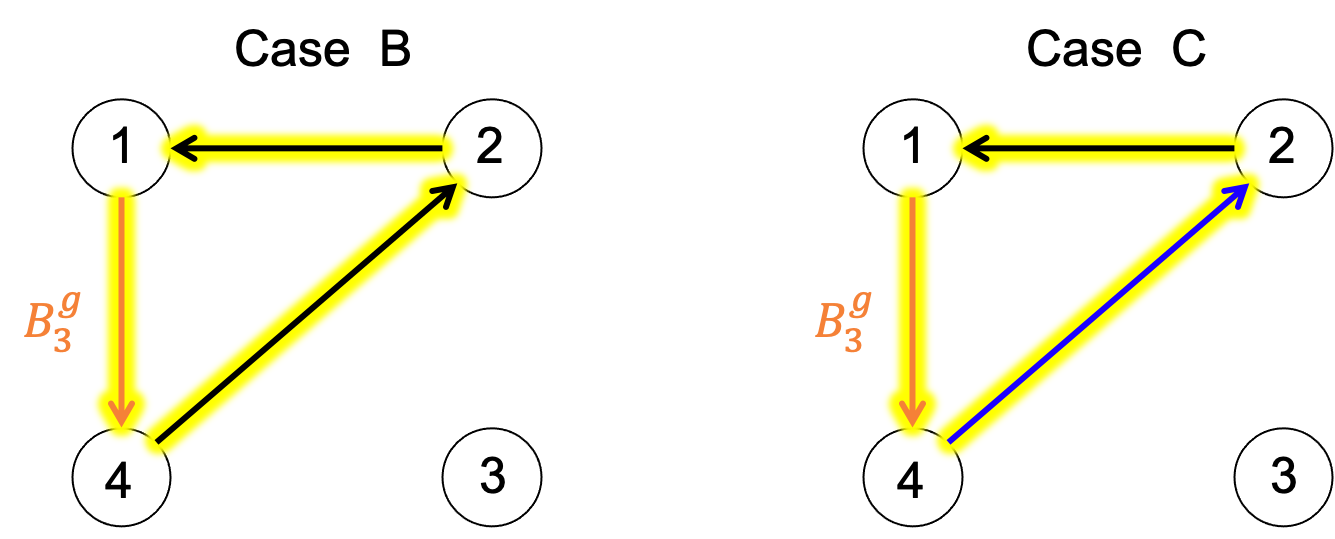}
\end{center}

\begin{center}
	\includegraphics[scale=\figurescale]{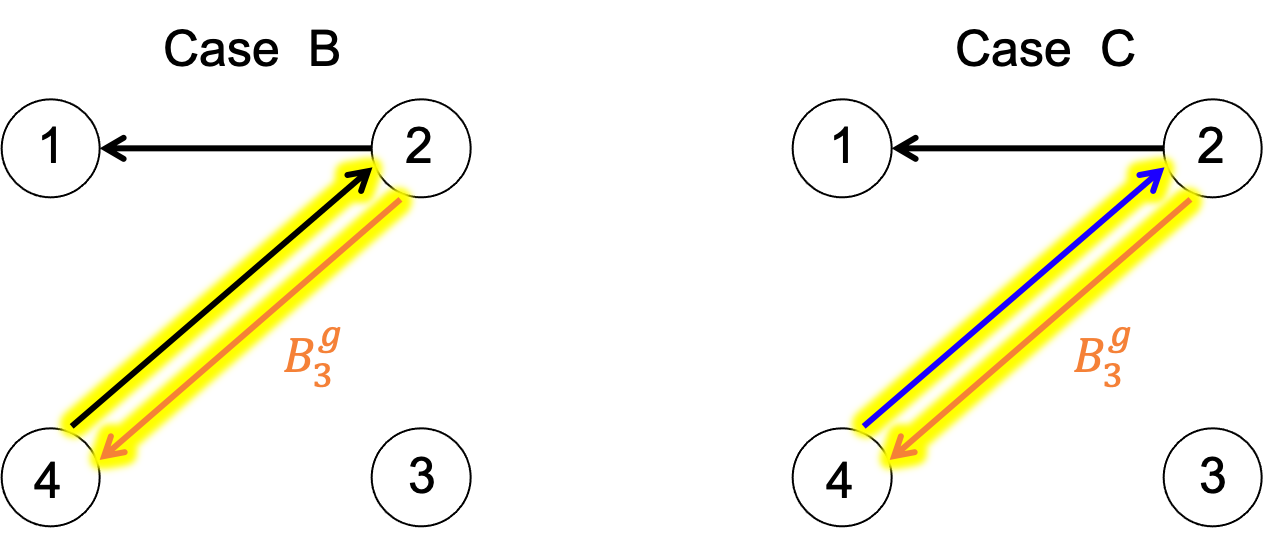}
\end{center}

These are indeed PI cycles since $\Bhthree$ is unallocated in Case $\framebox{B}$ and released by the $4 \champ[\h]2$ edge in Case $\framebox{C}$, and $\Bgthree \subseteq \Bhthree$ by Equation \eqref{eq:B_3subsetB'_3}.  The problematic case is when $i=3$, giving us the structure
\begin{center}
	\includegraphics[scale=\figurescale]{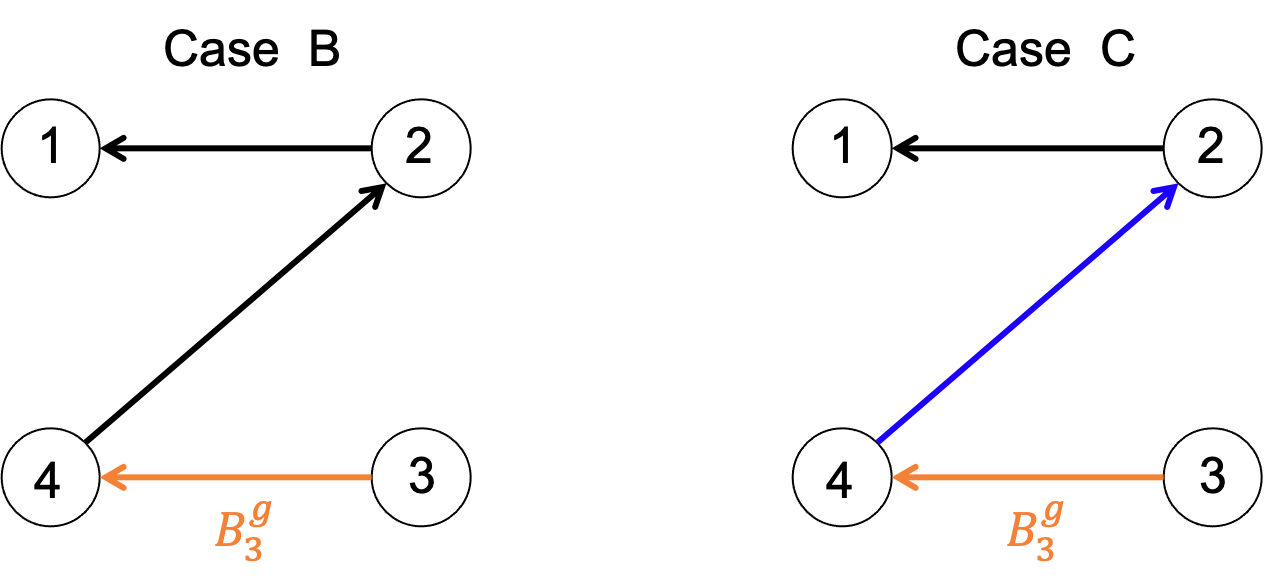}
\end{center}

To solve this case we show, in Lemma \ref{lem:2-champB_4|->2}, that $2\champ[\Bgfour \mid \g]3$ in $\allocshat$.  Since $\Bgfour$ is released by the $3 \champ[\Bgthree \mid \Bgfour]4$ edge, the new edge indeed closes a PI cycle that includes agent 4 in both cases $\framebox{B}$ and $\framebox{C}$:
\begin{center}
	\includegraphics[scale=\figurescale]{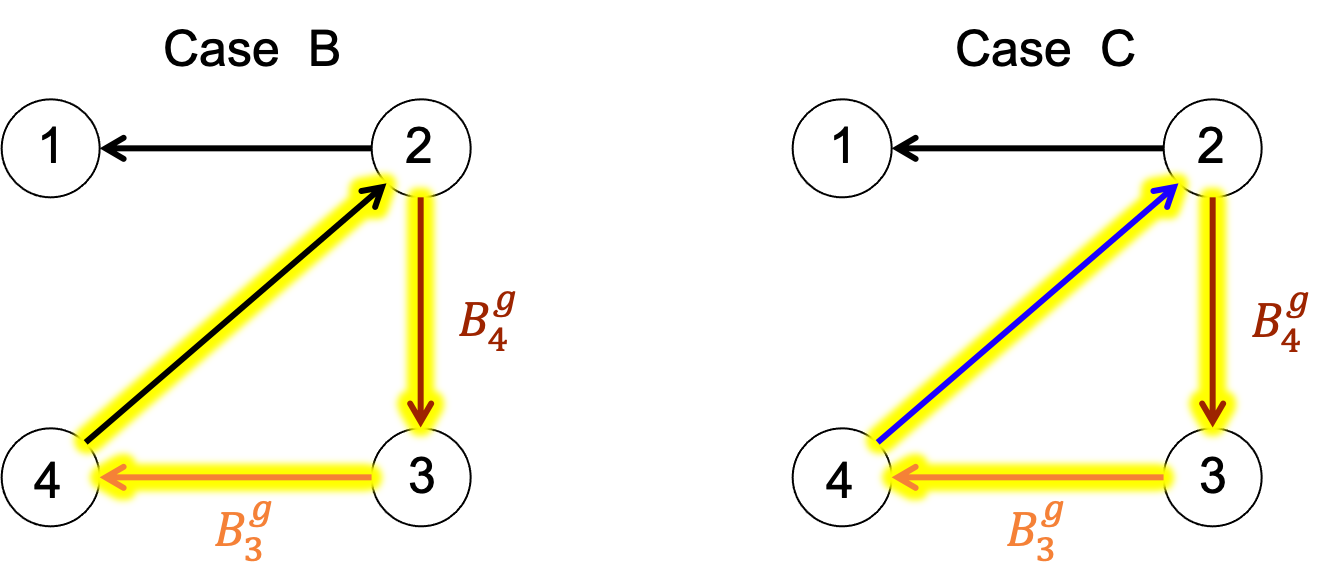}
\end{center}


\begin{lemma}\label{lem:1_decomposes_4}
	Agent 1 $\g$-decomposes agent 4 in the allocation $\allocs$.
\end{lemma}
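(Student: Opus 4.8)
The plan is to argue by contradiction, under the standing hypothesis that $M_\allocs$ contains no PI cycle (if one is ever found we are done via Lemma~\ref{lem:pareto-improvable}). Recall that in this subcase $2\nchamp[\g]4$, so by Lemma~\ref{lem:envy_unique_structure} agent $1$ is the unique $\g$-champion of agent $4$, the only possible $\g$-champions of $4$ are agents $1$ and $2$, and $1\envies 4$; also $4\champ[\g]3$. Suppose agent $1$ does not $\g$-decompose agent $4$. By definition this means that no witness for ``$1$ is a most envious agent of $X_4\cup\{\g\}$'' contains $\g$; equivalently, every \emph{candidate} set $T\subseteq X_4$ (meaning $X_1<_1 T\cup\{\g\}$) has the property that some agent strongly envies $T\cup\{\g\}$. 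Note $X_4$ itself is a candidate, since $1\envies 4$.

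Run the following peeling process. Start with the candidate $T:=X_4$. At a generic step we hold a candidate $T$, so $T\cup\{\g\}$ is strongly envied; if it is strongly envied via the removal of some non-$\g$ item $c\in T$ (i.e.\ some agent envies $(T\setminus\{c\})\cup\{\g\}$), replace $T$ by $T\setminus\{c\}$ and repeat. The candidate shrinks strictly at every peel and $\{\g\}$ is never strongly envied, so the process halts, in one of two ways. (a) We reach a set $R\cup\{\g\}$ ($R\subseteq X_4$, possibly $R=\emptyset$) that is envied by some agent but strongly envied by none. That enving agent cannot be agent $1$: otherwise $R$ would be a candidate whose $\g$-augmentation is not strongly envied, contrary to our assumption. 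Hence it is one of $2,3,4$, and $R\cup\{\g\}$ witnesses that this agent is a most envious agent of $X_4\cup\{\g\}$, i.e.\ a $\g$-champion of $4$ — contradicting Lemma~\ref{lem:envy_unique_structure} (if the agent is $3$ or $4$) or the subcase hypothesis $2\nchamp[\g]4$ (if it is $2$). (b) The process gets stuck: $T\cup\{\g\}$ is strongly envied only by removing $\g$. If $T\subsetneq X_4$ this is impossible: the only agent that could envy $T=(T\cup\{\g\})\setminus\{\g\}\subseteq X_4$ also envies $X_4$, hence is agent $1$ by Lemma~\ref{lem:envy_unique_structure}, and then $1$ envies a proper subset of $X_4$, so $1$ envies $X_4\setminus\{h\}$ for some $h$, contradicting that $\allocs$ is EFX. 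So the only remaining possibility is that the process is stuck immediately, at $T=X_4$: \emph{no agent envies $(X_4\setminus\{c\})\cup\{\g\}$ for any $c\in X_4$}, and the only candidate is $X_4$ itself (so $\discard[\g]{1}{4}=\{\g\}$ is forced).

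It remains to contradict this last situation, and this is the main obstacle. From ``no agent envies $(X_4\setminus\{c\})\cup\{\g\}$'' one reads off, via \cancbty\ applied to agent $4$, that $\g<_4 c$ for every $c\in X_4$ (so $\g$ is agent $4$'s least valuable single item among those she holds). The plan is to combine this with the champion chain $4\champ[\g]3\champ[\g]2\champ[\g]1$ together with its parallel $\h$-edges (which $\g$- and $\h$-decompose agents $3,2,1$), the fact that $\h$ is unallocated, the relation $\g<_4\h$, the inclusion $\Bgthree\subseteq\Bhthree$ from Equation~\eqref{eq:B_3subsetB'_3}, and a generalized champion edge into agent $4$ obtained by an edge-discovery step, so as to assemble a cycle whose augmenting sets consume $\g$, $\h$ and the bottom half-bundles released along the cycle each exactly once — a PI cycle, contradicting the standing hypothesis. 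The delicate part — and the crux of the proof — is checking that the resulting cycle is genuinely Pareto-improvable (pairwise-disjoint augmenting sets, each either unallocated or released along the cycle), which is precisely where the fine structure of this subcase and the inequality $\g<_4\h$ must be invoked.
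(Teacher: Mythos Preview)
Your peeling argument is a clean alternative setup and the casework up through the reduction is correct: cases (a) and (b) with $T\subsetneq X_4$ genuinely lead to contradictions, and you land correctly on the residual situation ``no agent envies $(X_4\setminus\{c\})\cup\{\g\}$ for any $c\in X_4$''. The gap is that you never dispatch this residual case. Your stated ``plan'' is not a proof: the only concrete consequence you extract is $\g<_4 c$ for every $c\in X_4$, and nothing in the surrounding structure turns that into a PI cycle. The edge-discovery machinery (Lemmas~\ref{lem:alg_start}, \ref{lem:alg_step}, Theorem~\ref{thm:good-edge-or-external}) requires a good $\g$-cycle, but in Subcase~\circled{III} the only $\g$-cycle through agent~4 contains the envy edge $1\envies 4$ and is therefore not good; and any ``generalized champion edge into agent~4'' you might be gesturing at (e.g.\ Lemma~\ref{lem:i-champB_3|->4}) already presupposes the $\g$-decomposition $X_4=\Tgfour\cupdot\Bgfour$ --- precisely what you are trying to establish.

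The paper fills exactly this hole using a hypothesis you never invoke: prior to the split into Subcases~\circled{I}--\circled{III}, the Case~\framebox{B}/\framebox{C} analysis established that in the allocation $\allocshat$ agent~3 is the \emph{unique} $\b$-champion of agent~4 (for the chosen $\b\in\Bgone$). From $3\champ[\b]4$ in $\allocshat$ together with $3\nenvies 4$ and $4\nchamp[\b]4$ there, one extracts an item $d\in X_4$ with $X_3<_3 (X_4\setminus\{d\})\cup\{\b\}$. Claim~\ref{clm:b<_3_g} then gives $\b<_3\g$ (else a PI cycle in $M_\allocs$), and \cancbty\ yields $X_3<_3 (X_4\setminus\{d\})\cup\{\g\}$ --- which directly contradicts your residual case ``no agent envies $(X_4\setminus\{c\})\cup\{\g\}$''. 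So the missing ingredient is precisely the $\allocshat$-information established before Subcase~\circled{III}; without importing it, the argument cannot be completed along the lines you sketch.
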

\begin{proof}
Consider the allocation $\allocshat$. Recall that in the allocation $\allocshat$ agent 3 is the unique $\b$-champion of agent 4. In particular we have $4\nchamp[\b]4$, implying $\discard[\b]{3}{4} \neq \emptyset$.  Furthermore $3\champ[\b]4$, and together with the fact that $3 \nenvies 4$ (since $\hX_3 = \Tgtwo \cup \{\g\} >_3 X_3 >_3 X_4 = \hX_4$) this implies $\b \notin \discard[\b]{3}{4}$ by Observation \ref{obs:g_notin_bottom}.
We conclude that there exists some item $d \in \discard[\b]{3}{4}$ which is not $\b$, {\sl i.e.}, $d \in \hX_4 = X_4$.
We obtain
\begin{equation}\label{eq:X_3<(X_3-d)+b}
  X_3 \leq_3 \hX_3 	<_3 (X_4\cup\{\b\})\setminus \discard[\b]{3}{4}
  					\leq_3 (X_4\cup \{\b\})\setminus \{d\}
  					= (X_4\setminus \{d\})\cup \{\b\},
\end{equation}

Where the first inequality holds since agent 3 is better off in allocation $\allocshat$, the second holds by definition of basic championship, the third holds since valuations are monotone and the equality holds since $\b \neq d$.

By the following claim we may assume $\b <_3 \g$.
\fullversion
The proof of the claim has been deferred to Appendix \ref{apx:4-agents}.
\fullversionend
\begin{claim}\label{clm:b<_3_g} 
	$\b <_3 \g$, unless there exists a PI cycle in $M_{\allocs}$.
\end{claim}
\shortversion
\begin{proof}
Assuming that $\g <_3 \b$, We get
$X_3 <_3 \Tgtwo \cup \{\g\} <_3 \Tgtwo \cup \{\b\}$, where the first inequality holds since $3 \champ[\g] 2$ in $\allocs$ and the second by \cancbty.
In particular there exists a most envious agent of $\Tgtwo \cup \{\b\}$ in $\allocs$, {\sl i.e.}, there exists an agent $i$ such that $i\champ[\b]2$.
$i$ cannot be 2,
since $X_2 >_2 X_1 = \Tgone \cup \Bgone >_2 \Tgtwo \cup \Bgone >_2 \Tgtwo \cup \{\b\}$, where the first inequality holds by $2\nenvies1$, the second by Observation~\ref{obs:T_k<T_j} and \cancbty,
and the third by monotonicity.
Thus $i = 1, 3$ or $4$ and in all cases we get a PI cycle in $M_\allocs$:
$$1\champ[\b]2\champ[\g]1~,~3\champ[\b]2\champ[\g]1\envies4\champ[\h]3~\text{ or }~4\champ[\b]2\champ[\g]1\envies4,$$
respectively.  Note that these are indeed PI cycles since the $\g$-champion edge releases $\Bgone$ which contains $\b$.
\end{proof}
\shortversionend

By Claim \ref{clm:b<_3_g} and Equation \eqref{eq:X_3<(X_3-d)+b} we obtain
$%
	X_3	<_3(X_4\setminus\{d\})\cup \{\b\}
		<_3(X_4\setminus\{d\})\cup \{\g\},%
$
where the second inequality holds by \cancbty. Since agent 3 envies $(X_4\setminus\{d\})\cup\{\g\}$, there exists a most envious agent of this set. Let $i\in[4]$ be that agent. Clearly, $i\champ[\g]4$.
Since we are in Subcase \circled{III}, $i\neq 2$.
If $i = 3$ or $i =4$ we obtain a PI-cycle: $3\champ[\g]4\champ[\h]3$, $4\champ[\g]4$, respectively. Hence, $i=1$. Therefore, there exists a choice of $\discard[\g]{1}{4}$ that contains $d$. 
Since the set $\discard[\g]{1}{4}$ has not been used in this scenario of our proof thus far, we may define it such that $d\in\discard[\g]{1}{4}$, rather than arbitrarily.
Finally, we have that $\g\notin\discard[\g]{1}{4}$, as otherwise $X_1<_1(X_4\cup\{\g\})\setminus\{d,\g\}= X_4\setminus \{d\}$, in contradiction to allocation $\allocs$ being EFX.
This shows that agent 1 $\g$-decomposes agent 4, as claimed.
\end{proof}

By Lemma~\ref{lem:1_decomposes_4} agent 1 $\g$-decomposes $4$ into top and bottom half-bundles denoted $\Tgfour$ and $\Bgfour$. 
The next lemma reveals a generalized champion edge in $M_{\allocshat}$ that in most cases immediately closes a PI-cycle that includes agent 4.
\fullversion
The proof of the lemma has been deferred to Appendix \ref{apx:4-agents}.
\fullversionend

\begin{lemma}\label{lem:i-champB_3|->4}
	There exists $i\in[4]$ such that $i\champ[\Bgthree\mid \Bgfour]4$ in $M_{\allocshat}$.
\end{lemma}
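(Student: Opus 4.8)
The plan is to produce a generalized champion edge into agent $4$ in $M_{\allocshat}$ whose released set is $\Bgfour$ and whose added set is $\Bgthree$. First note the notation is well-formed: by Lemma~\ref{lem:1_decomposes_4} agent $1$ genuinely $\g$-decomposes $4$, so $\Bgfour=\discard[\g]{1}{4}\subseteq X_4=\hX_4$, while $\Bgthree\subseteq X_3\subseteq M\setminus X_4$. Hence, to establish $i\champ[\Bgthree\mid\Bgfour]4$ for some $i$, it suffices to exhibit an agent who envies the set $(\hX_4\setminus\Bgfour)\cup\Bgthree=\Tgfour\cup\Bgthree$ in $\allocshat$: that set then has a most envious agent, and by definition that agent is a $(\Bgthree\mid\Bgfour)$-champion of $4$.

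\textbf{Ruling out agent $4$.} Apply Observation~\ref{obs:T_k<T_j} with agent $4$ in both roles: agent $4$ $\g$-decomposes $3$ (since $4\champ[\g]3$ and $4\nenvies 3$), and $4\nchamp[\g]4$ while $4$ is itself $\g$-decomposed (by agent $1$, in Subcase~III). This gives $\Tgfour<_4\Tgthree$, so by \cancbty\ $\Tgfour\cup\Bgthree<_4\Tgthree\cup\Bgthree=X_3<_4 X_4=\hX_4$, using $4\nenvies 3$ and non-degeneracy. So agent $4$ does not envy $\Tgfour\cup\Bgthree$, and the envying agent, once found, lies among $1,2,3$.

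\textbf{Finding an envying agent among $1,2,3$ (the crux).} I would split on whether agent $1$ envies $\Tgfour\cup\Bgthree$ in $\allocshat$; if she does we are done. Otherwise I would combine the facts already available: from $3\champ[\b]4$ in $\allocshat$ with $3\nenvies 4$ (so $\b\notin\discard[\b]{3}{4}$), the item $d\in\discard[\b]{3}{4}\cap\Bgfour$ extracted in the proof of Lemma~\ref{lem:1_decomposes_4}, and $\b<_3\g$ from Claim~\ref{clm:b<_3_g}, one obtains the chain $\hX_3<_3(X_4\cup\{\b\})\setminus\discard[\b]{3}{4}\leq_3(X_4\setminus\{d\})\cup\{\b\}<_3(X_4\setminus\{d\})\cup\{\g\}=\Tgfour\cup(\Bgfour\setminus\{d\})\cup\{\g\}$; on the other side, the negation of Subcase~I gives $X_4<_1\Tgtwo\cup\Bgthree$, Observation~\ref{obs:T_k<T_j} (with $1$ $\g$-decomposing $4$, $1\nchamp[\g]2$, and $2$ $\g$-decomposed) gives $\Tgtwo<_1\Tgfour$, and we also have $\hX_1=Z>_1 X_4$ with $Z\subseteq\Tgone\cup\Bgtwo$. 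Threading these \cancbty\ inequalities together, the assumption that agent $1$ does not envy $\Tgfour\cup\Bgthree$ should force agent $3$ (or, in the leftover configuration, agent $2$) to envy it. This bookkeeping — getting the inequalities to line up so that one of the three candidates is pinned down as an envier of $\Tgfour\cup\Bgthree$ — is the main obstacle; everything after it (invoking the definition of most-envious-agent to name the champion, and noting $i\neq 4$ automatically) is immediate.
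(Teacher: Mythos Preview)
Your reduction is correct, and the computation ruling out agent $4$ is fine (though it is not needed for the lemma itself; it is used only in the case analysis that follows). The genuine gap is in your crux. In the ``otherwise'' branch, where agent $1$ does not envy $\Tgfour\cup\Bgthree$ in $\allocshat$, you list a chain of inequalities for agent $3$ and a separate list of facts about agent $1$, and then assert that ``threading these together'' should force agent $2$ or $3$ to envy. But \cancbty\ operates within a single agent's valuation and cannot transfer inequalities from agent $1$ to agent $3$. Moreover, the chain you wrote for agent $3$ terminates at $(X_4\setminus\{d\})\cup\{\g\}=\Tgfour\cup(\Bgfour\setminus\{d\})\cup\{\g\}$, which is not the target bundle $\Tgfour\cup\Bgthree$, and there is no evident way to bridge the two.

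The paper avoids this branch entirely by showing that agent $1$ \emph{always} envies $\Tgfour\cup\Bgthree$ in $\allocshat$, unless we are already in Subcase~\circled{I}. The key comparison you are missing is between $\Bgtwo$ and $\Bgthree$ from agent $1$'s viewpoint. If $\Bgtwo>_1\Bgthree$, then
\[
X_4>_1 X_1>_1 X_2=\Tgtwo\cup\Bgtwo>_1\Tgtwo\cup\Bgthree
\]
(using $1\envies 4$, $1\nenvies 2$, and \cancbty), which is exactly the hypothesis of Subcase~\circled{I}. If $\Bgtwo<_1\Bgthree$, then since $\Tgone<_1\Tgfour$ (Observation~\ref{obs:T_k<T_j} with $i=1$, $j=4$, $k=1$) and $\hX_1=Z\subseteq\Tgone\cup\Bgtwo$, monotonicity and \cancbty\ give
\[
\hX_1\le_1\Tgone\cup\Bgtwo<_1\Tgfour\cup\Bgtwo<_1\Tgfour\cup\Bgthree,
\]
so agent $1$ envies and we are done. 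The detour through agents $2$ and $3$ is unnecessary.
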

\shortversion
\begin{proof}
	Recalling that $\hX_4 = X_4$, we need to show that there is a most envious agent of $\Tgfour \cup \Bgthree$ in $\allocshat$. We do this by showing $\hX_1	<_1 \Tgfour\cup \Bgthree $.
	If $\Bgtwo >_1 \Bgthree$, then
	$$
	X_4 >_1 X_1 >_1 X_2 = \Tgtwo \cup \Bgtwo >_1 \Tgtwo \cup \Bgthree,
	$$
	where the first and second inequalities are by $1\envies4$ and $1\nenvies2$, respectively, in $M_{\allocs}$,
	and the final inequality follows from \cancbty. This implies that we are in Case \circled{I} that has been solved earlier.

	Thus we may assume $\Bgtwo <_1 \Bgthree$.  Moreover $\Tgone <_1 \Tgfour$ by Observation \ref{obs:T_k<T_j}. Therefore, by \cancbty
	$$
	\Tgone\cup \Bgtwo <_1 \Tgfour \cup \Bgtwo <_1 \Tgfour \cup \Bgthree.
	$$

	Seeing as $\hX_1=Z\subseteq \Tgone\cup\Bgtwo$ and valuations are monotone, the above equation implies that agent 1 envies $\Tgfour \cup \Bgthree$ in allocation $\allocshat$, as desired.
\end{proof}
\shortversionend
Consider the agent $i$ that satisfies $i\champ[\Bgthree\mid \Bgfour]4$ in $M_{\allocshat}$ which is guaranteed by Lemma~\ref{lem:i-champB_3|->4}. First, $i \neq 4$
since
$$
\hX_4 = X_4 >_4 X_3 = \Tgthree \cup \Bgthree >_4 \Tgfour \cup \Bgthree,
$$
where the first inequality holds by $4 \nenvies 3$ in $M_\allocs$ and the second holds by Observation \ref{obs:T_k<T_j} and \cancbty.

If $i = 1$ or $i = 2$ then we obtain a PI-cycle that includes agent 4.  To see that the cycles we propose are indeed PI-cycles, recall that $\Bgthree\subseteq \Bhthree$ by Equation~\eqref{eq:B_3subsetB'_3} and in Case \framebox{B} the set $\Bhthree$ remains unallocated while in Case \framebox{C} the set $\Bhthree$ is released by the edge $4\champ[\h]2$ (which is included in all of the PI-cycles we describe next).
If $1\champ[\Bgthree\mid \Bgfour]4$ in $M_{\allocshat}$, we obtain
the PI-cycle $1\champ[\Bgthree\mid\Bgfour]4\envies2\envies1$ in Case \framebox{B}, and
the PI-cycle $1\champ[\Bgthree\mid\Bgfour]4\champ[\h]2\envies1$ in Case \framebox{C}, as illustrated below.
\begin{center}
	\includegraphics[scale=\figurescale]{figures/With_envy_p16-5}
\end{center}
If $2\champ[\Bgthree\mid \Bgfour]4$ in $M_{\allocshat}$, we obtain
the PI-cycle $2\champ[\Bgthree\mid\Bgfour]4\envies2$ in Case \framebox{B}, and
the PI-cycle $2\champ[\Bgthree\mid\Bgfour]4\linebreak[1]\champ[\h]2$ in Case \framebox{C}, as illustrated below.
\begin{center}
	\includegraphics[scale=\figurescale]{figures/With_envy_p16-6}
\end{center}

The remaining case is $i=3$, giving us the structure
\begin{center}
	\includegraphics[scale=\figurescale]{figures/With_envy_p16-7}
\end{center}
To solve this case we next establish (see Lemma \ref{lem:2-champB_4|->2}) the existence of the edge $2 \champ[\Bgfour \mid \g] 3$ (recall that $\hX_3 = \Tgtwo \cup \{\g\}$), giving us
the PI-cycle $3\champ[\Bgthree\mid\Bgfour]4\envies2\champ[\Bgfour\mid\g]3$ in Case \framebox{B}, and
the PI-cycle $1\champ[\Bgthree\mid\Bgfour]4\champ[\h]2\champ[\Bgfour\mid\g]3$ in Case \framebox{C}, as illustrated below.
\begin{center}
	\includegraphics[scale=\figurescale]{figures/With_envy_p16-2}
\end{center}
In all cases the PI-cycle passes through agent 4, thus by Lemma~\ref{lem:pareto-improvable}, there exists a partial EFX allocation $\allocshat'$ in which agent 4 is strictly better off. Since we assumed that agent $\vip$ is 4, it follows that $\allocshat'$ dominates $\allocs$, as desired.  The following lemma finishes this case.

\begin{lemma}\label{lem:2-champB_4|->2}
	$2\champ[\Bgfour\mid \g] 3$	in $M_{\allocshat}$.
\end{lemma}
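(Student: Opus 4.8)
The plan is to show that agent $2$ is a \emph{most envious} agent of $(\hX_3\setminus\{\g\})\cup\Bgfour$ in $\allocshat$. Since in $\allocshat$ we have $\hX_3=\Tgtwo\cup\{\g\}$ and $\Bgfour\subseteq X_4$ is disjoint from $\Tgtwo\cup\{\g\}$, this set equals $\Tgtwo\cup\Bgfour$, so this is exactly the assertion $2\champ[\Bgfour\mid\g]3$. Because valuations are monotone, any agent that envies a subset of $\Tgtwo\cup\Bgfour$ also envies $\Tgtwo\cup\Bgfour$ itself; hence the most envious agent of $\Tgtwo\cup\Bgfour$, if it exists, necessarily envies $\Tgtwo\cup\Bgfour$. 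It therefore suffices to prove \textbf{(i)} agent $2$ envies $\Tgtwo\cup\Bgfour$ in $\allocshat$ (which in particular makes $\Tgtwo\cup\Bgfour$ an envied set, so it has a most envious agent, taken from a minimal envied subset of it), and \textbf{(ii)} none of agents $1,3,4$ envies $\Tgtwo\cup\Bgfour$ in $\allocshat$. Given (i) and (ii), the most envious agent of $\Tgtwo\cup\Bgfour$ can only be agent $2$, and we are done.

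For (i), I would first record that agent $2$ is strictly worse off in $\allocshat$ than in $\allocs$: from $\hX_2=\max_2\{X_4,\Tgtwo\cup\{\g\},\Ththree\cup\{\h\},X_3\}$ together with $X_2>_2X_3$, $X_2>_2X_4$ (agent $2$ envies nobody in $\allocs$), $X_2>_2\Ththree\cup\{\h\}$ (Observation~\ref{obs:non-champion-doesnt-envy-top-half}, since $2\nchamp[\h]3$), and $X_2>_2\Tgtwo\cup\{\g\}$ (otherwise agent $2$ would be a self $\g$-champion in $\allocs$, i.e.\ a basic PI cycle), we get $\hX_2<_2X_2=\Tgtwo\cupdot\Bgtwo$. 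It then remains to establish $\Bgtwo\le_2\Bgfour$, since then $\hX_2<_2\Tgtwo\cup\Bgtwo\le_2\Tgtwo\cup\Bgfour$ by \cancbty. This last inequality $\Bgtwo\le_2\Bgfour$ is where the subcase hypotheses (in particular the negation of Subcase~\circled{I}) and the freedom in choosing the bottom half $\Bgtwo$ must be exploited.

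For (ii), each of agents $1,3,4$ is ruled out by a case-specific \cancbty\ comparison. Agent $1$: $\hX_1=Z>_1X_4$ (established just before Lemma~\ref{lem:X_4>_1_Z}) and $\Tgtwo<_1\Tgfour$ (the Observation~\ref{obs:T_k<T_j} argument from the proof of Lemma~\ref{lem:i-champB_3|->4}, using $1\nchamp[\g]1$ and, by Lemma~\ref{lem:1_decomposes_4}, that agent $1$ $\g$-decomposes agent $4$), so $\hX_1>_1X_4=\Tgfour\cupdot\Bgfour>_1\Tgtwo\cup\Bgfour$. Agent $4$: $\hX_4=X_4=\Tgfour\cupdot\Bgfour$, and one compares $\Tgtwo\cup\Bgfour$ with $X_4$ using the bottom/top–half orderings coming out of $4\nchamp[\g]2$, $4\champ[\g]3$ and \cancbty\ (and, if the direct comparison is not available, the Pareto–improvable cycle $3\champ[\Bgthree\mid\Bgfour]4\champ[\Bgfour\mid\g]3$, whose first edge is fed by $\Bhthree\supseteq\Bgthree$ and whose second edge is fed by the $\Bgfour$ released by the first). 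Agent $3$ is the delicate case: if agent $3$ envied $\Tgtwo\cup\Bgfour$ then $\hX_3=\Tgtwo\cup\{\g\}<_3\Tgtwo\cup\Bgfour$, hence $\{\g\}<_3\Bgfour$ by \cancbty; on the other hand $3\nenvies4$ in $\allocshat$ (since $\hX_3=\Tgtwo\cup\{\g\}>_3X_3>_3X_4=\hX_4$), so Observation~\ref{obs:bottom_bundle_ineq} applied to the edge $3\champ[\Bgthree\mid\Bgfour]4$ (guaranteed in this case by Lemma~\ref{lem:i-champB_3|->4}) gives $\Bgfour<_3\Bgthree$. Combining the chain $\{\g\}<_3\Bgfour<_3\Bgthree$ with Claim~\ref{clm:b<_3_g} ($\b<_3\g$) and with the explicit description of $\Bgfour$ from the proof of Lemma~\ref{lem:1_decomposes_4}, under the standing hypothesis that Subcase~\circled{I} fails, yields a contradiction — or, failing that, a PI cycle through agent $4$, which already proves Theorem~\ref{thm:4_agent_main_result} in this subcase.

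Assembling (i) and (ii): $\Tgtwo\cup\Bgfour$ is envied in $\allocshat$ by agent $2$ and by no one else, so agent $2$ is its most envious agent and $2\champ[\Bgfour\mid\g]3$ holds in $M_{\allocshat}$. I expect the main obstacle to be precisely the two non-routine inequalities above, namely $\Bgtwo\le_2\Bgfour$ for (i) and the contradiction in the agent-$3$ case of (ii): both amount to pinning down the correct cyclic ordering among the bottom-half bundles $\Bgtwo,\Bgfour,\Bgthree$ and the singleton $\{\g\}$ from the viewpoints of agents $2$ and $3$, which has to be squeezed out of the generalized champion edge $3\champ[\Bgthree\mid\Bgfour]4$, Claim~\ref{clm:b<_3_g}, the concrete construction of $\Bgfour$ in Lemma~\ref{lem:1_decomposes_4}, and the negation of Subcase~\circled{I}; as elsewhere in the proof, re-selecting the bottom half $\Bgtwo$ may be needed to make the comparison go through.
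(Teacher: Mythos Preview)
Your plan has two genuine gaps that make the approach fail as written.

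\textbf{Part (i).} You need $\Bgtwo \le_2 \Bgfour$ to conclude that agent~2 envies $\Tgtwo\cup\Bgfour$, and you acknowledge this is where the work lies---but you give no argument, and in fact this inequality can fail. The paper does not try to prove it; instead it splits on the sign of $\Bgtwo$ versus $\Bgfour$ from agent~2's viewpoint. When $\Bgfour <_2 \Bgtwo$ the paper shows, via a separate claim (Claim~\ref{clm:3_champ_B_4_3_V_2_champ_B_4_3}), that $3\champ[\Bgfour\mid\circ]3$ in $M_\allocs$, whence $X_3 <_3 \Tgthree\cup\Bgfour <_3 \Tgtwo\cup\Bgfour$; so it is agent~3, not agent~2, who witnesses that the set is envied. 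Your ``freedom in choosing $\Bgtwo$'' cannot rescue this: $\Bgtwo$ is fixed by $3\champ[\g]2$, and there is no mechanism to make it $\le_2\Bgfour$.

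\textbf{Part (ii).} You try to show that agents~3 and~4 do not envy $\Tgtwo\cup\Bgfour$ in $\allocshat$, but this is both stronger than needed and not established. For agent~3 your chain $\{\g\}<_3\Bgfour<_3\Bgthree$ together with $\b<_3\g$ does not yield any contradiction, and nothing in the proof of Lemma~\ref{lem:1_decomposes_4} about $\Bgfour$ helps. For agent~4 you never actually compare $\Tgfour$ with $\Tgtwo$ from 4's viewpoint (Observation~\ref{obs:T_k<T_j} only gives $\Tgtwo<_4\Tgthree$), and the fallback cycle $3\champ[\Bgthree\mid\Bgfour]4\champ[\Bgfour\mid\g]3$ presupposes an edge you have not established. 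The paper takes a different route: it first proves the auxiliary edge $1\champ[\Bgtwo\mid\circ]4$ in $M_\allocs$ (Claim~\ref{clm:1-B_2|->4}), works entirely in $\allocs$, and then eliminates agents~1,~3,~4 not by showing they don't envy but by showing that if any of them were a \emph{most envious} agent of $\Tgtwo\cup\Bgfour$ one gets either a contradiction with $Z>_1X_4$ (for agent~1) or a PI cycle in $M_\allocs$ built from $1\champ[\Bgtwo\mid\circ]4$ (for agents~3 and~4). Only then does it transfer to $\allocshat$ via ``agent~2 is worse off, the others are not.'' The auxiliary claim $1\champ[\Bgtwo\mid\circ]4$ is the missing engine in your attempt.
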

\begin{proof}
First we prove the following auxiliary claim:
\fullversion
(its proof is deferred to Appendix \ref{apx:4-agents})
\fullversionend
\begin{claim}\label{clm:1-B_2|->4}
	$1\champ[\Bgtwo\mid\circ]4$ in $M_{\allocs}$, unless there exists a PI-cycle in $M_{\allocs}$.
\end{claim}
\shortversion
\begin{proof}
	Recall that $X_1 <_1 X_4 <_1 Z \subseteq \Tgone\cup\Bgtwo$.
	We thus have
	$%
	X_1 <_1 \Tgone\cup\Bgtwo <_1 \Tgfour\cup\Bgtwo,%
	$
	by Observation~\ref{obs:T_k<T_j} and \cancbty. Since agent 1 envies $\Tgfour\cup\Bgtwo$, there exists a most envious agent of this set. That is, there exists $j\in[4]$, such that $j\champ[\Bgtwo\mid\circ]4$.

	If $2\champ[\Bgtwo\mid\circ]4$, then we obtain the PI-cycle
	$2\champ[\Bgtwo\mid\circ]4\champ[\h] 3 \champ[\g] 2$.
	%
%
	Note that $3\nchamp[\Bgtwo\mid\circ]4$,
	since
	$$%
	X_3>_3 X_2 = \Tgtwo\cup\Bgtwo >_3 \Tgfour\cup \Bgtwo,%
	$$
	where the first inequality is by $3\nenvies2$ and the second is by Observation~\ref{obs:T_k<T_j} and \cancbty.

	Similarly, $4\nchamp[\Bgtwo\mid\circ]4$,
	since $X_4>_4 \Tgthree\cup\Bgtwo$ (otherwise we are done by Claim~\ref{clm:X_4<_4T_3+B_2}), and therefore, by Observation~\ref{obs:T_k<T_j} and \cancbty,
	$%
	X_4>_4 \Tgthree \cup\Bgtwo >_3 \Tgfour \cup\Bgtwo.%
	$
	The claim now follows.
\end{proof}
\shortversionend
Now, since $\hX_3 = \Tgtwo \cup \{\g\}$, we need to show that agent 2 is the most envious agent of $\Tgtwo \cup \Bgfour$.  In the transition from allocation $\allocs$ to $\allocshat$, agent 2 is worse off while all other agents are not, and thus it is enough to show that agent 2 was most envious of $\Tgtwo \cup \Bgfour$ also in $\allocs$.

We first show that there is \emph{some} agent who envies $\Tgtwo \cup \Bgfour$.  Note that if $\Bgtwo <_2 \Bgfour$ then $X_2 = \Tgtwo \cup \Bgtwo <_2 \Tgtwo\cup \Bgfour$, {\sl i.e.}, agent 2 envies $\Tgtwo \cup \Bgfour$.
The following claim helps us deal with the case $\Bgfour <_2 \Bgtwo$.
\fullversion
\noindent Its proof has been deferred to Appendix \ref{apx:4-agents}.
\fullversionend

\begin{claim}\label{clm:3_champ_B_4_3_V_2_champ_B_4_3}
	If  $\Bgfour <_2 \Bgtwo$ then $3\champ[\Bgfour\mid\circ]3$ in $M_\allocs$, unless there exists a PI-cycle in $M_{\allocs}$.
\end{claim}

\shortversion
\begin{proof}
	By Claim \ref{clm:1-B_2|->4} we have the following structure in $M_\allocs$:
	\begin{center}
		\includegraphics[scale=\figurescale]{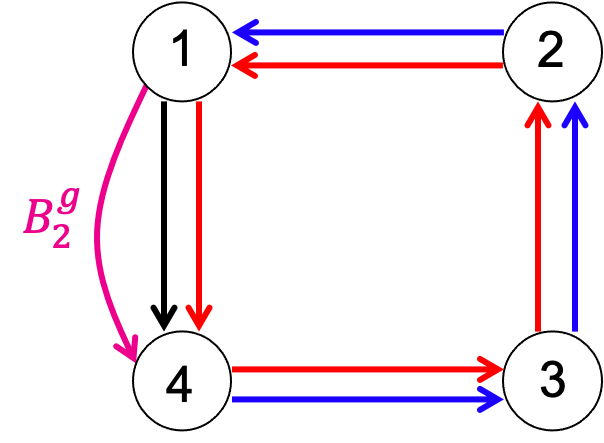}
	\end{center}

	Now, we have
	$%
	X_4 = \Tgfour \cup \Bgfour <_4 \Tgthree \cup \Bgfour,%
	$
	by Observation~\ref{obs:T_k<T_j} and \cancbty. Since agent 4 envies $\Tgthree\cup\Bgfour$, there exists a most envious agent of this set. That is, there exists $i\in[4]$, such that $i\champ[\Bgfour\mid\circ]3$.
	If $i=1$, then
	$%
	\Tgthree \cup \Bgfour >_1 X_1 >_1 X_3 = \Tgthree \cup \Bgthree,%
	$
	where the first inequality holds by definition of $i$ and the second inequality follows from $1\nenvies3$. By \cancbty, we conclude that
	$
	\Bgfour >_1 \Bgthree,
	$ 
	and together with $\Tgfour>_1 \Tgtwo$ (Observation~\ref{obs:T_k<T_j}) we get
	$$
	X_4=\Tgfour\cup\Bgfour >_1 \Tgfour\cup\Bgthree >_1 \Tgtwo\cup\Bgthree.
	$$
	by \cancbty. Thus we are in Subcase~\circled{I} which has been solved earlier.

	Suppose next that $i=4$. 
	By Claim~\ref{clm:1-B_2|->4}, $1\champ[\Bgtwo\mid\circ]4$, thus we obtain the PI-cycle
	$1\champ[\Bgtwo\mid\circ]4\champ[\Bgfour\mid\circ]3\champ[\g]2\champ[\h]1$
	depicted below.
	\begin{center}
		\includegraphics[scale=\figurescale]{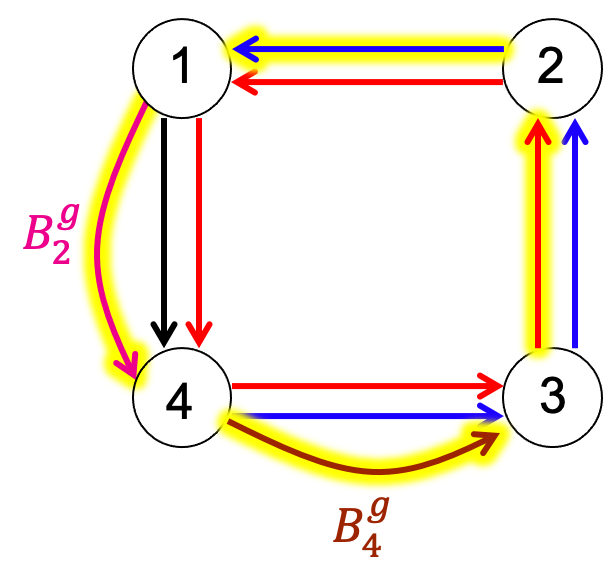}
	\end{center}
	To see that it is indeed a PI-cycle, notice that the set $\Bgtwo$ is released by $3\champ[\g]2$ and the set $\Bgfour$ is released by the edge $1\champ[\Bgtwo\mid\circ]4$.

	Assume now that $i=2$.
	Since we assume that $\Bgfour <_2\Bgtwo$, we have
	$%
	X_2<_2 \Tgthree \cup \Bgfour <_2 \Tgthree \cup \Bgtwo,%
	$
	where the first inequality follows from $2\champ[\Bgfour\mid\circ]3$ and the second is by \cancbty. Therefore, agent 2 envies $\Tgthree \cup \Bgtwo$ and thus there exists an agent $k\in[4]$ such that $k\champ[\Bgtwo\mid\circ]3$.
	Note that $3\nchamp[\Bgtwo\mid\circ]3$, since
	$%
	X_3>_3 X_2 =\Tgtwo \cup\Bgtwo >_3 \Tgthree \cup\Bgtwo,%
	$
	where the first inequality is by $3\nenvies2$ and the second inequality is by Observation~\ref{obs:T_k<T_j} and \cancbty.
	Thus, we may assume that either $1\champ[\Bgtwo\mid\circ]3$, $2\champ[\Bgtwo\mid\circ]3$ or $4\champ[\Bgtwo\mid\circ]3$, in each of these cases we obtain a PI-cycle:

	$$1\champ[\Bgtwo\mid\circ]3\champ[\g]2\champ[\h]1~,~2\champ[\Bgtwo\mid\circ]3\champ[\g]2~ \text{ or }~4\champ[\Bgtwo\mid\circ]3\champ[\g]2\champ[\h]1\envies4, $$
	respectively.  We are left with the case $i=3$, and we are done.
\end{proof}
\shortversionend

Assume $\Bgfour <_2 \Bgtwo$.  Claim \ref{clm:3_champ_B_4_3_V_2_champ_B_4_3} implies that 
$%
	X_3 <_3 \Tgthree\cup\Bgfour <_3 \Tgtwo\cup\Bgfour,%
$
where the second inequality is by Observation~\ref{obs:T_k<T_j} and \cancbty. Hence, the set $\Tgtwo\cup\Bgfour$ is envied by agent 3,
and in particular there is a most envious agent of $\Tgtwo\cup\Bgfour$ in the allocation $\allocs$. To complete the proof of Lemma \ref{lem:2-champB_4|->2} we show that this agent must be 2.

Assume towards contradiction that 1 is the most envious agent of $\Tgtwo\cup\Bgfour$, then $\Tgtwo\cup\Bgtwo = X_2 <_1 X_1 <_1 \Tgtwo\cup\Bgfour$ (the first inequality follows by $1\nenvies2$). Thus,
$
  \Bgtwo<_1\Bgfour
$ 
by \cancbty.
Moreover, by Observation~\ref{obs:T_k<T_j},
$
  	\Tgone<_1\Tgfour.
$

Therefore, by \cancbty
$$
	Z\subseteq \Tgone\cup\Bgtwo <_1 \Tgone\cup\Bgfour <_1 \Tgfour \cup \Bgfour = X_4,
$$
in contradiction to the assumption $Z>_1 X_4$.

If agent 3 (resp., 4) is the most envious agent of $\Tgtwo\cup\Bgfour$, then $3\champ[\Bgfour\mid\circ]2$ (resp., $4\champ[\Bgfour\mid\circ]2$) closes the PI-cycle $3\champ[\Bgfour\mid\circ]2\champ[\h]1\champ[\Bgtwo\mid\circ]4\champ[\g]3$
(resp., $4\champ[\Bgfour\mid\circ]2\champ[\h]1\champ[\Bgtwo\mid\circ]4$) in $M_{\allocs}$. Recall that $1\champ[\Bgtwo\mid\circ]4$ by Claim~\ref{clm:1-B_2|->4}.

Thus, we may assume that agent 2 is the most envious agent of $\Tgtwo\cup\Bgfour$ and the lemma now follows.
\end{proof}
%


\fullversionend
\shortversion
\subsection{$\allocs$ is Not Envy-Free}\label{sec:not-ef-case}

In this case, $M_\allocs$ must contain at least one envy edge. In the following lemma we derive the unique non-trivial configuration of the basic edges ({\sl i.e.}, envy edges, $\g$-edges and $\h$-edges) in $M_\allocs$:

\begin{lemma}[$\star$]\label{lem:envy_unique_structure}
	 Assuming $M_\allocs$ contains no PI cycle, and up to renaming of the agents, the possible $\g$ and $\h$ champions of 4 are only agents 1 and 2, and all other basic edges in $M_\allocs$ are exactly as depicted in the following figure:
\vspace{-1em}
	 \begin{center}
		\includegraphics[scale=\figurescale]{figures/With_envy_p02-4}
	\end{center}
\end{lemma}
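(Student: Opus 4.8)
The plan is to assume that $M_\allocs$ contains no PI cycle -- otherwise Lemma~\ref{lem:pareto-improvable} already produces a dominating EFX allocation -- and to deduce the stated configuration. By Corollary~\ref{cor:trivial_cycles} this assumption rules out, in particular, any self-$\g$-champion, any self-$\h$-champion, any envy cycle, and any cycle made up of envy edges together with at most one $\g$-champion edge and at most one $\h$-champion edge (call such a cycle \emph{short}). By Observation~\ref{obs:exists-champion} every agent has an incoming $\g$-champion edge and an incoming $\h$-champion edge, so picking one of each per agent and following them yields a $\g$-cycle $C_\g$ and an $\h$-cycle $C_\h$, each of length $2$, $3$ or $4$ (length $1$ being excluded by the absence of self-champions). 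Since $\allocs$ is not envy-free, $M_\allocs$ also contains at least one envy edge.

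The bulk of the proof is a case analysis on the lengths $|C_\g|,|C_\h|\in\{2,3,4\}$ and on how the envy edge(s) sit relative to these two cycles -- structurally parallel to the envy-free analysis of Section~\ref{sec:ef-case} and to the three-agent envy case of~\cite{chaudhury2020efx}. In each configuration one argues that unless the basic edges form exactly the claimed spine, a short PI cycle can be exhibited; the tools are: (a)~the no-short-PI-cycle constraint itself, which already forbids most chords; (b)~Theorem~\ref{thm:good-edge-or-external} together with Lemmas~\ref{lem:alg_start} and~\ref{lem:alg_step}, applied to whichever of $C_\g$, $C_\h$ (or a sub-cycle thereof) is a good cycle, to discover a generalized champion edge that closes a PI cycle; and (c)~Observation~\ref{obs:bottom_bundle_ineq} and Corollary~\ref{cor:no_single_external_source}, which kill the configurations where all discovered edges would be external with a common source. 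Carrying this out shows that, after renaming the agents, $M_\allocs$ contains the edges $2\champ[\g]1$, $3\champ[\g]2$, $4\champ[\g]3$, their $\h$-parallels $2\champ[\h]1$, $3\champ[\h]2$, $4\champ[\h]3$, and the envy edge $1\envies4$.

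It then remains to pin down agent $4$'s champions and to verify that no further basic edges occur. For the former: agent $4$ has no self-champion, and if $3\champ[\g]4$ then $3\champ[\g]4\champ[\h]3$ is a cycle with one $\g$-edge and one $\h$-edge -- a short PI cycle, contradiction; symmetrically $3\nchamp[\h]4$. Hence both the $\g$- and the $\h$-champion of $4$ lie in $\{1,2\}$, exactly the freedom allowed. For the latter: any basic edge not already listed, placed on top of the spine, closes an envy cycle or a short PI cycle (for instance a spurious champion edge $3\champ[\g]1$ closes $3\champ[\g]1\envies4\champ[\h]3$, and a spurious envy edge $2\envies1$ closes $2\envies1\envies4\champ[\h]3\champ[\g]2$), so none occurs; in particular $2\nenvies1$, $3\nenvies2$, $4\nenvies3$, which are checked the same way. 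The main obstacle is the case analysis of the middle paragraph: there are many joint configurations of the two champion cycles and the envy edge(s), and in each one must locate a good sub-cycle to feed into the edge-discovery machinery of Section~\ref{sec:edge-discovery} and assemble the resulting generalized champion edges into a PI cycle -- this is the bookkeeping deferred to the appendix.
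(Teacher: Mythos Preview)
Your approach is quite different from the paper's, and considerably heavier than necessary. The paper does \emph{not} perform a case analysis on the pair $(|C_\g|,|C_\h|)$. Instead it runs a short linear chain of deductions starting from the envy edge: rename so that $1\envies 4$; then the $\g$-champion of $1$ cannot be $1$ (self) or $4$ (closes a PI $2$-cycle with the envy edge), so after renaming it is $2$; then the $\h$-champion of $2$ cannot be $1,2,4$ (each closes a short PI cycle with the two edges already placed), so it is $3$; and so on. Each step pins down one edge by exhausting the three or four candidates, and the only place the edge-discovery machinery (Theorem~\ref{thm:good-edge-or-external}) is invoked is a single sub-claim ruling out $1\champ[\g]3$ and $1\champ[\h]3$ --- there one has a good $\g$-cycle on $\{1,2,3\}$ and a good/external $\Bgone$-edge finishes it. Everything else is just Corollary~\ref{cor:trivial_cycles}.

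Your third paragraph (restricting agent $4$'s champions to $\{1,2\}$ and checking that every stray basic edge closes a short PI cycle) is correct and matches the paper. The problem is your middle paragraph: it is a plan, not a proof, and the plan is the wrong one. A full $(|C_\g|,|C_\h|)$ case split crossed with the possible locations of envy edges is many configurations, and in each you would still need to recover the exact spine --- whereas the paper's incremental argument is about a page. If you try to carry your plan out you will find yourself repeatedly rediscovering the same deductions the paper does once; the cycle-length split adds no leverage here because the envy edge, not the cycle lengths, is what anchors the labeling.
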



By Observation~\ref{obs:g_notin_bottom},
	agents 1,2 and 3 are $\g$ and $\h$ decomposed by their unique champion, 2,3 and 4, respectively.
In the following we denote the $\g$ (resp., $\h$) -decomposition by $X_j = T^g_j\cupdot B^g_j$ (resp. $X_j = T^h_j\cupdot B^h_j$), for $j\in\{1,2,3\}$.
%
Since we renamed the agents depending on the graph structure, agent $\vip$ can be any one of the agents. 
In what follows we assume that $\vip = 2$.  The case $\vip \neq 2$ is deferred to Appendix \ref{sec:4-agents-Not-EF} due to space limitations.
%
We start with the following simple observation.
\shortversionend

\fullversion
\subsubsection{$\boldsymbol{\vip = 2}$}
Recall that the structure of $M_\allocs$ (restricted to envy, $\g$ and $\h$ edges) is as follows:
\begin{center}
	\includegraphics[scale=\figurescale]{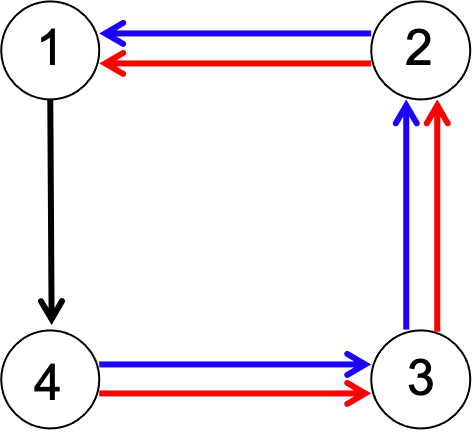}
\end{center}
with only the $\g$ and $\h$ edges going into agent 4 missing (where the source of either of these can be either 1 or 2).
\fullversionend

\begin{observation}\label{obs:1_4_prefer_X_4}
	For $i \in \{1,4\}$ we have $X_4 >_i \max_i\{X_2,\Thtwo\cup \{\h\}, \Tgone \cup \{\g\}, X_3\}$ 
\end{observation}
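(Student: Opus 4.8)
The plan is to prove Observation~\ref{obs:1_4_prefer_X_4} directly from the structure of the basic edges in $M_\allocs$ as pinned down by Lemma~\ref{lem:envy_unique_structure}, together with the elementary ``non-champion doesn't envy top half'' observations. Recall that in that structure we have $1 \envies 4$ and (reading off the figure) $4 \nenvies 3$, $4 \nenvies 2$, $4 \nenvies 1$ among the envy edges, while agent $4$'s unique $\g$- and $\h$-champions are among $\{1,2\}$, so in particular $4$ is \emph{not} a $\g$- or $\h$-champion of $1$, $2$, or $3$. The key point is that each of the four bundles $X_2$, $\Thtwo\cup\{\h\}$, $\Tgone\cup\{\g\}$, $X_3$ is something that neither agent $1$ nor agent $4$ envies, and then $X_4$ dominates each of them for these two agents via the envy edge $1 \envies 4$ (for agent $1$) and trivially (for agent $4$).

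For agent $4$: by the structure, $4 \nenvies 2$ and $4 \nenvies 3$, so $X_4 >_4 X_2$ and $X_4 >_4 X_3$ (valuations are non-degenerate, so non-envy is strict). Since $4$ is not a $g$-champion of $1$, Observation~\ref{obs:non-champion-doesnt-envy-top-half} (applied with the $\g$-decomposition $X_1 = \Tgone \cupdot \Bgone$ induced by $1$'s unique $\g$-champion, agent $2$) gives $X_4 >_4 \Tgone \cup \{\g\}$; similarly, since $4$ is not an $\h$-champion of $2$, Observation~\ref{obs:non-champion-doesnt-envy-top-half} applied to the $\h$-decomposition $X_2 = \Thtwo \cupdot \Bhtwo$ gives $X_4 >_4 \Thtwo \cup \{\h\}$. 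Hence $X_4 >_4 \max_4\{X_2, \Thtwo\cup\{\h\}, \Tgone\cup\{\g\}, X_3\}$.

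For agent $1$: here I exploit $1 \envies 4$, i.e.\ $X_1 <_1 X_4$. It therefore suffices to show $X_1 \geq_1$ each of the four bundles, which follows since agent $1$ envies none of them. Indeed $1 \nenvies 2$ and $1 \nenvies 3$ give $X_1 >_1 X_2$ and $X_1 >_1 X_3$. For $\Tgone \cup \{\g\}$: since $1$ is not a $\g$-champion of itself ($1 \nchamp[\g] 1$, as there are no self-champions), Observation~\ref{obs:non-champion-doesnt-envy-top-half} gives $X_1 >_1 \Tgone \cup \{\g\}$. For $\Thtwo \cup \{\h\}$: since $1$ is not an $\h$-champion of $2$ (the unique $\h$-champion of $2$ is agent $3$), Observation~\ref{obs:non-champion-doesnt-envy-top-half} gives $X_1 >_1 \Thtwo \cup \{\h\}$. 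Combining, $X_4 >_1 X_1 >_1 \max_1\{X_2, \Thtwo\cup\{\h\}, \Tgone\cup\{\g\}, X_3\}$, which is the claim.

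The only real care needed is bookkeeping: making sure that the relevant non-champion relations ($4 \nchamp[\g] 1$, $4 \nchamp[\h] 2$, $1 \nchamp[\g] 1$, $1 \nchamp[\h] 2$) genuinely hold in the pinned-down structure of Lemma~\ref{lem:envy_unique_structure}, and that the $\g$- and $\h$-decompositions of $X_1$ and $X_2$ referenced here are exactly the ones fixed in the remark following that lemma (agents $1,2,3$ are $\g$- and $\h$-decomposed by their unique champions $2,3,4$). None of this is hard; the ``obstacle'', such as it is, is just reading the structure correctly and invoking Observation~\ref{obs:non-champion-doesnt-envy-top-half} with the right decomposer. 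I would present the proof as the two short bullet-free paragraphs above, one for agent $4$ and one for agent $1$.
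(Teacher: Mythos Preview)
Your proof is correct and follows essentially the same approach as the paper: for each $i\in\{1,4\}$ you use the non-envy edges and the non-championship relations $i\nchamp[\g]1$, $i\nchamp[\h]2$ (via Observation~\ref{obs:non-champion-doesnt-envy-top-half}) to bound each of the four bundles by $X_i$, and then for $i=1$ chain through $X_4>_1 X_1$ using $1\envies 4$. The paper's proof is more terse but identical in substance.
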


\begin{proof}
	We have $X_1 >_1 X_2$, $\Thtwo\cup \{\h\}$, $\Tgone \cup \{\g\}$, $X_3$ since $1 \nenvies 2$, by Observation \ref{obs:non-champion-doesnt-envy-top-half} (since $1 \nchamp[\h] 2$), again by Observation \ref{obs:non-champion-doesnt-envy-top-half} (since $1 \nchamp[\g] 1$), $1 \nenvies 3$, respectively. The symmetric claim for agent 4 holds for the same reasons.
\end{proof}
By the above observation, we can define for $i \in \{1,4\}$ the bundle $Z_i$ to be
the subset of $X_4$ obtained by iterative removal of the item of least marginal value to agent $i$ as long as the leftover bundle is still better than $\max_i\{X_2,\Thtwo\cup \{\h\}, \Tgone \cup \{\g\}, X_3\}$ for $i$.

By Lemma \ref{lem:remove_least_marginal_item_1_by_1}, $Z_i$ is a {\bf smallest size} subset of $X_4$ s.t. $Z_i >_i \max_i\{X_2,\Thtwo\cup \{\h\}, \Tgone \cup \{\g\}, X_3\}$.
\fullversion
Note that the fact that $Z_i$ is of minimal cardinality holds trivially under additive valuations. Lemma \ref{lem:remove_least_marginal_item_1_by_1} generalizes this fact to all \valclass\ valuations.

\fullversionend
Denote $w = 1, \ell = 4$ if $\left|Z_1\right| < \left|Z_4\right|$ and $w = 4, \ell = 1$ if  $\left|Z_1\right| \geq \left|Z_4\right|$.  Define the allocation $\allocs'$ as:

\begin{center}
\begin{minipage}{\textwidth}
\begin{center}
\renewcommand{\arraystretch}{1.2}
$\allocs'$=
	\bundle{w}{X_4}\quad
	\bundle{\ell}{\max_\ell\{X_2,\, \Thtwo \cup \h,\, \Tgone\cup \g,\, X_3 \}}\quad
	\bundle{2}{\begin{array}{c}
		\Tgone \cup \g \\
		\mbox{or}^\ast\; X_2
		\end{array} }\quad
	\bundle{3}{\begin{array}{c}
		X_3 \mbox{ or}^\dagger \\
		\Thtwo \cup \h
		\end{array} }\quad
\end{center}
\noindent{\scriptsize $\ast$ $X'_2=\Tgone \cup \{\g\}$ unless $X'_\ell=\Tgone \cup \{\g\}$, in which case $X'_2=X_2$;}\hfill
\noindent{\scriptsize $\dagger$ $X'_3=X_3$ unless $X'_\ell=X_3$, in which case $X'_3=\Thtwo \cup \{\h\}$.}
\end{minipage}
\end{center}


%

\noindent\textbf{The Only Possible Strong Envy in $\mathbf{X'}$ is From $\boldsymbol{\ell}$ to $\boldsymbol{w}$:} For $i\in\{2,3,w\}$, agent $i$ does not strongly envy any of the sets  $X_4, \Tgone \cup \{\g\}, \Thtwo \cup \{\h\}, X_2, X_3$, as agent $i$ is not worse off and she did not strongly envy  any of those before (by definition of basic championship, and the fact that $\allocs$ is EFX). Since this list includes all possible sets in $\allocs'$, it follows that agents $2,3$ and $w$ envy no other agent in $\allocs'$.
Finally, $\ell$ does not envy agents 2 and 3 since she receives her most valued bundle out of 4 options, (some of) the leftovers of which go to agents $2,3$.

\noindent\textbf{If $\boldsymbol{X'_2 = \Tgone \cup \{\g\}}$:}
In this case agent 2 is better off relative to $\allocs$.  Update $\mathbf{X'}$ by replacing $X'_w$ with $Z_w$.  Since $\left|Z_w\right| \leq \left|Z_\ell\right|$,
and since
$Z_\ell$ is a smallest size bundle such that $Z_\ell >_\ell \max_\ell\{X_2,\Thtwo\cup \{\h\}, \Tgone \cup \{\g\}, X_3\} = X'_\ell$ it follows immediately that $Z_w \setminus \{h\} <_\ell X'_\ell$ for any $h \in Z_w$.  Thus $\ell$ does not strongly envy $w$ and we obtain an EFX allocation where agent 2 is better off 
 and we are done.

\noindent\textbf{If $\boldsymbol{X'_2 = X_2}$:}  In this case the allocation $\mathbf{X'}$ is
\begin{center}
	\renewcommand{\arraystretch}{1.2}
	$\allocs'$=\bundle{w}{X_4}\quad
	\decomp{\ell}{\Tgone}{\g}\quad
	\bundle{2}{X_2}\quad
	\bundle{3}{X_3}	
\end{center}
and observe that $\h$ and all items of $\Bgone$ are unallocated.  Take some arbitrary item $\b \in \Bgone$
\shortversion 
(note that $\Bgone \neq \emptyset$ since otherwise, as valuations are non-degenerate, we have $X_1 <_1 X_1 \cup \{\g\} = \Tgone \cup \{\g\}$, in contradiction to Observation \ref{obs:non-champion-doesnt-envy-top-half} since $1 \nchamp[\g] 1$ in $\allocs$).
\shortversionend
\fullversion
. Recall that $\Bgone$ is non empty, as otherwise 1 would have been a self $\g$-champion in the original allocation $\allocs$.
\fullversionend 
We split to cases according to whether $\mathbf{X'}$ is EFX or not, {\sl i.e.}, whether $\ell$ strongly envies $w$ or does not. Note that even if $\mathbf{X'}$ is EFX we are not done yet since agent 2 is not better off relative to $\allocs$.

\vspace{1ex}
\noindent\textbf{Case: $\mathbf{X'}$ is EFX.}
\vspace{1ex}

In $\mathbf{X'}$ we have $2 \envies \ell$ since $2 \champ[\g] 1$ in $\allocs$.  We also have $\ell \envies w$ by Observation \ref{obs:1_4_prefer_X_4}.  Besides that there is no more envy in $\mathbf{X'}$.

Consider a $\b$-champion of agent 2 in $\mathbf{X'}$ (such exists by Observation \ref{obs:exists-champion}).  If $2$, $\ell$ or $w$ are such, then we get the following respective PI cycles that include agent 2:
$$2 \champ[\b] 2, 2 \envies \ell \champ[\b] 2, 2 \envies \ell \envies w \champ[\b] 2$$
and thus by Lemma \ref{lem:pareto-improvable} there is an EFX allocation $\mathbf{Y}$ in which agent 2 is better off relative to $\allocs$, and we are done (recall that $X'_2 = X_2$).  Thus, we can assume that $3$ is the unique $\b$-champion of 2,
and $M_{\mathbf{X'}}$ currently has the structure:
\shortversion
\vspace{-1em}
\shortversionend
\begin{center}
	\includegraphics[scale=\figurescale]{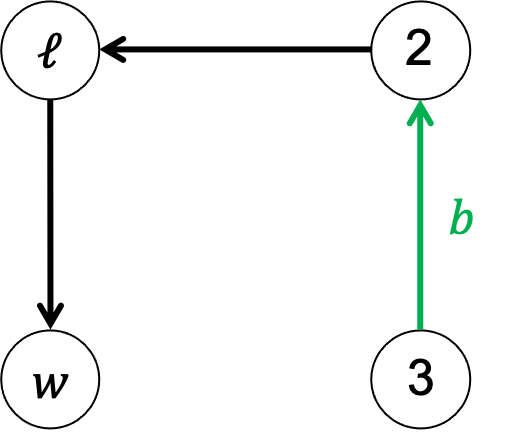}
\end{center}

\begin{claim}
Restricted to agents 2,3,4, the most envious agent of $X_2 \cup \{\b\}$ is agent 3.
\end{claim}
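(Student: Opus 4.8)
The plan is to show that among agents $2,3,4$, agent $3$ is the unique envious agent of $X_2 \cup \{\b\}$ in the allocation $\allocs'$, which immediately makes $3$ the most envious agent (restricted to those three agents). This is the natural route because "most envious agent" requires both (i) that the agent envies the set and (ii) that no agent strongly envies it; but once we isolate a \emph{unique} envious agent among $2,3,4$, uniqueness already handles the interaction within this triple.

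First I would recall that in $\allocs'$ (in this case) we have $X'_2 = X_2$, $X'_3 = X_3$, $X'_w = X_4$, $X'_\ell = \Tgone \cup \{\g\}$, and that $\b \in \Bgone$ is currently unallocated. The key structural fact already established is that $3 \champ[\b] 2$ in $\allocs'$, i.e.\ agent $3$ \emph{is} a $\b$-champion of agent $2$, meaning $3$ is most envious of $X_2 \cup \{\b\}$ in $\allocs'$ when all four agents are considered; in particular $3$ envies (a subset of) $X_2 \cup \{\b\}$. Since we just argued agent $3$ is the \emph{unique} $\b$-champion of agent $2$ among the four agents (agents $2$, $\ell$, $w$ were ruled out via the PI-cycles $2 \champ[\b] 2$, $2 \envies \ell \champ[\b] 2$, $2 \envies \ell \envies w \champ[\b] 2$), agent $3$ is in particular the unique most envious agent of $X_2 \cup \{\b\}$ among all four.

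It then remains to rule out that, restricted to $\{2,3,4\}$, some agent other than $3$ (namely $2$ or $4$) could be the most envious agent of $X_2 \cup \{\b\}$ — which, since $3$ already envies a minimally-envied subset and no agent strongly envies that subset, reduces to checking that neither $2$ nor $4$ envies $X_2 \cup \{\b\}$ in $\allocs'$. For agent $2$: $X'_2 = X_2 <_2 X_2 \cup \{\b\}$ by monotonicity and non-degeneracy, so this needs a different argument — but note that agent $2$ cannot be a $\b$-champion of $2$ (no self-$\b$-champion, else a PI cycle), and more directly we use that if $2$ were the most envious agent of $X_2\cup\{\b\}$ restricted to $\{2,3,4\}$ then since $3$ also envies it and no agent strongly envies the minimally-envied subset, both would be most envious, contradicting that $3$ is the unique most envious agent among all four. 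For agent $4$ ($=w$ or $\ell$): I would use Observation~\ref{obs:1_4_prefer_X_4}, which gives $X_4 >_4 X_2$; combined with the monotone/non-degenerate structure and the fact that $4 \nchamp[\b] 2$ (agent $3$ is the unique $\b$-champion), agent $4$ does not envy $X_2 \cup \{\b\}$ in $\allocs'$ — if she did, she'd be a $\b$-champion of $2$.

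The main obstacle I anticipate is the agent-$2$ case: because $\b \notin X_2$ and valuations are monotone, agent $2$ \emph{does} strictly prefer $X_2 \cup \{\b\}$ to her own bundle $X'_2 = X_2$, so "does not envy" is literally false for the full set. The resolution is that "most envious agent" is about a minimally-envied \emph{subset} $T \subseteq X_2 \cup \{\b\}$ that no agent strongly envies; agent $2$ can envy $X_2 \cup \{\b\}$ but not be \emph{most} envious of it, because any such $T$ that agent $2$ envies would have to exclude $\b$ (as $X_2$ itself is not envied by $2$) — but then agent $3$ would strongly envy... — so the cleanest argument is simply: $3$ is the unique most envious agent of $X_2\cup\{\b\}$ among \emph{all four} agents (already shown), hence a fortiori the unique one among $\{2,3,4\}$, so the claim follows with no further case analysis. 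I would phrase the proof around this observation, using the preceding PI-cycle elimination as the substantive content and reducing the restriction-to-three-agents statement to it.
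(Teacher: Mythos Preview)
There is a genuine gap: you have misidentified which allocation the claim concerns. The claim is about the \emph{original} allocation $\allocs$, not about $\allocs'$. This is clear from how the claim is used immediately afterward (the cycle $1 \envies 4 \champ[\h] 3 \champ[\b] 2 \champ[\g] 1$ is considered in $M_\allocs$, and the follow-up text says explicitly ``since 3 is the most envious agent in $\allocs$ restricted to agents 2,3,4''). Your argument --- that $3$ is the unique $\b$-champion of $2$ among all four agents in $\allocs'$, hence a fortiori among $\{2,3,4\}$ --- establishes the statement only for $\allocs'$, which is immediate from the preceding text and carries no new content.

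The actual substance of the claim is the \emph{transfer} from $\allocs'$ to $\allocs$, and the reason the restriction to $\{2,3,4\}$ appears is precisely that agent~1 may be \emph{worse off} in $\allocs$ than in $\allocs'$ (if $w=1$ she had $X_4$ in $\allocs'$ but only $X_1$ in $\allocs$), so she could become a $\b$-champion of~2 in $\allocs$. The paper's proof observes that in the transition from $\allocs'$ to $\allocs$, agents~2 and~3 keep the same bundles ($X'_2=X_2$, $X'_3=X_3$) and agent~4 is weakly better off (she has $X_4$ in $\allocs$, and either $X_4$ or $\Tgone\cup\{\g\}\leq_4 X_4$ in $\allocs'$). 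Hence the minimally-envied subset witnessing $3\champ[\b]2$ in $\allocs'$ is still envied by~3 and still not strongly envied by~2 or~4 in $\allocs$, so~3 remains most envious when restricted to $\{2,3,4\}$. Your proposal never makes this comparison between the two allocations, and the long detour about whether agent~2 envies $X_2\cup\{\b\}$ is a red herring.
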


\begin{proof}
	Note that in the transition from $\mathbf{X'}$ to $\allocs$, agents 2 and 3 stay the same, and agent 4 is either better off or stays the same as well (depending on whether $\ell = 4$ or $w = 4$).  Thus, since 3 is most envious of $X_2 \cup \{\b\}$ in $\mathbf{X'}$, we get that 3 is also most envious of that set in $\allocs$ (restricted to agents 2,3,4).  
\end{proof}

Consider the cycle $1 \envies 4 \champ[\h] 3 \champ[\b] 2 \champ[\g] 1$ in $M_\allocs$: (note that $2\champ[\g]1$ releases $\b\in\Bgone$)
\begin{center}
	\includegraphics[scale=\figurescale]{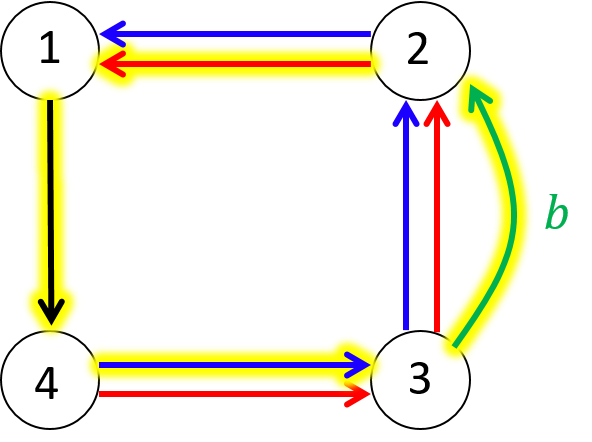}
\end{center}

This cycle might not really exist, since it could be the case that 1 is the real $\b$-champion of agent 2.  Assume for now that this is not the case, {\sl i.e.}, $3\champ[\b]2$ does exist in $M_{\allocs}$.  Then this is a PI cycle, and by Lemma \ref{lem:pareto-improvable}, there is an allocation $\mathbf{Y}$ that Pareto-dominates $\allocs$.  Note that in $\mathbf{Y}$ agent 1 receives $X_4$.  We therefore claim that $\mathbf{Y}$ is EFX even if 1 had been the most envious agent of $X_2 \cup \{\b\}$, and this finishes the proof.  Since 3 is the most envious agent in $\allocs$ restricted to agents 2,3,4, the only issue that could prevent $\mathbf{Y}$ from being EFX is if 1 strongly envies 3 in $\mathbf{Y}$.  But this is not the case since
$$ Y_1 = X_4 \geq_1 X'_1 >_1 (X_2 \cup \{\b\}) \setminus \discard[\b]{3}{2} = Y_3 $$ where the second inequality holds since $1 \nchamp[\b] 2$ in $\mathbf{X'}$.

\vspace{1ex}
\noindent\textbf{Case: $\mathbf{X'}$ is Not EFX.}   
\vspace{1ex}
  
Recall that in this case $\ell$ strongly envies $w$.  Define the set $Z \subseteq X_4$ as the subset of $X_4$ obtained by iterative removal of the item of least marginal value to agent $w$, until agent $\ell$ does not strongly envy $w$ anymore.
We obtain the allocation $\mathbf{X''}$ defined as follows:

\begin{center}
\renewcommand{\arraystretch}{1.2}
	$\allocs''$=\bundle{w}{Z}\quad
			\decomp{\ell}{\Tgone}{\g}\quad
			\bundle{2}{X_2}\quad
			\bundle{3}{X_3}	
\end{center}


Clearly, $\ell$ does not strongly envy $w$ in $\mathbf{X''}$.  
Note that by definition of $Z_\ell$ and $Z$, it must be the case that $\left|Z_\ell\right| \leq \left|Z\right|$ and consequently $\left|Z_w\right| \leq \left|Z_\ell\right| \leq \left|Z\right|$.  Since both $Z$ and $Z_w$ are obtained from $X_4$ by iterative removal of the item with least marginal contribution, we have $Z_w \subseteq Z$, and
we conclude that $w$ does not envy any other agent in $\allocs''$ by monotonicity of $v_w$ and the definition of $Z_w$.
Thus $\allocs''$ is an EFX allocation. 

Note however that $\ell$ might still envy $w$.  If this is indeed the case, then 
from here we can proceed by essentially repeating the previous case with $\mathbf{X''}$ instead of $\mathbf{X'}$. 
We thus assume that $\ell \nenvies w$. Let $\c \in X_4 \setminus Z$ be the last item to be removed from $X_4$ to obtain $Z$.  By definition of $Z$, $\ell$ strongly envies $Z\cup \{\c\}$.  Recall however that $\c$ is the least valued item in $Z\cup \{\c\}$ according to $w$.  Thus, for any item $x \in Z \cup \{\c\}$ we have
$$Z = (Z\cup \{\c\})\setminus \{\c\} >_w (Z\cup \{\c\})\setminus \{x\}$$
establishing that $w$ does not strongly envy $Z \cup \{\c\}$ in $\mathbf{X''}$.  We conclude that $\ell$ is the most envious agent of $X''_w \cup \{\c\}$ (recall that $2,3$ do not envy $X_4$ which is a superset), i.e., $\ell \champ[\c] w$.

From here we can, as above, essentially repeat the proof of the previous section with $\mathbf{X''}$ instead of $\mathbf{X'}$ and the champion edge $\ell \champ[\c] w$ instead of the envy edge $\ell \envies w$.
\section{Simplification and Extension of Known Results}\label{sec:extensions}

In this section we simplify the proofs of full EFX existence for 3 additive agents \cite{chaudhury2020efx} and for $n$ agents with one of two fixed additive valuations \cite{Mahara2020}. 
Moreover, our proofs extend beyond additive to all \valclass\ valuations.
Our proofs demonstrate the versatility of our techniques.
\shortversion
The case of two fixed additive valuations is deferred to Appendix~\ref{sec:2-simple},
and the rest of the section is devoted to the result by \cite{chaudhury2020efx}; namely EFX existence for three agents.
\shortversionend

\fullversion
\subsection{EFX for 3 \valclass\ valuations}
\label{sec:3-simple}
\fullversionend
 
By Lemma \ref{lem:dominate-implies-progress} it is sufficient to prove the following theorem.
\begin{theorem}
	Let $\allocs$ be an EFX allocation for 3 agents with nice cancelable valuations, with at least one unallocated item.  Then, there exists an EFX allocation $\mathbf{Y}$ that dominates $\allocs$.
\end{theorem}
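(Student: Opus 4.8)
The plan is to run, in miniature, the four-agent argument of Section~\ref{sec:4-agents}. By Lemma~\ref{lem:dominate-implies-progress} it suffices, given an EFX allocation $\allocs$ with an unallocated good, to exhibit a dominating EFX allocation, so fix one unallocated good $g$ and build the champion graph $M_\allocs$ with respect to $g$. If $M_\allocs$ has a basic PI cycle — an envy cycle, a self-$g$-champion, or an envy cycle through a single $g$-champion edge — Corollary~\ref{cor:trivial_cycles} finishes the proof, so assume it does not. By Observation~\ref{obs:exists-champion} every agent has an incoming $g$-champion edge, so $M_\allocs$ contains a $g$-cycle; take a shortest one $C_g$, so $|C_g|\in\{2,3\}$ (length $1$ is a self-champion, already excluded). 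The proof then splits on whether $\allocs$ is envy-free.

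\emph{Envy-free case.} Now $C_g$ is a good $g$-cycle: it has no parallel envy edge (there is no envy at all), no repeated agent, and no internal $g$-champion edge (a shorter $g$-cycle would contradict minimality). If $|C_g|=3$, then $C_g$ spans all three agents, so no external edges exist and Theorem~\ref{thm:good-edge-or-external} yields a good $B^g_j$-edge for every $j$; a short case analysis on which good edges appear shows that two of them, together with one $g$-champion edge of $C_g$, always close a Pareto-improvable cycle. For example, with $C_g = 1\champ[g]2\champ[g]3\champ[g]1$, the good edges $2\champ[B^g_1\mid\circ]3$ and $3\champ[B^g_3\mid\circ]1$ close the PI cycle $2\champ[B^g_1\mid B^g_3]3\champ[B^g_3\mid B^g_1]1\champ[g]2$: the edge $3\champ[B^g_3\mid B^g_1]1$ releases $B^g_1$, the edge $2\champ[B^g_1\mid B^g_3]3$ releases $B^g_3$, and $g\in U$. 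If $|C_g|=2$, say $C_g = 1\champ[g]2\champ[g]1$ with agent $3$ off the cycle, the only possible good $B^g_1$-edge is $1\champ[B^g_1\mid B^g_2]2$ — the remaining candidate, a self-loop at agent $2$, would force agent $2$ to envy agent $1$ (by Observation~\ref{obs:T_k<T_j} and cancelability) — and it closes the PI cycle $1\champ[B^g_1\mid B^g_2]2\champ[g]1$, since $2\champ[g]1$ releases $B^g_1$; symmetrically for a good $B^g_2$-edge. If instead both the $B^g_1$- and $B^g_2$-edges guaranteed by Theorem~\ref{thm:good-edge-or-external} are external, they share agent $3$ as source, contradicting Corollary~\ref{cor:no_single_external_source} since agent $3$ envies nobody. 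In every case a PI cycle exists and Lemma~\ref{lem:pareto-improvable} concludes.

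\emph{Envy case.} This is the substantive part and follows the template of Section~\ref{sec:not-ef-case}. Using the absence of basic PI cycles, a short analysis pins the basic edges down, up to renaming of the agents, to one of a handful of shapes: a length-$2$ $g$-cycle carries no envy between its two members, the third agent's incoming $g$-champion edge leaves one of them (say $1\champ[g]2\champ[g]1$ and $1\champ[g]3$), and every envy edge is incident to the third agent; or $C_g$ has length $3$ and carries a single parallel envy edge. Because this renaming is forced by the graph, the lexicographically first agent $\vip$ may be any of the three, so for each identity of $\vip$ and each residual shape we produce either a Pareto-improvable cycle through $\vip$ — built by feeding the basic edges into the edge-discovery lemmas (Lemmas~\ref{lem:alg_start} and~\ref{lem:alg_step}) to manufacture the generalized champion edge that closes it — or an explicit EFX allocation, obtained by reshuffling bundles along the basic chain, in which $\vip$ strictly improves while the other two agents receive released or leftover bundles that nobody strongly envies.

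The main obstacle is the envy case: one has to verify that each reshuffled allocation remains EFX and that the generalized champion edges needed to close the Pareto-improvable cycles genuinely exist, both of which hinge on repeated use of cancelability together with Observations~\ref{obs:non-champion-doesnt-envy-top-half}, \ref{obs:T_k<T_j} and \ref{obs:bottom_bundle_ineq}. The promised simplification over~\cite{chaudhury2020efx} comes from the fact that, with three agents and just one unallocated good, the resulting case tree is very small.
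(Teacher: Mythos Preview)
Your envy-free case is essentially the paper's proof of Lemma~\ref{lem:3-players-no-envy}: the same reduction to a shortest good $g$-cycle, the same use of Theorem~\ref{thm:good-edge-or-external} and Corollary~\ref{cor:no_single_external_source} for $|C_g|=2$, and the same good-edge chase for $|C_g|=3$. Your example for $|C_g|=3$ is one of the cases; the paper's write-up is slightly more systematic (first dispose of any good $B_i$-edge of the form $i\champ[B_i\mid\circ]\pred(i)$, then observe all remaining good edges are parallel to $C_g$ and combine any two of them with a single $g$-edge), but the content is the same.

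For the envy case, however, the paper does \emph{not} run a miniature of Section~\ref{sec:not-ef-case}. It simply observes that the original envy-case proof of~\cite{chaudhury2020efx} already goes through for nice cancelable valuations, because the only property of additivity used there is exactly cancelability --- with a single exception (their sub-case ``$a=2$'', where a smallest-size subset $Z_i$ is needed), which is patched by defining $Z_i$ via iterative removal of the least-marginal item and invoking Lemma~\ref{lem:remove_least_marginal_item_1_by_1}. So the paper's ``proof'' of the envy case is a one-line reduction to~\cite{chaudhury2020efx} plus one local fix.

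Your proposed route --- pin down the residual basic-edge shapes and then, for each identity of $\vip$, either close a PI cycle through $\vip$ or hand-build a reshuffled EFX allocation --- is in the spirit of Section~\ref{sec:not-ef-case} and is plausible, but what you have written is a plan, not a proof. In particular, your structural classification (``a length-$2$ $g$-cycle with the third agent off it and all envy incident to her, or a length-$3$ $g$-cycle with a single parallel envy edge'') is asserted without derivation, and the per-$\vip$ constructions are not given at all; you yourself flag this as ``the main obstacle.'' If you want to carry this out, the $3$-agent analogue of Lemma~\ref{lem:envy_unique_structure} is the right starting point, and the case tree is indeed much smaller than in Section~\ref{sec:not-ef-case} --- but the paper sidesteps all of it by citing~\cite{chaudhury2020efx}.
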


By assumption there is an item $g$ that is unallocated in $\allocs$.
The original proof in \cite{chaudhury2020efx} distinguishes between two main cases according to whether $\allocs$ is envy-free or not.
In the case where $\allocs$ is not envy-free the original proof extends almost immediately to \valclass\ valuations (with one exception, see below).  The property of an additive valuation $v$ that is applied there is 
that for any bundles $S,T,R$ such that $R$ is disjoint from both $S$ and $T$,
$ v(S \cup R) > v(T \cup R) \Leftrightarrow v(S) > v(T)$, 
and this property also holds for \valclass\ valuations as pointed out in Section~\ref{sec:preliminaries}.

The one exception is in Section 4.2 in \cite{chaudhury2020efx} (right after Observation 16), in which their proof requires a subtle adjustment, as follows.
In their sub-case ``$a=2$'' they define the set $Z_i$, for $i =1,3$, to be a smallest subset of $X_3$ such that $Z_i >_i \max_i(X_1 \setminus G_{21} \cup g, X_2)$.
In our case, the set $Z_i$ needs to be defined as the subset of $X_3$ obtained by iteratively removing the item of least marginal value to agent $i$, as long as the leftover bundle is still better than $\max_i(X_1 \setminus G_{21} \cup g, X_2)$ from agent $i$'s point of view.  Lemma \ref{lem:remove_least_marginal_item_1_by_1} then guarantees that the new $Z_i$ is indeed a smallest-size subset of $X_3$ as in the original proof.  The rest of the argument then follows as in the original proof.

We now turn to the envy-free case.  In the original proof the following property of an additive valuation $v$ is used to prove the sub-case where $M_\allocs$ contains a $g$-cycle of size 2:  if $v(S) > v(T)$ and $v(S) > v(R)$, where $R$ and $T$ are disjoint, then $2\cdot v(S) > v(T \cup R)$. This property does \emph{not} hold in general for \valclass\ valuations ({\sl e.g.}, $v$ is multiplicative and $v(S)=5, v(R)=v(T)=4$), thus their proof does not extend to \valclass\ valuations.

In what follows we present a proof, based on our new techniques, that applies to all \valclass\ valuations, and is also substantially simpler than the original proof.
\begin{lemma}\label{lem:3-players-no-envy}
	Let $\allocs$ be a partial envy-free allocation on 3 agents. Then there exists an EFX allocation $\mathbf{Y}$ that Pareto dominates it. 
\end{lemma}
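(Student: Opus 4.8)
The plan is to use the machinery of good $g$-cycles and the edge-discovery lemmas to produce a Pareto-improvable cycle or edge set in $M_\allocs$, which by Lemma~\ref{lem:pareto-improvable} gives the desired Pareto-dominating EFX allocation. First I would invoke Observation~\ref{obs:exists-champion}: every agent has an incoming $g$-champion edge, so $M_\allocs$ contains a $g$-cycle. If there is a self-$g$-champion we are done immediately by Corollary~\ref{cor:trivial_cycles}. Since $\allocs$ is envy-free there are no envy edges, so any $g$-cycle of length $\geq 2$ is automatically a good $g$-cycle. With only $3$ agents, the good $g$-cycle $C_g$ has length either $2$ or $3$, so the proof splits into these two cases.

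In the length-$3$ case, $C_g$ is a Hamiltonian cycle $1 \champ[g] 2 \champ[g] 3 \champ[g] 1$ with no envy edges and no internal $g$-champion edges. By Theorem~\ref{thm:good-edge-or-external}, for each agent $j$ on the cycle there is either a good $B_j^g$-edge in $C$ or an external $B_j^g$-edge. But with only $3$ agents there are no agents outside the cycle, so every such edge must be a good edge. A good $B_j^g$-edge, by Definition~\ref{def:good_edge}, closes a Pareto-improvable cycle together with an appropriate arc of $C_g$: the good edge $i \champ[B_k^g \mid \circ] j$ has $k$ on the path from $j$ to $i$, and the $g$-champion edge into $k$ (which releases $B_k^g$) lies on that path, so the released bottom half-bundle supplies exactly the needed set. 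Hence we obtain a PI cycle and conclude by Lemma~\ref{lem:pareto-improvable}.

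In the length-$2$ case, assume w.l.o.g. $C_g = 1 \champ[g] 2 \champ[g] 1$. By Theorem~\ref{thm:good-edge-or-external} there is a good or external $B_1^g$-edge into $2$ and a good or external $B_2^g$-edge into $1$. If either is good, say the good $B_1^g$-edge $1 \champ[B_1^g \mid \circ] 2$, then together with $2 \champ[g] 1$ (which releases $B_1^g$) we get the PI cycle $1 \champ[B_1^g \mid \circ] 2 \champ[g] 1$ and we are done. Otherwise both are external, so both sources are agent $3$; but then $3$ is the source of both the $B_1^g$-edge and the $B_2^g$-edge, and since $\allocs$ is envy-free agent $3$ envies no one along the cycle, contradicting Corollary~\ref{cor:no_single_external_source}. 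So this case cannot occur, and the proof is complete.

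The main obstacle I anticipate is making sure the length-$2$ subcase is genuinely closed: one must check that the two possible external edges really do share the same source (which is forced because there is only one agent outside the cycle), and that Corollary~\ref{cor:no_single_external_source}'s hypothesis — that the common source does not envy any agent on the cycle — holds, which is immediate from envy-freeness. A secondary point of care is verifying that the good-edge-plus-cycle-arc construction in the length-$3$ case really satisfies the disjointness and release conditions of Definition~\ref{def:pareto_improvable}; this follows because the $g$-champion edges along the arc are with respect to the single unallocated good $g$ and the good edge's set $B_k^g$ is released by the $g$-champion edge into $k$, which lies strictly inside the arc. Everything else is a routine application of the lemmas established in Sections~\ref{sec:generalized_championship} and~\ref{sec:edge-discovery}.
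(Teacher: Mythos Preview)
Your length-$2$ case is correct and essentially identical to the paper's argument.

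The length-$3$ case, however, has a genuine gap. Your claim that a single good edge $i \champ[B_k^{\g} \mid \circ] j$ together with the arc of $C_g$ from $j$ back to $i$ forms a PI cycle is false when that arc has length~$2$, i.e., when the good edge is \emph{parallel} to $C_g$ (goes from $i$ to $\succ(i)$). In that situation the closing arc consists of two $g$-champion edges, each with $H=\{g\}$, and these violate the pairwise-disjointness requirement $H_i \cap H_j = \emptyset$ in Definition~\ref{def:pareto_improvable}. Concretely, the good edge $1 \champ[B_1^{\g} \mid \circ] 2$ closed by $2 \champ[g] 3 \champ[g] 1$ is \emph{not} a PI cycle. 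Your self-assessment that ``the $g$-champion edges along the arc are with respect to the single unallocated good $g$'' is precisely the problem, not a reassurance: there is only one copy of $g$ to hand out.

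The paper closes this gap by a short additional case split. If some good $B_k$-edge happens to go ``backwards'' to $\pred(i)$, then the closing arc has length~$1$ and you do get the $2$-cycle $i \champ[B_i \mid \circ] \pred(i) \champ[g] i$. Otherwise every good $B_k$-edge is parallel to $C_g$; the paper then takes \emph{two} such parallel good edges (say $1 \champ[B_i \mid \circ] 2$ with $i\in\{1,3\}$, and the good $B_2$-edge, which must be $2 \champ[B_2 \mid \circ] 3$ or $3 \champ[B_2 \mid \circ] 1$) together with a single $g$-edge to form a $3$-cycle in which each bottom half-bundle is released by one of the other edges. This extra step is what your argument is missing.
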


\begin{proof}
	By Lemma \ref{lem:pareto-improvable}, it suffices to find a Pareto-improvable (PI) cycle in the champion graph $M_\allocs$.
	Since every agent has an incoming $g$-champion edge (Observation \ref{obs:exists-champion}), $M_\allocs$ has a cycle of $g$ edges.
	Let $C$ be such a cycle of minimal length. If $C$ is a self loop, then we are done by Corollary \ref{cor:trivial_cycles}. Hence, it remains to consider the cases where $C$ has length two or three.  Since $\allocs$ is envy-free, $C$ is a good $g$-cycle, and as such it induces a $g$-decomposition of $X_j$ into top and bottom half-bundles $X_j = T_j \cup B_j$ for every agent $j$ along the cycle.
	
	{\em Case 1:} $|C|=2$. Assume w.l.o.g. that $C=1 \champ \linebreak[1] 2\champ 1$.
	By Theorem \ref{thm:good-edge-or-external}, $M_\allocs$ contains a good or external $B_1$-edge and a good or external $B_2$-edge.  By Corollary \ref{cor:no_single_external_source} they cannot both be external (since their source would be the same, agent 3).  Thus, w.l.o.g. $M_\allocs$ contains a good $B_1$-edge, which can only be $1 \champ[B_1 \mid\circ] 2$, and we obtain the PI cycle $1 \champ[B_1 \mid\circ] 2 \champ[g] 1$.
	
	{\em Case 2:} $|C|=3$. W.l.o.g. let $C=1\champ\linebreak[1]2\champ\linebreak[1]3\champ1$. 
	As $C$ contains all 3 agents, there are no external edges going into $C$. Thus by Theorem \ref{thm:good-edge-or-external}, $M_\allocs$ contains a good $B_i$-edge, for each $i\in[3]$. 
	
	If for some $i\in[3]$ the good $B_i$-edge is $i \champ[B_i \mid\circ]\pred(i)$, then we get the PI cycle $i \champ[B_i \mid\circ]\pred(i) \champ i$, and we are done. Thus we can assume that all good edges are parallel to the edges of $C$, i.e., from $j$ to $\succ(j)$. W.l.o.g., assume there is a good edge from agent 1 to agent 2. This good edge cannot be the good $B_2$-edge, because $1\nchamp[B_2 \mid\circ]2$ by Observation \ref{obs:no_B_j_champ_of_j}. Thus, this good edge is $1\champ[B_i \mid \circ]2$, for some $i\in \{1,3\}$, and the good $B_2$-edge must be either $2\champ[B_2 \mid\circ]3$ or $3\champ[B_2 \mid\circ]1$. 
	The former case admits the PI cycle $ 1\champ[B_i \mid\circ] 2 \champ[B_2 \mid\circ]3 \champ 1$; the latter admits the PI cycle $1\champ[B_i \mid\circ]2 \champ 3 \champ[B_2 \mid\circ]1$.
\end{proof}
\fullversion
	\fullversion
\subsection {EFX for agents with one of two valuations}
\label{sec:2-simple}
\fullversionend

\shortversion
\section{Proofs for Section~\ref{sec:extensions}: EFX for Agents with One of Two Valuations}
\label{sec:2-simple}
\shortversionend

Consider a setting with $n$ agents, where any agent has one of two valuations $v_a$, $v_b$. 
Let $a_0, \ldots, a_t$ denote the agents with valuation $v_a$, and $b_0, \ldots, b_{\ell}$ denote the agents with valuation $v_b$,
ordered such that
$$
X_{a_0} \leq_a X_{a_1} \leq_a \ldots \leq_a X_{a_t} \mbox{\qquad and \qquad}  X_{b_0} \leq_b X_{b_1} \leq_b \ldots \leq_b X_{b_\ell}.
$$

The following theorem shows that if $v_a$ and $v_b$ are \valclass\ valuations, then given any partial EFX allocation, there exists an EFX allocation that Pareto dominates it. This implies (by Lemmata~\ref{lem:dominate-implies-progress}, \ref{lem:non-deg}) that every instance in this setting admits a full EFX allocation.

\begin{theorem}
	\label{thm:two-valuations}
	In every setting with two \valclass\ valuations, given any partial EFX allocation, there exists an EFX allocation that Pareto dominates it.
\end{theorem}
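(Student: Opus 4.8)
The plan is to reduce, via Lemma~\ref{lem:dominate-implies-progress} together with Lemma~\ref{lem:pareto-improvable}, to the single task of exhibiting a Pareto-improvable cycle in $M_\allocs$ whenever $\allocs$ leaves at least one good unallocated (Pareto-domination also takes care of the ``no agent envies the unallocated set'' requirement). Fix an unallocated good $g\in U$. The whole argument hinges on the following \emph{champion-collapse} observation, which is what makes two valuations so restrictive: if any agent with valuation $v_a$ is most envious of a set $S'$, then $a_0$ is most envious of $S'$ as well; symmetrically for $v_b$ and $b_0$. This is immediate: if $i$ (with $v_i=v_a$) is most envious of $S'$ via a subset $T\subseteq S'$ with $X_i<_a T$ and no strong envy of $T$, then $v_a(X_{a_0})\le v_a(X_i)<v_a(T)$, so the same $T$ witnesses that $a_0$ is most envious of $S'$. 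Consequently every basic or generalized champion edge of $M_\allocs$ may be assumed to originate at $a_0$ or at $b_0$.

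Given this, I would argue as follows. By Observation~\ref{obs:exists-champion}, $a_0$ has an incoming $g$-champion edge, whose source may be taken to be $a_0$ or $b_0$. If it is $a_0$ we have a self-$g$-champion and are done by Corollary~\ref{cor:trivial_cycles}; so assume $b_0\champ a_0$, and symmetrically $a_0\champ b_0$. If $a_0\envies b_0$ or $b_0\envies a_0$, then $M_\allocs$ contains a two-edge cycle consisting of one envy edge and one $g$-champion edge, a basic Pareto-improvable cycle, and Corollary~\ref{cor:trivial_cycles} again finishes it. Otherwise $C=a_0\champ b_0\champ a_0$ is a good $g$-cycle (distinct agents; no parallel envy edge; no internal $g$-champion edge, as such an edge would be one of the already-excluded self-loops), so by Observation~\ref{obs:good_cycle_g_decomposition} its predecessor $b_0$ $g$-decomposes $a_0$ into $X_{a_0}=T_{a_0}\cupdot B_{a_0}$ with $B_{a_0}=\discard{b_0}{a_0}$, and likewise for $b_0$.

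To close, I would apply Lemma~\ref{lem:alg_start} to $a_0$ on $C$: there is an agent $a$ with $a\champ[B_{a_0}\mid\circ]b_0$, and by champion-collapse we may take $a\in\{a_0,b_0\}$. The option $a=b_0$ is ruled out by Observation~\ref{obs:pred(i)_nchamp_B(i)} (with $i=a_0$, $j=b_0$, since $\pred(a_0)=b_0$), so $a=a_0$, i.e.\ $a_0\champ[B_{a_0}\mid B_{b_0}]b_0$. The cycle
\[
  a_0 \;\champ[B_{a_0}\mid B_{b_0}]\; b_0 \;\champ[\{g\}\mid\emptyset]\; a_0
\]
is then Pareto-improvable in the sense of Definition~\ref{def:pareto_improvable}: the transported sets $B_{a_0}$ and $\{g\}$ are disjoint ($B_{a_0}\subseteq X_{a_0}$, while $g$ is unallocated), $\{g\}\subseteq U$, and $B_{a_0}=\discard{b_0}{a_0}$ is precisely the set released by the edge $b_0\champ[\{g\}\mid\emptyset]a_0$. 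Lemma~\ref{lem:pareto-improvable} then yields the desired Pareto-dominating EFX allocation.

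I expect the only substantive step to be the champion-collapse observation itself: once it is in place, an a~priori arbitrary-length good cycle over all $n$ agents is replaced by a two-vertex instance, on which the edge-discovery toolkit of Section~\ref{sec:edge-discovery} applies almost verbatim, and the remaining work (checking that the small cycles are basic or Pareto-improvable, and that $B_{a_0}$ coincides with $\discard{b_0}{a_0}$) is routine.
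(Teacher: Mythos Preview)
Your proof is correct and takes essentially the same approach as the paper: both isolate the ``champion-collapse'' (the paper calls it backward propagation of championship), reduce to the two-agent good cycle $a_0\champ b_0\champ a_0$, and then discover the edge $a_0\champ[B_{a_0}\mid\circ]b_0$ to close a Pareto-improvable cycle. The only cosmetic difference is that you invoke Lemma~\ref{lem:alg_start} directly whereas the paper goes through Theorem~\ref{thm:good-edge-or-external}; on a two-vertex cycle these are equivalent. (One small remark: Lemma~\ref{lem:dominate-implies-progress} is not needed to prove the theorem as stated---it is what lets the theorem imply full EFX existence afterward---so your opening sentence slightly overstates the reduction, but this does not affect the argument.)
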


Before presenting the proof, we present a useful observation.
We say that envy (resp., most envious) {\em propagates backward} within the valuation class $a$ if whenever some agent $a_i$ envies a set $S$ (resp., is most envious of a set $S$), then for every $j<i$, agent $a_j$ envies $S$ (resp., is most envious of $S$) as well. We say that championship propagates backward within the valuation class $a$ in an analogous way.
We define backward propagation within the valuation class $b$ analogously.
One can easily verify that envy propagates backward. The following observation shows that so does championship.

\begin{observation}
	Championship propagates backward within the same valuation class.
\end{observation}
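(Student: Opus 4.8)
The plan is to reduce championship propagation to the already-noted fact that envy propagates backward. First I would unwind the definitions: ``$a_i$ is most envious of a set $W$'' means there exists a subset $T \subseteq W$ with $X_{a_i} <_a T$ (i.e. agent $a_i$ envies $T$) and no agent strongly envies $T$; and by Definitions~\ref{def:champ_g} and~\ref{def:champ_H|S}, the relations $a_i \champ j$ and $a_i \champ[H \mid S] j$ are, respectively, nothing but the statements that $a_i$ is most envious of $X_j \cup \{g\}$ and of $(X_j \setminus S) \cup H$. Hence it suffices to prove that the ``most envious'' relation propagates backward within a valuation class, after which the championship statements follow by instantiating $W := X_j \cup \{g\}$ or $W := (X_j \setminus S) \cup H$.

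The key step is short. Fix a set $W$, suppose $a_i$ is most envious of it, and let $T \subseteq W$ be a witnessing subset. For any $k < i$, the ordering $X_{a_0} \leq_a X_{a_1} \leq_a \cdots \leq_a X_{a_t}$ gives $X_{a_k} \leq_a X_{a_i} <_a T$, and composing the weak and strict inequalities yields $X_{a_k} <_a T$; that is, $a_k$ also envies $T$. The other half of the definition of ``most envious'' --- that no agent strongly envies $T$ --- is a condition on $\allocs$ and $T$ alone, hence untouched. So the same $T$ (with discard set $W \setminus T$) witnesses that $a_k$ is most envious of $W$. The argument for the valuation class $b$ is identical, using the analogous ordering of the $b$-agents.

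I do not anticipate a genuine obstacle: the entire content is that the only agent-dependent requirement in ``most envious of $W$'' is a single strict inequality against a candidate subset, and this requirement only becomes easier to satisfy as we pass to agents holding weakly less-valued bundles, while the agent-independent requirement does not move. The one point that needs a line of care is keeping the inequality strict (so that $a_k$ genuinely \emph{envies} $T$ rather than merely weakly preferring it), which is immediate since a weak inequality followed by a strict one is strict.
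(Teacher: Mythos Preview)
Your proposal is correct and essentially identical to the paper's proof: both reduce championship to the ``most envious'' relation, take the witnessing subset $T$ that $a_i$ envies and nobody strongly envies, and observe that the same $T$ works for any $a_k$ with $k<i$ since envy propagates backward while the ``no strong envy'' condition is agent-independent. The only cosmetic difference is that you spell out the inequality chain $X_{a_k}\le_a X_{a_i}<_a T$ explicitly, whereas the paper cites the already-noted fact that envy propagates backward.
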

\begin{proof}
	We prove the claim for valuation class $a$. The proof for valuation class $b$ is analogous. 
	We show that the relation most envious propagates backward; by extension, championship propagates backward as well.
	 Suppose that for some $i\in \{0,\ldots,\ell \}$, agent $a_i$ is most envious of a set $S$, and let $D$ be the discard set of $S$ with respect to $a_i$. This means that $a_i$ envies $S \setminus D$. Since envy propagates backward, so does agent $a_j$. By definition of a discard set, no agent strongly envies $S \setminus D$. Therefore, $a_j$ is most envious of $S$.
\end{proof}

We are now ready to prove Theorem~\ref{thm:two-valuations}.

\begin{proof}
Fix a partial EFX allocation, and let $g$ be an unallocated item.
By Corollary~\ref{cor:trivial_cycles} we may assume that no agent is a $g$-self champion.
We first claim that
\begin{equation}
	\label{eq:a-claim}
a_0 \champ b_0 \mbox{  and  } b_0 \champ a_0.
\end{equation}

Indeed, if $b_j \champ b_0$ for some $j$, then $b_0$ is a $g$-self champion by backward propagation. 
Thus, since $b_0$ must have a $g$-champion (Observation \ref{obs:exists-champion}), then $a_j \champ b_0$ for some $j$.
Since championship propagates backward, $a_0 \champ b_0$. 
By symmetry,  $b_0 \champ a_0$.

We may also assume that no agent envies $a_0$ (and similarly, $b_0$).
Clearly, no $a_j$ envies $a_0$.	It remains to show that no $b_j$ envies $a_0$. 
Indeed, if some agent $b_j$ envies $a_0$, then $b_0$ envies $a_0$, and together with the fact that $a_0 \champ b_0$, we have a Pareto-improvable cycle, so we are done by Lemma~\ref{lem:pareto-improvable}. Similarly, if any agent envies $b_0$, we have a Pareto-improvable cycle.

By Equation~(\ref{eq:a-claim}) and the assumption that no agent envies $a_0$ or $b_0$, $a_0 \champ b_0 \champ a_0$ is a good $g$-cycle, thus by Observation~\ref{obs:g_notin_bottom} the bundles of $a_0$ and $b_0$ decompose into top and bottom half-bundles. Let $T_{a_0}$ and $B_{a_0}$ (resp., $T_{b_0}$ and $B_{b_0}$) be the top and bottom half-bundles of $a_0$ (resp., $b_0$), respectively.

	We next argue that $a_0 \champ[B_{a_0}\mid \circ] b_0$. Since $a_0 \champ b_0 \champ a_0$ is a good cycle, by Theorem~\ref{thm:good-edge-or-external} there exists a good or external $B_{a_0}$-edge that goes into $b_0$. If this is a good $B_{a_0}$-edge then it can only be $a_0 \champ[B_{a_0}\mid \circ] b_0$. If this is an external $B_{a_0}$-edge, it can't be $b_j \champ[B_{a_0}\mid \circ] b_0$ since $b_0 \nchamp[B_{a_0}\mid \circ] b_0$ (by Observation~\ref{obs:pred(i)_nchamp_B(i)}) and championship propagates backward. Hence, the external $B_{a_0}$-edge must be $a_j \champ[B_{a_0}\mid \circ] b_0$ for some $j\in \{1,\ldots,t\}$. Again, since championship propagates backward, $a_0 \champ[B_{a_0}\mid \circ] b_0$.
	
	It follows that we have a Pareto-improvable cycle consisting of $a_0 \champ[B_{a_0}\mid \circ] b_0$ and $b_0 \champ a_0$.
	We may now apply Lemma~\ref{lem:pareto-improvable} to conclude the proof.
	
\end{proof}

\fullversionend

\section*{Acknowledgment} 
We are very grateful to Karin Egri for her great help in producing the figures used in this paper. 
We are deeply in debt to Bhaskar Ray Chaudhury, Jugal Garg, Kurt Mehlhorn, Ruta Mehta and Misra Pranabendu for their wonderful papers and for sharing unpublished ongoing results.

\bibliographystyle{alpha}
\bibliography{bibliography}

\clearpage

\appendix
%
\section{Proofs and Claims from Section~\ref{sec:preliminaries}}\label{apx:Preliminaries}
\begin{lemma}
	\label{lem:non-deg}
	To prove the existence of an EFX allocation for a given valuation profile $\mathbf{v} = (v_1,\ldots,v_n)$ of \valclass\ valuations, it is without loss of generality to assume that all of the valuations are non-degenerate.
\end{lemma}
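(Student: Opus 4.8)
The plan is to perturb the given nice cancelable valuations into non-degenerate nice cancelable valuations that still ``respect'' the original ordering on bundles, argue that an EFX allocation for the perturbed profile is also EFX for the original profile, and conclude. First I would recall the definition: each $v_i$ is nice cancelable, so by definition there exists a non-degenerate cancelable valuation $v_i'$ that respects $v_i$, meaning $v_i(S) > v_i(T) \Rightarrow v_i'(S) > v_i'(T)$ for all bundles $S,T$. I would take exactly this collection $\mathbf{v}' = (v_1',\ldots,v_n')$ as the perturbed profile. By hypothesis each $v_i'$ is cancelable and non-degenerate, so $\mathbf{v}'$ is a profile of (nice) cancelable non-degenerate valuations; note these need not be monotone in the naive sense, but one can assume without loss of generality that $v_i'$ is monotone as well (e.g. by first replacing $v_i'$ with a small enough convex combination $\epsilon v_i' + (1-\epsilon) v_i$, which remains cancelable, respects $v_i$, and is non-degenerate for all but finitely many $\epsilon$; I would remark on this only briefly).

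The key step is the transfer of the EFX property. Suppose $\mathbf{X} = (X_1,\ldots,X_n)$ is an EFX allocation with respect to $\mathbf{v}'$; I claim it is EFX with respect to $\mathbf{v}$. Suppose not: then some agent $i$ strongly envies some agent $j$ under $v_i$, i.e. there is $h \in X_j$ with $v_i(X_i) < v_i(X_j \setminus \{h\})$. Since $v_i'$ respects $v_i$, this strict inequality is preserved: $v_i'(X_i) < v_i'(X_j \setminus \{h\})$, so $i$ strongly envies $j$ under $v_i'$ as well, contradicting that $\mathbf{X}$ is EFX for $\mathbf{v}'$. Hence any EFX allocation for $\mathbf{v}'$ is EFX for $\mathbf{v}$. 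Therefore, if we have established (as the remaining results of the paper do) that every profile of non-degenerate cancelable valuations admits an EFX allocation — possibly a partial one with some unallocated goods — then applying that result to $\mathbf{v}'$ yields an allocation $\mathbf{X}$ that is EFX for $\mathbf{v}'$ and hence for $\mathbf{v}$, with the same set of unallocated goods. This is precisely the reduction claimed.

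I expect the only subtle point — and the main obstacle — to be making sure the ``respects'' relation interacts correctly with the definition of EFX and, if one cares about the ``no agent envies the unallocated set'' strengthening used elsewhere, that this property also transfers. The EFX transfer is clean because strong envy is witnessed by a \emph{strict} inequality, which is exactly what ``respects'' preserves; likewise ``agent $i$ envies the unallocated set $U$'' is the strict inequality $v_i(X_i) < v_i(U)$, so if $\mathbf{X}$ has no such envy under $\mathbf{v}'$ then it has none under $\mathbf{v}$ by contraposition of the respects property applied with $S = X_i$, $T = U$. The monotonicity caveat on $v_i'$ is a minor technical nuisance handled by the convex-combination trick above, and I would keep that discussion to a single sentence. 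No further case analysis is needed.
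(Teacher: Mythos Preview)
Your argument is correct and essentially identical to the paper's: take the non-degenerate cancelable $v_i'$ guaranteed by the definition of ``nice'', and observe that strong envy under $v_i$ is a strict inequality, hence preserved by the ``respects'' relation, so EFX for $\mathbf{v}'$ implies EFX for $\mathbf{v}$.

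One remark: your monotonicity worry is unnecessary, since in this paper ``valuation'' already means normalized and monotone, so the $v_i'$ promised by the definition of \emph{nice} is automatically monotone. Your proposed convex-combination fix is in fact shaky (cancelability is not in general preserved under convex combinations), so it is fortunate that it is not needed.
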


\begin{proof}
For each $i\in [n]$, $v_i$ is a \valclass\ valuation. Hence, for every $i\in [n]$ there exists a non-degenerate \canc\ valuation $v'_i$ that respects $v_i$.
Let $\mathbf{v'}=(v'_1,\ldots,v'_n)$ be the corresponding valuation profile.

We show that any allocation which is EFX for the profile $\mathbf{v'}$ is also EFX for the profile $\mathbf{v}$. 
Let $\mathbf{X}=(X_1,\dots,X_n)$ be an EFX allocation for the profile $\mathbf{v'}$ and assume towards contradiction that $\mathbf{X}$ is not an EFX allocation for the profile $\mathbf{v}$. Hence, under the profile $\mathbf{v}$ there exists an agent $i$ that strongly envies another agent $j$, \emph{i.e.}, $v_i(X_i)<v_i(X_j\setminus \{h\})$ for some $h\in X_j$. Since $v'_i$ respects $v_i$, it follows that $v'_i(X_i)<v'_i(X_j\setminus \{h\})$, in contradiction to $\mathbf{X}$ being EFX over the profile $\mathbf{v}'$. 
\end{proof}
%
%
The family of \valclass\ valuations contains some well-known classes of valuations. Additive valuations are clearly \canc\ and are shown to be \nice\ in \cite{chaudhury2020efx}. The following lemma shows that this class contains many other classes of valuations, including unit-demand, budget-additive and multiplicative.
%
\begin{lemma}\label{lem:BA_and_UD_are_RC}
	Unit-demand, budget-additive and multiplicative valuations are \valclass. 
\end{lemma}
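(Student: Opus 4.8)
The plan is to show that each of the three valuation classes is \canc, and then to exhibit a non-degenerate \canc\ valuation respecting it (so that it is \valclass). In fact, for all three, a cleaner route is to show that each valuation can be perturbed into a non-degenerate valuation of the \emph{same type} that respects it, reducing the problem to checking that (i) the type is \canc\ and (ii) the type admits non-degenerate perturbations. Since the definition of \valclass\ only asks for the respecting valuation $v'$ to be \canc\ and non-degenerate (not of any particular type), it suffices to do (i) for each class and (ii) once in enough generality.

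First I would verify \cancbty\ for each class directly from the definition: given $S,T\subseteq M$ and $g\notin S\cup T$, assume $v(S\cup\{g\})>v(T\cup\{g\})$ and deduce $v(S)>v(T)$. For \textbf{unit-demand} $v(S)=\max_{j\in S}v(j)$: write $\alpha=\max_{j\in S}v(j)$, $\beta=\max_{j\in T}v(j)$; then $v(S\cup\{g\})=\max(\alpha,v(g))$ and $v(T\cup\{g\})=\max(\beta,v(g))$, and $\max(\alpha,v(g))>\max(\beta,v(g))$ forces $\alpha>\max(\beta,v(g))\ge\beta$, hence $v(S)>v(T)$. For \textbf{budget-additive} $v(S)=\min(\sum_{j\in S}v(j),B)$: let $\sigma=\sum_{j\in S}v(j)$, $\tau=\sum_{j\in T}v(j)$, $c=v(g)\ge 0$; then $\min(\sigma+c,B)>\min(\tau+c,B)$ implies $\sigma+c>\tau+c$ (since the LHS cannot exceed the RHS otherwise, using that $t\mapsto\min(t,B)$ is nondecreasing), so $\sigma>\tau$ and $v(S)=\min(\sigma,B)\ge\min(\tau,B)=v(T)$ — and the inequality is in fact strict because $\min(\sigma+c,B)>\min(\tau+c,B)$ rules out $\min(\sigma,B)=\min(\tau,B)=B$. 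For \textbf{multiplicative} $v(S)=\prod_{j\in S}v(j)$ (taking $v(\emptyset)=1$; one should note monotonicity needs $v(j)\ge 1$, which I'd assume as part of the definition, or else handle zero/small values separately): then $v(S\cup\{g\})=v(S)\cdot v(g)$ and $v(T\cup\{g\})=v(T)\cdot v(g)$, so $v(S)v(g)>v(T)v(g)$ with $v(g)>0$ gives $v(S)>v(T)$.

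Next I would handle \emph{niceness}, i.e.\ produce a non-degenerate \canc\ $v'$ respecting $v$. The uniform approach: additive valuations are non-degenerate \canc\ for generic item values, so for any target valuation $v$ of one of the three types, take a sufficiently generic additive valuation $v'$ with weights $w_j$ chosen so that $v'$ respects $v$ — formally, since $v$ induces finitely many strict inequalities among the $2^m$ bundle values, the set of weight vectors $(w_j)$ for which every such strict inequality is preserved is open and nonempty (it contains, e.g., $w_j$ a tiny perturbation in the "direction of $v$"), and within it the non-degenerate weight vectors are dense (degeneracy $\sum_{j\in S}w_j=\sum_{j\in T}w_j$ is a finite union of hyperplanes). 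Pick such a $w$; then $v'$ is additive, hence \canc, non-degenerate, and respects $v$. This shows $v$ is \valclass. (Alternatively one can perturb within the class: for unit-demand and budget-additive this works by perturbing the $v(j)$ generically; for multiplicative, perturb the logs. The additive route is simplest and uniform.)

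The main obstacle I anticipate is \emph{bookkeeping around degenerate and edge-case values} rather than any real difficulty: in particular, monotonicity of multiplicative valuations requires $v(j)\ge 1$ (or a separate convention), and the claim that a generic additive $v'$ respects $v$ needs the observation that $v$ has only finitely many bundles hence only finitely many strict order relations to preserve, so the "good" region of weight space is a nonempty open set minus finitely many hyperplanes. Once those points are pinned down, each verification is a two-line computation. I would organize the write-up as: (1) a short lemma that any \canc\ valuation is \nice\ via a generic respecting additive valuation (this also subsumes the additive case and cleanly separates concerns), then (2) three short paragraphs checking \cancbty\ for unit-demand, budget-additive, and multiplicative as above.
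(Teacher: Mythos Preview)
Your cancelability checks for the three classes are correct and essentially match the paper's arguments. The gap is in the niceness step.

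Your proposed Lemma (1) --- that \emph{any} cancelable valuation is nice because some generic additive valuation respects it --- is false as a general statement. The paper itself supplies a counterexample (Proposition~\ref{prop:sm-not-tb}): a submodular cancelable valuation on six items that is provably not nice, hence not respected by any non-degenerate cancelable valuation, additive or otherwise. Your justification for non-emptiness of the set of respecting weight vectors, ``a tiny perturbation in the direction of $v$'', has no meaning when $v$ is not itself additive; for an arbitrary cancelable $v$ that set can be empty. Your fallback suggestion of ``perturbing within the class'' also fails for unit-demand: a unit-demand valuation on two or more items is \emph{never} non-degenerate, since $v(\{a\}) = v(\{a,b\})$ whenever $v(a)\ge v(b)$, so no perturbation of the item values helps.

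What the paper does instead is establish niceness for each class by an explicit class-specific construction: for budget-additive, the underlying uncapped additive valuation respects $v$ (and additive is nice by \cite{chaudhury2020efx}); for multiplicative, taking logs yields an additive valuation with the same bundle ordering; for unit-demand, the paper builds $v'(S)=v(S)+\varepsilon\sum_{i:g_i\in S}2^i$ with $\varepsilon$ small enough and directly verifies that $v'$ is non-degenerate, cancelable, and respects $v$. Your plan could be salvaged by exhibiting, for each class, a concrete additive valuation that respects it (e.g.\ for unit-demand, exponentially decreasing weights $w_i=2^{m-i}$ after sorting items by $v$-value works, but this needs a short argument, not a genericity wave) --- the point is that this is three separate constructions, not one general lemma.
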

%
%
%
%
	\begin{proof}

	\vspace{1ex}
	\noindent\textbf{Budget-Additive:} Let $v$ be a budget-additive valuation, \emph{i.e.}, for every $S\subseteq M$,
	\[
	v(S)=\min\left\{\sum\limits_{g\in S} v(g),\; B\right\},
	\]
	for some $B>0$. We start by proving $v$ is \canc. Consider $S,T\subseteq M$ and $g\in M\setminus(S\cup T)$ such that
	\[
	v(S\cup\{g\})>v(T\cup\{g\}).
	\]
	First, $v(S\cup\{g\}) \leq B$, by definition of $v$. Therefore, $v(T\cup\{g\})<B$, so $v$ is additive over $T\cup\{g\}$. It follows that $v(T) = v(T\cup\{g\})-v(\{g\})$.
	Second, since budget-additive valuations are sub-additive, $v(S)\geq v(S\cup\{g\})-v(\{g\})$. Combining these two observations we get
	\[
	v(S)\geq v(S\cup\{g\})-v(\{g\}) > v(T\cup\{g\})-v(\{g\}) = v(T).
	\]
	This proves that $v$ is \canc.
	
	We next prove that $v$ is \nice. Define the valuation $v':2^M\to \mathbb{R}_{\geq 0}$ as the underlying additive valuation of $v$, \emph{i.e.}, for every $S\subseteq M$,
	\[
	v'(S)=\sum\limits_{g\in S} v(g).
	\]
	We now show that $v'$ respects $v$. 
	
	Suppose $v(S)>v(T)$ for some $S,T\subseteq M$. Since $v(S)\leq B$, it follows that $v(T)<B$, and thus $v$ is additive over $T$. By definition of $v'$, this implies that $v(T)=v'(T)$. Furthermore, notice that $v'(S)\geq v(S)$. Thus,
	\[
	v'(S)\geq v(S)>v(T)=v'(T).
	\]
	This proves that $v'$ respects $v$.
	
	Finally, since $v'$ is an additive valuation it is \nice\ and \canc\ (as shown in \cite{chaudhury2020efx}). Therefore, there exists a non-degenerate \canc\ valuation $v''$ that respects $v'$. Because $v'$ respects $v$, it follows by transitivity that $v''$ respects $v$ as well. Since $v''$ is non-degenerate and \canc, this proves that $v$ is \nice.

	\vspace{1ex}
	\noindent\textbf{Multiplicative}: Let $v$ be a multiplicative valuation, \emph{i.e.}, for every $S\subseteq M$,
	\[
	v(S)=\prod_{g\in S} v(g).
	\]
	Multiplicative valuations are trivially \canc.
	Since additive valuations are \nice, and since taking the $\log$ of a multiplicative valuation gives us an additive function, similar arguments as in \cite{chaudhury2020efx} imply that multiplicative is also \nice.

	\vspace{1ex}	
	\noindent\textbf{Unit-Demand:} Let $v$ be a unit-demand valuation, \emph{i.e.}, for every $S\subseteq M$,
	\[
	v(S)=\max\limits_{g\in S} v(g).
	\]
	We first show $v$ is \canc. 
	Consider $S,T\subseteq M$ and $g\in M\setminus(S\cup T)$ such that
	\[
	v(S\cup\{g\})>v(T\cup\{g\}).
	\]
	Clearly, $g$ is not the maximal element in $S\cup\{g\}$, otherwise we would have $v(S\cup\{g\})= v(\{g\}) \leq v(T\cup\{g\})$. Therefore, $v(S\cup\{g\})=v(S)$. We get
	\[
	v(S)=v(S\cup\{g\})>v(T\cup\{g\})\geq v(T).
	\]
	This proves that $v$ is \canc.
	
	We next prove that $v$ is \nice. Define $$\delta = \min\limits_{S,T\subseteq M,\; v(S)\neq v(T)}|v(S)-v(T)|.$$
	That is, $\delta$ is the minimal difference between the value of any two non-equal valued sets of items. Let $g_0,\ldots,g_{m-1}$ be the items in $M$ ordered in non-decreasing value, ties broken arbitrarily. Let $\varepsilon= 2^{-(m+1)}\delta$. Define the valuation $v':2^M\to \mathbb{R}_{\geq 0}$ as follows:
	\[
	v'(S)=v(S)+\varepsilon\sum\limits_{i:g_i\in S} 2^i.
	\]
	To complete the proof we need to show that $v'$ is a non-degenerate \canc\ valuation and that $v'$ respects $v$. We begin with the latter. Suppose $v(S)>v(T)$ for some $S,T\subseteq M$. This implies that $v(S)>v(T)+\delta/2$, by definition of $\delta$. Moreover, since $\sum_{i=0}^{m-1}2^i<2^m$, we have
	$$
	v'(T)=v(T)+\varepsilon\sum\limits_{i:g_i\in T} 2^i
	< v(T)+\varepsilon\cdot2^m 
	=v(T)+\frac{\delta}{2}.
	$$
	We get
	$$
	v'(S)>v(S)>v(T)+\frac{\delta}{2} > v'(T),
	$$
	as required.
	
	We next prove $v'$ is non-degenerate. For all $S,T \subseteq M$ such that $v(S)\neq v(T)$, we have shown above that $v'(S)\neq v'(T)$. So it remains to show that $v'(S)\neq v'(T)$ whenever $v(S) = v(T)$ and $S\neq T$. Since $v(S) = v(T)$, to prove that $v'(S)\neq v'(T)$ it suffices to show that
	$$
	\varepsilon\sum\limits_{i:g_i\in S} 2^i \neq \varepsilon\sum\limits_{i:g_i\in T} 2^i,
	$$
	which clearly holds for every $S\neq T$.
	
	Finally, we prove $v'$ is \canc. Consider $S,T\subseteq M$ and $g_j\in M\setminus(S\cup T)$ such that
	\[
	v'(S\cup\{g_j\})>v'(T\cup\{g_j\}).
	\]
	It is impossible that $v(S\cup\{g_j\})<v(T\cup\{g_j\})$, since $v'$ respects $v$. If $v(S\cup\{g_j\})>v(T\cup\{g_j\})$, then $v(S)>v(T)$ since $v$ is \canc. Since $v'$ respects $v$, this implies $v'(S)>v'(T)$, so we are done. We are left with the case where $v(S\cup\{g_j\})=v(T\cup\{g_j\})$. Since $v'(S\cup\{g_j\})>v'(T\cup\{g_j\})$, this implies
	$$
	\varepsilon\cdot\sum\limits_{i:g_i\in S\cup\{g_j\}} 2^i > \varepsilon\cdot\sum\limits_{i:g_i\in T\cup\{g_j\}} 2^i.
	$$
	Eliminating $\varepsilon \cdot 2^j$ from both sides,
	\begin{equation}\label{eq:epsS>epsT}
	\varepsilon\sum\limits_{i:g_i\in S} 2^i > \varepsilon\sum\limits_{i:g_i\in T} 2^i.
	\end{equation}
	If $v(S) \geq v(T)$ then Equation~\eqref{eq:epsS>epsT} implies $v'(S) > v'(T)$ and we are done. We complete the proof by showing that the case $v(S) < v(T)$ is impossible. Assume towards contradiction that $v(S) < v(T)$. Let $g_s$ and $g_t$ be the highest valued items in $S$ and $T$, respectively (breaking ties according to the ordering we defined on the items). Since $v(S) < v(T)$ and $v$ is unit-demand, $v(g_s)<v(g_t)$. Due to our ordering, it follows that $s<t$. Therefore,
	$$
	\sum\limits_{i:g_i\in S} 2^i
	\leq \sum_{i=0}^{s}2^i
	< 2^{s+1}
	\leq 2^t
	\leq \sum\limits_{i:g_i\in T} 2^i,
	$$
	in contradiction to Equation~\eqref{eq:epsS>epsT}. This shows that $v'$ is \canc.
\end{proof}

On the other hand, not all \canc\ valuations are \nice, as shown in Proposition~\ref{prop:sm-not-tb}.

\begin{proposition}
	\label{prop:sm-not-tb}
	Not all \canc\ valuations are \nice. Even when restricted to submodular \canc\ valuations it need not be \nice.
\end{proposition}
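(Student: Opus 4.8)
The plan is to exhibit an explicit counterexample: a submodular cancelable valuation $v$ for which no non-degenerate cancelable valuation $v'$ respects $v$. The obstruction to niceness will come from a pair of ``tie'' constraints forced by cancelability that cannot both be broken in a consistent direction by any non-degenerate refinement. Concretely, I would look for a small ground set (two or three items suffice) where submodularity and cancelability together \emph{force} $v(S) = v(T)$ for two distinct bundles $S,T$, while also forcing, via cancelability applied to larger bundles, an inequality that a non-degenerate respecting valuation would have to violate.

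First I would set up the candidate on a ground set of two items, say $M = \{x,y\}$, where submodularity gives $v(\{x\}) + v(\{y\}) \ge v(\{x,y\}) + v(\emptyset) = v(\{x,y\})$. The idea is to choose values so that cancelability is satisfied vacuously or trivially on all triples (there are essentially no item-disjoint comparisons to check with only two items, since cancelability needs an item outside $S \cup T$), but to engineer the constraint structure so that \emph{every} non-degenerate cancelable extension is impossible. If two items are too few, I would move to $M = \{x,y,z\}$ and pick $v$ submodular with, say, $v(\{x\}) = v(\{y\}) = 1$, $v(\{x,z\}) = v(\{y,z\}) = 2$, and additional values chosen so that cancelability \emph{forces} $v(\{x\}) = v(\{y\})$ to persist as an equality in any respecting $v'$: namely if $v'(\{x\}) > v'(\{y\})$ then cancelability of $v'$ (with the extra item $z$) would demand $v'(\{x,z\}) > v'(\{y,z\})$, but symmetrically if $v'(\{y\}) > v'(\{x\})$ we'd need the reverse — this alone is not yet a contradiction, so the real work is to add a \emph{third} bundle comparison that ties $\{x\}$ and $\{y\}$ to a common element in a way that closes the loop. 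The main obstacle is precisely finding values that are (i) submodular, (ii) cancelable, and (iii) genuinely not nice — condition (iii) requires a cyclic chain of forced equalities/inequalities, so I would search for a $3$- or $4$-item instance where cancelability propagates a strict inequality around a cycle back to its own negation.

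Once the candidate $v$ is fixed, the proof splits into three verifications. \textbf{Submodularity:} check $v(S \cup \{g\}) - v(S) \ge v(T \cup \{g\}) - v(T)$ for all $S \subseteq T$ and $g \notin T$ — a finite check over the (small) lattice. \textbf{Cancelability:} check that $v(S \cup \{g\}) > v(T \cup \{g\}) \Rightarrow v(S) > v(T)$ for all item-disjoint $S, T$ and $g \notin S \cup T$; since $v$ is designed to be submodular and ``almost additive'' this should reduce to a handful of cases. \textbf{Not nice:} suppose for contradiction $v'$ is non-degenerate, cancelable, and respects $v$. Chase the forced strict inequalities of $v'$ (using respect on the strict inequalities of $v$ and cancelability of $v'$ to lift them to/from larger bundles) around the engineered cycle until two of them contradict, or until non-degeneracy is contradicted by a forced equality. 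This last step is where the instance must be cleverly built, and it is the crux of the argument.

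I expect the final counterexample to be quite compact — likely the same one the authors use, probably on three items with a carefully chosen submodular but non-additive profile — and the hard part is not any calculation but the \emph{design}: arranging the cancelability-forced implications so that the directed graph of forced strict inequalities on $v'$ contains a cycle, which is impossible for any actual valuation. If a two-item or three-item instance stubbornly resists, I would fall back to a cut/coverage-type submodular function (e.g. $v(S) = $ number of ``features'' covered by $S$, with overlapping features) since such functions naturally produce forced ties among singletons together with forced strict separations among larger bundles, which is exactly the tension needed.
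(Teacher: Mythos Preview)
Your high-level strategy matches the paper's: construct a specific submodular cancelable $v$, then show that any non-degenerate cancelable $v'$ respecting $v$ must satisfy a cycle of strict inequalities. But your concrete plan has a genuine gap in the instance size. Two items make cancelability vacuous (there is no item outside $S \cup T$). Three items provably cannot work either: with $M = \{x,y,z\}$, every nontrivial cancelability constraint on a non-degenerate $v'$ has the form $v'(\{x,g\}) > v'(\{y,g\}) \Leftrightarrow v'(\{x\}) > v'(\{y\})$, so any cycle among pairs collapses to a cycle among singletons --- which would have to already appear among the strict inequalities of $v$ itself, impossible for a real-valued function. Four items are also too few for the mechanism you want, since two disjoint pairs already exhaust $M$ and leave no item to cancel.

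The paper's instance uses \emph{six} items. The mechanism is not about ties but about three \emph{strict} pair-inequalities $v(P_i) < v(Q_i)$ with $P_i \cap Q_i = \emptyset$, chosen together with items $g_i \notin P_i \cup Q_i$ so that the lifted triples chain: $Q_1 \cup \{g_1\} = P_2 \cup \{g_2\}$, $Q_2 \cup \{g_2\} = P_3 \cup \{g_3\}$, $Q_3 \cup \{g_3\} = P_1 \cup \{g_1\}$. This forces $v'$ into a $3$-cycle on triples. A short counting argument shows three triples pairwise intersecting in exactly one element require $|M| \ge 6$, so the paper's size is essentially tight for this route. The remaining engineering --- making the three pair inequalities strict while keeping $v$ cancelable and submodular --- is handled by setting all bundles of size $\ge 3$ to a common large value (so cancelability is nearly vacuous there) and hand-tuning the singleton and pair values. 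Your ``forced ties'' framing is a detour: the obstruction lives entirely in strict inequalities of $v$ that, once lifted, contradict each other on bundles where $v$ itself is silent.
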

%
%
\begin{proof}
	We define a valuation $v$ over the set of items $M=\{a,b,c,d,e,f\}$ and show that it is \canc\ and submodular but not \nice. The following table defines the value of $v$ over each singleton.
	
	\begin{center}
		\begin{tabular}{c|c|c|c|c|c|c|}
			\cline{2-7}
			& \textit{a}   & \textit{b}   & \textit{c}   & \textit{d}   & \textit{e}   & \textit{f}   \\ \hline
			\multicolumn{1}{|c|}{\textit{v}} & 101          & 102          & 102          & 103          & 103          & 104          \\ \hline
		\end{tabular}
	\end{center}
	
	Notice that $v(a)<v(b)=v(c)<v(d)=v(e)<v(f)$, i.e., the values are non-decreasing from left to right.
	
	The next table defines the value of $v$ over each pair of items. For convenience, we depict this table as a matrix, where the coordinate $(x,y)$ contains the value $v(\{x,y\})$ (the matrix is symmetric).
	
	\begin{center}
		\begin{tabular}{c|c|c|c|c|c|c|}
			\cline{2-7}
			& \textit{a}        & \textit{b}        & \textit{c}        & \textit{d}        & \textit{e}        & \textit{f} \\ \hline
			\multicolumn{1}{|c|}{\textit{a}} & -                 & 152               & 152               & 153               & 153               & 154        \\ \hline
			\multicolumn{1}{|c|}{\textit{b}} & \color{gray}{152} & -                 & 152               & 155               & 155               & 156        \\ \hline
			\multicolumn{1}{|c|}{\textit{c}} & \color{gray}{152} & \color{gray}{152} & -                 & 155               & 155               & 156        \\ \hline
			\multicolumn{1}{|c|}{\textit{d}} & \color{gray}{153} & \color{gray}{155} & \color{gray}{155} & -                 & 155               & 156        \\ \hline
			\multicolumn{1}{|c|}{\textit{e}} & \color{gray}{153} & \color{gray}{155} & \color{gray}{155} & \color{gray}{155} & -                 & 156        \\ \hline
			\multicolumn{1}{|c|}{\textit{f}} & \color{gray}{156} & \color{gray}{156} & \color{gray}{156} & \color{gray}{156} & \color{gray}{156} & -          \\ \hline
		\end{tabular}
	\end{center}
	
	Finally, for any set $S\subseteq M$ containing three or more items we define $v(S)=200$, and we set $v(\emptyset)=0$. This completes the definition of $v$ over all subsets of $M$.

	\vspace{1ex}	
	\noindent\textbf{Cancelability:} Consider a pair of sets $S,T \subseteq M$ and an item $g\in M\setminus (S\cup T)$ such that $v(S\cup\{g\})>v(T\cup\{g\})$. First consider the cases where $|S|=|T|=1$. Therefore, $S\cup\{g\}$ and $T\cup\{g\}$ are both pairs of items whose values are located in row $g$ of the matrix above. It is not hard to verify that within each row of the matrix the values are non-decreasing left to right with equality only between the columns of equal valued items (the columns $b$ and $c$ and the columns $d$ and $e$ contain equal values). Therefore, the fact that $v(S\cup\{g\})>v(T\cup\{g\})$ implies that $v(S\cup\{g\})$ is located to the right of $v(T\cup\{g\})$ in row $g$ of the matrix. Therefore, $v(S)\geq v(T)$. The case $v(S)=v(T)$ implies $v(S\cup\{g\})=v(T\cup\{g\})$ since the columns of equal valued singletons are identical in the matrix of pairs. Thus, $v(S)>v(T)$. This completes the case $|S|=|T|=1$.
	
	If $|T|> 2$, then $v(T\cup\{g\})=200$ and there exists no set $S$ such that $v(S\cup\{g\})>200$. Therefore, we may assume $|T|\leq 1$. The case $|T|=0$ is easy. In this case, $v(S\cup\{g\})>v(T\cup\{g\})$ implies that $S$ is non-empty. Since $T=\emptyset$ and $S\neq \emptyset$, we obtain $v(S)>v(T)$, so \cancbty\ is maintained.
	
	We are left with the case $|T|=1$. The case $|S|=0$ is not possible since no singleton has a higher value than a pair of items. The case $|S|=|T|=1$ has been handled above. So it is left to consider $|S|>1$ and $|T|=1$. In this case, notice that $v(S)>150$ while $v(T)<150$, therefore $v(S)>v(T)$, as desired. This proves that $v$ is \canc.

	\vspace{1ex}	
	\noindent\textbf{Submodularity:} It suffices to show that for every item $g$ and every pair of sets $S$ and $T$ such that $S\subseteq T \subseteq M\setminus \{g\}$ we have that $v(S\cup\{g\})-v(S) \geq v(T\cup\{g\})-v(T)$. This condition clearly holds if $S=T$, so assume $S\subsetneq T$. Therefore, $|S|<|T|$. Notice that the following holds for any set $Z\subseteq M$ and any item $x\in M\setminus Z$:
	\begin{align*}
	\text{if $|Z|=0$,}&& 100<&v(Z\cup\{x\})-v(Z)\leq 104;\\
	\text{if $|Z|=1$,}&& 50\leq& v(Z\cup\{x\})-v(Z)<60;\\
	\text{if $|Z|=2$,}&& 40<&v(Z\cup\{x\})-v(Z)<50;\\
	\text{if $|Z|\geq 3$,}&&  &v(Z\cup\{x\})-v(Z)=0.
	\end{align*}
	Therefore, the fact that $|S|<|T|$ directly implies that $v(S\cup\{g\})-v(S) \geq v(T\cup\{g\})-v(T)$. This proves that $v$ is submodular.

	\vspace{1ex}	
	\noindent\textbf{Not \nice:} Assume towards contradiction that there exists a non-degenerate \canc\ valuation $v'$ that respects $v$. Notice that the following inequalities hold:
	\begin{align}
	154&=v\left(\{a,f\}\right)<v\left(\{b,e\}\right)=155,		\label{eq:RC_af<be}\\
	155&=v\left(\{d,e\}\right)<v\left(\{c,f\}\right)=156,		\label{eq:RC_de<cf}\\	
	152&=v\left(\{b,c\}\right)<v\left(\{a,d\}\right)=153.		\label{eq:RC_bc<ad}	
	\end{align}
	Therefore, the analogous inequalities must hold for $v'$. Consider the comparison between $\{a,d,f\}$ and $\{b,d,e\}$ in $v'$. $v'$ is non-degenerate, so $v'(\{a,d,f\}) \neq v'(\{b,d,e\})$. If $v'(\{a,d,f\}) > v'(\{b,d,e\})$, then by \cancbty\ $v'(\{a,f\}) > v'(\{b,e\})$, which contradicts Equation~\eqref{eq:RC_af<be}, so
	\begin{equation}\label{eq:RC_adf<bde}
	v'(\{a,d,f\}) < v'(\{b,d,e\}).
	\end{equation}
	Similarly, due to Equation~\eqref{eq:RC_de<cf} we obtain 
	\begin{equation}\label{eq:RC_bde<bcf}
	v'(\{b,d,e\}) < v'(\{b,c,f\}),
	\end{equation}
	and from Equation~\eqref{eq:RC_bc<ad} we get
	\begin{equation}\label{eq:RC_bcf<adf}
	v'(\{b,c,f\}) < v'(\{a,d,f\}).
	\end{equation}
	However, combining Equations~\eqref{eq:RC_adf<bde} and \eqref{eq:RC_bde<bcf} we get $v'(\{a,d,f\}) < v'(\{b,c,f\})$, in contradiction to Equation~\eqref{eq:RC_bcf<adf}. This proves that there exists no non-degenerate \canc\ valuation $v'$ such that $v'$ respects $v$, and thus $v$ is not \nice.
%
\end{proof}

If an additive agent strongly envies some bundle $S$, then, iterative removal of the least valued item until strong envy is eliminated results in a smallest size subset of $S$ that the agent envies.
The next lemma shows that this property extends to cancelable valuations.

\begin{lemma}\label{lem:remove_least_marginal_item_1_by_1}
	Let $v$ be a \canc\ valuation.  Let $T$ be some bundle, and let $S$ a subset of $T$.  Let $Z$ be the subset obtained from $T$ by iteratively removing the item with least marginal contribution until the leftover bundle has size $\left|S\right|$.  Then $v(Z) \geq v(S)$.
\end{lemma}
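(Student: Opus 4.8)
I would prove the statement by induction on the number of removed items $|T|-|S|$, using cancelability to transfer a comparison between ``one‑item‑removed'' subsets of $T$ down to comparisons between smaller subsets. Throughout I would use that valuations are non‑degenerate (as assumed in Section~\ref{sec:preliminaries}); this in particular makes the item of least marginal contribution unique at each step, so that $Z$ is well defined, and — more importantly — it turns the weak inequalities coming from ``least marginal contribution'' into strict ones, which is what cancelability needs. (Without non‑degeneracy the statement is in fact false, e.g.\ three items with equal pairwise values but distinct singleton values and a bad tie‑break.)

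The base case $|T|=|S|$ is immediate, since then $Z=T=S$. For the inductive step, let $g$ be the first item removed from $T$. By definition $g$ minimizes the marginal contribution to $T$, equivalently $T_1:=T\setminus\{g\}$ is the (unique, by non‑degeneracy) maximum‑value subset of $T$ of size $|T|-1$. Note that $Z$ is exactly the set produced by running the same iterative‑removal procedure on $T_1$ down to size $|S|$, and $|T_1|-|S|<|T|-|S|$; hence the inductive hypothesis gives $v(Z)\ge v(S')$ for \emph{every} subset $S'\subseteq T_1$ with $|S'|=|S|$.

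Now fix an arbitrary subset $S\subseteq T$ with $|S|=|S|$ (i.e.\ of the designated size). If $g\notin S$ then $S\subseteq T_1$ and we are done by the inductive hypothesis. If $g\in S$, then since $|S|<|T|$ there is some $h\in T\setminus S$, and $h\neq g$; put $S'=(S\setminus\{g\})\cup\{h\}$, a subset of $T_1$ of size $|S|$. The key claim is $v(S')\ge v(S)$. To see it, note that since $T\setminus\{g\}$ is the maximum‑value one‑item‑removed subset, $v(T\setminus\{g\})\ge v(T\setminus\{h\})$, and non‑degeneracy (with $g\neq h$) upgrades this to $v\bigl((T\setminus\{g,h\})\cup\{h\}\bigr)>v\bigl((T\setminus\{g,h\})\cup\{g\}\bigr)$. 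Since $S\setminus\{g\}\subseteq T\setminus\{g,h\}$, I would then apply cancelability repeatedly, removing the items of $(T\setminus\{g,h\})\setminus(S\setminus\{g\})$ one at a time from both sides (checking each time that the removed item lies outside both sides), to obtain $v\bigl((S\setminus\{g\})\cup\{h\}\bigr)>v\bigl((S\setminus\{g\})\cup\{g\}\bigr)$, i.e.\ $v(S')>v(S)$. Combining, $v(Z)\ge v(S')\ge v(S)$, which closes the induction.

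The main obstacle is precisely the key claim $v(S')\ge v(S)$: one must set up the iterated application of cancelability carefully (verifying the side conditions at each cancellation) and must notice that non‑degeneracy is genuinely used there, since cancelability propagates only strict inequalities while ``least marginal contribution'' a priori only yields a weak one. Everything else — the induction skeleton and the reduction $Z=\text{(procedure applied to }T_1)$ — is routine.
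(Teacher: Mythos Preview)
Your argument is correct and follows the same high-level idea as the paper (induction plus cancelability to show the greedy removal yields a maximum-value subset of each size), but your inductive setup is a bit heavier than necessary. The paper inducts on the number $j$ of removals while keeping $T$ fixed, proving the stronger statement that $T_j$ maximizes $v$ over \emph{all} size-$(|T|-j)$ subsets of $T$. With that stronger hypothesis, the inductive step cancels a single item: pick $b\in T_{j+1}\setminus S$, apply the hypothesis to $S\cup\{b\}$ versus $T_j$, then cancel $b$ to get $v(S)\le v(T_j\setminus\{b\})\le v(T_j\setminus\{c\})=v(T_{j+1})$. Your recursion on $T_1=T\setminus\{g\}$ gives a hypothesis only about subsets of $T_1$, so when $g\in S$ you must first swap $g$ for some $h$ and then cancel the entire set $R=(T\setminus\{g,h\})\setminus(S\setminus\{g\})$ to compare $S'$ with $S$. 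This works, but the paper's organization avoids the iterated cancellation and the swap.
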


\begin{proof}
	Define $T_0 =T$, and for $j\geq 0$ define $T_{j+1} = T_j \setminus \{c\}$, where $c \in T_j$ is the item with least marginal contribution to $T_j$.  It suffices to prove that for every $0 \leq j \leq \left|T\right|$ we have 
	$$T_{j} = \arg\max_{S\subseteq T : \left|S\right| = \left|T\right| - j} v(S).$$
	We prove by induction on $j$.  For $j=0$ the claim is immediate.  Assume that the claim is true for $j$ and we prove for $j+1$.
	Let $c \in T_j$ be the item with least marginal contribution to $T_j$, hence by definition we have $T_{j+1} = T_j \setminus \{c\}$.
	Let $S \subseteq T$ such that $\left|S\right| = \left|T_{j+1}\right|$.  We need to show that $v(S) \leq v(T_{j+1})$.  If $S = T_{j+1}$, then this is immediate. Therefore, assume $S \neq T_{j+1}$.
	Since $S$ and $T_{j+1}$ have the same size, this means that there is some item $b \in T_{j+1} \setminus S$, and thus $S \cup \{b\}$ and $T_j$ have the same size.  By the induction hypothesis we get
	$$v(S \cup \{b\}) \leq v(T_j) = v(T_j \setminus \{b\} \cup \{b\}),$$
	implying $v(S) \leq  v(T_j \setminus \{b\}) \leq v(T_j \setminus \{c\}) = v(T_{j+1})$ where the first inequality holds by \cancbty, and the second by definition of $c$.  The claim follows.
\end{proof}

\putmaybeappendix{Lemma_dominate_implies_progress_proof}
\putmaybeappendix{obs_exists-champion}
\putmaybeappendix{obs_g_notin_bottom}
\putmaybeappendix{obs_non-champion-doesnt-envy-top-half}
\putmaybeappendix{obs_T_k<T_j}

\shortversion
\section{Proofs from Section~\ref{sec:generalized_championship}}\label{apx:techniques}
\shortversionend
\putmaybeappendix{lem_pareto-improvable}
\putmaybeappendix{obs_good_cycle_g_decomposition}
\putmaybeappendix{obs_no_B_j_champ_of_j}
\putmaybeappendix{obs_pred(i)_nchamp_B(i)}
\putmaybeappendix{obs_bottom_bundle_ineq}
\putmaybeappendix{cor_no_single_external_source}

\section{Proofs for Section \ref{sec:4-agents}}\label{apx:4-agents}
\shortversion
\subsection{Missing Cases from Section~\ref{sec:ef-case}}\label{apx:4-agents-EF}	
\shortversionend
\putmaybeappendix{4p-NoEnvy-Cases-3+4}

\putmaybeappendix{4p-NoEnvy-Case-6}

\shortversion
\subsection{Missing Cases and Proofs from Section~\ref{sec:not-ef-case}}\label{sec:4-agents-Not-EF}
\shortversionend

\fullversion
\begin{proof}[Proof of Claim \ref{claim:B_g_subseteq_B_h}]
	Choose some arbitrary discard set $\discard[\h]{i}{j}$.  By Observation \ref{obs:g_notin_bottom}, $\h \notin \discard[\h]{i}{j}$ and thus 
	$$X_i <_i (X_j \cup \{\h\}) \setminus \discard[\h]{i}{j} = (X_j \setminus \discard[\h]{i}{j}) \cup \{\h\} <_i (X_j \setminus \discard[\h]{i}{j}) \cup \{\g\},$$
	where the first inequality holds since $1 \champ[\h] 2$, the equality holds since $\h \notin \discard[\h]{i}{j}$ and the second inequality is by \cancbty\ since $\h <_i \g$.  Therefore, since $i$ is the unique $\g$-champion of $j$, it follows that $i$ is the most envious agent of $(X_j \setminus \discard[\h]{i}{j}) \cup \{\g\}$ (otherwise that subset has another most envious agent and consequently $X_j \cup \{\g\}$ has other most envious agents except $i$). 
	Denote the corresponding minimally envied subset by $S$.  Thus, we can choose $\discard[\g]{i}{j}$ to be $X_j\setminus S$. Clearly $\discard[\h]{i}{j} \subseteq \discard[\g]{i}{j}$.
\end{proof}
\fullversionend

\begin{proof}[Proof of Lemma~\ref{lem:envy_unique_structure}]	
By assumption, there is an envy edge in $M_\allocs$. Rename the agents such that $1 \envies 4$. By Observation~\ref{obs:exists-champion}, agent $1$ has a $\g$-champion. If $4 \champ[\g] 1$, then $1 \envies 4 \champ[\g] 1$ is a PI cycle, so assume that another agent is a $\g$-champion of 1. Rename that agent such that $2\champ[\g] 1$. Agent $2$ has an $\h$ champion. If $4\champ[\h]2$ or $1\champ[\h]2$ then we have a PI cycle: $1\envies 4\champ[\h] 2 \champ[\g] 1$ or $1\champ[\h]2\champ[\g]1$, respectively. Thus $3\champ[\h]2$. We obtain the following structure:

\begin{center}
	\includegraphics[scale=\figurescale]{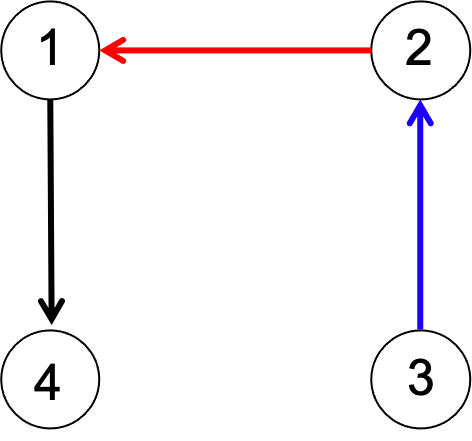}
\end{center}

Agent 3 has a $\g$-champion. If $2\champ[\g]3$ we have a PI cycle: $2\champ[\g]3\champ[\h]2$, thus $1\champ[\g]3$ or $4\champ[\g]3$. It is easy to see that in both cases we must have $2\champ[\h]1$ as any other $\h$-champion of 1 closes a PI cycle. Consequently we must have $3\champ[\g]2$ as well. Therefore, we have one of the following two structures:

\begin{center}
	\includegraphics[scale=\figurescale]{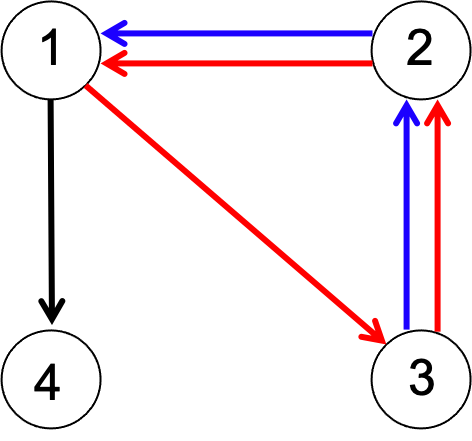}
	\qquad\qquad\qquad
	\includegraphics[scale=\figurescale]{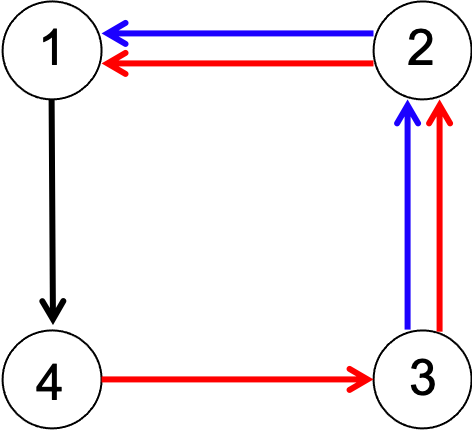}
\end{center}
Moreover, if $2 \champ[\h] 3$ then we are done via the PI cycle $2 \champ[\h] 3 \champ[\g] 2$, hence we may also assume that $1 \champ[\h] 3$ or $4 \champ[\h] 3$ (in addition to $1 \champ[\g] 3$ or $4 \champ[\g] 3$).
\begin{claim}\label{clm:1_nchamp_g,h_3}
	If $1 \champ[\g] 3$ or $1 \champ[\h] 3$ then $M_\allocs$ contains a PI cycle.
\end{claim}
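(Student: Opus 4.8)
The plan is to reduce by symmetry, clear away the configurations that trivially produce a basic PI cycle, and then push the remaining case through the edge-discovery machinery of Section~\ref{sec:edge-discovery}. First observe that the part of $M_\allocs$ established so far is invariant under interchanging the two unallocated goods $\g$ and $\h$: the edges $2\champ[\g]1$, $2\champ[\h]1$ are permuted with each other, as are $3\champ[\g]2$, $3\champ[\h]2$, and $1\envies 4$ is untouched. Hence it suffices to treat the case $1\champ[\g]3$; the case $1\champ[\h]3$ follows by the same argument after the swap. So assume $1\champ[\g]3$.

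Next I would dispose of stray edges. If there is a self $\g$-champion we are done by Corollary~\ref{cor:trivial_cycles}. If one of $1\champ[\g]2$, $3\champ[\g]1$, $2\champ[\g]3$ holds, or one of the envy edges $2\envies 1$, $3\envies 2$, $1\envies 3$ holds, then a short basic PI cycle on $\{1,2,3\}$, routed through agent $4$ via $1\envies 4$ and an $\h$-champion of $3$ when needed, is obtained from the good $\g$, the good $\h$, and the champion edges already present; for instance $1\champ[\g]2\champ[\h]1$, or $2\champ[\g]3\champ[\h]2$, or $3\champ[\g]1\envies 4\champ[\h]3$, or $2\envies 1\champ[\g]3\champ[\h]2$, or $3\envies 2\champ[\h]1\champ[\g]3$, or $1\envies 3\champ[\h]2\champ[\g]1$, each of which uses $\g$ and $\h$ at most once and so is Pareto-improvable. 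After these reductions $C':=1\champ[\g]3\champ[\g]2\champ[\g]1$ is a good $\g$-cycle: its three agents are distinct, it has no parallel envy edges ($a_i\nenvies\succ(a_i)$), and no internal $\g$-champion edges.

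Finally I would apply Theorem~\ref{thm:good-edge-or-external} to $C'$: for each $j\in\{1,2,3\}$ there is a good or an external $B^{\g}_j$-edge, where ``external'' forces the source to be agent $4$, the only agent outside $C'$. If some good $B^{\g}_j$-edge exists, it closes a Pareto-improvable cycle among $\{1,2,3\}$: a backward good edge $i\champ[B^{\g}_i\mid\circ]\pred(i)$ closes with the cycle edge $\pred(i)\champ[\g]i$, which releases $B^{\g}_i$, while every other good edge is closed using $1\champ[\g]3$, which releases $B^{\g}_3$, together with one of $3\champ[\h]2$, $2\champ[\h]1$, arranged so that $\g$ and $\h$ are each consumed once and every $H$-set is released along the cycle. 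If instead all three $B^{\g}_j$-edges are external, they all emanate from agent $4$, so by Corollary~\ref{cor:no_single_external_source} agent $4$ envies some agent along $C'$; combining that envy edge with $1\envies 4$ and the edges of $C'$ yields a PI cycle, namely $4\envies 1\envies 4$, or $4\envies 2\champ[\g]1\envies 4$, or $4\envies 3\champ[\g]2\champ[\h]1\envies 4$, according to which agent is envied. In every case $M_\allocs$ contains a PI cycle.

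The step I expect to be the main obstacle is the good-edge enumeration in the last paragraph: one must verify that every good $B^{\g}_j$-edge allowed by Definition~\ref{def:good_edge} can be completed into a Pareto-improvable cycle under the constraint that $\g$ and $\h$ may each be consumed only once and that every generalized-championship $H$-set must be released somewhere along the chosen cycle. A secondary nuisance is the degenerate possibility that some $B^{\g}_j$ is empty, which is ruled out exactly as elsewhere in the paper, by non-degeneracy of the valuations together with Observation~\ref{obs:non-champion-doesnt-envy-top-half} applied to the relevant non-self-champion relation.
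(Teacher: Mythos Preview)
Your overall plan—reduce by $\g\leftrightarrow\h$ symmetry, clear out easy PI cycles so that $C':=1\champ[\g]3\champ[\g]2\champ[\g]1$ is a good $\g$-cycle, then invoke Theorem~\ref{thm:good-edge-or-external}—matches the paper's route. The gap is exactly where you flagged it: the good-edge enumeration. Your claim that every non-backward good $B^{\g}_j$-edge closes via $1\champ[\g]3$ (which releases $B^{\g}_3$) plus one of $3\champ[\h]2$, $2\champ[\h]1$ is false for two of the nine good edges. Both $3\champ[\Bgone\mid\circ]2$ and $2\champ[\Bgtwo\mid\circ]1$ need $\Bgone$ (resp.\ $\Bgtwo$) released, hence require $2\champ[\g]1$ (resp.\ $3\champ[\g]2$) on the cycle; but then the cycle must pass from $1$ to $3$ without reusing $\g$, and the only available $\h$-edges ($3\champ[\h]2$, $2\champ[\h]1$) point the wrong way. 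Your ``all external'' branch via Corollary~\ref{cor:no_single_external_source} does not rescue these cases, since it needs all three $B_j$-edges external—nothing rules out, say, a good $\Bgone$-edge together with two external $\Bgtwo,\Bgthree$-edges.

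The paper sidesteps this by applying Theorem~\ref{thm:good-edge-or-external} to $\Bgone$ only. External $\Bgone$-edges close immediately through $1\envies4$; the good edges $1\champ[\Bgone\mid\circ]2$ and $1\champ[\Bgone\mid\circ]3$ close as you describe; and for the single remaining good edge $3\champ[\Bgone\mid\circ]2$ the paper brings in the $\h$-champion of $3$ (which must be $1$, $2$, or $4$) and checks that each choice closes a PI cycle: $1\champ[\h]3\champ[\Bgone\mid\circ]2\champ[\g]1$, or $2\champ[\h]3\champ[\g]2$, or $4\champ[\h]3\champ[\Bgone\mid\circ]2\champ[\g]1\envies4$. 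You already used the $\h$-champion of $3$ in your step-2 cleanup (to dispose of $3\champ[\g]1$), so the fix is simply to deploy it again here rather than insist on closing within $\{1,2,3\}$.
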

\begin{proof}
	We prove for $1 \champ[\g] 3$ (the proof for $1 \champ[\h] 3$ is analogous).
	It easy to verify that any envy edge going into one of the agents 1,2 or 3 would close a PI cycle. Hence, we may assume there are no such edges. Thus, the cycle $1\champ[\g]2\champ[\g]3\champ[\g]1$ is a good $\g$-cycle,
	which, by Observation \ref{obs:good_cycle_g_decomposition}, induces a $\g$-decomposition $X_j = T^{\g}_j \cupdot B^{\g}_j$ for $j \in \{1,2,3\}.$
	
	By Theorem~\ref{thm:good-edge-or-external} $M_\allocs$ contains a good or external $\Bgone$-edge. If it is external then it is one of $4\champ[\Bgone\mid\circ]2$, $4\champ[\Bgone\mid\circ]3$ (note that $4\nchamp[\Bgone\mid\circ]1$ by Observation~
	\ref{obs:no_B_j_champ_of_j}), and in each case we have a PI cycle:  $4\champ[\Bgone\mid\circ]2\champ[\g]1\envies4$, $4\champ[\Bgone\mid\circ]3\champ[\h]2\champ[\g]1\envies4$, respectively.
	Therefore, there must be a good $\Bgone$-edge which is either one of $1\champ[\Bgone\mid\circ]2$, $1\champ[\Bgone\mid\circ]3$ or $3\champ[\Bgone\mid\circ]2$. In the first two cases we obtain a PI cycle: $1\champ[\Bgone\mid\circ]2\champ[\g]1$ and $1\champ[\Bgone\mid\circ]3\champ[\h]2\champ[\g]1$, respectively. Thus we may assume $3\champ[\Bgone\mid\circ]2$ and we obtain the following structure
	\begin{center}
		\includegraphics[scale=\figurescale]{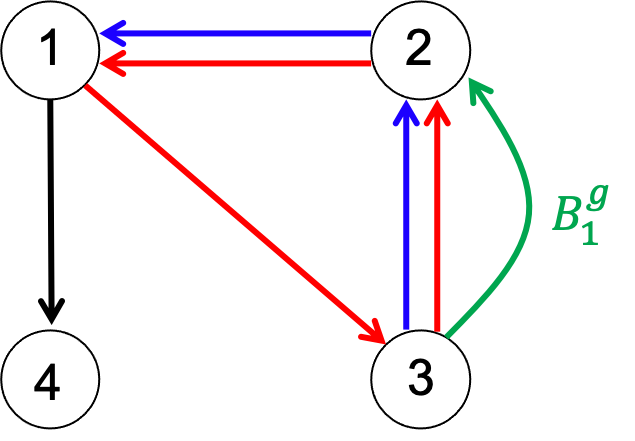}
	\end{center}
	
	By Observation~\ref{obs:exists-champion}, there exists an $\h$-champion of 3 which can be either 1,2 or 4.  In all these cases we get a PI cycle:
	$$1\champ[\h]3 \linebreak[1] \champ[\Bgone\mid\circ]2\linebreak[1]\champ[\g]1,~~2\champ[\h]3\champ[\g]2 ~\text{or}~ 4\champ[\h]3\linebreak[1]\champ[\Bgone\mid\circ]2\linebreak[1]\champ[\g]1\envies 4,$$
	respectively.  We have shown a PI cycle in every case and thus we are done.
\end{proof}
 By Claim \ref{clm:1_nchamp_g,h_3} we may assume $4$ is the unique $\g$ and $\h$ champion of 3. We thus obtain the following structure:
\begin{center}
	\includegraphics[scale=\figurescale]{figures/With_envy_p02-4}
\end{center}

It is not hard to verify that any additional envy edge closes a PI cycle. Moreover, any additional $\h$ or $\g$ champion of 1,2 or 3 closes a PI cycle. This fact is not hard to verify for agents 1 and 2, and we have shown it explicitly for agent 3 in Claim \ref{clm:1_nchamp_g,h_3}.  As for agent 4, the possible $\g$ and $\h$ champions of 4 are agents 1 and 2 ($3 \champ[\g] 4$ also closes an immediate PI cycle). This prove the lemma.
\end{proof}

\fullversion
\begin{proof}[Proof of Claim \ref{clm:X_4<_4T_3+B_2}]
	Since 4 envies $\Tgthree \cup \Bgtwo$, there exists a most envious agent of this bundle. That is, there exists an agent $i$ such that $i\champ[\Bgtwo\mid\circ]3$. 
	$i$ cannot be 3,
	since $X_3 >_3 X_2 = \Tgtwo \cup \Bgtwo >_3 \Tgthree \cup \Bgtwo$, where the first inequality is by $3\nenvies2$ and the second inequality is by Observation~\ref{obs:T_k<T_j}
	and \cancbty.
	
	In the remaining cases, $i = 1$, $i=2$, $i=4$, we obtain the respective PI cycles:
	$$1\champ[\Bgtwo\mid\circ]3\champ[\g]2\champ[\h]1,~2\champ[\Bgtwo\mid\circ]3\champ[\g]2,~4\champ[\Bgtwo\mid\circ]3\champ[\g]2\champ[\h]1\envies4.$$
\end{proof}

\begin{proof}[Proof of Lemma \ref{lem:X_4>_1_Z}]
	
	Consider the allocation $\mathbf{Y}$ obtained from $\allocs'$ by replacing $X'_2$ with $Z$:
	
	
	\begin{center}
		\renewcommand{\arraystretch}{1.2}
		$\mathbf{Y}$=\bundle{1}{X_4}\quad\bundle{2}{Z}\quad\decomp{3}{\Tgtwo}{\g}\quad\decomp{4}{\Ththree}{\h}	
	\end{center}
	%
	%
	We claim this allocation is EFX.
	
	\noindent\textbf{No agent envies agent 2:}
	Agents 3 and 4 do not envy agent 2 since they did not envy her in $\allocs'$ and we only removed items from $X'_2$ in the transition to $\mathbf{Y}$. Agent 1 does not envy agent 2, since we assume $X_4>_1 Z$.
	
	\noindent\textbf{Agent 2 envies no other agent:}
	This follows from the definition of $Z$.
	
	Since only the bundle of agent 2 has been changed in the transition from $\allocs'$ to $\mathbf{Y}$ we conclude that there is no strong envy that does not include agent 2 (as there wasn't any in $\allocs'$), and consequently $\mathbf{Y}$ is EFX.
	Moreover, agents 1,3, and 4 are better off in allocation $\mathbf{Y}$ relative to $\allocs$. Since we assumed that agent $\vip$ is not agent 2, it follows that $\mathbf{Y}$ dominates $\allocs$ and we are done.
\end{proof}

\begin{proof}[Proof of Claim~\ref{clm:4_MEA_T'_3_cup_h}]
	It suffices to show that only agent 4 envies $\Ththree \cup \h$ in $\allocshat$.
	Agent 4 envies $\Ththree \cup \h$, since $\hX_4=X_4$ in all three cases and $4\champ[\h]3$ in the allocation $\allocs$. 
	Agent $i$, for $i\in\{1,3\}$, does not envy $\Ththree \cup \h$, since $\hX_i \geq_i X_i$ and $i\nchamp[\h] 3$ in allocation $\allocs$ (Observation~\ref{obs:non-champion-doesnt-envy-top-half}).
	Agent $2$ does not envy $\Ththree \cup \h$ in all three cases since $\hX_2 = \max_{2} \left\{ \Tgtwo \cup \g,\, \Ththree \cup \h,\, X_3\right\} \geq_2 \Ththree \cup \h$. 
\end{proof}

\begin{proof}[Proof of Claim~\ref{clm:3_2nd_MEA_implies_PI_cycle}]
Consider the cycle $1\envies4\champ[\g]3\champ[\Bgthree\mid\circ]2\champ[\h]1$ in $M_\allocs$:
\begin{center}
	\includegraphics[scale=\figurescale]{figures/With_envy_p10-5.png}
\end{center}
This cycle might not really exist, since it could be the case that 1 is the real ($\Bgthree \mid \circ$)-champion of agent 2.  Assume for now that this is not the case, {\sl i.e.}, $3\champ[\Bgthree\mid\circ]2$ does exist in $M_{\allocs}$.  Then this is PI cycle, and by Lemma \ref{lem:pareto-improvable}, there is an allocation $\mathbf{Y}$ that Pareto-dominates $\allocs$.  Note that in $\mathbf{Y}$ agent 1 receives $X_4$.  We therefore claim that $\mathbf{Y}$ is EFX even if 1 was the most envious agent of $\Tgtwo \cup \Bgthree$ in $\allocs$.  Since 3 was the most envious agent ignoring agent 1, the only thing that could prevent $\mathbf{Y}$ from being EFX is if 1 strongly envies 3 in $\mathbf{Y}$.  But this is not the case since
$$ Y_1 = X_4 >_1 \Tgtwo \cup \Bgthree \geq_1 Y_3$$ where the last inequality holds since $Y_3 \subseteq \Tgtwo \cup \Bgthree$.  The claim follows.
\end{proof}

\begin{proof}[Proof of Lemma~\ref{lem:T_2<_1_T_4}]
Consider the good $\g$-cycle $2 \champ[\g] \linebreak[1] 4 \champ[\g] \linebreak[1] 3 \champ[\g] \linebreak[1] 2$ in $\allocs$.
By Lemma~\ref{lem:alg_start}, there exists an agent $i$ such that $i\champ[\Bgtwo\mid\circ]4$.
$i$ cannot be agent 3, because $3=\pred(2)$ in this cycle (Observation~\ref{obs:pred(i)_nchamp_B(i)}). If $i=2$, we obtain the PI-cycle $2\champ[\Bgtwo\mid\circ]4\champ[\h]3\champ[\g]2$ in the original allocation $\allocs$, so we are done. If $i=4$, then 
$%
	 X_4<_4 \Tgfour \cup \Bgtwo <_4 \Tgthree \cup \Bgtwo,%
$
where the second inequality is due to Observation~\ref{obs:T_k<T_j} and \cancbty. Therefore, we are done by Claim~\ref{clm:X_4<_4T_3+B_2}.

Thus, we may assume that $i=1$, {\sl i.e.}, $1\champ[\Bgtwo\mid\circ]4$. It follows that $X_1 <_1 \Tgfour \cup \Bgtwo$. Therefore, since $1\nenvies2$,
$$
	\Tgtwo\cup\Bgtwo=X_2<_1 X_1 <_1 \Tgfour \cup \Bgtwo,
$$
and we conclude that $\Tgtwo <_1 \Tgfour$ by \cancbty.  Finally we use \cancbty\ again to conclude $X_4 = \Tgfour \cup \Bgfour >_1 \Tgtwo \cup \Bgfour$, as desired. 
\end{proof}

\begin{proof}[Proof of Lemma~\ref{lem:X_3<_3_T_2_cup_B_4}]
Consider the good $\g$-cycle $2 \champ[\g] \linebreak[1] 4 \champ[\g] \linebreak[1] 3 \champ[\g] \linebreak[1] 2$ in $\allocs$.  We first ask which agent $i$ satisfies $i\champ[\Bgfour\mid\circ]3$ (there exists such an agent by Lemma~\ref{lem:alg_start}).
$i$ cannot be agent 2, because $2=\pred(4)$ in the good $\g$-cycle (Observation~\ref{obs:pred(i)_nchamp_B(i)}). If $i$ is agent 4, we obtain the PI-cycle $2\champ[\g]4\champ[\Bgfour\mid\circ]3\champ[\h]2$ in $M_\allocs$, hence we are done.

If $i = 1$, then in particular we have 
$$
	\Tgthree \cup \Bgthree = X_3  <_1 X_1 <_1 \Tgthree\cup\Bgfour,
$$
where the first inequality is due to $1\nenvies3$ and the second holds since $1\champ[\Bgfour\mid\circ]3$.  By \cancbty\ we get $\Bgthree <_1 \Bgfour$, and together with Lemma \ref{lem:T_2<_1_T_4} we conclude that
$
X_4 = \Tgfour\cup\Bgfour>_1 \Tgtwo\cup \Bgfour>_1\Tgtwo\cup\Bgthree,
$
{\sl i.e.}, we are in Case \circled{I} which we already solved. 

Thus, we may assume that $i$ is agent 3, i.e., $3\champ[\Bgfour\mid\circ]3$. It follows that
$$
	X_3 <_3 \Tgthree\cup\Bgfour <_3 \Tgtwo\cup\Bgfour,
$$
where the second inequality is by Observation~\ref{obs:T_k<T_j} and \cancbty.  The lemma follows.
\end{proof}

\begin{proof}[Proof of Claim \ref{clm:b<_3_g}]
Assuming that $\g <_3 \b$, We get
$X_3 <_3 \Tgtwo \cup \{\g\} <_3 \Tgtwo \cup \{\b\}$, where the first inequality holds since $3 \champ[\g] 2$ in $\allocs$ and the second by \cancbty.  
In particular there exists a most envious agent of $\Tgtwo \cup \{\b\}$ in $\allocs$, {\sl i.e.}, there exists an agent $i$ such that $i\champ[\b]2$. 
$i$ cannot be 2,
since $X_2 >_2 X_1 = \Tgone \cup \Bgone >_2 \Tgtwo \cup \Bgone >_2 \Tgtwo \cup \{\b\}$, where the first inequality holds by $2\nenvies1$, the second by Observation~\ref{obs:T_k<T_j} and \cancbty,
and the third by monotonicity.	
Thus $i = 1, 3$ or $4$ and in all cases we get a PI cycle in $M_\allocs$:
$$1\champ[\b]2\champ[\g]1~,~3\champ[\b]2\champ[\g]1\envies4\champ[\h]3~\text{ or }~4\champ[\b]2\champ[\g]1\envies4,$$
respectively.  Note that these are indeed PI cycles since the $\g$-champion edge releases $\Bgone$ which contains $\b$.
\end{proof}

\begin{proof}[Proof of Lemma \ref{lem:i-champB_3|->4}]
	Recalling that $\hX_4 = X_4$, we need to show that there is a most envious agent of $\Tgfour \cup \Bgthree$ in $\allocshat$. We do this by showing $\hX_1	<_1 \Tgfour\cup \Bgthree $.
	If $\Bgtwo >_1 \Bgthree$, then
	$$
	X_4 >_1 X_1 >_1 X_2 = \Tgtwo \cup \Bgtwo >_1 \Tgtwo \cup \Bgthree,
	$$
	where the first and second inequalities are by $1\envies4$ and $1\nenvies2$, respectively, in $M_{\allocs}$,
	and the final inequality follows from \cancbty. This implies that we are in Case \circled{I} that has been solved earlier.
	
	Thus we may assume $\Bgtwo <_1 \Bgthree$.  Moreover $\Tgone <_1 \Tgfour$ by Observation \ref{obs:T_k<T_j}. Therefore, by \cancbty
	$$
	\Tgone\cup \Bgtwo <_1 \Tgfour \cup \Bgtwo <_1 \Tgfour \cup \Bgthree.
	$$
	
	Seeing as $\hX_1=Z\subseteq \Tgone\cup\Bgtwo$ and valuations are monotone, the above equation implies that agent 1 envies $\Tgfour \cup \Bgthree$ in allocation $\allocshat$, as desired.
\end{proof}

\begin{proof}[Proof of Claim \ref{clm:1-B_2|->4}]
	Recall that $X_1 <_1 X_4 <_1 Z \subseteq \Tgone\cup\Bgtwo$.
	We thus have
	$%
	X_1 <_1 \Tgone\cup\Bgtwo <_1 \Tgfour\cup\Bgtwo,%
	$
	by Observation~\ref{obs:T_k<T_j} and \cancbty. Since agent 1 envies $\Tgfour\cup\Bgtwo$, there exists a most envious agent of this set. That is, there exists $j\in[4]$, such that $j\champ[\Bgtwo\mid\circ]4$. 
	
	If $2\champ[\Bgtwo\mid\circ]4$, then we obtain the PI-cycle
	$2\champ[\Bgtwo\mid\circ]4\champ[\h] 3 \champ[\g] 2$.
	%
%
	Note that $3\nchamp[\Bgtwo\mid\circ]4$,
	since
	$$%
	X_3>_3 X_2 = \Tgtwo\cup\Bgtwo >_3 \Tgfour\cup \Bgtwo,%
	$$
	where the first inequality is by $3\nenvies2$ and the second is by Observation~\ref{obs:T_k<T_j} and \cancbty.
	
	Similarly, $4\nchamp[\Bgtwo\mid\circ]4$,
	since $X_4>_4 \Tgthree\cup\Bgtwo$ (otherwise we are done by Claim~\ref{clm:X_4<_4T_3+B_2}), and therefore, by Observation~\ref{obs:T_k<T_j} and \cancbty,
	$%
	X_4>_4 \Tgthree \cup\Bgtwo >_3 \Tgfour \cup\Bgtwo.%
	$	
	The claim now follows.
\end{proof}

\begin{proof}[Proof of Claim \ref{clm:3_champ_B_4_3_V_2_champ_B_4_3}]
	By Claim \ref{clm:1-B_2|->4} we have the following structure in $M_\allocs$:
	\begin{center}
		\includegraphics[scale=\figurescale]{figures/With_envy_p14-2}
	\end{center}
	
	Now, we have
	$%
	X_4 = \Tgfour \cup \Bgfour <_4 \Tgthree \cup \Bgfour,%
	$
	by Observation~\ref{obs:T_k<T_j} and \cancbty. Since agent 4 envies $\Tgthree\cup\Bgfour$, there exists a most envious agent of this set. That is, there exists $i\in[4]$, such that $i\champ[\Bgfour\mid\circ]3$.
	If $i=1$, then
	$%
	\Tgthree \cup \Bgfour >_1 X_1 >_1 X_3 = \Tgthree \cup \Bgthree,%
	$
	where the first inequality holds by definition of $i$ and the second inequality follows from $1\nenvies3$. By \cancbty, we conclude that 
	$
	\Bgfour >_1 \Bgthree,
	$ 
	and together with $\Tgfour>_1 \Tgtwo$ (Observation~\ref{obs:T_k<T_j}) we get
	$$
	X_4=\Tgfour\cup\Bgfour >_1 \Tgfour\cup\Bgthree >_1 \Tgtwo\cup\Bgthree.
	$$
	by \cancbty. Thus we are in Subcase~\circled{I} which has been solved earlier.
	
	Suppose next that $i=4$. 
	By Claim~\ref{clm:1-B_2|->4}, $1\champ[\Bgtwo\mid\circ]4$, thus we obtain the PI-cycle
	$1\champ[\Bgtwo\mid\circ]4\champ[\Bgfour\mid\circ]3\champ[\g]2\champ[\h]1$
	depicted below.
	\begin{center}
		\includegraphics[scale=\figurescale]{figures/With_envy_p14-3}
	\end{center}
	To see that it is indeed a PI-cycle, notice that the set $\Bgtwo$ is released by $3\champ[\g]2$ and the set $\Bgfour$ is released by the edge $1\champ[\Bgtwo\mid\circ]4$. 
	
	Assume now that $i=2$.
	Since we assume that $\Bgfour <_2\Bgtwo$, we have 
	$%
	X_2<_2 \Tgthree \cup \Bgfour <_2 \Tgthree \cup \Bgtwo,%
	$
	where the first inequality follows from $2\champ[\Bgfour\mid\circ]3$ and the second is by \cancbty. Therefore, agent 2 envies $\Tgthree \cup \Bgtwo$ and thus there exists an agent $k\in[4]$ such that $k\champ[\Bgtwo\mid\circ]3$. 
	Note that $3\nchamp[\Bgtwo\mid\circ]3$, since
	$%
	X_3>_3 X_2 =\Tgtwo \cup\Bgtwo >_3 \Tgthree \cup\Bgtwo,%
	$
	where the first inequality is by $3\nenvies2$ and the second inequality is by Observation~\ref{obs:T_k<T_j} and \cancbty.
	Thus, we may assume that either $1\champ[\Bgtwo\mid\circ]3$, $2\champ[\Bgtwo\mid\circ]3$ or $4\champ[\Bgtwo\mid\circ]3$, in each of these cases we obtain a PI-cycle:
	
	$$1\champ[\Bgtwo\mid\circ]3\champ[\g]2\champ[\h]1~,~2\champ[\Bgtwo\mid\circ]3\champ[\g]2~ \text{ or }~4\champ[\Bgtwo\mid\circ]3\champ[\g]2\champ[\h]1\envies4, $$
	respectively.  We are left with the case $i=3$, and we are done. 
\end{proof}

\fullversionend

\shortversion

\shortversionend
	

\end{document}